\def\FullBox{\hbox{\vrule width 8pt height 8pt depth 0pt}}
\newcommand{\QED}{\;\;\;\FullBox}
\renewenvironment{proof}{\noindent{{\textbf{Proof:}~}}} {\hfill\QED}
\providecommand{\email}[1]{\href{mailto:#1}{\nolinkurl{#1}\xspace}}
\newenvironment{proofof}[1]{\noindent{\bf Proof of {#1}:~}}{\hfill\(\QED\)}
\newcommand{\eqdef}{\stackrel{\rm def}{=}}
\newcommand{\SCOND}{\ensuremath{\mathsf{SCOND}}\xspace}
\newcommand{\pr}[1]{\text{\bf Pr}\normalfont\lbrack #1 \rbrack} 
\newcommand{\ex}[1]{\mathbf{E}\normalfont\lbrack #1 \rbrack}
\newcommand{\bpr}[1]{\text{\bf Pr}\normalfont \Big[#1 \Big]} 
\newcommand{\ignore}[1]{}
\title{Learning and Testing Junta Distributions with Subcube Conditioning}
\author {
  Xi Chen\thanks{Columbia University. \email{xichen@cs.columbia.edu}. 
  Supported by NSF IIS-1838154 and NSF CCF-1703925.}
  \and 
  Rajesh Jayaram\thanks{Carnegie Mellon University. \email{rkjayara@cs.cmu.edu}.  Rajesh Jayaram would like to thank the partial support from the Office of Naval Research (ONR) grant N00014-18-1-2562, and the National Science Foundation (NSF) under Grant No. CCF-1815840.}
  \and
  Amit Levi\thanks{Cheriton School of Computer Science, University of Waterloo. \email{amit.levi@uwaterloo.ca}. Research supported by the David R. Cheriton Graduate Scholarship.}
  \and
  Erik Waingarten\thanks{Columbia University. \email{eaw@cs.columbia.edu}. Supported by the NSF Graduate Research Fellowship (Grant No. DGE-16-44869), NSF CCF-1563155, and NSF CCF-1814873.}
}
\begin{document}
\maketitle

\begin{abstract}
We study the problems of learning and testing junta distributions on $\{-1,1\}^n$ with respect to the uniform distribution, where a distribution $p$ is a $k$-junta if its probability mass function $p(x)$  depends on a subset of at most $k$ variables. 
The main contribution is an algorithm for finding relevant coordinates in a $k$-junta distribution with 
subcube conditioning 
\cite{BC18, CCKLW20}. 
We give two applications: 
\begin{itemize}
\item 
An algorithm for learning 
  $k$-junta distributions with 
  $\tilde{O}(k/\eps^2) \log n + O(2^k/\eps^2)$\\ subcube conditioning queries, and
\item 
An algorithm for testing $k$-junta distributions with 
$\tilde{O}((k + \sqrt{n})/\eps^2)$ 
subcube \\ conditioning queries. 
%
 \end{itemize}
All our algorithms are optimal up to poly-logarithmic factors.


Our results show that subcube conditioning, 
as a natural model for accessing 
  high-dimensional distributions,
enables significant savings in 
learning and testing junta distributions compared to the standard sampling model.
This addresses an open question posed   by Aliakbarpour, Blais, and Rubinfeld~\cite{ABR17}.

\ignore{ We consider the problems of learning and testing junta distributions over $\{0,1\}^n$,
  where a distribution $p$ is called a $k$-junta (with respect to the uniform distribution)
  if its probability mass function $p(x)$ 
  only depends on a subset of no more than $k$ variables.
Both problems were formalized and studied by Aliakbarpour, Blais and Rubinfeld \cite{ABR} 
  under the sampling model where an algorithm can draw independent samples from an unknown distribution. 

We study both problems under the \emph{subcube conditional query model}, where
  an algorithm can draw conditional samples after fixing a subset of variables. 
Our key algorithmic contribution is an efficient procedure that can, given any $k$-junta distribution $p$ over $\{0,1\}^n$, 
  make $\smash{\wt{O}(k/\eps)\cdot \log n}$ queries to identify a set $J\subset [n]$ of no more than $k$ ``relevant variables'' 
    such that $p$ is $\eps$-close~to a junta distribution over $J$. 
We also show that its query complexity is
  optimal up~to polylogarithmic factors in $k$ and $1/\eps$. 
  Once such a set $J$ is identified, it is known as a folklore that the unknown $k$-junta distribution
  can be learnt efficiently with $O(2^k/\eps^2)$\footnote{\color{red}Xi: Is this the right bound?} 
  independent samples. 
  
As a second application of the procedure, 
  we obtain an algorithm for testing $k$-junta distri\-butions which makes $\smash{\wt{O}((k+\sqrt n)/\eps^2)}$ queries. We show that $\smash{\wt{\Omega}(k+\sqrt{n})/\eps^2}$ queries are necessary, thus establishing that  our algorithm is optimal up to polylogarithmic factors.
  
\vspace{1cm}
  
We study the problems of learning and testing junta distributions on $\{-1,1\}^n$ with respect to the uniform distribution.
The main contribution is an algorithm for finding relevant coordinates in a $k$-junta distribution with access to a subcube conditioning oracle \cite{BC18, CCKLW20}. 
For any $\eps > 0$, the algorithm makes $\tilde{O}(k/\eps^2)\log n$ subcube conditioning queries and outputs a set $J \subset [n]$ of size at most $k$ such that $p$ is $\eps$-close to a junta distribution over $J$. We give two applications: 
\begin{itemize}
\item The complexity of learning $k$-junta distributions with a subcube conditioning oracle is $\tilde{\Theta}(k/\eps^2) \log n + \Theta(2^k/\eps^2)$. In particular, the lower bound holds even in the general conditional sampling model \cite{CFGM14, CRS15}.
\item The complexity of testing $k$-junta distributions with a subcube conditioning oracle is $\tilde{\Theta}((k + \sqrt{n})/\eps^2)$, even if the distribution is a product distribution. 
 \end{itemize}
}
\end{abstract}

\thispagestyle{empty}
\newpage
\tableofcontents
\thispagestyle{empty}
\newpage
\pagenumbering{arabic}

\mathchardef\mhyphen="2D
\newcommand{\Junta}[1]{\mathtt{Junta}{(#1)}}

\section{Introduction}

We consider the problems of \emph{learning and testing $k$-junta distributions}, as first studied by Aliakbarpour, Blais, and Rubinfeld \cite{ABR17}. Given $n \in \N$ and $k \leq n$, a distribution $p$ supported on $\{-1,1\}^n$ is a $k$-junta distribution (with respect to the uniform distribution) if the probability mass function $p(x) = \Prx_{\bz \sim p}[\bz = x]$ is a $k$-junta.\footnote{We say a function $f(x)$ over $\{-1,1\}^n$ is a $k$-junta (function) if
	it depends on a subset of no more than $k$ variables.
	More generally, \cite{ABR17} defines $k$-junta distributions with respect to a fixed distribution $q$. For $n \in \N$, $k \leq n$, and a fixed distribution $q$ supported on $\{-1,1\}^n$, a distribution $p$ over $\{-1,1\}^n$ is a $k$-junta distribution with respect to $q$ if there exist $k$ coordinates $i_1, \dots, i_k \in [n]$ such that for every $x \in \{-1,1\}^k$, the distributions $p$ and $q$ conditioned on coordinates $i_1,\dots, i_k$ being set according to $x$ are equal. When $q$ is the uniform distribution, the above definition is equivalent to the requirement that $p(x)$ is a $k$-junta function.} 
The goal of the learning problem is to design algorithms which, given access to an unknown $k$-junta distribution $p$ over $\{-1,1\}^n$, output a hypothesis distribution $\hat{p}$ that satisfies $\dtv(p, \hat{p}) \leq \eps$. In the testing problem, the goal is to design algorithms which, given access to an arbitrary distribution $p$, can distinguish between $p$ being a $k$-junta distribution, and being $\eps$-far from a $k$-junta distribution.\footnote{Here, two distributions $p$ and $q$ are $\eps$-far if $\dtv(p, q) \geq \eps$, and $p$ is $\eps$-far from being a $k$-junta distribution if every $k$-junta distribution is $\eps$-far from $p$.}

The study of computational aspects of juntas has spawned a large body of work (for instance, see \cite{MOS03, FKRSS04, CG04, LMMV05, AR07, AM08, AKL09, V15, B08, B09, B10, STW15, BC16, BCELR18, CSTWX17, S18, LCSSX18, LW19, DMN19,PRW20} and references therein).
These problems are motivated by the \emph{feature selection} problem in machine learning
(see e.g. \cite{GE2003,LM2012,CS2014}), and are classically referred to in theoretical computer science as ``learning in the presence of irrelevant information'' \cite{B94, BL97}.  
The landmark (open) problem is the ``junta problem'' \cite{B03b,MOS03,V15}: given an unknown $k$-junta $f\colon \{-1,1\}^n \to \{-1,1\}$, an algorithm receives independent samples $(\bx, f(\bx))$ where $\bx \sim \{-1,1\}^n$ is uniform, and the task is to learn $f$ (with respect to the uniform distribution). Aliakbarpour, Blais, and Rubinfeld \cite{ABR17} study the analogous problem for distributions: for an unknown $k$-junta distribution $p$ over $\{-1,1\}^n$, an algorithm receives independent samples $\bx \sim p$, and the task is to learn $p$ to within small distance in total variation. They obtain an algorithm with sample complexity $\tilde{O}(2^{2k}) \log n /\eps^4$ and running time $\tilde{O}(2^{2k}) \min\{ n^k, 2^n \} / \eps^4$, and observed that any algorithm for learning $k$-junta distributions may be used to solve the ``junta problem.'' 
Hence, running time significantly better than $n^k$ (in particular, polynomial upper bounds for $k = O(\log n)$) would constitute a major breakthrough in computational learning theory.

\ignore{
	As popular and scientific interest in massive data scenarios continues to build, 
	the design of efficient algorithms for modern data analysis has received significant attention.
	One major challenge,~as articulated by Blum and Langley \cite{BlumLangley} more 
	than two decades ago, is learning
	in the presence of irrelevant features.
	In practice, a rich toolbox has been developed to address the problem of
	identifying relevant features and eliminating irrelevant ones, usually under the term of ``feature selection'' \cite{?}.
	Theoretical research, on the other hand, has centerned around the notion of 
	\emph{juntas}: a function $f:\{0,1\}^n\rightarrow \mathbb{R}$ is called a $k$-junta
	if $f(x)$ only depends on an unknown subset of at most $k$ variables, where $k$ is typically
	considered to be much smaller than $n$.
	The goal is either to learn or test juntas \cite{many} with 
	algorithms that have complexity that grows slowly or even independent with $n$.
	
	The landmark problem considered by many here is \emph{the Junta problem} \cite{}
	or the problem of learning $k$-junta Boolean functions 
	in the PAC model (i.e., an algorithm can draw
	labelled examples $(\bx,f(\bx))$ of a $k$-junta function $f:\{0,1\}^n\rightarrow \{0,1\}$ with $\bx$ drawn uniformly at random).
	The Junta problem can be naturally generalized to the setting of \emph{learning distributions} \cite{},
	motivated by the growing need of learning from unlabelled datasets.
	The problem of \emph{learning $k$-junta distributions}, which we focus on in this paper,
	was formalized and studied by Aliakbarpour, Blais and Rubinfeld \cite{}.
	The goal is to learn a $k$-junta distribution $p$ over $\{0,1\}^n$ 
	with samples drawn from $p$,
	where being a $k$-junta distribution \cite{} means that the mass $p(x)$ on each point 
	$x\in \{0,1\}^n$ only depends on 
	an unknown subset of at most $k$ variables.
	As observed in \cite{}, any learning algorithm for $k$-junta distributions 
	can be used to solve the Junta problem with roughly the same 
	sample complexity and running time.
	
	For both problems, obtaining algorithms with running time significantly better than $n^k$
	turns out to be a major challenge.
	For testing junta distributions\footnote{To simplify the presentation we focus dependences on $k$ and $n$ below and 
		skip dependences on both the confidence parameter $\delta$ and distance parameter $\eps$; see details in Section \ref{sec:relatedwork}.}, \cite{ABR} gave a 
	learning algorithm  with sample complexity $2^{2k}k\log n$ and 
	running time roughly $n^k$; the latter, at a high level, is due to the exhaustive search
	of all possible subsets of variables of size $k$.
	For the Junta problem,  
	a breakthrough was made by  Mossel, O'Donnell and Servedio \cite{MOS} more than fifteen years ago to gain a polynomial factor improvement
	over the exhaustive search strategy. Their algorithm runs in time 
	roughly $n^{ck}$ where $c=\omega/(1+\omega)\approx 0.7$ and $\omega$ is 
	the exponent of matrix multiplication.
	The current best bound for the exponent is $(\omega/4)k\approx 0.6k$ after a further improvement 
	made by Valiant \cite{}.
}


%


\ignore{It has been observed \cite{BL97, MOS03, B03b} that the classic ``junta problem" becomes significantly easier when allowing \emph{membership queries}.\footnote{In learning theory, a membership query refers to an oracle which returns $f(x)$ upon an input $x \in \{-1,1\}^n$.} 
	In particular, a simple algorithm making $O(k\log n / \eps)$ queries will find at most $k$ relevant variables such that the function is $\eps$-close to a junta over those variables. The algorithm iteratively builds a set $J \subset [n]$ of relevant variables by sampling pairs of points $\bx,\by \sim \{-1,1\}^n$ with $\bx_J = \by_J$; when $f(\bx) \neq f(\by)$, the algorithm performs a binary search to find a new relevant variable to add to $J$. This leads to the following question: 
	\ignore{In sharp contrast, it was observed by
		Blum and Langle that the Junta problem becomes much easier with membership query access to $f$.
		Indeed the unknown set of relevant variables can be identified with $O(k\log n)$ queries using binary search\footnote{Explain}, $O(\log n)$ queries per variable.
		Once the set of relevant variables is identified, the function can be learnt exactly
		with another batch of roughly $2^k$ samples.
		In comparison, algorithms of both \cite{} and \cite{} take $n^{ck}$ time with $c\approx 0.7$ or $0.6$
		to discover each new relevant variable with only sample access to $f$.
		The same phenomenon occurs in learning junta distributions when an algorithm
		has access to the distribution $f$ under the \emph{evaluation model} \cite{} (i.e., an algorithm can pick arbitrary
		points $x\in \{0,1\}^n$ and ask for its probability mass $p(x)$ in $p$).}
	\ignore{In summary, learning junta functions or distributions remain a difficult task
		unless one is willing to trade efficiency with flexibility by allowing algorithms to make
		membership or evaluation queries.   
		This leads to the following question:} 
	\begin{flushleft}\begin{quote}
			\emph{What is an appropriate ``membership query'' for learning $k$-junta distributions, and would such query access admit complexity savings?}
\end{quote}\end{flushleft}}

Turning to testing $k$-junta distributions, \cite{ABR17} give a tight bound of $\tilde{\Theta}(2^{n/2} /\eps^2)$ for the number of samples $\bx\sim p$ needed. We note that this ``curse of dimensionality'' is not unique to the~problem of
testing junta distributions, and already appears for the most basic testing task: 
testing whether a distribution on $\{-1,1\}^n$ is uniform \cite{P08,VV17}, which can be viewed as testing $k$-junta
distributions with $k=0$.
Works addressing this state-of-affairs have proceeded by either analyzing restricted classes of high dimensional distributions \cite{RS09, CDKS17, DP17, DDK19, GLP18, BBCSV20, DKP19}, or by augmenting the oracle \cite{BDKR05, CR14, CRS15, CFGM16, ABDK18, BC18, OS18}.

\textbf{Membership queries.}
It has been observed \cite{BL97, MOS03, B03b} that the classic ``junta problem" becomes significantly easier when allowing \emph{membership queries}.\footnote{In learning theory, a membership query refers to an oracle which returns $f(x)$ upon a query $x \in \{-1,1\}^n$.} 
In particular, a simple algorithm making $O(k\log n / \eps)$ queries will find at most $k$ relevant variables such that the function~is $\eps$-close to a junta function over those variables.\footnote{The algorithm iteratively builds a set $J \subset [n]$ of relevant variables by sampling pairs of points $\bx,\by \sim \{-1,1\}^n$ with $\bx_J = \by_J$; when $f(\bx) \neq f(\by)$, the algorithm performs a binary search to find a new relevant variable to add to $J$.} 
For the problem of testing junta functions (with membership queries),
the state-of-the-art algorithm \cite{B09} only has query complexity $\tilde{O}(k/\eps)$ with no dependency on $n$.
This leads to the following question that motivates our work: 
\begin{flushleft}\begin{quote}
		\emph{What is an appropriate ``membership query'' model for learning and testing junta distributions, and would such query access admit significant complexity savings?}  
\end{quote}\end{flushleft}

\ignore{This is in sharp contrast with state-of-the-art
	algorithms for testing $k$-junta functions \cite{} with \emph{membership queries}.\footnote{A membership query refers to an oracle which returns $f(x)$ upon an input $x \in \{-1,1\}^n$.}
	The same phenomenon has been observed \cite{BL97, MOS03, B03b} that the classic learning 
	``junta problem" becomes significantly easier when allowing membership queries.}


\textbf{Subcube conditioning queries.} This paper considers the \emph{subcube conditioning model}, first studied by \cite{BC18}.
A subcube conditioning query on a distribution $p$ over $\{-1,1\}^n$ is specified by a string (or 
a restriction as we call in the paper) $\rho \in \{-1,1,*\}^n$. 
The oracle returns~a sample $\bx \sim p$ conditioned  on every $i \in [n]$ 
with $\rho_i \neq *$ having $\bx_i = \rho_i$.
Equivalently, $\rho$ encodes a subcube of $\{-1,1\}^n$ by fixing  non-$*$ coordinates in $\rho$;
the oracle returns a sample $\bx\sim p$ conditioned on $\bx$ lying in the subcube.\footnote{We note that while this paper considers distributions supported on $\{-1,1\}^n$, \cite{BC18} study subcube conditioning in a general product domain $\Sigma^n$. There, a subcube conditioning query is specified by a sequence of $n$ subsets $A_1 \times \dots \times A_n$ where each $A_i \subset \Sigma$, and a sample $\bx \sim p$ conditioned on $\bx_i \in A_i$ for all $i \in [n]$. Extending results from $\{-1,1\}^n$ to $\Sigma^n$ is a direction for future work.} 
When the subcube encoded by $\rho$ is not supported in $p$, 
the oracle under the model of \cite{BC18}
returns a point drawn uniformly from the subcube. 
We remark that this modeling choice
is not important for this paper:
our algorithms only make queries $\rho$
that are consistent with a sample~$x$ previously drawn from $p$ (i.e., $\rho_i=x_i$ for every 
non-$*$ coordinate $i$).\footnote{This gives our algorithms a flavor of those under the \emph{active learning}\hspace{0.04cm}/\hspace{0.04cm}\emph{testing}
	model \cite{DAS05,Burr09,BBBY12}, adapted to the setting of distribution testing: 
	an algorithm can only zoom in onto a subcube using conditioning queries after it is 
	discovered by samples drawn from the distribution.
	Our lower bounds, on the other hand, apply to the original subcube conditioning model, which only makes them stronger.}

The subcube conditioning model seems particularly appropriate for computational tasks over distributions supported on (high-dimensional) product domains, and was suggested in \cite{CRS15} as an open direction for learning and testing distributions over $\{-1,1\}^n$. From the purely theoretical perspective, we find two aspects of subcube conditioning especially compelling. The first is that restrictions of distributions over product domains are themselves distributions over product domains, which enable algorithms and their analyses to proceed recursively. The second is that algorithms may proceed via the method of (random) restrictions, exploiting properties of distributions apparent only by considering subcubes.
{See more discussions on random restrictions in Section \ref{sec:overview}.}

From a practical perspective, subcube conditional queries arise in a number of applications. An important example is sampling from large joins in a relational database. For database joins, subcube conditioning has a natural interpretation: a sample from a join conditioned on a subcube
{(defined by fixing certain attributes in the join)} can be represented as a sample from another join, where conditioning is first applied to each relation individually.\footnote{For example, a sample from a large multi-way join $J = R_1 \Join \cdots  \Join  R_m$ of relations $R_1,\dots,R_m$ conditioned on fixing a subset of attributes according to a restriction $\rho$
	corresponds to a sample from the join query $J' = R_1' \Join \dots R_m'$, where each $R_i'$ is the restriction of the relation $R_i$ where attributes are fixed according to $\rho$.} Thus, subcube conditional sampling from a join can be implemented in the same time as uniform sampling from a join with a minor overhead.  Moreover, efficiently sampling from joins is an important task in database theory \cite{chaudhuri1999random,AGPR99,zhao2018random, CY20}, and can often be implemented substantially faster than the time required to compute the entire query (which may be exponential in the number of relations given as input to the join).

\paragraph{Other query models.} We briefly discuss other proposed access oracles for distributions. The evaluation oracle \cite{BDKR05,CR14} allows algorithms to query the probability mass function of an input, in addition to receiving random samples. We note the same ``binary search'' strategy prescribed for finding relevant variables in a $k$-junta function works well in this setting, making it too strong for learning juntas. \cite{OS18} considers probability-revealing samples, where the algorithm receives pairs $(\bx, p(\bx))$ with $\bx \sim p$. This model is too weak for the learning problem, since the reduction
of \cite{ABR17} from the $k$-junta problem to the $k$-junta distribution problem  applies   to this oracle as well.\footnote{In particular, consider an unknown $k$-junta function $f \colon \{-1,1\}^n \to \{-1,1\}$, and notice that with $\poly(2^{k})$ random samples, we may know exactly how many inputs $x \in \{-1,1\}^n$ have $f(x) = 1$. Then, the reduction of \cite{ABR17} constructs the distribution which is uniform over the inputs where $f(x) = 1$, so knowing the probability mass function at these points gives no additional information.}  Lastly, and most relevant to this paper, is the (general) conditional sampling model, introduced in \cite{CFGM13, CFGM16, CRS14, CRS15}, where an algorithm is allowed to specify a (arbitrary) subset $A$ of the domain and receive a sample conditioned on it lying in $A$. This model is more powerful than subcube conditioning, yet, looking ahead, our lower bounds for learning $k$-junta distributions will apply to this model as well, showing that conditioning on arbitrary sets $A \subseteq \{-1,1\}^n$ is no more powerful than that 
on subcubes for the learning problem.




\ignore{The query model we consider in this paper is the \emph{subcube conditional 
		query model}, first suggested in \cite{ClementRonServedio} and studied in \cite{BC18}.
	Describe the model.
	Understanding the complexity of learning junta distributions in
	the subcube conditional query model was posed as an open problem by \cite{ABR}.
	
	In addition to the learning problem, \cite{ABR} studied the problem
	of \emph{testing} junta distributions (i.e., to distinguish the case $p$ is a $k$-junta distribution
	and the case it is far from any $k$-junta distribution).
	They obtained matching upper and lower bounds of roughly $2^{n/2}/\eps^2$  
	for the sample complexity, which grows exponentially in $n$.
	They also left it as an open problem to understand the problem of testing
	junta distributions under the subcube conditional query
	model.}

\subsection{Our results}


\paragraph{Learning $k$-junta distributions.} Our main algorithmic contribution is a procedure that can, given subcube conditioning query access
to a $k$-junta distribution $p$ over $\{-1,1\}^n$,
identify a set $J\subset [n]$ of at most $k$ relevant variables such that
$p$ is close to a $k$-junta over $J$.
The number of queries needed to identify each relevant variable, on average, is 
roughly $\log n/\eps^2$. (We emphasize though that the main idea behind 
the algorithm is not based on binary search; see Section \ref{sec:overview} for an overview
of the algorithm.) 


\begin{restatable}[Identifying relevant variables]{theorem}{thmidentify}\label{maintheorem}
	There is a randomized algorithm, which takes subcube conditioning query access to an unknown distribution $p$ over $\{-1,1\}^n$, an integer $k\in \N$, and~a~parameter $\eps \in (0, 1/4]$. The~algorithm makes 
	$\tilde{O}(k/\eps^2)\cdot \log n$ 
	queries, runs in time $\tilde{O}(k/\eps^2)\cdot n\log n$~and outputs a set $\bJ \subset [n]$ with 
	the following guarantee. 
	If $p$ is a $k$-junta distribution then $|\bJ|\le k$ and 
	$p$ is $\eps$-close to a junta~distribution over variables in $\bJ$ with probability at least $2/3$.
\end{restatable}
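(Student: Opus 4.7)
The plan is to build $J$ incrementally by adding at most one relevant coordinate per outer iteration, running at most $\tilde{O}(k)$ iterations. In each iteration, the algorithm first draws a fresh sample $\bx \sim p$ (using the trivial query $\rho=(*,\dots,*)$) and then forms the restriction $\rho \in \{-1,1,*\}^n$ with $\rho_i = \bx_i$ for $i \in J$ and $\rho_i = *$ otherwise. It invokes a core subroutine on $\rho$ that either returns a coordinate $i^\ast \in [n]\setminus J$ to add to $J$, or certifies that the conditional distribution $p_\rho$ is close to uniform on the free coordinates. In the former case, we set $J \gets J \cup \{i^\ast\}$; in the latter, we restart with a fresh $\bx$; after sufficiently many consecutive ``uniform'' certifications we output $J$.

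The core subroutine is a divide-and-conquer search over $[n]\setminus J$. Starting with $A \eqdef [n]\setminus J$, at each level it partitions $A = A_0 \sqcup A_1$ and uses a primitive \emph{influence test} that spends $\tilde{O}(1/\eps^2)$ subcube conditioning queries refining $\rho$. Intuitively, to decide whether $A_b$ contains an influential coordinate of $p_\rho$, one compares samples drawn from $p_\rho$ further conditioned on two different random settings of $A_{1-b}$ and checks whether the induced marginals on $A_b$ can be statistically distinguished; this is the flavor behind the \tc/\tpc notions in the preamble. If only one side exhibits influence, the subroutine recurses on that side; if both do, it recurses on either; if neither does, it declares $p_\rho$ uniform. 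After $O(\log n)$ levels the recursion terminates at a single coordinate, which is returned.

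For correctness, any coordinate that the subroutine returns is truly relevant for $p$ with high probability, so $|J| \le k$ is preserved throughout. To translate ``uniform'' certifications into the desired closeness guarantee, we exploit the identity
\[
\dtv\!\left(p,\, p|_J \otimes U_{[n]\setminus J}\right) \;=\; \mathbf{E}_{\bx\sim p}\!\left[\dtv\!\left(p_{\rho(\bx)},\, U_{[n]\setminus J}\right)\right],
\]
where $U_{[n]\setminus J}$ is uniform on the free coordinates. Running the subroutine on several independent draws of $\bx$ and applying a Markov-style averaging argument lets us certify that this expectation is $\le \eps$ with constant probability. Accounting the cost as $\tilde{O}(1/\eps^2)$ per level, $O(\log n)$ levels per invocation, and $\tilde{O}(k)$ invocations gives the claimed $\tilde{O}(k/\eps^2)\log n$ total.

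The main obstacle is the per-level influence test. Unlike the membership-query setting, where $f(\bx)$ is observed directly and detecting influence reduces to a deterministic comparison, here we only have samples from $p$ restricted to subcubes, so the test must compare two conditional distributions statistically within a tight $\tilde{O}(1/\eps^2)$ budget. We need (i) completeness, so that a side with genuine contribution is rarely missed and relevant variables are not lost; (ii) soundness, so the recursion is not diverted toward an irrelevant half, which would inflate $|J|$ beyond $k$; and (iii) robustness to composition, since $\tilde{O}(k\log n)$ test invocations must collectively fail with probability at most $1/3$. Formalizing the right quantitative contribution measure for a subset of coordinates (so that a small total contribution outside $J$ implies $p$ is $\eps$-close to a junta over $J$) and carefully budgeting the failure probabilities across all invocations is where the central technical work lies.
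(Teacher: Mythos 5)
There is a genuine gap here: your plan is essentially the membership-query binary-search strategy transplanted to distributions, but the central primitive it relies on --- an ``influence test'' that decides within $\tilde{O}(1/\eps^2)$ subcube-conditioning queries whether a half $A_b$ contains an influential coordinate of $p_\rho$ --- does not exist as described, and supplying it is exactly the hard part of the theorem. Your test compares marginals on $A_b$ under two random settings of $A_{1-b}$; this fails outright against hidden correlations. For example, if $p$ is uniform over $\{x:\prod_{i\in S}x_i=1\}$ and the split leaves at least two coordinates of $S$ in $A_b$, then after fixing $A_{1-b}$ the conditional distribution on $A_b$ has all single-coordinate marginals exactly $0$ under either setting, so the marginals are indistinguishable even though the distribution is $\Omega(1)$-far from uniform; detecting the difference without biased marginals is a uniformity-testing problem on a large subcube, which cannot be done with $\tilde{O}(1/\eps^2)$ samples. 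A second accounting problem: even when biases exist, farness $\eps$ can be spread as biases of order only $\tilde{\Omega}(\eps/\sqrt{k})$ over up to $k$ coordinates, so certifying or refuting influence of a single coordinate already costs $\tilde{\Theta}(k/\eps^2)$ samples, not $\tilde{O}(1/\eps^2)$; your cost bound $\tilde{O}(1/\eps^2)\times O(\log n)\times\tilde{O}(k)$ silently assumes an $\eps$-sized bias per detected coordinate.

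The paper's proof avoids binary search entirely for these reasons. Its engine is a structural lemma: if $p$ is $\eps$-far from every junta over $J$, then for $\sigma=1/2$,
\begin{equation*}
\sum_{j=1}^{\lceil \log_2 2n\rceil} \Ex_{\brho \sim \calD_{\ol J}(p)} \left[ \Ex_{\bnu \sim \calD_{\sigma^j}(p_{|\brho})}\Big[\big\|\mu\big((p_{|\brho})_{|\bnu}\big)\big\|_2\Big]\right] \geq \frac{\eps}{\log^{c}(n/\eps)},
\end{equation*}
i.e., random restrictions at geometrically varying scales \emph{expose} biased marginals even when correlations (as in the parity example) hide them at any fixed projection; the algorithm then just estimates empirical means of $(p_{|\brho})_{|\bnu}$ at thresholds $\eps/(\alpha\sqrt{b})$ and uses a budget-doubling scheme over $b$ so that finding $b$ new relevant variables costs $\tilde{O}(b/\eps^2)\log n$, and the negation of the detection event (over all scales $j$ and budgets $b$) quantitatively implies $\Ex_{\brho}[\dtv(p_{|\brho},\calU)]\le\eps$ via the structural lemma. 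Your identity $\dtv(p,\,p_J\otimes\calU)=\Ex_{\bx\sim p}[\dtv(p_{|\rho(\bx)},\calU)]$ is correct and is also the paper's starting point, but converting ``my test found nothing'' into ``this expectation is at most $\eps$'' is precisely the quantitative step you defer to future work; without an analogue of the structural lemma (or a genuinely new influence test robust to unbiased-marginal correlations), the proposal does not yield the theorem.
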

It is known as folklore that, once such a set $J$ is identified,
the unknown $k$-junta distribution $p$ can be learnt easily using another batch of
$O(2^k/\eps^2)$ samples from $p$
and the same amount of running time. 
Together we obtain the following corollary, showing that subcube conditioning queries
enable significant speedup compared to state-of-the-art learning algorithms under the 
sampling model.


\begin{corollary}[Learning junta distributions]
	Under the subcube conditioning query model, 
	there~is a learning algorithm for $k$-junta distributions with
	query complexity $\tilde{O}(k/\eps^2)\cdot \log n+O(2^k/\eps^2)$ 
	and running time $\tilde{O}(k/\eps^2)\cdot n\log n+O(2^k/\eps^2)$.
\end{corollary}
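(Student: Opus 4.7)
The plan is to invoke Theorem \ref{maintheorem} as a black box to find a set $\bJ\subseteq[n]$ of size at most $k$ on which $p$ essentially depends, and then to spend the remaining budget of $O(2^k/\eps^2)$ samples estimating the induced distribution on $\{-1,1\}^{\bJ}$ empirically. Concretely, set $\eps'=\eps/3$ and first run the algorithm of Theorem \ref{maintheorem} with accuracy $\eps'$, at a cost of $\tilde{O}(k/\eps^2)\log n$ subcube conditioning queries and $\tilde{O}(k/\eps^2)\cdot n\log n$ time. With probability at least $2/3$, this returns $\bJ$ with $|\bJ|\le k$, and guarantees the existence of a junta distribution $p^*$ over $\bJ$ with $\dtv(p,p^*)\le \eps'$.

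In the second phase, draw $m=C\cdot 2^k/\eps^2$ independent samples $\bx^{(1)},\dots,\bx^{(m)}$ from $p$ (each is a subcube conditioning query with the all-$*$ restriction), read off their restrictions to the $|\bJ|\le k$ coordinates of $\bJ$, and form the empirical distribution $\hat q$ on $\{-1,1\}^{\bJ}$. Output the junta hypothesis $\hat p(x)\eqdef \hat q(x_{\bJ})/2^{n-|\bJ|}$, which is well defined because any junta distribution over $\bJ$ is precisely a distribution on $\{-1,1\}^{\bJ}$ tensored with the uniform measure on $\{-1,1\}^{[n]\setminus\bJ}$. This phase uses $O(2^k/\eps^2)$ queries and, since we only need to touch the $|\bJ|\le k$ relevant coordinates per sample to build $\hat q$, can be implemented in $O(2^k/\eps^2)$ time, absorbing the $k$ factor.

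For correctness, let $q$ be the true marginal of $p$ on $\bJ$, write $p^*(x)=r(x_{\bJ})/2^{n-|\bJ|}$ for the junta from Theorem \ref{maintheorem}, and set $\bar p(x)=q(x_{\bJ})/2^{n-|\bJ|}$. The data-processing inequality gives $\dtv(q,r)\le \dtv(p,p^*)\le \eps'$, so
\[
\dtv(p,\bar p)\;\le\;\dtv(p,p^*)+\dtv(p^*,\bar p)\;=\;\dtv(p,p^*)+\dtv(r,q)\;\le\;2\eps'.
\]
Moreover $\dtv(\bar p,\hat p)=\dtv(q,\hat q)$ because both are junta distributions over $\bJ$ with the same uniform conditional on $[n]\setminus\bJ$, and the classical bound on learning a discrete distribution of support size $2^{|\bJ|}\le 2^k$ in total variation ensures $\dtv(q,\hat q)\le \eps'$ with probability $9/10$ once $C$ is a sufficiently large absolute constant. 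A union bound then yields $\dtv(p,\hat p)\le 3\eps'=\eps$ with constant probability, which is amplified to the standard $2/3$ by repeating both phases $O(1)$ times and selecting the best candidate via a Scheff\'e-style tournament (a constant-factor overhead).

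The only substantive obstacle is already handled by Theorem \ref{maintheorem}; everything downstream reduces to the elementary factorization of junta distributions as (arbitrary distribution on $\{-1,1\}^{\bJ}$)$\,\times\,$(uniform on the complement), together with the standard $\Theta(2^k/\eps^2)$-sample bound for learning distributions over a domain of size $2^k$. The query-complexity and running-time accountings in the corollary are then just the sums of those of the two phases.
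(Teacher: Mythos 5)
Your proposal is correct and matches the paper's intended argument: the paper derives this corollary from Theorem~\ref{maintheorem} plus exactly the folklore step you spell out, namely estimating the marginal on the returned set $\bJ$ empirically with $O(2^k/\eps^2)$ further samples and tensoring with the uniform distribution on the complement. The paper does not write out the details (data processing, triangle inequality, the $\Theta(2^k/\eps^2)$ bound for learning a distribution on $2^{|\bJ|}$ points), but your elaboration is the standard one and is sound.
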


We show that query complexities of both algorithms are almost~tight.
Indeed they are almost~tight even under the more powerful \emph{general conditioning query model},
which was introduced  simultaneously by \cite{CFGM13,CFGM16} and \cite{CRS14,CRS15}.
A  general conditioning query to $p$ is specified by an arbitrary subset $A$ of $\{-1,1\}^n$ (which 
is not necessarily a subcube)
and the oracle returns a sample $\bx\sim p$ conditioned on $\bx\in A$.

\begin{restatable}{theorem}{thmlowerboundone}\label{thm:learning-lb}
	Let $0 < \eps \leq 1/8$, $n \in \N$ and $0 < k \leq n-1$. Suppose an algorithm receives as input conditional query access to an unknown $k$-junta distribution $p$ supported on $\{-1,1\}^n$ and outputs a set $\bJ \subset [n]$ with $|\bJ| \leq k$ such that with probability at least $4/5$, $p$ is $\eps$-close to a junta distribution over $\bJ$. Then, the algorithm must make $\Omega(\log\binom{n}{k} / \eps^2)$ queries. \end{restatable}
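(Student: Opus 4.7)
I would prove Theorem~\ref{thm:learning-lb} by a standard information-theoretic argument against a uniformly random hard instance. For each $J\in\binom{[n]}{k}$ define the biased distribution
$p_J(x) = \frac{1}{2^n}\bigl(1+\alpha\,\chi_J(x)\bigr)$ on $\{-1,1\}^n$, where $\chi_J(x) = \prod_{i\in J}x_i$ and $\alpha = 3\eps$ (so $\alpha\le 3/8 < 1$, making $p_J$ a valid probability distribution and, by inspection, a $k$-junta on $J$). The key far-ness statement is: for every $J'\subseteq[n]$ with $|J'|\le k$ and $J'\neq J$, we have $J\not\subseteq J'$ (since $|J|=k\ge|J'|$), so averaging $\chi_J$ over any coordinate in the non-empty set $J\setminus J'$ kills the bias and the marginal of $p_J$ on $J'$ is exactly uniform. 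A direct calculation per $x_{J'}$ then shows that every junta $q$ on $J'$ satisfies $\dtv(p_J,q)\ge \alpha/2 = 3\eps/2 > \eps$, with equality at $q=U$. Consequently, any algorithm that outputs a valid $\bJ$ with probability $\ge 4/5$ must satisfy $\bJ=J$ with probability $\ge 4/5$.

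\textbf{Per-query information bound.} Let $\bJ$ be uniform on $\binom{[n]}{k}$, let $H_t = (\bA_1,\bx_1,\ldots,\bA_t,\bx_t)$ be the transcript after $t$ adaptive general conditional queries (where $\bA_t\subseteq\{-1,1\}^n$ is an arbitrary set chosen by the algorithm as a function of $H_{t-1}$ and its coins), and let $U_A$ denote the uniform distribution on $A$. Using the fact that the posterior mixture of $p_{\bJ}|_{\bA_t}$ is the KL-minimizer and hence any fixed reference distribution upper bounds it, plus $D_{KL}\le\chi^2$, I would show
$I(\bJ;\bx_t\mid H_{t-1},\bA_t) \;\le\; \ex\bigl[D_{KL}\!\left(p_{\bJ}|_{\bA_t}\,\big\|\,U_{\bA_t}\right)\bigr]$.
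Writing out $p_J|_A(x) = \frac{1}{|A|}\cdot\frac{1+\alpha\chi_J(x)}{1+\alpha\bar\chi_J(A)}$ with $\bar\chi_J(A) = |A|^{-1}\sum_{y\in A}\chi_J(y)$, the chi-squared computation gives the sharp identity
$\chi^2(p_J|_A\,\|\,U_A) = \frac{\alpha^2\,\mathrm{Var}_{U_A}[\chi_J]}{(1+\alpha\bar\chi_J(A))^2} \;\le\; \frac{\alpha^2}{(1-\alpha)^2} = O(\eps^2)$
uniformly over every $J$ and every set $A$. Summing over queries via the chain rule yields $I(\bJ;H_T)=\sum_{t=1}^{T} I(\bJ;\bx_t\mid H_{t-1},\bA_t) \le T\cdot O(\eps^2)$.

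\textbf{Finishing with Fano and the main obstacle.} Since $\bJ'$ is a function of $H_T$ and algorithm randomness independent of $\bJ$, data processing gives $I(\bJ;\bJ')\le I(\bJ;H_T)$. Fano's inequality applied to the $N=\binom{n}{k}$-ary hypothesis $\bJ$, together with $\Pr[\bJ'\neq\bJ]\le 1/5$, gives $I(\bJ;\bJ') = \Omega(\log\binom{n}{k})$, and combining yields $T=\Omega(\log\binom{n}{k}/\eps^2)$. The main delicate point is that the model permits \emph{general} conditional sets $A$, not merely subcubes, and the queries are adaptive; the argument above handles both via a single data-independent reference $U_A$ together with the chain rule, which is why it is important that the chi-squared identity be uniform over all $A$ and $J$. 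A secondary subtlety is that the algorithm may output any $\bJ$ with $|\bJ|\le k$ (not necessarily of size exactly $k$); this is absorbed by the far-ness argument, which applies to every $J'\neq J$ of size $\le k$ thanks to the cardinality constraint $|J|=k\ge|J'|$ forcing $J\not\subseteq J'$.
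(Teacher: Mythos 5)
Your proof is correct, but it takes a genuinely different route from the paper. The hard family is essentially the same (parity-biased distributions $p_J(x)=2^{-n}(1+\alpha\chi_J(x))$, and your farness argument is the same calculation as Claim~\ref{cl:far-family} and Corollary~\ref{cl:output-set}, with $\alpha=3\eps$ in place of $4\eps$), and both arguments rest on the same quantitative heart: a single general conditional sample from an $O(\eps)$-almost-uniform distribution reveals only $O(\eps^2)$ bits about the hidden set. The paper operationalizes this via one-way communication complexity: it reduces from Indexing, has Alice simulate the (deterministic, depth-$q$) query tree on $p_S$, and compresses the transcript to $O(q\eps^2)+O(1)$ bits using a rejection-sampling protocol whose analysis (Lemma~\ref{lem:communication-compression}, via Azuma on the likelihood-ratio martingale) exploits almost-uniformity; the $\Omega(\log\binom{n}{k})$ bound for Indexing then finishes. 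You instead bound information directly: place a uniform prior on $J$, use that the posterior mixture minimizes average KL so any fixed reference (here $U_{A}$) upper bounds the per-query conditional mutual information, compute the exact $\chi^2$ of $p_J$ restricted to an arbitrary set $A$ against $U_A$ (which is $\le \alpha^2/(1-\alpha)^2=O(\eps^2)$ uniformly in $J$ and $A$), sum via the chain rule over adaptive queries, and close with data processing and Fano. Your route is more self-contained — it needs no communication-complexity black box and no compression machinery, and adaptivity and general (non-subcube) conditioning are handled transparently by the chain rule and the uniform-in-$A$ $\chi^2$ bound. What the paper's route buys is a reusable tool: the compression lemma applies to any $\eps$-almost-uniform distribution and any $q$-query tree, and is invoked again verbatim for the $\Omega(2^k/\eps^2)$ part of Theorem~\ref{thm:learning-lb-2} (your information-theoretic argument could be adapted there too, but would need a separate decoding/Fano-type step for the biased-truth-table family). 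Two minor points to tidy in a full write-up: handle the algorithm's private randomness (fix coins or condition on them so $I(\bJ;\bA_t\mid H_{t-1})=0$), and state Fano with the $h(P_e)$ term so the bound remains $\Omega(\log\binom{n}{k})$ even when $\binom{n}{k}$ is small (it does, since $\binom{n}{k}\ge 2$ and $P_e\le 1/5$).
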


\begin{restatable}{theorem}{thmlowerboundtwo}\label{thm:learning-lb-2}
	Let $0 < \eps \le 1/120$, $n \in \N$ and $0<k\le n-1$. Suppose an algorithm receives as input conditional query access to an unknown $k$-junta distribution $p$ over $\{-1,1\}^n$ and outputs a distribution $\hat{\bp}$ such that with probability at least $4/5$, $p$ is $\eps$-close to $\hat{\bp}$. Then, the algorithm must make $\Omega(\log\binom{n}{k} / \eps^2)+\Omega(2^{k}/\eps^2)$ queries.
\end{restatable}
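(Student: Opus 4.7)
The plan is to establish the two additive terms of the lower bound separately: the $\Omega(\log\binom{n}{k}/\eps^2)$ term by reduction to Theorem~\ref{thm:learning-lb}, and the $\Omega(2^k/\eps^2)$ term by a direct information-theoretic argument valid for general conditional sampling.

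For the first term, suppose $\mathcal{A}$ is a learning algorithm outputting $\hat{\bp}$ with $\dtv(p,\hat{\bp})\le\eps$ with probability $4/5$. I would build an identifier $\mathcal{B}$ that runs $\mathcal{A}$ and then, with no additional queries, post-processes $\hat{\bp}$ by computing a $k$-junta distribution $\tilde{p}$ minimizing $\dtv(\hat{\bp},\tilde{p})$ over all $k$-juntas, and outputs the (at most $k$) relevant variables of $\tilde{p}$. Since $p$ itself is a $k$-junta, $\dtv(\hat{\bp},\tilde{p})\le\dtv(\hat{\bp},p)\le\eps$ by the optimality of $\tilde{p}$, and hence $\dtv(p,\tilde{p})\le2\eps$ by the triangle inequality; thus $p$ is $2\eps$-close to a junta over the output set, so $\mathcal{B}$ succeeds whenever $\mathcal{A}$ does. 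Invoking Theorem~\ref{thm:learning-lb} at distance $2\eps$ (valid since $\eps\le 1/120\le 1/16$) yields a bound of $\Omega(\log\binom{n}{k}/(2\eps)^2)=\Omega(\log\binom{n}{k}/\eps^2)$ queries.

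For the second term, I would restrict attention to $k$-junta distributions whose relevant variables are exactly $[k]$. Such a distribution is determined by an arbitrary distribution $q$ over $\{-1,1\}^k$ via $p(x)=q(x_{[k]})\cdot 2^{-(n-k)}$, and any conditional query on $q$ with set $A\subseteq\{-1,1\}^k$ is simulated by a single conditional query on $p$ with set $A\times\{-1,1\}^{n-k}$. Since $\dtv(p,\hat{\bp})\le\eps$ implies $\dtv(q,\hat{\bq})\le\eps$ for the induced marginals on the first $k$ coordinates (by the data-processing inequality for TV), it suffices to prove an $\Omega(N/\eps^2)$ conditional-query lower bound for learning a distribution on $N=2^k$ points. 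I would adopt a Paninski-style construction: pair the $N$ points into $N/2$ pairs and, for a uniform hidden vector $\bv\in\{-1,1\}^{N/2}$, define $q_{\bv}$ by placing masses $(1+\eta\bv_i)/N$ and $(1-\eta\bv_i)/N$ on the two points of the $i$th pair, with $\eta=\Theta(\eps)$. Any two $q_{\bv},q_{\bv'}$ at Hamming distance $\Omega(N)$ are $\Omega(\eps)$-far in TV, so an $\eps$-accurate learner must recover all but $o(N)$ coordinates of $\bv$.

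The main obstacle is bounding the information a single conditional query extracts in the adaptive setting: on a query set $A$, a sample from $q_{\bv}|_A$ might naively appear to reveal many bits about $\bv$. However, for any $A$ and any $\bv,\bv'$, the pointwise ratio $q_{\bv}(i\mid A)/q_{\bv'}(i\mid A)$ lies in $[1-O(\eta),1+O(\eta)]$, so the chi-squared (and hence KL) divergence between $q_{\bv}(\cdot\mid A)$ and $q_{\bv'}(\cdot\mid A)$ is $O(\eta^2)$, uniformly in $A$. Applying the KL chain rule along an adaptive transcript of $q$ queries gives $I(\bv;\tau)=O(q\eta^2)$ where $\tau$ is the transcript, and Fano's inequality applied to the $N/2$ coordinates of $\bv$ then forces $q\eta^2=\Omega(N)$, i.e., $q=\Omega(2^k/\eps^2)$. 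Combining both parts yields the claimed $\Omega(\log\binom{n}{k}/\eps^2)+\Omega(2^k/\eps^2)$ bound.
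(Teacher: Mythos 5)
Your first part coincides with the paper's argument: learn, project onto the closest $k$-junta, output its relevant variables, and invoke Theorem~\ref{thm:learning-lb} (the paper runs the learner at accuracy $\eps/2$, you invoke the identification bound at $2\eps$; both are fine given $\eps\le 1/120$). Your second part, however, takes a genuinely different route from the paper. The paper proves the $\Omega(2^k/\eps^2)$ term by a reduction to the one-way communication complexity of indexing on $m=2^k$ bits: Alice encodes a truth table $f:\{-1,1\}^k\to\{-1,1\}$ as an $O(\eps)$-almost-uniform junta density $p_y$, compresses the learner's $q$ adaptive conditional samples into $O(q\eps^2)+O(1)$ bits via a public-coin rejection-sampling protocol (Lemma~\ref{lem:compression}, proved with Azuma-type martingale bounds), and lets Bob decode his bit from the sign of $\hat{p}(x)-2^{-n}$ (Claim~\ref{cl:right-val}). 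You instead use a Paninski-style pairing on the $2^k$ junta patterns and bound the information of an adaptive transcript directly: since every point of the hard distributions has mass $(1\pm O(\eps))2^{-n}$, the conditional distribution on any query set $A$ is pointwise $(1\pm O(\eps))$ times uniform on $A$, so each query contributes $O(\eps^2)$ to $I(v;\tau)$ by the chain rule (using that the mixture minimizes average KL), and an approximate-recovery Fano/counting bound forces $q\eps^2=\Omega(2^k)$. The underlying intuition -- a conditional sample from an almost-uniform distribution carries only $O(\eps^2)$ bits -- is exactly the paper's, but your formalization bypasses communication complexity and the compression protocol entirely, at the cost of having to handle adaptivity yourself via the KL chain rule; the paper's machinery has the advantage of being shared with the proof of Theorem~\ref{thm:learning-lb}.

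One step in your write-up is stated in the wrong direction: you reduce to learning a distribution $q$ on $\{-1,1\}^k$ by saying that a conditional query on $q$ with $A\subseteq\{-1,1\}^k$ can be simulated by a query on $p$ with $A\times\{-1,1\}^{n-k}$. For a lower bound you need the opposite simulation, and a general conditional query $A\subseteq\{-1,1\}^n$ on $p=q\otimes\calU_{n-k}$ is \emph{not} simulable by a single conditional query on $q$ (the induced marginal on the first $k$ coordinates reweights $q$ by the fiber sizes $|\{z:(y,z)\in A\}|$, which is not a set-conditioning). This is easily repaired: drop the dimension reduction and run your information argument directly on the $n$-dimensional family $p_v=q_v\otimes\calU_{n-k}$, which is $O(\eps)$-almost uniform, so the same per-query $O(\eps^2)$ KL bound holds for arbitrary $A\subseteq\{-1,1\}^n$; the decoding step still goes through since $\dtv$ between marginals on the first $k$ coordinates is at most $\dtv(p_v,\hat{p})$ by data processing. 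With that one-line fix, your argument is sound.
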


\paragraph{Testing $k$-junta distributions} For the problem of testing junta distributions, 
we obtain matching upper and lower bounds for the query complexity under the
subcube conditioning query model.

\begin{restatable}[Testing junta distributions]{theorem}{thmtestingalg}\label{thm:testing}
	There is an algorithm, 
	which takes subcube conditioning access to an unknown distribution $p$ over $\{-1,1\}^n$, an integer $k\in \N$, 
	and $\eps \in$ $(0, 1/4]$. 
	It makes 
	$$
	\tilde{O}\left(\frac{k+\sqrt{n}}{\eps^2}\right)
	$$
	queries, runs in time $\tilde{O}(n(k+\sqrt{n})^2/\eps^4)$ and achieves the following guarantee:
	It accepts with probability at least $2/3$ if $p$ is a $k$-junta distribution,
	and rejects with probability at least $2/3$ if $p$ is $\eps$-far from a $k$-junta.
\end{restatable}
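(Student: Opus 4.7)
The plan is to combine the relevant-variable identification procedure of Theorem~\ref{maintheorem} with a uniformity test on random subcubes. In the first phase, I would invoke the procedure of Theorem~\ref{maintheorem} with error parameter $\eps/4$ to obtain a candidate set $\bJ \subseteq [n]$ with $|\bJ| \le k$, using $\tilde{O}(k/\eps^2)\log n = \tilde{O}(k/\eps^2)$ subcube conditioning queries. By Theorem~\ref{maintheorem}, if $p$ is a $k$-junta distribution, then with high probability $p$ is $(\eps/4)$-close to a junta distribution supported on $\bJ$. In the second phase, the algorithm verifies whether $p$ is in fact close to some junta distribution over the now-fixed set $\bJ$.

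The verification reduces to uniformity testing on subcubes via the following observation. A distribution $p$ over $\{-1,1\}^n$ is a junta over $\bJ$ (w.r.t.\ the uniform distribution) if and only if, for every $y$ in the support of the marginal $p_{\bJ}$, the conditional distribution $p(\cdot \mid x_{\bJ}=y)$ on $\{-1,1\}^{[n]\setminus \bJ}$ is uniform. Considering the specific junta $q = p_{\bJ} \otimes U_{n-|\bJ|}$, a direct calculation gives
$$d_{\text{TV}}(p, q) \;=\; \mathbf{E}_{\by \sim p}\!\left[d_{\text{TV}}\bigl(p(\cdot \mid x_{\bJ}=\by_{\bJ}),\, U_{n-|\bJ|}\bigr)\right].$$
If $p$ is $\eps$-far from every junta over $\bJ$, then the above expectation is at least $\eps$; conversely, if $p$ is a $k$-junta then Theorem~\ref{maintheorem} ensures the expectation is at most $\eps/4$. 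These two cases, combined with $|\bJ|\le k$, suffice for soundness and completeness of the overall test. The verifier itself can be implemented using a subcube-conditioning uniformity tester, such as the one from~\cite{CCKLW20}, which distinguishes uniformity on $\{-1,1\}^m$ from $\eps$-far with $\tilde{O}(\sqrt{m}/\eps^2)$ queries; applied to a subcube of dimension $n - |\bJ|$, each invocation costs $\tilde{O}(\sqrt{n-k}/\eps^2) = \tilde{O}(\sqrt{n}/\eps^2)$ queries, and summing with the first phase yields the target $\tilde{O}((k+\sqrt{n})/\eps^2)$.

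The principal obstacle will be achieving the $1/\eps^2$ (rather than $1/\eps^3$) dependence in the second phase. A naive reduction that draws one sample $\by \sim p$, conditions on $\by_{\bJ}$, and runs the uniformity tester on the conditional distribution suffers an $O(1/\eps)$ overhead: by Markov's inequality, only with probability $\Omega(\eps)$ is the conditional distance $\Omega(\eps)$, so $\Omega(1/\eps)$ independent trials are needed to detect the ``$\eps$-far on average'' case, producing $\tilde{O}(\sqrt{n}/\eps^3)$ queries. To save this factor I would adopt a chi-squared--style statistic, aggregated across subcubes generated by independent samples $\by^{(1)},\by^{(2)},\ldots \sim p$, serving as an unbiased estimator for $\mathbf{E}_{\by}[\chi^2(p(\cdot \mid \by_{\bJ}),\, U_{n-|\bJ|})]$. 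By Jensen's inequality together with $\chi^2 \ge 2\,d_{\text{TV}}^2$, this expectation is at least $2(\mathbf{E}_{\by}[d_{\text{TV}}])^2 = \Omega(\eps^2)$ in the soundness case, which is exactly the signal level detectable by a single chi-squared tester on $\{-1,1\}^{n-|\bJ|}$ with $\tilde{O}(\sqrt{n}/\eps^2)$ queries.
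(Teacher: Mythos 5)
Your first phase and the reduction behind it are fine: fixing the coordinates in $\bJ$ according to a sample and asking whether the conditional distributions on $\ol{\bJ}$ are uniform on average is exactly the paper's starting point (this is the identity used at the beginning of the proof of Lemma~\ref{lem:main-structural}). The gap is in your second phase. In the completeness case the conditionals $p(\cdot \mid x_{\bJ}=y)$ are \emph{not} uniform; they are $k$-junta distributions that are only guaranteed to be close to uniform \emph{on average} in total variation. So you are facing a tolerant problem, and your chi-squared statistic does not separate the two cases: $\chi^2$ divergence is not controlled by TV distance. Concretely, a conditional that is a junta on $k$ of the free coordinates and puts its entire deviation on a single setting of those coordinates can be at TV distance $\eps/4$ from uniform while having $\chi^2$ divergence of order $2^{k}\eps^2$, which swamps the $\Omega(\eps^2)$ soundness signal you extract via Jensen. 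Independently of that, an unbiased estimator of $\chi^2\bigl(p(\cdot\mid y_{\bJ}),\calU\bigr)$ over the domain $\{-1,1\}^{n-|\bJ|}$ needs $2^{\Omega(n)}$ samples to have usable variance; the $\tilde{O}(\sqrt{n}/\eps^2)$ tester of \cite{CCKLW20} is not a chi-squared statistic but works through random restrictions and mean testing, and its completeness analysis assumes the distribution is exactly uniform, so it also cannot be invoked as a black box on these conditionals.

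The paper avoids both obstacles by never testing the conditionals in TV or $\chi^2$ at all. First, \texttt{FindRelevantVariables} is run so that, in the completeness case, it certifies a much stronger property than ``$p$ is close to a junta over $\bJ$'': condition (\ref{eq:guarantees}) of Lemma~\ref{thm:kjunta-smallnorm1} bounds, for every scale $\sigma^j$, the expected $\ell_2$ norm of the mean vector of $(p_{|\brho})_{|\bnu}$ after a \emph{second} layer of random restrictions. Soundness converts TV-farness into exactly the same mean-vector quantity via the main structural lemma. The two are then compared using the robust mean tester of Theorem~\ref{thm:MeanTesting++}, whose completeness crucially exploits that the restricted distribution is still a $k$-junta (this is what sidesteps the tolerant-testing lower bound you would otherwise hit). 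Finally, the $1/\eps^2$ rather than $1/\eps^3$ dependence is not obtained by squaring the average distance; it comes from a level-set decomposition (Claim~\ref{simplesimple2}) over the magnitude of $\|\mu((p_{|\brho})_{|\bnu})\|_2$: restrictions at scale $2^{-\ell}$ occur with probability at least roughly $2^{\ell}\eps/\sqrt{n}$ and are handled by correspondingly cheaper invocations of the mean tester, and the product of the two balances to $\tilde{O}((k+\sqrt{n})/\eps^2)$ across all scales. Your proposal is missing this entire mechanism, and the chi-squared surrogate you substitute for it does not work.
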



\begin{restatable}[Lower bound for junta testing]{theorem}{thmlowerboundthree}\label{thm:lb}
	There exist two absolute constants $\eps_0>0$ and $C_0\in \N$ such that
	for any setting of $0<\eps\le \eps_0$, $n\ge C_0$ and $0\le k\le n/2$,
	any algorithm which receives as input subcube conditioning query access to an unknown distribution $p$ supported on $\{-1,1\}^n$ and distinguishes with probability at least $2/3$ between the case when $p$ is a $k$-junta distribution and the case when $p$ is $\eps$-far from any $k$-junta distribution must make 
	at least ${\tilde{\Omega}(k + \sqrt{n})}/{\eps^2}$ many 
	queries.
	Furthermore, the lower bound holds even when $p$ is promised to be a product distribution.
\end{restatable}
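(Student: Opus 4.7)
The plan is to establish the two terms $\tilde\Omega(\sqrt{n}/\eps^2)$ and $\tilde\Omega(k/\eps^2)$ separately via different hard product-distribution constructions, and then take the maximum. Since both constructions below use only product distributions, the ``product distribution'' strengthening in the theorem is automatic.

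For the $\sqrt{n}/\eps^2$ piece, I would reduce from the known subcube-conditioning lower bound for uniformity testing on $\{-1,1\}^n$ (which is essentially the $k=0$ case of the present problem). The yes instance is the uniform distribution, a $0$-junta. The no instance is a Paninski-style product distribution in which each coordinate $i$ independently receives a random bias $\mu_i \in \{-\eta,+\eta\}$ for $\eta = \Theta(\eps/\sqrt{n-k})$. The main verification is that every such no instance is $\Omega(\eps)$-far from every $k$-junta distribution: the closest $k$-junta product distribution to a product $p$ must zero out at least $n - k$ of the biases, contributing Hellinger-squared $\Theta((n-k)\eta^2) = \Theta(\eps^2)$ and hence $\Omega(\eps)$ in total variation after the usual Hellinger-to-TV conversion. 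Because $k \leq n/2$ gives $n-k = \Theta(n)$ and $\eta = \Theta(\eps/\sqrt{n})$, the known uniformity testing lower bound under subcube conditioning ports over directly.

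For the $k/\eps^2$ piece, I would use a different hidden-subset construction: designate $2k$ ``candidate'' coordinates and let the yes distribution pick a uniformly random size-$k$ subset $S \subseteq [2k]$ together with random signs $\sigma \in \{-1,+1\}^S$, giving coordinates in $S$ bias $\sigma_i \eta$ and leaving every other coordinate uniform (so the resulting distribution is a genuine $k$-junta). The no distribution puts an independent random $\pm\eta$ bias on each of the $2k$ candidates. For $\eta$ chosen appropriately so that every no instance is $\Omega(\eps)$-far from every $k$-junta (by a Hellinger computation over the $k$ ``leftover'' candidate coordinates that the $k$-junta cannot match), I would then apply a $\chi^2$-method or mutual-information argument over subcube-conditioning transcripts to conclude.

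The main obstacle is controlling subcube conditioning queries inside this information-theoretic argument, since such queries are strictly more informative than i.i.d.\ samples. The key structural observation I would rely on is that for any product distribution $p$, conditioning on a subcube $\rho \in \{-1,1,*\}^n$ yields another product distribution whose free-coordinate marginals coincide exactly with the corresponding marginals of $p$. This lets one simulate any adaptive subcube-conditioning algorithm as a sequence of samples from marginals of $p$ on (adaptively chosen) free coordinate sets, reducing the lower bound to a sampling-style Ingster/Paninski $\chi^2$ calculation over mixtures indexed by hidden subsets. The most delicate step will be the cross-term bookkeeping for adaptive, non-identically-distributed queries, together with verifying that conditioning on subcubes of small dimension yields no asymptotic advantage in either of the two constructions.
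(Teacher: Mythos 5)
Your first construction (the $\tilde\Omega(\sqrt{n}/\eps^2)$ term) is essentially the paper's route: both reduce to a sample-complexity lower bound for product distributions, using the observation that subcube conditioning on a product distribution can be simulated from plain samples, and both verify that the Paninski-style no-instances are far from all $k$-juntas (the paper does the farness step by projecting onto biased coordinates outside the junta's relevant set, which is the clean way to make your ``the closest $k$-junta must zero out $n-k$ biases'' heuristic rigorous, since the closest $k$-junta need not be a product distribution).

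The $\tilde\Omega(k/\eps^2)$ construction, however, has a genuine gap: your yes/no ensembles are distinguishable with only $O(\sqrt{k}/\eps^2)$ samples, so no transcript/$\chi^2$ bookkeeping can rescue them. Concretely, with $\eta=\Theta(\eps/\sqrt{k})$ (forced by the farness requirement), in the yes case exactly $k$ of the $2k$ candidate coordinates carry bias $\pm\eta$, so $\|\mu\|_2^2=k\eta^2=\Theta(\eps^2)$ deterministically, while in the no case all $2k$ candidates are biased and $\|\mu\|_2^2=2k\eta^2=\Theta(\eps^2)$ with a constant-factor gap of $k\eta^2=\Theta(\eps^2)$. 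The unbiased pair-collision estimator $Z=\sum_{i\in[2k]}\binom{s}{2}^{-1}\sum_{a<b}\bx_i^{(a)}\bx_i^{(b)}$ has expectation $\|\mu\|_2^2$ and variance $O\big(k/s^2+\|\mu\|_2^2/s\big)$, so $s=O(\sqrt{k}/\eps^2)$ samples already detect the $\eps^2$ gap. The root cause is that your per-coordinate bias distributions match only the first moment (both mean zero) but not the second: their second moments differ by $\Theta(\eta^2)$. This is exactly the obstruction the paper's construction is built to avoid: there, the per-coordinate biases in $\Dyes$ and $\Dno$ are drawn from two distributions supported on $\{0\}\cup\{j^3\tau/\sqrt{n}\}_{j\le \ell}$ whose first $\ell-1=\Theta(\log n/\log\log n)$ moments agree, obtained by splitting the positive and negative parts of the solution $z$ of a Vandermonde system $A(\alpha)z=e_1$; the sign structure of $z$ simultaneously guarantees that the yes draws have at most $n/2$ biased coordinates and the no draws have $n/2+\Omega(n)$ of them, and indistinguishability is then proved directly on the binomial sufficient statistics $\prod_i \Bin(s,p_i)$ of $s=\tilde\Theta(n/\eps^2)$ samples (adaptivity never enters, for the same product-distribution reason you noted). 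So the ``delicate step'' is not controlling adaptive subcube queries, but constructing ensembles that are information-theoretically hard even for i.i.d.\ samples, which requires matching $\omega(1)$ moments rather than one.
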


An open problem posed by Aliakbarpour, Blais and Rubinfeld \cite{ABR17}
is whether their exponential lower bound for testing junta distributions under the sampling oracle can be bypassed
using general conditioning queries.~We answer the question positively with subcube conditioning queries.


\subsection{Technical overview}\label{sec:overview}

We give an overview of our results for learning and testing junta distributions.
All our algorithms heavily use \emph{random restrictions} drawn using samples from the unknown  distribution.
We start with some  notation for restrictions and how we apply them on a distribution.

Let $p$ be a distribution over $\{-1,1\}^n$ and let $\rho\in \{-1,1,*\}^n$ be a restriction.
We write $p_{|\rho}$ to denote the distribution obtained by applying the restriction $\rho$ on $p$:
it is supported on $\smash{\{-1,1\}^{\stars(\rho)}}$~where $\stars(\rho)$ is the set of $i\in [n]$ with $\rho_i=*$,
and $\by\sim p_{|\rho}$ is drawn by first drawing $\bx\sim p$ conditioned on $\bx_i=\rho_i$ for all $i\notin
\stars(\rho)$ and then setting $\by=\bx_{\stars(\rho)}$.
There will be mainly two ways we draw a random restriction $\brho$.
In the first scenario, we fix a set $S\subset [n]$ and draw a random restriction $\brho$ by 
first drawing $\bx\sim p$ and then setting $\brho_i=\bx_i$ for each $i\notin S$ and $\brho_i=*$ otherwise.
We denote this distribution of restrictions by $\calD_S(p)$.
The more sophisticated  way of drawing a random restriction $\brho$, given a parameter $\sigma\in (0,1)$, is 
to first draw $\bx\sim p$ and a random set $\bS\subseteq [n]$ by including each element independently with
probability $\sigma$.
We then set $\brho_i=\bx_i$ for each $i\notin \bS$ and $\brho_i=*$ otherwise.
We denote this distribution of restrictions by $\calD_\sigma(p)$

\textbf{Algorithm for identifying relevant variables.}
Given access to a distribution $p$, the algorithm proceeds 
by maintaining a set $J$ (initially empty) of
relevant\footnote{Unlike the Boolean function setting,
	we only know that variables in $J$ are relevant with high probability.}
variables found, 
and iteratively adding to $J$ until no more relevant variables are found. 
Hence, the key challenge is discovering new relevant variables when 
$p$ 
remains $\eps$-far from any $k$-junta distribution over $J$.
The latter condition implies 
$$
\E_{\brho\sim \calD_{\overline{J}}(p)}\Big[\dtv\big(p_{|\brho},\calU\big)\Big]\ge \eps,
$$
where $\calU$ denotes the uniform distribution (of the right dimension).
Assume, for convenience, that 
the algorithm samples a restriction $\rho$ with $\dtv(p_{|\rho},\calU)$ $\ge \eps$.  The major difficulty is that arbitrary correlations among (yet unknown) $k$ relevant variables may hide the non-uniform nature of $p_{|\rho}$.\footnote{For example, consider the $k$-junta distribution $p$ over $\{-1,1\}^n$ which is parameterized by a subset $S \subset [n]$ of size $k$ (denoting the relevant variables). A sample $\bx \sim p$ is uniform over all points $y \in \{-1,1\}^n$ where $\prod_{i\in S} y_i = 1$. Notice that $\dtv(p, \calU) \geq 1/2$, however, the distribution given by projecting $p$ onto any subset of coordinates which does not completely include all $S$ variables is exactly uniform. The silver lining (for this specific distribution) will be that if a restriction $\rho$ fixes all but one variable in $S$, i.e., $S \cap \stars(\rho) = \{ i \}$, then every sample $\bx \sim p_{|\rho}$ will have $\bx_i$ always set to the same value.}
For this, we leverage a set of recently-developed tools from \cite{CCKLW20} for analyzing mean vectors of random restrictions of distributions. Specifically, for an arbitrary distribution $p$ over $\{-1,1\}^n$, we denote $\mu(p) \in [-1,1]^n$ as the \emph{mean vector},
$$ \mu(p) \eqdef \Ex_{\bx \sim p}\left[ \bx \right] \in [-1,1]^n. $$
We prove the following structural lemma for distributions which are far-from $k$-juntas. At a high level, this lemma allows us to find relevant variables by only considering the marginal distributions on specific coordinates after applying random restrictions. \ignore{In general, $p_{|\rho}$ is not a product distribution,\footnote{For general distributions over $\{-1,1\}^n$, the relationship between the total variation distance to uniformity and the $\ell_2$-norm of mean vector is not true for general distributions. A simple example is the uniform distribution on vectors whose parity is $1$, which is very far from uniform yet has mean vector $0$.}}
%

\begin{restatable}[Main structural lemma]{lemma}{mainstructurallemma}\label{lem:main-structural}
	There is a universal constant $c>0$ such that the following holds.
	Let $p$ be any probability distribution supported over $\{-1,1\}^n$ for some $n \in \N$. 
	Let $J \subset [n]$~be a subset of variables such that $p$
	is $\eps$-far from being a junta distribution over variables in $J$ for some $\eps \in (0, 1/4]$.\footnote{We require $\eps\le 1/4$ just so that $\log (n/\eps)\ge 2$ even when $n=1$; this helps avoid an
		extra multiplicative constant needed on the right hand side of (\ref{eq:structuralbound}).}  
	Then for $\sigma =1/2$ we have
	\begin{equation}\label{eq:structuralbound} \sum_{j=1}^{ \lceil \log_2 2n \rceil} \Ex_{\brho \sim \calD_{\ol J}(p)} \left[ \Ex_{\bnu \sim \calD_{\sigma^j}(p_{|\brho})}\Big[\big\|\mu((p_{|\brho})_{|\bnu})\big\|_2 \Big] \right] \geq \dfrac{\eps}
	{ \log^{c} (n/\eps) }.
	\end{equation}
\end{restatable}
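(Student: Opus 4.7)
I would begin by showing that
$$
\E_{\brho \sim \calD_{\ol J}(p)} \bigl[\dtv(p_{|\brho}, \calU)\bigr] \ \geq\ \eps.
$$
Let $w$ denote the marginal of $p$ on the coordinates in $J$ and set $q_w(x) \eqdef w(x_J)/2^{|\ol J|}$; this is a $J$-junta distribution. Splitting the sum $\tfrac{1}{2}\sum_x |p(x) - q_w(x)|$ by first summing over $x_{\ol J}$ with $x_J$ fixed yields the identity $\dtv(p, q_w) = \E_{\brho \sim \calD_{\ol J}(p)}[\dtv(p_{|\brho}, \calU)]$. Since $p$ is $\eps$-far from every $J$-junta, including $q_w$, the claim follows. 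This reduces (\ref{eq:structuralbound}) to proving, for each $\brho$ with $\dtv(p_{|\brho}, \calU)$ not too small, that the inner sum in (\ref{eq:structuralbound}) over scales $j$ is large in terms of $\dtv(p_{|\brho}, \calU)$.

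\textbf{Stage 2 (a uniformity structural lemma).} Next, following the random-restriction machinery of \cite{CCKLW20}, I would establish the following statement, which is a pure ``far-from-uniform'' analogue of the lemma: for any distribution $r$ on $\{-1,1\}^m$,
$$
\sum_{j=1}^{\lceil \log_2 2m\rceil} \E_{\bnu \sim \calD_{2^{-j}}(r)} \bigl[\|\mu(r_{|\bnu})\|_2\bigr] \ \geq\ \frac{\dtv(r, \calU)}{\log^{c'}(m/\dtv(r,\calU))}
$$
for a universal constant $c' > 0$. Applying this with $r = p_{|\brho}$ and $m = |\ol J| \leq n$, and averaging over $\brho$, combines with Stage 1 to give the lemma. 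The averaging step is handled by a Markov-type bucketing: the $\brho$'s with $\dtv(p_{|\brho}, \calU) \geq \eps/2$ contribute at least $\eps/2$ to the expectation from Stage 1, and on those $\brho$'s the right-hand side above is at least $(\eps/2)/\log^{c'}(2n/\eps)$. Thus (\ref{eq:structuralbound}) holds with $c = c' + O(1)$.

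\textbf{Main obstacle.} The hard part is Stage 2, which requires converting an $\ell_1$-type discrepancy $\dtv(r,\calU)$ into $\ell_2$-type mean-vector norms $\|\mu(r_{|\bnu})\|_2$ without paying the naive Cauchy--Schwarz factor of $2^{m/2}$. Following CCKLW20, the plan is to Fourier-expand the density $h = r/\calU$, so that $\|\mu(r_{|\bnu})\|_2^2$ captures (modulo a subcube-mass normalization) the degree-one Fourier weight of $h_{|\bnu}$, and then observe that a random restriction at scale $2^{-j}$ lifts Fourier weight of $h$ from level roughly $2^j$ down to level one. Dyadic pigeonholing over the $O(\log m)$ possible Fourier levels identifies a scale that captures an $\Omega(1/\log m)$ fraction of the mass responsible for $\dtv(r,\calU)$. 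To control the $\ell_1$-to-$\ell_2$ conversion I would bucket restrictions $\bnu$ according to the mass that $r$ assigns to the implied subcube, losing only a polylogarithmic factor per bucket; the $O(\log(1/\dtv(r,\calU)))$ buckets combined with the $O(\log m)$ scales aggregate into the polylog-exponent denominator claimed above.
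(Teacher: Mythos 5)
Your Stage 1 is exactly the paper's first step (take $q$ to be the junta whose $J$-marginal matches $p_J$, and use $\dtv(p,q)=\Ex_{\brho\sim\calD_{\ol J}(p)}[\dtv(p_{|\brho},\calU)]\ge\eps$), and your final averaging-by-Markov over $\brho$ is an acceptable variant of what the paper does (the paper instead carries an additive $\eps/2$ slack through the per-restriction bound and averages directly; the two are interchangeable). Note also that your Stage 2 statement is precisely the $J=\emptyset$ case of the lemma you are proving, so the proposal really amounts to reducing the lemma to its core special case — which is fine as a reduction, but it means all of the content now sits in your Stage 2 sketch.

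That sketch has a genuine gap. The quantity $\dtv(r,\calU)=\tfrac12\|h-1\|_1$ (with $h=2^m r$ the density) is an $L_1$ quantity, while Fourier weight at any level is an $L_2$ quantity of $h$ that can exceed $\dtv(r,\calU)^2$ by factors exponential in $m$ (the density can be as large as $2^m$); so there is no meaningful sense in which ``the mass responsible for $\dtv(r,\calU)$'' sits at Fourier levels, and Parseval-type restriction identities transfer level weights, not total variation. This is exactly why the naive $\ell_1\to\ell_2$ conversion threatens a $2^{m/2}$ loss, and dyadic pigeonholing over levels plus bucketing subcubes by mass does not remove it: the relevant subcube masses range down to $2^{-m}$, not $\poly(1/\dtv(r,\calU))$, so your $O(\log(1/\dtv(r,\calU)))$ buckets do not cover them, and the only reason low-mass subcubes are controllable at all is that the restriction values are drawn from $r$ itself (importance sampling), which your Parseval-style accounting does not exploit. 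The route that actually works — the one the paper takes in Section 7 — replaces the Fourier-pigeonholing by three imported ingredients from \cite{CCKLW20}: the subadditivity decomposition $\dtv(r,\calU)\le\Ex_{\bS}[\dtv(r_{\ol\bS},\calU)]+\Ex_{\brho\sim\calD_\sigma(r)}[\dtv(r_{|\brho},\calU)]$, the single-scale restriction theorem bounding $\Ex_{\brho\sim\calD_\sigma(r)}[\|\mu(r_{|\brho})\|_2]$ against the projection term (whose proof rests on a robust, directed Pisier-type inequality rather than level-weight counting), and the one-star lemma; these are assembled by an induction over $O(\log)$ scales with explicit error thresholds (the $\eps/(6t)$ and $e^{-|\stars(\nu)|/20}$ terms), plus a separate argument when the ambient dimension is only $O(\log(m/\eps))$ — a case the paper must handle because the lemma is later invoked with $n$ set to $k$, possibly as small as $1$, and which your sketch does not address. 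Without the Pisier-type inequality, the subadditivity/induction structure, and the small-dimension case, Stage 2 remains an assertion of the theorem rather than a proof of it.
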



We will apply the main structural lemma to the distribution $p$ projected onto its $k$ relevant variables (so $n$ in Lemma~\ref{lem:main-structural} becomes $k$), which suggests the following algorithm: for each $j=1,\ldots,\lceil \log_2 2k\rceil$, 
draw $\brho$ and $\bnu$ as described above in the hopes that $\|\mu((p_{|\brho})_{|\bnu})\|_2 \geq \eps / \log^c(k/\eps)$. Once this occurs, since $\mu((p_{|\brho})_{|\bnu})$ contains at most $k$ non-zero coordinates, at least one coordinate $i \in \stars(\bnu)$ will have mean at least $\eps / (\sqrt{k}\log^c(k/\eps) )$ 
in magnitude. In other words, the $i$-th variable is relevant, and the marginal distribution on the $i$-th coordinate of $(p_{|\brho})_{|\bnu}$ is biased by at least \smash{$\tilde{\Omega}(\eps / \sqrt{k})$}. Taking \smash{$\tilde{O}(k / \eps^2) \cdot \log n$} random samples from $(p_{|\brho})_{|\bnu}$ is enough to identify all relevant coordinates whose marginal is at least \smash{$\tilde{\Omega}(\eps / \sqrt{k})$} to include into $J$; furthermore, (by the extra $(\log n)$-factor), we never include a non-biased coordinate in $J$. 
Notice, however, that all guarantees are only in expectation, and we need to employ a budget doubling strategy to achieve the nearly-optimal bound.
\ignore{ 
	Given a distribution $q$ with $\|\mu(q)\|_2\ge \Omega(\eps)$, the budget doubling strategy proceeds as follows:
	\begin{flushleft}\begin{enumerate}
			\item Start with $b=1$.
			\item Draw $\tilde{O}(b/\eps^2)\cdot \log n $ samples from $q$. If there is a set $J'$ of at least $b$ variables with
			biases at least $\tilde{\Omega}(\eps/\sqrt{b})$ in their empirical mean, return $J'$; otherwise, double $b$ and repeat this step.
	\end{enumerate}\end{flushleft}
	It can be shown that with high probability, at least one of the loops of step 2 succeeds and when that happens, at least $b$ new relevant variables are identified and 
	the total number of samples drawn is $\tilde{O}(b/\eps^2)\cdot \log n$, which is linear in $b$.
	Applying the strategy of budget doubling, we spend roughly $\log n/\eps^2$ queries 
	for each relevant variable identified; this leads to the upper bound of $\tilde{O}(k/\eps^2)\cdot \log n$
	in Theorem \ref{maintheorem}. }

\textbf{Algorithm for testing junta distributions.}
The testing algorithm first runs the algorithm for identifying relevant variables, and then tests whether the distribution depends only on the relevant variables found. In particular, let $J$ be the set of variables it returns, and notice that the algorithm may immediately reject if $|J|>k$, 
since every variable in $J$ found by the algorithm is relevant (with high probability). 
The remaining task is 
distinguishing between the following two cases:
\begin{flushleft}\begin{enumerate}
		\item If $p$ is $\eps$-far from $k$-junta distributions, then by definition $p$ is $\eps$-far
		from any junta distribution over $J$.
		By the main structural lemma, there is some $j= 1,\ldots,\lceil\log_2 2n\rceil$
		\\ such that $\|\mu((p_{|\brho})_{|\bnu})\|_2$ is large (in expectation) when $\brho\sim \calD_{\overline{J}}(p)$
		and $\bnu\sim \calD_{\sigma^j}(p_{|\brho})$.
		\item If $p$ is a $k$-junta distribution, then 
		for every $j =1, \dots, \lceil \log_2 2n\rceil$, $(p_{|\brho})_{|\bnu}$ will (trivially)
		\\still be a $k$-junta distribution and $\|\mu((p_{|\brho})_{\bnu})\|_2$ will tend to be small (in expectation).\\ The intuition for the latter condition is that otherwise, the algorithm for finding relevant variables as sketched above would have identified more variables.
\end{enumerate}\end{flushleft}
\ignore{As a result, to finish the job (with simplifications for the convenience of the overview) 
	all we need is an algorithm that, given a distribution $q$ over $\{-1,1\}^n$,
	can distinguish between the case when $q$ is a $k$-junta distribution and has
	$\|\mu(q)\|_2\le \eps/100$ and the case when $q$ is a (general) distribution that has $\|\mu(q)\|_2\ge \eps$.
	We note that the assumption that $q$ is a $k$-junta in the first case is crucial;
	otherwise it becomes a tolerant testing problem and can have a much higher query complexity.}

To this end, we design a ``robust mean tester'' for juntas distributions.

\begin{restatable}[Robust mean testing for juntas]{theorem}{MeanTestingRestatable}
	\label{thm:MeanTesting++}
	There is an algorithm 
	which, given sample
	access~to a distribution $p$ on $\{-1, 1\}^n$, $k\in \N$ and  
	a parameter $\eps\in (0, 1)$, has the following
	behavior:
	\begin{flushleft}\begin{enumerate}
			\item If $p$ is a $k$-junta distribution with $\|\mu(p)\|_2 \leq \eps \sqrt{n} / 100$,
			the algorithm returns \emph{``\texttt{Is a $k$-junta}''} with probability at least $2/3$;
			\item If $p$ is a distribution that satisfies $\|\mu(p)\|_2 \geq \eps \sqrt{n}$,
			the algorithm returns \emph{``\texttt{Not a $k$-junta}''} with probability at least $2/3$.
	\end{enumerate}\end{flushleft}
	Moreover, the algorithms draws 
	\begin{align}
	q=O\left(\max\left\{\frac{k + \sqrt{n}}{\eps^2 n } ,\hspace{0.03cm} \frac{ k+\sqrt{n} }{\eps\sqrt{n}} \right\} 
	\right)\label{eq:q-setting}
	\end{align}
	samples from $p$ and runs in time $O(q^2 n)$. 
\end{restatable}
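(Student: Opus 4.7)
The plan is to base the test on the unbiased $U$-statistic
\[
U \;=\; \binom{q}{2}^{-1}\sum_{1\le i<j\le q}\langle \bx^{(i)},\bx^{(j)}\rangle,
\]
with $\bx^{(1)},\ldots,\bx^{(q)}$ i.i.d.\ from $p$, so that $\ex{U}=\|\mu(p)\|_2^2$. The algorithm returns ``\texttt{Is a $k$-junta}'' iff $U<T$ for a threshold $T$ placed strictly between the Case 1 upper bound $\eps^2 n/10^4$ and the Case 2 lower bound $\eps^2 n$, for instance $T=\eps^2 n/4$. Then it suffices to show that $U$ deviates from $\ex{U}$ by less than $\Theta(\eps^2 n)$ with probability at least $2/3$ under each hypothesis.

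The main quantity is $\mathrm{Var}(U)$, which I would compute via the Hoeffding decomposition. Writing $M = \ex{\bx\bx^\top}$, a direct calculation of second moments yields
\[
\mathrm{Var}(U) \;\lesssim\; \frac{\|M\|_F^2}{q^2} \;+\; \frac{\mu(p)^\top M\, \mu(p)}{q},
\]
the two summands arising from the ``uncorrelated-pairs'' and ``shared-index'' contributions to the covariance structure of $\{\langle \bx^{(i)},\bx^{(j)}\rangle\}_{i<j}$.

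For completeness (Case 1) I leverage the junta structure: irrelevant coordinates of a $k$-junta distribution are independent uniform $\pm 1$ variables, so $M$ agrees with $I$ outside the $k\times k$ principal submatrix indexed by the relevant coordinates. Hence $\|M\|_F^2 \le n + k^2$ and $\|M\|_{\mathrm{op}} \le 1+k$, giving $\mu^\top M \mu \le (1+k)\|\mu(p)\|_2^2 \le (1+k)\eps^2 n/10^4$. Chebyshev's inequality against the gap $T-\ex{U}=\Theta(\eps^2 n)$ goes through as soon as $q \gtrsim (k+\sqrt n)/(\eps^2 n)$, the first summand of~\eqref{eq:q-setting}.

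For soundness (Case 2) the junta structure is unavailable, so I fall back on the generic bounds $\|M\|_F^2 \le n\cdot\mathrm{tr}(M) = n^2$ and $\mu^\top M\mu \le \|M\|_{\mathrm{op}}\|\mu(p)\|_2^2 \le n\|\mu(p)\|_2^2$, both valid because $M$ is PSD with unit diagonal. With margin $\ex{U}-T \ge \|\mu(p)\|_2^2/2 \gtrsim \eps^2 n$ and squared signal $\|\mu(p)\|_2^4 \ge \eps^4 n^2$, balancing the two variance terms against this margin produces the complementary constraint $q \gtrsim (k+\sqrt n)/(\eps\sqrt n)$, the second summand of~\eqref{eq:q-setting}, which dominates the first precisely when $\eps \ge 1/\sqrt n$.

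The hardest part is the soundness direction, because $M$ is essentially adversarial without the junta assumption and the worst-case $\|M\|_F^2 = \Theta(n^2)$ looks alarming. The crucial observation is that the signal $\|\mu(p)\|_2^2$ itself scales linearly in $n$ under Case 2, so the signal-to-noise ratio remains controlled at the claimed $q$. The $O(q^2 n)$ running time is immediate from computing the $\binom{q}{2}$ inner products between $n$-dimensional samples.
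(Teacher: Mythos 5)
Your completeness argument is fine and is essentially the paper's own level-$0$ analysis (whether you use a $U$-statistic over one batch or the paper's split-sample inner product $\bZ^{(0)}$ is immaterial), but the soundness step has a genuine gap, and it is precisely the gap that forces the paper to use a different algorithm. With your generic bounds $\|M\|_F^2\le n^2$ and $\mu^\top M\mu\le n\|\mu(p)\|_2^2$, Chebyshev against a margin of order $\|\mu(p)\|_2^2$ (which may be exactly $\eps^2 n$) requires both $n^2/q^2 = O(\eps^4 n^2)$ and $n\|\mu(p)\|_2^2/q = O(\|\mu(p)\|_2^4)$, i.e.\ $q=\Omega(1/\eps^2)$ --- not the ``complementary constraint'' $q=\Omega((k+\sqrt n)/(\eps\sqrt n))$ you claim the balancing yields. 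For $k=O(\sqrt n)$ the bound in (\ref{eq:q-setting}) is $O(1/\eps + 1/(\eps^2\sqrt n))$, smaller than $1/\eps^2$ by a factor of up to $\min\{1/\eps,\sqrt n\}$, so the arithmetic simply does not close. Moreover this is not mere slack in the analysis: the single-level thresholded statistic genuinely fails at the claimed sample size. Take $p$ supported on $\{x,-x\}$ for a fixed $x\in\{-1,1\}^n$ with probabilities $(1\pm\eps)/2$; this lies in Case 2 for every $k$ since $\|\mu(p)\|_2=\eps\sqrt n$, yet $\langle \bx^{(i)},\bx^{(j)}\rangle = n\,s_is_j$ with $\eps$-biased signs $s_i$, so $U = n\bigl((\sum_i s_i)^2-q\bigr)/(q(q-1))$, and for $q=o(1/\eps^2)$ the event $U>c\,\eps^2 n$ (for any constant $c>0$, hence any admissible threshold) reduces to $(\sum_i s_i)^2 \ge q(1+o(1))$, which happens only with probability about $0.32<2/3$. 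Here $\|\Sigma(p)\|_F=\Theta(n)$, which is exactly the ``alarming'' worst case you acknowledge but do not actually neutralize.

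The paper's route is designed around this obstruction: it runs the same statistic on the iterated tensor distributions $\odot^r(p)$ for $r=0,1,\dots,\lceil\log\log n\rceil$ with doubly-exponentially growing thresholds $\tau_r$. The key soundness step (Lemma~\ref{lem:soundinduction}) converts the nuisance into signal: if the level-$r$ mean is large but the level-$r$ test fails to fire, then the variance bound forces $\|\mu(\odot^{r+1}(p))\|_2=\|\Sigma(\odot^r(p))\|_F$ to be large, so a huge covariance Frobenius norm is simply handed to the next level, and after $\lceil\log\log n\rceil$ rounds one contradicts the trivial bound $\|\mu(\odot^{r_0+1}(p))\|_2^2\le n^{2^{r_0+1}}$. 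The price is that completeness must control $\|\mu(\odot^r(p))\|_2$ for $k$-junta distributions at \emph{every} level (Proposition~\ref{prop:mean}); your bound $\|M\|_F^2\le n+k^2$ is only the $r=1$ instance of that moment bound. If you wish to keep a single-level test, you would need to argue that $\|\Sigma(p)\|_F$ cannot be of order $n$ in the soundness case, and the example above shows it can.
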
   

The above theorem improves on a (non-robust) mean tester from \cite{CCKLW20} (which solves the case when $k=0$) in two ways. The first is that since $k \neq 0$, the case $p$ is a $k$-junta may have non-zero mean vector, and our algorithm distinguishes a constant factor gap between the $\ell_2$-norm of mean vectors.\footnote{This gives the robust mean tester a somewhat tolerant testing flavor. Removing the assumption of $p$ being a $k$-junta in the completeness case, and allowing arbitrary distributions with small $\ell_2$-norms on the mean vector would result in an $\Omega(1/\eps^2)$ lower bound (which is always much higher than (\ref{eq:q-setting})). Proof: for $x \in \{-1,1\}^n$, let $p_1$ and $p_2$ be distributions over $\{x, -x\}$ where $p_1$ is uniform and $p_2$ samples $x$ with probability $(1+\eps)/2$. These exhibit a gap in the mean vectors, but are indistinguishable with significantly fewer than $1/\eps^2$ samples.}  The second is that the algorithm runs in time $O(q^2 n)$ as opposed to $n^{O(\log n)}$, and gives optimal query complexity (whereas the result in \cite{CCKLW20} lost a triply-logarithmic factor).


\textbf{Lower bounds for identifying relevant variables and learning junta distributions.}
Both proofs of Theorem \ref{thm:learning-lb} and Theorem \ref{thm:learning-lb-2} follow from a reduction to the
one-way communication complexity of the indexing problem:
Alice receives a uniformly random string $\by\sim\{-1,1\}^m$; Bob receives a uniformly 
random index $\bi\sim [m]$; Alice needs to send a message to Bob so that Bob outputs $\by_{\bi}$.
This problem has a well known $\Omega(m)$ lower bound for any public-coin protocol that succeeds 
with probability at least $2/3$ \cite{MNSW95}.

We focus on Theorem~\ref{thm:learning-lb}, as the proof of Theorem~\ref{thm:learning-lb-2} follows a similar plan. We assume that there~is an algorithm $\calA$ for identifying relevant variables of any $k$-junta distribution
$p$ over $\{-1,1\}^n$ with $q$ general conditioning queries, and similarly to \cite{BCG19}, we will give a communication protocol which simulates $\calA$ to contradict communication complexity lower bounds.
Given an input string $y \in \{-1,1\}^m$ where $m=\Omega(\log \binom{n}{k})$, Alice 
builds a $k$-junta distribution $p_y$ over $\{-1,1\}^n$ such that Bob can decode $y$ by learning relevant variables of $p_y$. 
By \cite{HJMD07,BravermanGarg14} (specifically, Corollary 7.7 in \cite{CCBook}) and the nature of distribution $p_y$,
we compress the naive one-way communication protocol (where Alice sends $q$ samples using $qn$ bits) into a 
public-coin protocol with $O(q\eps^2)+O(1)$ communication bits. 

\textbf{Lower bound for testing junta distributions.}
Our lower bound instances will always consist of product distributions, which simplifies the lower bound proof in two ways. The first way is that subcube conditioning queries may be simulated by random samples, so that it suffices to prove a sample complexity lower bound. The second is that, even uniformity testing (which is the case of $k = 0$), has a lower bound of $\Omega(\sqrt{n} / \eps^2)$ samples \cite{CDKS17, CCKLW20}, so that it suffices to prove a lower bound of $\tilde{\Omega}(k)/\eps^2$. 
We prove an $\tilde{\Omega}(n)/\eps^2$ sample complexity lower bound for testing $k$-junta product distributions with $k=n/2$, and extend the result to all $k\le n/2$ with a padding argument.

The two distributions of ``hard'' instances, $\Dyes$ and $\Dno$, are quite delicate, as they must simultaneously satisfy the following guarantees. (i) A distribution $\bp \sim \Dyes$ is an $(n/2)$-junta product distribution with probability at least $1 - o_n(1)$,  i.e., $\mu(\bp)$ has at most $n/2$ non-zero coordinates (in particular, these are the relevant coordinates). (ii) A distribution $\bp \sim \Dno$ is $\eps$-far from any $(n/2)$-junta product distribution with probability $1-o_n(1)$, i.e., letting $\mu'$ be $\mu(\bp)$ after zeroing out the top half of coordinates, $\|\mu'\|_2 \geq \eps$. (iii) The joint distributions over significantly fewer than $n/\eps^2$ samples from a draw $\bp \sim \Dyes$ and $\bp \sim \Dno$, respectively, are $o_n(1)$ in total variation distance. The constructions proceed by randomly and independently setting $\mu(\bp)_i$ according to one of two possible distributions (one for $\Dyes$ and one for $\Dno$) such that the first $O(\log n / \log \log n)$ moments of each $\mu(\bp)_i$ match when $\bp \sim \Dyes$ and $\bp \sim \Dno$, which we show suffices for condition (iii).\footnote{The method of matching moments for distribution testing tasks is a well-known technique \cite{RRSS09, V11}, where the core is analyzing the solution of a Vandermonde system to construct hard instances. While our plan proceeds in a similar fashion, the specific technical details are rather intricate. In particular, seemingly innocuous changes to the Vandermonde system result in constructions which would not work.}
%

\section{Preliminaries}

We use boldface symbols to represent random variables, and non-boldface symbols for fixed values (potentially realizations of these random variables) --- see, e.g., $\brho$ versus $\rho$. Given  $n\in\N$,~we let $\calU_n$ denote the uniform distribution over $\bits^n$. Usually, as the support of $\calU_{n}$ will be clear from the context, we will drop the subscript and simply write $\calU$. We write $f(n) \lesssim g(n)$ if, for some $c > 0$, $f(n) \leq c \cdot g(n)$ for all $n \geq 1$ (the $\gtrsim$ symbol is defined similarly). We use the notation $\tilde{O}(f(n))$ to denote $O(f(n) \cdot\polylog(f(n)))$, and $\tilde{\Omega}(f(n))$ to denote $\Omega(f(n) / (1+ |\polylog(f(n))|))$.
The notation $[k]$ denotes the set of integers $\{1, \dots, k\}$. 


We introduce two useful operations on a distribution 
$p$ supported on $\bits^n$.

\begin{definition}[Projection]For any set $S\subseteq [n]$, we write $\ol S = [n]\setminus S$ and define the \emph{projected distribution} $p_{\ol S}$ supported on $\smash{\bits^{\ol S}}$ by letting $\by \sim p_{\ol S}$ be drawn as $\by=\bx_{\ol S}$ for $\bx\sim p$.
\end{definition}

\begin{definition}[Restriction]
	We refer to a string $\rho\in \{-1,1,*\}^n$ as a \emph{restriction} 
	and use $\stars(\rho)$ to denote the set of indices $i\in [n]$ with $\rho_i=*$.
	We denote by $p_{|\rho}$ the \emph{restricted} distribution supported on $\smash{\{-1,1\}^{\stars(\rho)}}$ given by $\bx_{\stars(\rho)} $ where $\bx$ is drawn from $p$ \emph{conditioned} on every $i \notin \stars (\rho)$ being set to $\rho_i$.
\end{definition}

The majority of the results in this work consider restrictions $\brho$ drawn randomly from 
one of the distributions that we define next.
\begin{definition}
	Let $n \in \N$ and $p$ be a distribution supported on $\{-1,1\}^n$. Given a set $S \subseteq [n]$ we let $\calD_{S}(p)$ be the distribution over restrictions $\rho \in \{-1,1,*\}^{n}$ given by letting $\brho \sim \calD_{S}(p)$ be sampled according to a sample $\bx \sim p$, and setting for all $i \in [n]$:
	$\brho_i=*$ if $i\in S$ and $\brho_i=\bx_i$ if $i \notin S$.  
	
	For any $\sigma \in (0,1)$ and a ground set $T$, we let $\calS_{\sigma}(T)$ be the distribution supported on subsets $S \subseteq T$ given by letting $\bS \sim \calS_{\sigma}(T)$ be the set which includes each $i \in T$ in $\bS$ independently with probability $\sigma$. We oftentimes write $\calS_{\sigma} = \calS_{\sigma}([n])$ when $n$ is clear from context.
	We let $\calD_{\sigma}(p)$ be the distribution supported on restrictions $\{-1,1,*\}^n$ given by letting $\brho \sim \calD_{\sigma}(p)$ be sampled by first sampling $\bS \sim \calS_{\sigma}$ and then outputting $\brho \sim \calD_{\bS}(p)$.
\end{definition}

\section{Finding Relevant Variables}
\newcommand{\FindRelevantVariables}{\texttt{FindRelevantVariables}}
\newcommand{\VarBudget}{\texttt{VariablesBudget}}
\newcommand{\SampleEdges}{\texttt{SampleEdges}}
\newcommand{\fail}{\texttt{fail}}
\newcommand{\reject}{\texttt{reject}}
\newcommand{\accept}{\texttt{accept}}

In this section we give our algorithm for identifying relevant variables
from junta distributions.
We restate our main structural lemma but delay its proof to Section \ref{sec:structural}.

\mainstructurallemma*

We emphasize that the parameter $n$ in our structural lemma will be set to be the junta 
parameter $k$ later so we need it to hold for small $n$ such as $n=1$, which requires some care
in its proof later.

We restate the main theorem of this section:

\thmidentify*

Theorem \ref{maintheorem} will follow by combining the main algorithmic component, Lemma~\ref{thm:kjunta-smallnorm} stated next, with the main structural lemma (Lemma~\ref{lem:main-structural}).

\begin{lemma} \label{thm:kjunta-smallnorm1} 
	There exists a randomized algorithm, $\emph{\FindRelevantVariables}$, which takes subcube conditional query access to an unknown distribution $p$ supported on $\{-1,1\}^n$, an integer $k\in \N$ and 
	a parameter $\eps \in (0, 1/4]$. The algorithm makes $\tilde{O}(k/\eps^2)\cdot \log n$
	queries and outputs a set $\bJ \subset [n]$ that satisfies the following guarantees:
	\begin{flushleft}\begin{enumerate}
			\item\label{en:first-cond}
			With probability at least $8/9$, for 
			every $i \in \bJ$, there is a restriction $\rho \in \{-1,1,*\}^n$ with $i \in \stars(\rho)$ such that $\mu(p_{|\rho})_i \neq 0$ \emph{(}and thus, $i$ is a relevant variable of $p$\emph{)}; 
			\item\label{en:second-cond} 
			Suppose $p$ is a $k$-junta distribution and let $\sigma=1/2$.
			With probability at least $8/9$, $\bJ$ satisfies
			\begin{align} 
			\Ex_{\brho \sim \calD_{\ol{\bJ}}(p)}\left[ \Ex_{\bnu \sim \calD_{\sigma^j}(p_{|\brho})} \Big[\big\| \mu\big((p_{|\brho})_{|\bnu}\big)\big\|_2  \Big] \right] \leq \eps,\qquad\text{for 
				every $j=1,\ldots, \lceil \log_2 2k \rceil $.}   \label{eq:guarantees}
			\end{align}
	\end{enumerate}\end{flushleft}
\end{lemma}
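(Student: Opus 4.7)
The plan is to design $\FindRelevantVariables$ as an iterative procedure that maintains a growing set $J \subseteq [n]$ of discovered relevant variables, initialized $J = \emptyset$, and augments $J$ in rounds until no new variable is added in a full sweep over the scales $j = 1, \dots, \lceil \log_2 2k\rceil$. Within a round, for each $j$ I would draw $M$ independent pairs $(\brho^{(m)}, \bnu^{(m)})$ with $\brho^{(m)} \sim \calD_{\ol J}(p)$ (so $\stars(\brho^{(m)}) = \ol J$) and $\bnu^{(m)} \sim \calD_{\sigma^j}(p_{|\brho^{(m)}})$, draw $T$ samples from each $(p_{|\brho^{(m)}})_{|\bnu^{(m)}}$ to form empirical mean vectors $\hat\mu^{(m)} \in [-1,1]^{\stars(\bnu^{(m)})}$, and add to $J$ any coordinate $i \in \ol J$ whose maximum empirical magnitude $\max_m |\hat\mu^{(m)}_i|$ exceeds a threshold $\tau$. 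Because the number of remaining relevant variables is unknown a priori, I would wrap this inside a budget-doubling schedule over $b \in \{1, 2, 4, \dots, 2^{\lceil \log_2 k\rceil}\}$: at level $b$ the parameters $(M_b, T_b, \tau_b)$ are calibrated so that $b$ simultaneously biased coordinates of magnitude at least $\tilde{\Omega}(\eps/\sqrt{b})$ can be reliably detected, with $b$ doubled whenever a level produces no new variable and reset to $1$ whenever it does.

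\textbf{Soundness (Condition~\ref{en:first-cond}).} A coordinate is admitted only when $\max_m |\hat\mu^{(m)}_i| \geq \tau_b$, and Hoeffding applied to the $T_b = \tilde{O}((b/\eps^2)\log n)$ samples used to estimate each $\hat\mu^{(m)}_i$, together with a union bound over the coordinate--restriction--round triples encountered by the algorithm, guarantees with probability $\geq 8/9$ that no coordinate with true conditional mean zero is ever admitted. Therefore for each $i \in J$ there exists a witness restriction $\rho \in \{-1,1,*\}^n$ (namely the concatenation of some $\brho^{(m)}$ with $\bnu^{(m)}$) such that $i \in \stars(\rho)$ and $\mu(p_{|\rho})_i \neq 0$, which certifies $i$ as a relevant variable of $p$.

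\textbf{Completeness (Condition~\ref{en:second-cond}).} Assume $p$ is a $k$-junta; by soundness $|J| \leq k$ at all times and the procedure terminates within $O(k \log k)$ level steps. Suppose for contradiction that upon termination (\ref{eq:guarantees}) fails at some scale $j^\star$, so $\Ex_{\brho, \bnu}[\|\mu((p_{|\brho})_{|\bnu})\|_2] \geq \eps$. Since $\|\mu((p_{|\brho})_{|\bnu})\|_2 \leq \sqrt{k}$, a Paley--Zygmund-style inequality yields $\Pr[\|\mu\|_2 \geq \eps/2] \gtrsim \eps/\sqrt{k}$, and on any such good pair the mean vector---supported on at most $k$ coordinates---must have some coordinate of magnitude $\geq \eps/(2\sqrt{k})$. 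Matching the budget $b^\star$ to the actual number of biased coordinates, the parameters $(M_{b^\star}, T_{b^\star}, \tau_{b^\star})$ guarantee that with high probability a new coordinate is admitted, contradicting termination. Amortizing the cost $M_b T_b = \tilde{O}((k/\eps^2)\log n)$ per level over the variables discovered (or the doubling progress) yields the claimed total of $\tilde{O}(k/\eps^2) \cdot \log n$ queries and runtime $\tilde{O}(k/\eps^2) \cdot n \log n$.

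\textbf{Main obstacle.} The principal difficulty will be the joint calibration of $(M_b, T_b, \tau_b)$ across the budget-doubling schedule so that the Paley--Zygmund conversion of the in-expectation guarantee of Lemma~\ref{lem:main-structural} into a per-restriction event pays only the $\sqrt{k}$ factor that the budget doubling can absorb, keeping the amortized cost per discovered variable at $\tilde{O}(\log n / \eps^2)$. A secondary but important subtlety is that the restrictions $(\brho^{(m)}, \bnu^{(m)})$ and the samples used to compute $\hat\mu^{(m)}$ are drawn through the same subcube conditioning oracle, which necessitates conditioning on the realizations of the restrictions before applying Hoeffding concentration to the empirical means. Finally, a careful union bound over the $\lceil \log_2 2k\rceil$ scales and the $O(k \log k)$ level steps is needed so that soundness and completeness hold simultaneously with probability at least $8/9$ with only logarithmic overhead.
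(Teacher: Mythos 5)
Your overall architecture matches the paper's: build $J$ iteratively with a budget-doubling schedule, threshold empirical means of $(p_{|\brho})_{|\bnu}$, prove soundness by Hoeffding plus a union bound over all executions, and prove the second guarantee by arguing that if the expectation in (\ref{eq:guarantees}) were large the algorithm could not have terminated. The soundness half and the query-accounting sketch are fine. The gap is in the completeness conversion. You turn $\Ex_{\brho,\bnu}[\|\mu((p_{|\brho})_{|\bnu})\|_2]\ge\eps$ into the single event ``with probability $\gtrsim\eps/\sqrt{k}$ some coordinate has magnitude $\ge\eps/(2\sqrt{k})$.'' Detecting that event requires $\Omega(\sqrt{k}/\eps)$ sampled pairs, each resolved to accuracy $\eps/\sqrt{k}$, i.e.\ $\tilde{\Omega}(k/\eps^2)\log n$ samples per pair, for a total of $\tilde{\Omega}(k^{3/2}/\eps^{3})\log n$ queries; this exceeds the claimed budget by a factor of $\sqrt{k}/\eps$, and budget doubling cannot absorb it, since in your extracted event the budget is $b=1$ and only one new variable is found. ``Matching $b^\star$ to the actual number of biased coordinates'' is exactly the step that needs a quantitative lemma, and you have not supplied it.

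The paper closes this with a two-parameter dyadic decomposition that your single Markov/Paley--Zygmund step discards. First, the tail of $\|\mu\|_2$ is integrated over thresholds $\gamma$, and for each $\gamma$ an elementary pigeonhole claim (Claim~\ref{simpleclaim}) shows that $\|\mu\|_2\ge\gamma$ forces, for some $\beta$, at least $2^\beta$ coordinates of magnitude $\gtrsim\gamma/\sqrt{2^\beta\log k}$. Second, and crucially, the detection routine grids over the occurrence probability $\alpha=2^{-a}$ with the magnitude threshold scaled as $\eps/(\alpha\sqrt{b})$, \emph{inversely} in $\alpha$: rarer good restrictions are only ever needed when the biases they carry are proportionally larger, so the cost at level $(j,b,\alpha)$ is $t_a s_a=\tilde{O}(\alpha b/\eps^2)\log n\le\tilde{O}(b/\eps^2)\log n$, independent of how small $\alpha$ is, and the contrapositive (failure of $\VarBudget$ at every $b\le 2k$ and every $\alpha$) yields the per-$(b,\alpha)$ probability bounds (\ref{eq:prob-ub}) which integrate back to the expectation bound. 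Your event with probability $\eps/(2\sqrt{k})$ and magnitude $\eps/(2\sqrt{k})$ never fits this template (it would require magnitude $\eps/\alpha>1$ at $b=1$), which is precisely why the finer decomposition is needed rather than optional. A secondary calibration point you omit: the detection must be run at an internal accuracy $\eps_0=\eps/\polylog(k/\eps)$, because the conversion from the per-$(b,\alpha)$ probability bounds back to the expectation loses the $\log(\sqrt{k}/\eps_0)$ factor from the tail integral and a $\log^{3/2}k$ factor from the dyadic sum; running at accuracy $\eps$ directly would only give (\ref{eq:guarantees}) with $\eps\cdot\polylog(k/\eps)$ on the right-hand side.
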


\begin{proofof}{Theorem~\ref{maintheorem} assuming Lemma~\ref{thm:kjunta-smallnorm1}}
	We execute $\FindRelevantVariables\hspace{0.04cm} (p, k,\tilde{\eps})$ for some 
	parameter $\tilde{\eps}$ to be specified shortly, and upon receiving $\bJ \subset [n]$ outputs $\bJ$.
	We show that when $p$ is a $k$-junta distribution, $\bJ$ satisfies the condition of Theorem \ref{maintheorem} with probability at least $2/3$.
	For this purpose it suffices to show that the condition of Theorem \ref{maintheorem} follows
	from the two conditions of Lemma \ref{thm:kjunta-smallnorm} when $\tilde{\eps}$ is set appropriately.  
	
	Let  
	$J\subset [n]$ be a set of variables for which both conditions of Lemma~\ref{thm:kjunta-smallnorm} hold (with $\tilde{\eps}$ on the right hand side in (\ref{en:second-cond}) instead of $\eps$). Since $p$ is a $k$-junta, we let $I = \{ i_1, \dots, i_k\} \subset [n]$ and $g \colon \{-1,1\}^{k} \to [0,1]$~be such that $p(x) = g(x_{i_1}, \dots, x_{i_k})$. By the first condition, we have $J\subseteq I$ and $|J| \leq k$, since a restriction $\rho \in \{-1,1,*\}^n$ with $i \in \stars(\rho)$ and $\mu(p_{|\rho})_i \neq 0$ certifies that each $i \in J$ is a relevant variable in $p$. Next consider the distribution $h=p_I$ supported on $\smash{\{-1,1\}^I}$ and suppose for the  
	sake of contradiction that $h$ is $\eps$-far from being a junta over variables in $J$. 
	Then by applying Lemma~\ref{lem:main-structural} on $h$ and $J$ with $\sigma = 1/2$ 
	(and noting that parameter $n$ in Lemma \ref{lem:main-structural} is set to $k$), we have 
	\begin{align}
	\frac{\eps}{\log^c (k/\eps)}
	\leq \sum_{j=1}^{\lceil \log_2 2k \rceil}\Ex_{\brho \sim \calD_{\ol{J}}(h)} \left[ \Ex_{\bnu \sim \calD_{\sigma^j}(h_{|\brho})}\Big[\big\|\mu\big((h_{|\brho})_{|\bnu}\big)\big\|_2 \Big] \right],\label{eq:mean-vecs}
	\end{align}
	where $c>0$ is the universal constant from Lemma \ref{lem:main-structural}.
	
	On the other hand, we claim that the right hand side of the inequality above is the same as
	$$
	\sum_{j=1}^{\lceil \log_2 2k\rceil} \Ex_{\brho \sim \calD_{\ol{J}}(p)}\left[ \Ex_{\bnu \sim \calD_{\sigma^j}(p_{|\brho})} \Big[ \big\|\mu\big((p_{|\brho})_{|\bnu}\big)\big\|_2 \Big]\right],
	$$
	after replacing $h$ with $p$.
	This is because $p$ is a $k$-junta over $I$ and thus, the mean vector 
	of $(p_{|\rho})_{|\nu}$ for any restrictions $\rho$ and $\nu$
	always has zeros in entries outside of those in $I$.
	As a result, we have %
	\begin{align*}
	\frac{\eps}{ \log^c (k/\eps)} \leq \sum_{j=1}^{\lceil \log_2 2k \rceil} \Ex_{\brho \sim \calD_{\ol{J}}(p)}\left[ \Ex_{\bnu \sim \calD_{\sigma^j}(p_{|\brho})} \Big[ \big\|\mu\big((p_{|\brho})_{|\bnu}\big)\big\|_2 \Big]\right] \leq \lceil \log_2 2k \rceil \cdot \tilde{\eps},
	\end{align*}
	where we used the second condition of Lemma \ref{thm:kjunta-smallnorm}. Hence, choosing 
	$\tilde{\eps}=\eps/\polylog(k/\eps)$ 
	gives us a contradiction.
	This shows that $h$ is $\eps$-close to being a junta over variables in $J$. Since $p$ 
	is a junta over $I$ and $h=p_{I}$, 
	$p$ is $\eps$-close to being a junta over variables in $J$ as well. 
	
	To finish the proof we note that the bound on the query complexity follows from the fact that we executed $\FindRelevantVariables \hspace{0.04cm}(p,k, \tilde{\eps})$ with $\tilde{\eps}$ picked as above.%
\end{proofof}


\begin{figure}[t!]
	\begin{framed}
		\noindent Subroutine $\FindRelevantVariables \hspace{0.04cm} (p,k, \eps)$
		
		\begin{flushleft}
			\noindent {\bf Input:} Subcube conditioning access to a distribution $p$ supported on $\{-1,1\}^n$, an integer $k\in \N$ and a proximity parameter $\eps \in (0, 1)$.
			
			\noindent {\bf Output:} A set $J \subset [n]$ of variables. 
			
			\begin{enumerate}
				\item Initialize $J = \emptyset$ (and $B=0$, which is used only in the analysis), and let 
				\[ \eps_0 = \frac{\eps}{100\cdot \log^3(k/\eps)}.   \]
				\item\label{en:big-loop} Execute the following while $|J| \leq k$:
				\begin{enumerate}
					\item Initialize $b = 1$.					\item\label{en:small-loop} Repeat the following procedure while $b \le 2k$:
					\begin{enumerate}
						\item[] \hspace{-0.3cm}Increase $B$ by $b$; run
						$\VarBudget \hspace{0.04cm}(p,k, \eps_0, b, J)$, which outputs  $J' \subset [n] \setminus J$.\vspace{0.1cm}
						\begin{enumerate}
							\item If $|J'| \geq b$, update $J$ by adding $b$ elements of $J'$
							to $J$ and go to step~\ref{en:big-loop}. 
							\item If $|J'| < b$, update $b \leftarrow 2 b$ and repeat the loop of step~\ref{en:small-loop}.\vspace{0.1cm}
						\end{enumerate}
					\end{enumerate}
					\item If $b > 2k$, output $J$.
				\end{enumerate}
				\item Output $J$. 
			\end{enumerate}
		\end{flushleft}\vskip -0.14in
	\end{framed}\vspace{-0.2cm}
	\caption{The $\FindRelevantVariables $ subroutine.}\label{fig:Learner}
\end{figure}

We present $\FindRelevantVariables$ in Figure \ref{fig:Learner}.
It uses a subroutine $\VarBudget$ which we describe in Figure \ref{fig:Varbudget} and 
analyze in the lemma below.
\begin{lemma}\label{lem:VarBudget}
	There exists a randomized algorithm, $\emph{\VarBudget} $, which takes subcube condi\-tional query access to an unknown distribution $p$ over $\{-1,1\}^n$, an integer $k\in \N$, a parameter $\eps \in$ $(0, 1/4]$, an integer $b \in [k]$, and a set $J \subset [n]$. 
	It makes
	\[ O\left(\dfrac{b}{\eps^2} \cdot \log^2\left(\frac{k}{\eps}\right) \cdot \log \left(\frac{n}{\eps}\right)\right) \]
	subcube conditional queries, and outputs a set $\bJ' \subset [n] \setminus J$ satisfying the following guarantees: 
	\begin{flushleft}\begin{enumerate}
			\item\label{en:first-condition-lem} With probability at least $1 - (\eps/n)^{9}$, for every coordinate $i \in \bJ'$, there exists a restriction $\rho \in \{-1,1,*\}^n$ with $i \in \stars(\rho)$ such that $\mu(p_{\rho})_i \neq 0$.
			\item\label{en:second-condition-lem} 
			If there exist $j \in [\lceil \log_2 2k \rceil]$ and a real number $\alpha>0$ such that\hspace{0.03cm}\footnote{Note that a trivial 
				necessary  condition for the inequality to hold is $\alpha\le 1$ and $\alpha\ge \eps/\sqrt{b}$.}
			\begin{align} 
			\Prx_{\substack{\brho \sim \calD_{\ol{J}}(p) \\ \bnu \sim \calD_{\sigma^j}(p_{|\brho})}}\left[\hspace{0.04cm} \mu\big((p_{|\brho})_{|\bnu}\big) \text{ contains at least $b$ coordinates of magnitude $\geq \frac{\eps}{\alpha \sqrt{b}}$}\hspace{0.04cm}\right] &\geq \alpha \label{eq:prob-bound} 
			\end{align}
			then the set $\bJ'$ has size at least $b$ with probability at least $1 - (\eps/k)^{9}$.
	\end{enumerate}\end{flushleft}
\end{lemma}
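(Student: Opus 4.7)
The plan is to implement $\VarBudget$ as a double grid search over the unknown parameters of~(\ref{eq:prob-bound}): $j\in\{1,\dots,\lceil\log_2 2k\rceil\}$ and $\alpha$ on a dyadic grid $\mathcal G \eqdef \{2^{-\ell}:0\le\ell\le L\}$ with $L=\lceil\log_2(\sqrt b/\eps)\rceil$, which is enough because a necessary precondition for (\ref{eq:prob-bound}) is $\alpha\ge \eps/\sqrt b$. For each pair $(j,\alpha)$ the algorithm will run $T_\alpha \eqdef \lceil C_1\log(k/\eps)/\alpha\rceil$ independent trials: in each trial it draws $\brho\sim\calD_{\ol J}(p)$ and $\bnu\sim\calD_{\sigma^j}(p_{|\brho})$, then draws $m_\alpha\eqdef\lceil C_2\alpha^2 b\log(n/\eps)/\eps^2\rceil$ subcube-conditional samples from $(p_{|\brho})_{|\bnu}$, forms the empirical mean vector $\hat\mu$ on the coordinates in $\stars(\brho)\cap\stars(\bnu)\subseteq \ol J$, and checks whether the set $\hat S=\{i:|\hat\mu_i|\ge\eps/(2\alpha\sqrt b)\}$ has size at least $b$. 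The first time this happens the algorithm returns $\hat S$; if no trial ever succeeds it returns $\emptyset$.

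For Condition~\ref{en:first-condition-lem}, I will argue ``no false positives'' via Hoeffding: in any round, for any coordinate $i$ with true mean $\mu((p_{|\brho})_{|\bnu})_i = 0$, we have $\Pr[|\hat\mu_i|\ge\eps/(2\alpha\sqrt b)]\le 2\exp(-m_\alpha\eps^2/(8\alpha^2 b))\le (\eps/n)^{20}$ once $C_2$ is large enough. A union bound over the $n$ coordinates and the (polynomial in $n$, $k$, $1/\eps$) rounds bounds the total false-positive probability by $(\eps/n)^9$, so every returned coordinate $i$ witnesses $\mu(p_{|\rho})_i\ne 0$ for the combined restriction $\rho$ of the winning round, certifying that $i$ is relevant for~$p$.

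For Condition~\ref{en:second-condition-lem}, suppose (\ref{eq:prob-bound}) holds for some $j^\ast$ and $\alpha^\ast\in[\eps/\sqrt b,1]$. Pick the grid value $\alpha\in\mathcal G$ with $\alpha^\ast\le\alpha\le 2\alpha^\ast$. In the $(j^\ast,\alpha)$ phase, each trial independently witnesses the event ``$\mu((p_{|\brho})_{|\bnu})$ contains at least $b$ coordinates of magnitude $\ge\eps/(\alpha^\ast\sqrt b)$'' with probability $\ge\alpha^\ast\ge \alpha/2$; a Chernoff bound then ensures at least one trial succeeds with probability $\ge 1-(\eps/k)^{10}$. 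On such a successful trial each of the $b$ target coordinates has true bias at least $\eps/(\alpha^\ast\sqrt b)\ge\eps/(\alpha\sqrt b)=2\cdot\eps/(2\alpha\sqrt b)$, so Hoeffding with $m_\alpha$ samples, together with a union bound over these $b$ coordinates, shows that all of them pass the empirical threshold with probability $\ge 1-(\eps/k)^9$. Hence $|\hat S|\ge b$ with the required probability.

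The main obstacle will be matching the unknown $\alpha^\ast$ to the dyadic grid without losing constants on either side: rounding \emph{up} to $\alpha\ge\alpha^\ast$ weakens the bias threshold $\eps/(\alpha\sqrt b)$ just enough that the event of~(\ref{eq:prob-bound}) is implied by it and still occurs with probability at least $\alpha/2$, while $m_\alpha\propto\alpha^2$ grows just enough to retain Hoeffding precision at the tightened empirical threshold $\eps/(2\alpha\sqrt b)$. Once $C_1, C_2$ are tuned to absorb the soundness union bound over the $\poly(n,k,1/\eps)$ rounds, the query complexity follows from
\[
T_\alpha\cdot m_\alpha \;=\; \Theta\!\left(\tfrac{\log(k/\eps)}{\alpha}\cdot\tfrac{\alpha^2 b\log(n/\eps)}{\eps^2}\right) \;=\; \Theta\!\left(\tfrac{\alpha\, b\, \log(k/\eps)\log(n/\eps)}{\eps^2}\right),
\]
whose geometric sum over $\mathcal G$ is dominated by the term at $\alpha=1$ and whose $\lceil\log_2 2k\rceil$ values of $j$ yield the claimed $O\!\bigl(b\log^2(k/\eps)\log(n/\eps)/\eps^2\bigr)$ bound.
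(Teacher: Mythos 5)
Your proposal is correct and follows essentially the same route as the paper's proof: the same double grid over $j$ and dyadic $\alpha$, the same choice of roughly $\log(k/\eps)/\alpha$ restriction pairs with roughly $\alpha^2 b\log(n/\eps)/\eps^2$ samples each, the same empirical threshold $\eps/(2\alpha\sqrt b)$ with Chernoff/Hoeffding plus union bounds for both guarantees, the same rounding of $\alpha^\ast$ up to a grid value in $[\alpha^\ast,2\alpha^\ast]$, and the same geometric-sum accounting of the query complexity.
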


\begin{figure}[t!]
	\begin{framed}
		\noindent Subroutine $\VarBudget\hspace{0.04cm} (p,k, \eps, b, J)$
		
		\begin{flushleft}
			\noindent {\bf Input:} Subcube conditioning access to a distribution $p$ supported on $\{-1,1\}^n$, an integer $k\in \N$, a proximity parameter $\eps \in (0, 1/4]$, a parameter $b \in [k]$ and a set $J \subset [n]$.
			
			\noindent {\bf Output:} A set $J' \subset [n] \setminus J$ which either has size at least $b$, or is empty.
			
			\begin{itemize}
				\item Repeat the following for $j \in [\lceil \log_2 2k \rceil]$ and $a \in \{ 0, \dots, \lfloor \log_2( \sqrt{b}/\eps)\rfloor\}$ with $\alpha = 2^{-a}$:
				\begin{enumerate}
					\item[] \hspace{-0.4cm}Sample $t_\alpha$ many  pairs $\brho \sim \calD_{\ol{J}}(p)$ and $\bnu \sim \calD_{\sigma^j}(p_{|\brho})$, where 
					$$
					t_a= 100 \cdot 2^a\cdot {\log(k/\eps)}=100 \cdot {\log(k/\eps)}\big/{\alpha} 
					$$				
					\begin{enumerate}
						\item\label{en:sample-pairs-restrict} For each sampled pair $(\brho,\bnu)$, take $s_a$ samples $\bx_1, \dots, \bx_{s_a} \sim (p_{|\brho})_{|\bnu}$ with
						\begin{equation}\label{eq:settings}
						s_a=100 \cdot \left( \frac{\alpha^2 b}{\epsilon^2}\right) \cdot \log \left(\frac{n}{\eps}\right)\end{equation} (noting $\alpha^2b/\eps^2\ge 1$) and let $\hat{\mu} \in \R^{\stars(\bnu)}$ be their empirical mean given by
						$$\hat{\mu} = \frac{1}{s_a} \sum_{\ell = 1}^s \bx_{\ell}.$$
						\item Let $\bJ'$ be the set of coordinates $i \in \stars(\bnu)$ satisfying $$|\hat{\mu}_i| \geq \frac{\eps}{2\alpha \sqrt{b}}$$ and output $\bJ'$ if $|\bJ'| \geq b$.
					\end{enumerate}

				\end{enumerate}
				\item If we have not yet produced an output at the end of the main loop, output $\emptyset$.
			\end{itemize}
		\end{flushleft}\vskip -0.14in
	\end{framed}\vspace{-0.2cm}
	\caption{The $\VarBudget$ subroutine.}\label{fig:Varbudget}
\end{figure}
\begin{proof}
	We start with the first condition. 
	We observe that, for the output $\bJ'$ to violate the condition, 
	there must be an execution of step (a) for some $j,a,\rho$ and $\nu$
	such that $\smash{\mu((p_{|\rho})_{|\nu})_i}=0$ for some $i\in \stars(\nu)$ but
	the same coordinate in the average of $s_a$ samples drawn from $(p_{|\rho})_{|\nu}$
	has magnitude at least $\smash{\eps/(2\alpha\sqrt{b})}$ with $\alpha=2^{-a}$.
	Note that this coordinate in the average is just the average of $s_a$ uniformly random bits.

	Via a union bound over coordinates and a Chernoff bound, the probability that one round of 
	step (a) gives a $\bJ'$ in step (b) that violates the condition is at most
	\begin{equation}\label{eq:similar}
	n \cdot \Prx_{\bz_1, \dots, \bz_{s_a} \sim \{-1,1\}}\left[\hspace{0.05cm} \left| \frac{1}{s} \sum_{\ell=1}^{s_a} \bz_{\ell} \right| \geq \frac{\eps}{2\alpha \sqrt{b}} \hspace{0.05cm}\right]
	\le  2n\cdot  \exp\left( -\frac{s_a\eps^2}{8 \alpha^2 b}\right)\le \left(\frac{\eps}{n}\right)^{11}.
	\end{equation}
	With a union bound over all rounds of (a), the probability of $\bJ'$ 
	violating the condition is at most
	\begin{align*}
	\lceil \log_2 2k \rceil \cdot \left(\sum_{a=0}^{\lfloor \log_2( \sqrt{b}/\eps) \rfloor} 100\cdot 2^a\cdot  \log(k/\eps) \right) \cdot \left(\frac{n}{\eps}\right)^{11}  
	\leq O\left(\frac{\sqrt{b}}{\eps}\right)\cdot \log^2\left(\frac{k}{\eps}\right) \cdot \left(\frac{\eps}{n}\right)^{11} \leq \left( \frac{\eps}{n}\right)^{9}.
	\end{align*}
	
	We now turn to the second condition. 
	By assumption there are parameters $j \in [\lceil \log_2 k\rceil] $ and~$\alpha^*>$ $0$ 
	such that (\ref{eq:prob-bound}) holds (which implies that $\eps/\sqrt{b}\le \alpha^*\le 1$).
	Let $$0\le a=\lfloor \log(1/\alpha^*)\rfloor\le \lfloor \log(\sqrt{b}/\eps)\rfloor\quad\text{and}\quad\alpha=2^{-a}$$ so that $\alpha^*\le \alpha\le 2\alpha^*$.
	It suffices to show that during the main loop of 
	$\VarBudget$  with $j $ and $a$,
	at least one of the $t_a$ pairs $\brho$ and $\bnu$ sampled leads to $\bJ'$ with $|\bJ'|\ge b$
	with high probability. 
	
	For 
	this purpose we say  
	a pair $(\rho,\nu)$ of restrictions is \emph{good} if the mean vector of $(p_{|\rho})_{|\nu}$ has at least $b$ coordinates
	of magnitude at least $\smash{\eps/(\alpha^*\sqrt{b})}$.
	It follows from (\ref{eq:prob-bound}) that 
	$\smash{\brho \in \calD_{\ol{J}}(p)}$ and $\smash{\bnu\in \calD_{\sigma^{j }}(p_{|\brho})}$ are good with probability 
	at least $\alpha^*$.
	By virtue of step (a) being repeated $$t_a=100\cdot \log(k/\eps)\big/\alpha\ge 50\cdot  \log(k/\eps)\big/\alpha^*$$ times, we have that with probability at least $1-(\eps/k)^{10}$, at least one of the pairs of restrictions $\brho$ and $\bnu$
	sampled in the main loop of $j $ and $a$ is good.
	
	On the other hand, fix any such good pair $(\rho,\nu)$ and any coordinate $i\in \stars(\nu)$
	with $$\big|\mu((p_{|\rho})_{\nu})_i\big|\ge \eps\big/(\alpha^*\sqrt{b})\ge \eps\big/(\alpha\sqrt{b})$$ 
	since $\alpha\ge  \alpha^*$.
	It follows from a Chernoff bound similar to (\ref{eq:similar}) that every such coordinate $i$ is added to $\bJ'$ with probability 
	at least $1 - (\eps/n)^{10}$. 
	By a union bound over the two bad events, the main loop with $j $ and $a$
	outputs a set of size at least $b$ with probability at least 
	$1-(\eps/n)^{10} - (\eps/k)^{10} \geq 1 - (\eps /k)^{9}$. 

	Finally, the query complexity is bounded by:
	\begin{align*}
	\lceil \log_2 2k\rceil \cdot \sum_{a=0}^{\lfloor \log_2(\sqrt{b}/\eps)\rfloor}  t_as_a&\le   100^2 \cdot \lceil \log_2 2k\rceil \sum_{a=0}^{\lceil \log_2(\sqrt b /\eps)\rceil } {2^a\cdot \log\left(\frac{k}{\eps}\right)}\cdot  \frac{b}{2^{2a}\epsilon^2} \cdot  \log\left(\frac{n}{\eps}\right)\\  & = O\left(\frac{b}{\eps^2} \cdot \log^2\left(\frac{k}{\eps}\right)\cdot  \log\left(\frac{n}{\eps}\right)\right).
	\end{align*}
	as required. This finishes the proof of the lemma.
\end{proof}

Finally we use Lemma \ref{lem:VarBudget} to analyze $\FindRelevantVariables$ and prove 
Lemma~\ref{thm:kjunta-smallnorm}: 

\begin{proofof}{Lemma~\ref{thm:kjunta-smallnorm}}
	To analyze the query complexity, consider an execution of the algorithm $\FindRelevantVariables \hspace{0.04cm}(p,k, \eps)$. Given that all queries are made in calls to 
	$\VarBudget$, the number of queries made by the subroutine at any time is captured by
	$$
	B\cdot O\left(\frac{1}{\eps^2_0} \cdot \log^2\left(\frac{k}{\eps_0}\right)\cdot \log \left(\frac{n}{\eps_0}\right)
	\right)
	=\frac{B}{\eps^2}\cdot \polylog\left(\frac{k}{\eps}\right)\cdot \log n.
	$$
	using $\eps_0 = \eps / \polylog (k/\eps)$.
	So it suffices to show that $B=O(k)$ when the algorithm terminates. 
	To see this is the case we prove by induction that at the end of each loop of (b), we have
	$$
	B\le 2|J|+b.
	$$
	This clearly holds at the beginning (before the first loop of (b)) because
	$B=0$, $b=1$ and $|J|=0$.
	For the induction step, note that each iteration  of step (b)  either 
	(A) increases both $B$ and $|J|$ by $b$ and resets $b$ to $1$; or
	(B) increases $B$ by $b$, $b$ gets doubled and $|J|$ remains the same. 
	As a result, it suffices to bound $b$ and $|J|$ when the algorithm terminates.
	If the algorithm terminates because of line (c),
	then we can bound $b$ by $4k$ and $|J|$ by $k$;
	if the algorithm terminates because of line 3,
	then we can bound $b$ by $2k$ and $|J|$ by $k+b\le 3k$.

	In both cases we have $B\le 2|J|+b\le 8k$.
	This finishes the analysis of the query complexity.

	Towards proving the first guarantee, note that the total number of executions of $\VarBudget $
	is at most the value of $B$ when the algorithm terminates, and we know from the
	analysis above that it is bounded by $8k$.
	We take a union bound over all executions of $\VarBudget $, and deduce that with probability at least $8/9$, every execution satisfies the first condition in Lemma~\ref{lem:VarBudget}, from which $J$
	also satisfies the first condition in Lemma~\ref{thm:kjunta-smallnorm}
	since $J$ only contains coordinates returned by calls to $\VarBudget $.

	To prove the second guarantee, suppose $p$ is a $k$-junta distribution. 
	We can similarly take a union bound over all executions of $\VarBudget $
	and deduce that with probability at least $8/9$,
	every execution satisfies both conditions in Lemma~\ref{lem:VarBudget}.
	Let $J$ be the output of $\FindRelevantVariables$.
	Then similar to the argument above, the first condition in Lemma \ref{lem:VarBudget}
	implies that $J$ contains only relevant variables of $p$ and thus, $|J|\le k$.
	If $|J| = k$, the inequality (\ref{eq:guarantees}) is immediate since all relevant
	variables of $p$ have been identified in $J$ 
	and hence for every $\rho \in \supp(\calD_{\ol{J}}(p))$, $p_{|\rho}$ is uniform. 
	
	Suppose then that $|J| < k$ and note from Figure~\ref{fig:Learner} that 
	the algorithm terminates because of line~(c).
	This implies that for $J$, step (b) executed $\VarBudget \hspace{0.04cm}(p,k,\eps_0, b, J)$ for every $b\le 2k$ being a power of $2$ and $|J'|<b$ for every execution. 
	It then follows from the second guarantee of 
	Lemma~\ref{lem:VarBudget} that, for every $j\in [\lceil \log_2 2k\rceil]$, $b=2^\beta$ with $\beta=0, \ldots,\lfloor\log_2 2k\rfloor$ and every $\alpha>0$, (\ref{eq:prob-bound}) does not hold:
	\begin{align} 
	\Prx_{\substack{\brho \sim \calD_{\ol{J}}(p) \\ \bnu \sim \calD_{\sigma^j}(p_{|\brho})}}\left[\hspace{0.05cm} \Big|\mu\big((p_{|\brho})_{|\bnu}\big)_i\Big| \geq \frac{\eps_0}{\alpha \sqrt{b}} \text{ for at least $b$ coordinates}\hspace{0.05cm}\right] \leq \alpha. \label{eq:prob-ub}
	\end{align}
	We use (\ref{eq:prob-ub}) to show for each $j\in [\lceil \log_2 2k\rceil]$ that
	\begin{align*} 
	\Ex_{\brho \sim \calD_{\ol{J}}(p)}\left[ \Ex_{\bnu \sim \calD_{\sigma^j}(p_{|\brho})} \Big[\big\| \mu\big((p_{|\brho})_{|\bnu}\big)\big\|_2  \Big] \right] \leq \eps.
	\end{align*}
	To this end, we use
	\begin{align}\label{secondsecond}
	\Ex_{\brho \sim \calD_{\ol{J}}(p)}\left[ \Ex_{\bnu \sim \calD_{\sigma^j}(p_{|\brho})} \Big[\big\| \mu\big((p_{|\brho})_{|\bnu}\big)\big\|_2  \Big] \right] \leq \eps_0 + \int_{\eps_0}^{\sqrt{k}} \Prx_{\brho,\bnu}\Big[ \big\|\mu\big((p_{|\brho})_{|\bnu}\big)\big\|_2 \geq \gamma \Big] \hspace{0.06cm}d \gamma
	\end{align} 
	and the following claim; the proof is elementary so we delay its proof to the end.
	
	\begin{claim}\label{simpleclaim}
		Let $x\in [-1,1]^k$ with $\|x\|_2\ge \gamma$ for some $\gamma>0$. 
		Let $t=\lfloor \log_2 2k\rfloor$.
		Then there must be a $\beta=0,1,\ldots, t$ such that
		the number of $i\in [k]$ with $$|x_i|\ge \frac{\gamma}{ 2\sqrt{ 2^{\beta } t}} $$
		is at least $2^\beta$.
	\end{claim}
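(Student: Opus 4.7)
The plan is to prove Claim \ref{simpleclaim} by contradiction: suppose the conclusion fails, so that for every $\beta = 0, 1, \ldots, t$, fewer than $2^\beta$ coordinates $i \in [k]$ satisfy $|x_i| \geq \gamma/(2\sqrt{2^\beta t})$. Sort the entries in decreasing order of magnitude as $|x_{(1)}| \geq |x_{(2)}| \geq \cdots \geq |x_{(k)}|$. The assumption then translates cleanly into the inequalities $|x_{(2^\beta)}| < \gamma/(2\sqrt{2^\beta t})$ for each $\beta = 0, 1, \ldots, t$, since if at least $2^\beta$ coordinates cleared the threshold, then the $2^\beta$-th largest would too.

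Next I will upper bound $\|x\|_2^2$ via a dyadic decomposition of the sorted indices. Group the coordinates into buckets: bucket $0$ consists of the single index $j=1$, and bucket $\beta \geq 1$ consists of indices $j \in \{2^{\beta-1}+1, \ldots, 2^\beta\}$, which has exactly $2^{\beta-1}$ entries. Since $k \geq 1$, we have $t = \lfloor \log_2 2k \rfloor \geq 1$ and $k < 2^t$, so buckets $\beta = 0, 1, \ldots, t$ together cover all indices $1, \ldots, k$ (with bucket $t$ possibly partial). Within bucket $\beta \geq 1$, every entry has magnitude at most $|x_{(2^{\beta-1}+1)}| \leq |x_{(2^{\beta-1})}| < \gamma/(2\sqrt{2^{\beta-1} t})$, so its contribution to $\|x\|_2^2$ is at most $2^{\beta-1} \cdot \gamma^2/(4 \cdot 2^{\beta-1} t) = \gamma^2/(4t)$. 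Bucket $0$ contributes $|x_{(1)}|^2 < \gamma^2/(4t)$ by the $\beta = 0$ assumption.

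Summing over the $t+1$ buckets gives $\|x\|_2^2 < (t+1)\cdot \gamma^2/(4t) \leq \gamma^2/2$, where the last inequality uses $t \geq 1$. This contradicts $\|x\|_2 \geq \gamma$ and establishes the claim. No step here is technically delicate; the only point requiring care is verifying that buckets $0, \ldots, t$ exactly cover $\{1, \ldots, k\}$, which follows from the bounds $2^{t-1} \leq k < 2^t$ implied by $t = \lfloor \log_2 2k \rfloor$, so the assumption at index $\beta-1$ can always be invoked within the range of valid buckets.
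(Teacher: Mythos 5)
Your proof is correct and follows essentially the same argument as the paper: a proof by contradiction that dyadically decomposes $\|x\|_2^2$ and uses the negated hypothesis at each dyadic scale to bound each piece by roughly $\gamma^2/(4t)$, summing to less than $\gamma^2$. The only cosmetic difference is that you bucket coordinates by sorted rank, whereas the paper buckets them by magnitude level sets (handling the coordinates below the smallest threshold with a tail term of $k\cdot \gamma^2/(4\cdot 2^t t)$ via $k\le 2^t$); the underlying counting and arithmetic are the same.
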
	
	
	Letting $t=\lfloor \log_2 2k\rfloor$.
	Claim \ref{simpleclaim} implies that 	
	\begin{equation}\label{thirdthird}
	\Prx_{\brho,\bnu}\Big[ \big\|\mu\big((p_{|\brho})_{|\bnu}\big)\big\|_2 \geq \gamma \Big]
	\le \sum_{\beta=0}^t\hspace{0.1cm}
	\Prx_{\brho,\bnu}\left[\hspace{0.05cm} \Big|\mu\big((p_{|\brho})_{|\bnu}\big)_i\Big| \geq \frac{\gamma}{2\sqrt{2^{\beta} t}} \text{ for at least $2^{\beta}$ coordinates}\hspace{0.05cm} \right].
	\end{equation}
	Combining (\ref{eq:prob-ub}), (\ref{secondsecond}) and (\ref{thirdthird}), we have
	that the left hand side of (\ref{secondsecond}) is at most
	\begin{align*}
	\eps_0+&\sum_{\beta = 0}^{t}\hspace{0.1cm} \int_{\eps_0}^{\sqrt{k}}\hspace{0.03cm} \Prx_{\brho,\bnu}\left[
	\hspace{0.05cm} \Big|\mu\big((p_{|\brho})_{|\bnu}\big)_i\Big| \geq \frac{\gamma}{2\sqrt{2^{\beta} t}} \text{ for at least $2^{\beta}$ coordinates}\hspace{0.05cm} \right] d \gamma \\
	&\quad\le \eps_0 + 2\eps_0 \sqrt{t}\cdot \sum_{\beta =0}^{t} \hspace{0.1cm} \int_{\eps_0}^{\sqrt{k}} \frac{1}{\gamma}\hspace{0.06cm} d \gamma \leq \eps_0 \left(1 + 2\sqrt{t}(t+1)\cdot \ln\left(\frac{\sqrt{k}}{\eps_0}\right)\right) \leq \eps,
	\end{align*}
	using our choice of $\eps_0=\eps/(100\cdot \log^3(k/\eps))$.
	This finishes the proof of the lemma.
\end{proofof}

\ignore{
	\begin{align*}
	&\Ex_{\brho \sim \calD_{\ol{J}}(p)}\left[ \Ex_{\bnu \sim \calD_{\sigma^j}(p_{|\brho})}\left[ \| \mu((p_{|\brho})_{|\bnu} )\|_2\right]\right] \leq \frac{\eps}{2} + \sum_{\ell = 0}^{\lceil \log_2(2\sqrt{k}/\eps)\rceil} (\eps 2^{\ell}) \Prx_{\substack{\brho \sim \calD_{\ol{J}}(p) \\ \bnu \sim \calD_{\sigma^j}(p_{|\brho}) }} \left[ \|\mu((p_{|\brho})_{|\bnu})\|_2 \geq \eps 2^{\ell-1} \right] \\
	&\qquad\qquad\leq \sum_{\ell = 0}^{\lceil \log_2(\sqrt{k}/\eps)\rceil} \eps \cdot 2^{\ell} \sum_{r = 0}^{\lceil \log_2 k \rceil} \Prx_{\substack{\brho \sim \calD_{\ol{J}}(p) \\ \brho' \sim \calD_{\sigma^j}(p_{|\brho})}}\left[ |\mu((p_{|\brho})_{|\brho'})_i| \geq \frac{\eps \cdot 2^{\ell - 1}}{\sqrt{2^{r} (\lceil \log_2 k\rceil + 1)}} \text{ for $2^{r}$ coordinates} \right] + \frac{\eps}{2} \\
	&\qquad\qquad\leq  \frac{\eps}{2} +\sum_{r = 0}^{\lceil \log_2 k \rceil} \sum_{\ell=0}^{\lceil \log_2 (\sqrt{2^{r}(\log_2(k) + 1)} / \eps) \rceil + 1 } \eps \cdot \dfrac{1}{10\log^3(k/\eps)} \leq \frac{\eps}{10\log(k/\eps)}+\frac{\eps}{2}\le \eps.
	\end{align*}
	where we apply (\ref{eq:prob-ub}) with $b = 2^{r}$ and $\alpha = \sqrt{\lceil \log_2(k)\rceil + 1}/2^{\ell - 1}$. In addition, it suffices to consider $\ell \leq \lceil \log_2(\sqrt{2^r (\lceil \log_2k\rceil+1)} / \eps)\rceil + 1$, since $\mu((p_{|\brho})_{|\brho'})_i \in [-1,1]$.  }

\begin{proofof}{Claim \ref{simpleclaim}}
	Assume for contradiction that this is not the case for every $\beta=0,1,\ldots,t$.
	In particular, it means that no coordinate has $|x_i|\ge \gamma/ (2\sqrt{ t}) $ using the case with $\beta=0$.
	Therefore,
	$$
	\gamma^2\le \|x\|_2^2< 2\cdot \sum_{\beta=1}^t
	2^\beta\cdot \frac{\gamma^2}{ 4\cdot 2^{\beta } t}+k\cdot \frac{\gamma^2}{4\cdot 2^t t}
	\le \frac{\gamma^2}{2}+\frac{\gamma^2}{4t}<\gamma^2,
	$$
	a contradiction.
\end{proofof}

\ignore{ Next we turn our attention to prove Lemma~\ref{lem:VarBudget}. 
	
	\begin{figure}[H]
		\begin{framed}
			\noindent Subroutine $\VarBudget_k(p, \eps, b, J)$
			
			\begin{flushleft}
				\noindent {\bf Input:} $\SCOND$ access to a distribution $p$ supported on $\{-1,1\}^n$, a distance parameter $\eps \in (0, 1/2)$, a parameter $b \in [k]$ and a set $J$.
				
				\noindent {\bf Output:} A set $J' \subset [n] \setminus J$ of size at least $b$, or $\fail$.
				
				\begin{itemize}
					\item For all $j \in [t]$ and $a \in \{ 0, \dots, \lceil \log(\sqrt{b}/\eps)\rceil\}$, consider setting $\alpha = 2^{-a}$.
					\begin{enumerate}
						\item  Sample $O\left(\frac{\log^4(k/\eps)}{\alpha}\cdot \log \left( k\log k \log n\right)\right)$  pairs $(\brho,\brho') \sim \calD_{\ol{J}}(p)\times \calD_{\sigma^j}(p_{|\brho})$.
						
						\begin{enumerate}
							\item For each pair $(\brho,\brho')$, take $s=O\left(\frac{\alpha^2 b}{\epsilon^2}\cdot\log \left(n\log n \cdot k\log k\right) \right)$ many samples from the distribution $(p_{|\brho})_{|\brho'}$.
							\item Let $J'$ be the set of coordinates $i \in [n]$ satisfying $|\mu((p_{|\brho})_{|\brho'})_i| \geq \frac{\eps}{2\alpha \sqrt{b}}$ and if $|J'| \geq b$, output $J'$.
						\end{enumerate}

					\end{enumerate}
					\item If we have not yet produced an output, output $\fail$.
				\end{itemize}
			\end{flushleft}\vskip -0.14in
		\end{framed}\vspace{-0.2cm}\label{fig:Varbudget}
	\end{figure}
	\begin{proofof}{Lemma~\ref{lem:VarBudget}}
		Given a pair of distributions $(\rho,\rho')\in \calD_{\ol{J}}(p)\times \calD_{\sigma^j}(p_{|\rho}) $, we say that a coordinate $i$ is \emph{good} if  $|\mu((p_{|\rho})_{|\rho'})_i|\geq \frac{\eps}{\alpha \sqrt{b}}$. We say that a pair of distributions $(\rho,\rho')\in \calD_{\ol{J}}(p)\times \calD_{\sigma^j}(p_{|\rho}) $ is \emph{good}, if it contains at least $b$ good coordinates.
		
		Consider a fixed setting of $\alpha$ and $j\in [t]$ for which (\ref{eq:prob-bound}) holds. By a simple  Chernoff bound, we have that with probability at least $1-\frac{1}{2k\log k \log n}$ one of the restrictions sampled is good. Fix one such good pair $(\rho,\rho ')$, and consider a fixed good coordinate $i$. By using a Chernoff bound once more, we can estimate the empirical mean of the coordinate up to an additive error of $\frac{\eps}{2\alpha\sqrt{b}}$, with probability at least $1-\frac{1}{2n\log n \cdot k \log k}$. Therefore, by a union bound over at most $n$ coordinates, we get that with probability $1-\frac{1}{2\log n \cdot k \log k}$, the empirical mean of all good coordinate is at least $\frac{\eps}{2\alpha\sqrt{b}}$, and thus the algorithm will output $J'$ as required.
		Overall, by using a union bound, we have that the guarantees of the lemma are satisfied with probability at least $1-\frac{1}{k\log k \log n}$ as required.
		
		The query complexity is bounded by:
		\[  \sum_{j=1}^{t}\sum_{a=0}^{\lceil \log \sqrt b /\eps\rceil } {2^a\cdot \log^3(k/\eps)\cdot \log (k\log k \log n) }\cdot \frac{2^{-2a}b}{\epsilon^2}\cdot \log(n\log n\cdot k \log k)= O\left(\frac{b}{\epsilon^2}\cdot \poly(\log n,\log 1/\eps)\right),\]
		as required.
\end{proofof} }

\newcommand{\SampleWalk}{\texttt{SampleWalk}}
\def\bS{\mathbf{S}}	\def\bi{\mathbf{i}}

\section{Lower Bounds for Learning}

The goal of this section is to prove the following lower bounds for the number of subcube conditioning
queries needed by an algorithm to solve the following two tasks
(1) to learn a set of relevant variables of a $k$-junta distribution and
(2) to learn a distribution.

Note that our lower bounds hold for the general conditioning model \cite{CFGM16, CRS15} which allows the algorithm to condition on arbitrary subsets of the domain $\{-1,1\}^n$, rather that only subcubes.

\thmlowerboundone*
\thmlowerboundtwo*

\newcommand{\bcalG}{\boldsymbol{\calG}}

Both proofs of Theorem~\ref{thm:learning-lb} and Theorem~\ref{thm:learning-lb-2} follow from reductions from the communication complexity lower bound of the following indexing problem:
\begin{itemize}
	\item Alice receives a uniformly random string $\by \sim \{-1,1\}^m$. 
	\item Bob receives a uniformly random index $\bi \sim [m]$.
	\item The task is for Alice to send a message to Bob so that Bob outputs $\by_{\bi}$.
\end{itemize}
This problem has a well known $\Omega(m)$ lower bound on the one-way communication of any protocol 
in order for Bob to succeed with probability at least $2/3$ \cite{MNSW95}.

The plan for proving Theorem~\ref{thm:learning-lb} is the following.
Our main goal is to cast the indexing problem as the problem of finding relevant variables.
Let $\calA$ be a \emph{deterministic} algorithm for the task described in Theorem \ref{thm:learning-lb} 
with $q$ general conditioning queries; it will become clear in the proof later that this is 
without loss of generality (so $\calA$ can be viewed as a depth-$q$ decision tree; see Definition \ref{def:trees}).
Setting $m=\Omega(\log {n\choose k})$, 
we show that Alice can use its input string $y\in \{-1,1\}^m$ to 
construct a $k$-junta distribution $p_y$ over $\{-1,1\}^n$ with the following recovery property:
any subset $J\subset [n]$ of no more than $k$ variables such that $p_y$ is $\eps$-close
to a junta distribution over $J$ can be used to recover $y$.
Alice uses private randomness to simulate the execution of $\calA$ on $p_y$
and sends~a~message~to~Bob that contains the sequence of $q$ samples $\bx_1,\ldots,\bx_q$.
The recovery property guarantees that whenever Bob succeeds in
finding relevant variables using $\bx_1,\ldots,\bx_q$, which happens with probability at least $4/5$,
he can use them to recover Alice's string $y$ and then $y_i$.

However, the naive protocol described above has communication complexity $qn$
and we only get $q\ge \Omega(m/n)$ which is insufficient for our goal. 
To compress this protocol, we note that distributions $p_y$ constructed from $y$ are 
in some sense very close to the uniform distribution over $\{-1,1\}^n$.
More formally, we give the following definition of $\eps$-\emph{almost uniform distributions}.

\begin{definition}
	Let $p$ be a probability distribution over $\{-1,1\}^n$ and $\eps \in (0, 1/2)$. We say that $p$~is $\eps$-almost uniform if for every $x \in \{-1,1\}^n$, $|p(x) - 2^{-n}| \leq \eps 2^{-n}$.
\end{definition}

The intuition behind the compression is that a sample from an $\eps$-almost uniform distribution
(even being conditioned on a subset of $\{-1,1\}^n$) carries with it very little information 
(roughly $O(\eps^2)$).
One can then use results from \cite{HJMD07,BravermanGarg14} (also see Corollary 7.7 in \cite{CCBook}) 
to show that the naive one-way private-coin protocol described above can be compressed into a 
public-coin protocol with $O(q\eps^2)+O(1)$ one-way communication bits.
Formally we state the following lemma:

\begin{lemma}\label{lem:compression}
	Let $\calA$ be a deterministic algorithm on distributions over $\{-1,1\}^n$
	that makes $q$ general conditioning queries.
	Then there is a one-way public-coin protocol such that, upon receiving 
	an $\eps$-almost uniform distribution $p$ over $\{-1,1\}^n$,
	Alice sends a message $\bM$ of length
	$
	O(q\eps^2)+O(1)
	$ 
	in the worst case.
	Bob can use $\bM$ to compute a sequence of $q$ strings $\bx_1,\ldots,\bx_q\in \{-1,1\}^n$
	such that the distribution of $(\bx_1,\ldots,\bx_q)$ is $(1/20)$-close
	to the distribution of the sequence of $q$ samples $\calA$
	receives when running on $p$.
\end{lemma}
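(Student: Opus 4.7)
The plan is to compress the naive private-coin protocol in which Alice, holding $p$, simulates $\calA$ internally and transmits the full transcript of samples to Bob. View $\calA$ as a depth-$q$ decision tree whose internal nodes are labeled by general conditioning sets $A_t \subseteq \{-1,1\}^n$; here $A_t$ is a deterministic function of the prefix $(\bx_1,\ldots,\bx_{t-1})$ and $\bx_t$ is drawn from $p_{|A_t}$. Let $\mathbf{P}$ denote the joint distribution of the transcript $(\bx_1,\ldots,\bx_q)$ when samples are drawn from $p$ along the tree, and let $\mathbf{Q}$ denote the analogous joint distribution when each $\bx_t$ is instead drawn from the uniform distribution $\calU_{A_t}$ on $A_t$. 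Both parties know $\calA$, so given any prefix they can compute $A_t$; consequently Bob can sample from $\mathbf{Q}$ using public randomness alone, and both parties can evaluate $\mathbf{Q}(T)$ for any transcript $T$, while Alice can additionally evaluate $\mathbf{P}(T)$ since she knows $p$.

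The first step is the information-theoretic bound $D_{\mathrm{KL}}(\mathbf{P} \,\|\, \mathbf{Q}) = O(q\eps^2)$. Because $p$ is $\eps$-almost uniform, for any nonempty $A$ we have $p_{|A}(x)/\calU_A(x) \in (1 \pm O(\eps))$, and a Taylor expansion of $\ln(1+y)$ gives $D_{\mathrm{KL}}(p_{|A} \,\|\, \calU_A) = O(\eps^2)$. Applying the chain rule for KL divergence along the decision tree yields $D_{\mathrm{KL}}(\mathbf{P} \,\|\, \mathbf{Q}) = \sum_{t=1}^{q} \Ex_{\mathbf{P}}\bigl[D_{\mathrm{KL}}(p_{|A_t}\,\|\,\calU_{A_t})\bigr] \le O(q\eps^2)$, since at each step the conditional distribution of $\bx_t$ given the prefix is $p_{|A_t}$ under $\mathbf{P}$ and $\calU_{A_t}$ under $\mathbf{Q}$.

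The second step is to apply a one-shot sampling-compression theorem in the spirit of \cite{HJMD07} (see Corollary~7.7 in \cite{CCBook}, with a related statement in \cite{BravermanGarg14}): if Alice holds a distribution $\mathbf{P}$ and Bob has public-coin access to i.i.d.\ draws from $\mathbf{Q}$, then for any $\delta>0$ Alice can convey to Bob a single draw from $\mathbf{P}$, up to total variation error $\delta$, by sending a message of length $D_{\mathrm{KL}}(\mathbf{P}\,\|\,\mathbf{Q}) + O(\log(1/\delta)) + O(1)$ \emph{in the worst case}. Concretely, Alice and Bob share a list $T^{(1)}, T^{(2)}, \ldots$ of transcripts drawn i.i.d.\ from $\mathbf{Q}$ via public coins; Alice performs rejection sampling with weights $\mathbf{P}(T^{(j)})/\mathbf{Q}(T^{(j)})$, truncated after $2^{I + \log(1/\delta) + O(1)}$ trials (with $I = D_{\mathrm{KL}}(\mathbf{P}\,\|\,\mathbf{Q}) = O(q\eps^2)$), and sends the selected index $\bj$. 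Choosing $\delta = 1/20$ yields the claimed worst-case length of $O(q\eps^2) + O(1)$, and Bob's output $T^{(\bj)}$ is $(1/20)$-close in total variation to $\mathbf{P}$, as required.

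The main obstacle is obtaining a worst-case rather than in-expectation bound on the message length with only a single additive $O(1)$ overhead rather than one per round. This is precisely why the scheme must be applied to the joint transcript distribution $\mathbf{P}$ in one shot: compressing round by round via HJMR would accumulate $q$ additive constants and yield $O(q\eps^2) + O(q)$ bits. The truncation step in the one-shot scheme converts the in-expectation bound of classical rejection sampling into a worst-case one, and the choice $\delta = 1/20$ balances the small probability of rejection-sampling failure against the additive $\log(1/\delta) = O(1)$ overhead.
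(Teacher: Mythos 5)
There is a genuine gap at the crucial step. Your reduction to a single one-shot compression of the whole transcript distribution $\mathbf{P}=\calE_{p,\calA}$ against the public reference $\mathbf{Q}=\calE_{\calU,\calA}$, together with the chain-rule bound $D_{\mathrm{KL}}(\mathbf{P}\|\mathbf{Q})=O(q\eps^2)$, is fine (and is exactly the route the paper alludes to via \cite{HJMD07,BravermanGarg14} before giving its self-contained proof). But the black-box theorem you invoke --- worst-case message length $D_{\mathrm{KL}}(\mathbf{P}\|\mathbf{Q})+O(\log(1/\delta))+O(1)$ for a $\delta$-accurate one-way simulation when only Alice knows $\mathbf{P}$ --- is not a correct general statement. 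A counting argument kills it: take $\mathbf{Q}$ uniform on $K$-bit strings and $\mathbf{P}$ equal to $\mathbf{Q}$ except that it places mass $2\delta$ on one string $x_0$ unknown to Bob; then $D_{\mathrm{KL}}(\mathbf{P}\|\mathbf{Q})\approx 2\delta K$, yet any protocol with worst-case length $\ell$ lets Bob output $x_0$ with probability at most $2^{\ell-K}$, so $\ell\geq K-O(1)$ is forced. The true black-box statements give either \emph{expected} length $D_{\mathrm{KL}}+O(\log(D_{\mathrm{KL}}+1))$ for exact sampling (HJMR), or worst-case length governed by the tail of the log-likelihood ratio, not by the KL divergence. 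Correspondingly, your concrete scheme fails as described: rejection sampling ``with weights $\mathbf{P}(T)/\mathbf{Q}(T)$'' must be normalized by (a high-probability bound on) $\max \mathbf{P}/\mathbf{Q}$, which here can be as large as $(1+O(\eps))^q=2^{\Theta(q\eps)}$, so truncating after $2^{O(q\eps^2)+O(1)}$ trials is exponentially too early, and no acceptance will typically occur.

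What is missing is precisely the quantitative heart of the paper's proof: a high-probability bound of the form $\log\bigl(\mathbf{P}(\bx_1,\dots,\bx_q)/\mathbf{Q}(\bx_1,\dots,\bx_q)\bigr)\leq O\bigl(q\eps^2+\log(1/\delta)\bigr)$, proved by a martingale (Azuma) argument using $\eps$-almost uniformity, including the centering needed because under $\mathbf{P}$ each increment has a positive bias of order $\eps^2$ (this is the paper's Lemma~\ref{lem:communication-compression}, with its $\bY_i'$ shift bounded in (\ref{eq:bias})). With that tail bound, a scaled acceptance probability $\min(1,\delta'\,\mathbf{P}/\mathbf{Q})$ with $\log(1/\delta')=\Theta(q\eps^2+\log(1/\delta))$ makes your one-shot index argument go through; the paper instead runs this rejection sampling blockwise on subtrees of $\approx \log(1/\delta)/\eps^2$ queries and finishes with a Markov step to convert expected length into worst-case length. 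Alternatively, you could bypass concentration entirely by citing HJMR's expected-communication exact-sampling theorem and truncating via Markov's inequality, which yields worst-case $O(q\eps^2)+O(1)$ bits with error $1/20$ (the constant in front of $q\eps^2$ is no longer $1$, but the lemma only needs $O(\cdot)$); as written, however, your proposal asserts a stronger worst-case guarantee that does not hold and omits the argument that would justify it in this setting.
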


We give a self-contained proof of Lemma \ref{lem:compression} in Section \ref{sec:compression} since the setting we
work on is more explicit compared to those of \cite{HJMD07,BravermanGarg14}.
The flow of the proof for Theorem \ref{thm:learning-lb-2} is similar.
The key differences lie in the construction of $p_y$ from $y$ for Alice,
and the way Bob recovers $y_i$ using the hypothesis $\hat{p}$ returned by 
the learning algorithm for $k$-junta distributions.
We prove Theorem \ref{thm:learning-lb} and Theorem \ref{thm:learning-lb-2} in Section \ref{sec:learning-lb-1} and \ref{sec:learning-lb-2}, respectively.

\subsection{Proof of Theorem~\ref{thm:learning-lb}}\label{sec:learning-lb-1}

Suppose that $\calA^*$ is a randomized algorithm which, given general conditioning query access to~any unknown $k$-junta distribution $p$ supported on $\{-1,1\}^n$, makes $q$ queries and outputs with probabi\-lity at least $4/5$ a subset $J \subset [n]$ of at most $k$ variables such that $p$ is $\eps$-close to a junta distribution over $J$. So $\calA^*$ can be viewed as a distribution of deterministic algorithms $\calA$.
Let \begin{equation}\label{choiceofm}
m= \left\lfloor\log \binom{n}{k} \right\rfloor=\Omega\left(\log \binom{n}{k}\right).
\end{equation}
Alice will interpret her input string $x \in\bits^m$ in the indexing problem
as a set $S \subset [n]$ of size $k$
and use $S$ to define the following probability distribution $p_S$ over $\{-1,1\}^n$:
\begin{align*}
p_S(x) &= \left\{ \begin{array}{cc} (1+4\eps) 2^{-n} & \prod_{i \in S} x_i = 1 \\[1ex]
(1-4\eps) 2^{-n} & \text{o.w.} \end{array} \right. .
\end{align*}
It follows directly from the definition that $p_S$ is $O(\eps)$-almost uniform.
The following claim gives us the recovery property discussed earlier:

\begin{claim}\label{cl:far-family}
	Suppose that $S \subset [n]$ is a set of size $k$ and $J\ne S \subset [n]$ is a set of size at most $k$. Then we have  $\dtv(p_{S}, g) \geq 2\eps$ for any junta distribution over variables in $J$.
\end{claim}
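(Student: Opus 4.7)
The plan is to exploit the combinatorial structure of $p_S$: namely, that $p_S$ is uniform over points with $\prod_{i \in S} x_i = 1$ and uniform over points with $\prod_{i \in S} x_i = -1$, with a $(1+4\eps)$ versus $(1-4\eps)$ gap between the two masses. Since $|J|\leq k = |S|$ and $J\neq S$, there must exist a coordinate $i^* \in S \setminus J$. This coordinate is the key to the argument, because flipping $x_{i^*}$ flips the parity $\prod_{i \in S} x_i$ without changing $x_J$.

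First I would set up notation by writing any junta distribution $g$ over variables in $J$ as $g(x) = 2^{-(n-|J|)} g_J(x_J)$, where $g_J$ is its marginal on $\{-1,1\}^J$. Then I would partition the TV-distance sum according to $y = x_J \in \{-1,1\}^J$:
\[
\dtv(p_S, g) \;=\; \tfrac{1}{2}\sum_{y \in \{-1,1\}^J} \sum_{z \in \{-1,1\}^{[n]\setminus J}} \bigl|p_S(y,z) - 2^{-(n-|J|)} g_J(y)\bigr|.
\]
For each fixed $y$, I would use $i^*\in S \setminus J$ to pair up the $z$'s into pairs $(z, z')$ that differ exactly in the $i^*$-coordinate; each such pair has one element with $\prod_{i\in S} x_i = 1$ (mass $(1+4\eps)2^{-n}$ under $p_S$) and the other with $\prod_{i\in S} x_i = -1$ (mass $(1-4\eps)2^{-n}$ under $p_S$).

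The core estimate is then the triangle inequality applied to each such pair: letting $c = 2^{-(n-|J|)} g_J(y)$,
\[
\bigl|(1+4\eps)2^{-n} - c\bigr| + \bigl|(1-4\eps)2^{-n} - c\bigr| \;\geq\; \bigl|(1+4\eps)2^{-n} - (1-4\eps)2^{-n}\bigr| \;=\; 8\eps \cdot 2^{-n}.
\]
Summing this bound over all $2^{n-|J|-1}$ pairs for each $y$, and then over all $2^{|J|}$ values of $y$, yields
\[
\sum_{x} \bigl|p_S(x) - g(x)\bigr| \;\geq\; 2^{|J|} \cdot 2^{n-|J|-1} \cdot 8\eps \cdot 2^{-n} \;=\; 4\eps,
\]
so $\dtv(p_S,g) \geq 2\eps$, as desired.

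There is no real obstacle here; the only subtlety is making sure the pairing argument is valid, which amounts to observing that $i^* \in S \setminus J$ exists whenever $|J|\leq|S|=k$ and $J\neq S$ (since then $|S\setminus J| \geq |S| - |J \cap S| \geq 1$). The bound is independent of $g_J$, which is the point: no matter how the junta distribution $g$ distributes its mass over $\{-1,1\}^J$, it cannot see the bias that $p_S$ places on the sign of $\prod_{i\in S} x_i$, because the relevant parity depends on a coordinate outside $J$.
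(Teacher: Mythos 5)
Your proof is correct and follows essentially the same route as the paper: both locate a coordinate $i^* \in S \setminus J$, pair points differing only in that coordinate (so $g$ is constant on each pair while $p_S$ differs by $8\eps\cdot 2^{-n}$), and apply the triangle inequality pairwise before summing. The counting and the final bound $\dtv(p_S,g)\ge 2\eps$ match the paper's argument.
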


\begin{proof}
	Notice that since $S$ is of size $k$ and $|J| \leq k$ of size at most $k$, there exists an index $i \in S$ such that $i \notin J$. Consider this fixed $i \in S \setminus J$. We will write the probability mass functions $p_{S}$ and $g$ as functions $\{-1,1\}^J \times \{-1,1\}^{[n] \setminus (J \cup \{i\})} \times \{-1,1\} \to \R_{\geq 0}$, where the first $|J|$ indices correspond to settings of bits in $J$, the second $n - |J| - 1$ coordinates correspond to settings of bits in $[n] \setminus (J \cup \{i\})$, and the last bit determines $i$. We notice that since $g$ is a junta over variables in $J$, for any $y \in \{-1,1\}^{J}$ and any two $u_1,u_2 \in \{-1,1\}^{[n] \setminus (J\cup\{i\})}$ and $v_1,v_2 \in \{-1,1\}$, $g(y, u_1, v_1) = g(y, u_2, v_2)$. Furthermore, by definition of $p_S$, $|p_{S}(y, u_1, v_1) - p_S(y, u_1, v_2)| = 8\eps 2^{-n}$ whenever $v_1 \neq v_2$. Hence,
	\begin{align*}
	\dtv(p_{S}, g) &= \frac{1}{2} \sum_{x \in \{-1,1\}^n} \left| p_S(x) - g(x)\right| \\
	&= \frac{1}{2} \sum_{y \in \{-1,1\}^J} \sum_{u \in \{-1,1\}^{[n]\setminus (J \cup \{i\})}} \left( |p_S(y, u, 1) - g(y, u, 1)| + |p_S(y,u,-1) - g(y,u,-1)|\right) \\
	&\geq \frac{1}{2} \sum_{y \in \{-1,1\}^J} \sum_{u \in \{-1,1\}^{[n] \setminus (J \cup \{i\})}} |p_{S}(y,u,1) - p_{S}(y,u,-1)| = 2\eps.
	\end{align*}
	This finishes the proof of the claim.
\end{proof}

As a consequence of Claim~\ref{cl:far-family}, we obtain the following corollary.
\begin{corollary}\label{cl:output-set}
	Let $S \subset [n]$ be any set of size $k$, and let $J$ be any set of size at most $k$ such that $p_{S}$ is $\eps$-close to a junta distribution over $J$. Then we must have $J = S$.
\end{corollary}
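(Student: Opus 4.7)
The proof is an immediate contrapositive application of Claim~\ref{cl:far-family}. The plan is to assume for contradiction that $J \neq S$ and derive a lower bound on the total variation distance from $p_S$ to every junta over $J$ that exceeds $\eps$, contradicting the hypothesis that $p_S$ is $\eps$-close to some such junta.

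First, I would observe that the hypotheses of Claim~\ref{cl:far-family} are satisfied: $S$ has size exactly $k$, and $J$ has size at most $k$ with $J \neq S$. The claim then directly yields $\dtv(p_S, g) \geq 2\eps$ for every junta distribution $g$ over variables in $J$. On the other hand, the assumption that $p_S$ is $\eps$-close to some junta distribution over $J$ means there exists such a $g$ with $\dtv(p_S, g) \leq \eps$. Combining the two inequalities gives $2\eps \leq \dtv(p_S, g) \leq \eps$, which is impossible for $\eps > 0$. Hence $J = S$ must hold.

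There is essentially no obstacle here — the corollary is just the contrapositive formulation of Claim~\ref{cl:far-family} packaged in a way convenient for the subsequent reduction (where Bob needs to recover $S$ uniquely from the output of the learning algorithm). The only thing to be mindful of is to invoke the claim with the correct roles of $S$ and $J$, and to note that since $|S| = k$ and $|J| \leq k$ with $J \neq S$, there is indeed an index in $S \setminus J$ as required by the proof of the claim.
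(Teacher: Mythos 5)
Your proof is correct and is essentially identical to the paper's: both assume $J \neq S$, invoke Claim~\ref{cl:far-family} to get $\dtv(p_S, g) \geq 2\eps$ for any junta $g$ over $J$, and contradict the assumed $\eps$-closeness. No issues.
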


\begin{proof}
	Let $g$ be the closest junta over $J$ to $p_S$, and suppose for the sake of contradiction, that $J \neq S$. Then, we apply Claim~\ref{cl:far-family} which says that $\dtv(p_S, g) \geq 2\eps$, giving the desired contradiction.
\end{proof}

\ignore{
	\begin{definition}
		For any $\eps \in (0,1)$ and $n,k \in \N$, an algorithm $\calA$ which learns the $k$ relevant variables of a $k$-junta distribution making $q$ conditional sampling queries is specified by a rooted depth-$q$ tree, where every non-leaf node $v$ has an associated set $A_v \subset \{-1,1\}^n$, as well as a child for each $x \in A_v$. Every leaf node contains a set $J \subset [n]$ of size at most $k$. An execution of the algorithm on a distribution $p$ corresponds to a walk down the tree, where at each node, we receive a sample $\bx \sim p$ conditioned on $\bx$ lying in $A_v$, and we proceed to the child of $v$ specified by $\bx$. Once we reach a leaf, we output the set $J$.
	\end{definition}
	
	Notice that the assumption that there exists an algorithm $\calA$ for learning the relevant variables of a $k$-junta distribution making $q$ queries, corresponds to the existence of such a depth-$q$ which outputs a set $J \subset [n]$ of size at most $k$ such that $p$ is $\eps$}

We are now ready to prove Theorem \ref{thm:learning-lb} by following the plan described earlier.

\begin{proofof}{Theorem~\ref{thm:learning-lb}}
	The proof proceeds via a reduction from the two-party one-way communication problem of indexing. 
	With $m$ chosen in (\ref{choiceofm}) Alice and Bob agree on a fixed 
	injective map from $\bits^m$ to subsets of $[n]$ of size $k$.    
	Alice will interpret her input string $x\in \bits^n$ as a subset $S\subset [n]$ of size $k$ using this map.
	Given that $\calA^*$ is a distribution of deterministic algorithms,
	there exists a $q$-query deterministic algorithm $\calA$ such that 
	\begin{equation}\label{fgggg}
	\Pr_{\bx\sim \bits^m}\big[\calA(p_\bS)\ \text{returns $\bS$}\big]\ge 4/5,
	\end{equation}
	where $\bx$ is drawn uniformly at random and $\bS\subset [n]$ is its corresponding subset of size $k$.
	Alice and Bob agree on such a $q$-query deterministic algorithm $\calA$.
	
	Now we describe the protocol.
	Given $x\in \bits^m$, Alice uses it to construct $p_S$ over $\{-1,1\}^m$ which is
	$O(\eps)$-almost uniform. 
	She uses Lemma \ref{lem:compression} to send a message $\bM$ of length $O(q\eps^2)+O(1)$ to Bob 
	so that Bob can use $\bM$ to obtain a sequence of $q$ strings $\bx_1,\ldots,\bx_q\in \{-1,1\}^n$
	such that the latter has distribution $(1/20)$-close to the distribution of the sequence of $q$ samples $\calA$
	receives when running on $p_S$.
	It follows from~(\ref{fgggg}) that when $\bx\sim\bits^m$,
	Bob successfully recovers $\bS$ (and thus, $\bx$ using the map they agreed on)
	by simulating $\calA$ on $\bx_1,\ldots,\bx_q$ 
	with probability at least $4/5-1/20>2/3$.
	By the $\Omega(m)$ lower bound on the indexing problem, we obtain the desired claim using (\ref{choiceofm}).

	\ignore{	
		Alice considers the distribution $p_S$ which is $4\eps$-almost uniform (since $\eps < 1/8$, we have $4\eps < 1/2$). We apply Lemma~\ref{lem:communication-compression} with 
		\[ \delta = \dfrac{\alpha}{\eps^2 q \cdot 20^{1/\zeta}} \leq \frac{1}{20}. \]
		As per setting of (what we refer to as $q'$) from Lemma~\ref{lem:communication-compression}, where $q' = \lfloor \zeta \log(1/\delta)/\eps^2 \rfloor \geq 2$ and hence $q' \geq \zeta \log(1/\delta) / (2\eps^2)$.  Alice and Bob break up the $q$-query algorithm $\calA$ into $\lceil q / q' \rceil$ many $q'$-query trees. The trees are adaptively chosen so as to simulate an execution of $\calA$. For each $q$-query tree $\calT$, Alice and Bob use public randomness to execute $\SampleWalk(p_S, \calT, \delta)$ for $O(1/\delta)$ iterations such that with probability at least $1/2$, at least one accepts. Alice then communicates $O(\log(1/\delta))$ bits to Bob, indicating the first index where $\SampleWalk(p_{S},\calT, \delta)$ accepts, or a special message indicating none accepted. Notice that the distribution of the first time $\SampleWalk(p_{S}, \calT, \delta)$ accepts is exactly $\calD_{p_{S},\calT, \delta}$. If some execution accepts, then Bob re-constructs the samples $\bx_1,\dots,\bx_{q'}$ utilizes those samples to simulate the walk down $\calT$. If $\SampleWalk(p_{S},\calT,\delta)$ never accepts, Alice and Bob try again on the same tree. 
		
		Notice that by Lemma~\ref{lem:communication-compression}, since the distribution over the leaves of $\calT$ is $\delta$-close in total variation distance from that of a true execution of $\calT$ on $p_S$, after $\lceil q / q' \rceil$ successive executions of Lemma~\ref{lem:communication-compression}, the distribution over the leaves of $\calA$ is $\delta \lceil q / q' \rceil$-close to that of a true execution of $\calA$ on $p_S$. Where
		\begin{align*}
		\delta \left\lceil \frac{q}{q'} \right\rceil \leq \frac{\alpha}{\eps^2 q \cdot 20^{1/\zeta}} \left( \frac{q \cdot 2\eps^2}{\zeta \log(1/\delta)} + 1 \right) \leq \frac{1}{10}
		\end{align*}
		
		We now utilize Claim~\ref{cl:output-set} and the assumption of $\calA$ being an algorithm to find the set of relevant coordinates $J$ to conclude that Bob reaches a leaf of $\calA$ labeled with the set $S$ with probability at least $4/5 - 1/10 \geq 2/3$. So that the protocol solves the indexing problem and must therefore have communication complexity $\Omega(\log\binom{n}{k})$. 
		
		In order to upper bound the communication complexity, notice that each round of $\lceil q / q' \rceil$ sends $O(\log(1/\delta))$ bits and succeeds with probability at least $1/2$; which means that the expected communication complexity of a round is $O(\log(1/\delta))$. The expected communication complexity of the whole protocol is therefore 
		\[ O\left(\left\lceil \frac{q}{q'} \right\rceil \log(1/\delta)\right) \leq O\left( \frac{q\log(1/\delta)}{q'}  + \log(1/\delta) \right) = O\left(q\eps^2 + \log(q\eps^2)\right) \leq O(q\eps^2).\] 
	}
\end{proofof} 

\subsection{Proof of Theorem~\ref{thm:learning-lb-2}}\label{sec:learning-lb-2}

The lower bound $\Omega(\log {n\choose k}/\eps^2)$ follows trivially from Theorem \ref{thm:learning-lb}.
To see this, we can first learn~$p$ to within $\eps/2$ total variation distance.
Let $\hat{p}$ be the hypothesis distribution that the algorithm returns.
Then we can find its closest $k$-junta distribution $p'$ and let $S$ be the set of  
relevant variables of $p'$ with $|S|\le k$.
The algorithm can return $S$ since $\dtv(p,p')\le \dtv(p,\hat{p})+\dtv(\hat{p},p')\le \eps$.

We focus on the second part of the lower bound $\Omega(2^k/\eps^2)$ in the rest of the proof.
Note that we may assume that $k$ is asymptotically large; otherwise the second part is 
dominated by the first part. 
We follow the same flow.
Suppose that $\calA^*$ is a randomized algorithm which, given general conditioning query access to~any unknown $k$-junta distribution $p$ supported on $\{-1,1\}^n$, makes $q$ queries and outputs with probabi\-lity at least $4/5$ a hypothesis distribution $\hat{p}$ such that $\dtv(p,\hat{p})\le \eps$. 

We say a Boolean function $f:\{-1,1\}^k\rightarrow \{-1,1\}$ is \emph{good} if the number of $1$-entries
in $f$ is between $2^k/3$ and $2^{k+1}/3$.
Let $G_k$ be the set of good Boolean functions. Then it follows from Chernoff bound that 
$
\smash{|G_k|\ge 2^{2^k}(1-o_k(1)).}
$
We set $m=2^k$ and Alice interprets her input string $y\in\bits^m$ in the indexing problem 
as a good Boolean function $f:\{-1,1\}^k\rightarrow \bits$ 
by fixing a bijection between $[m]$ and $\{-1,1\}^k$ and interpreting $y$ as the truth table of $f$.

Given a string $y\in \{-1,1\}^m$ and its corresponding $f:\bits^k\rightarrow \bits$,
letting $I(y)$ be the number of $1$-entries in $f$,
Alice constructs the following $k$-junta distribution $p_y$ over $\{0,1\}^n$:
\begin{align*}
p_{y}(x)= \left\{ \begin{array}{ll} 2^{-n}\left(1+40\eps\cdot \frac{2^k}{I(y)}\right) & \text{if}\ f(x_1,\ldots,x_k) = 1 \\[1.5ex]
2^{-n}\left(1-{40\eps}\cdot \frac{2^k}{2^k - I(y)}\right) & \text{if}\ f(x_1,\ldots,x_k) = -1 \end{array} \right. 
\end{align*}
Note that when $f$ is good, $p_y$ is an $O(\eps)$-almost uniform $k$-junta distribution; as it becomes
clear later Alice constructs $p_y$ only when $f$ is good.
The following claim gives us the recovery property:

\begin{claim}\label{cl:right-val}
	Given a good $y\in \{-1,1\}^m$ and $p_y$ defined above, let $\hat{p}$ be any distribution on $\{-1,1\}^n$ which has $\dtv(p_{y}, \hat{p}) \leq \eps$. Then,
	\begin{align*}
	\Prx_{\bx \sim \{-1,1\}^n}\Big[ \sign\left(\hat{p}(\bx) - 2^{-n} \right) \neq \sign\left( p_{y}(\bx) - 2^{-n}\right)\Big] \leq \frac{1}{20}.
	\end{align*}
\end{claim}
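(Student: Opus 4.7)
The plan is to observe first that, by the construction of $p_y$, we have $\sign(p_y(x) - 2^{-n}) = f(x_1,\ldots,x_k)$ for every $x \in \{-1,1\}^n$. So the event whose probability we need to bound is exactly the set
\[
B \;\eqdef\; \bigl\{ x \in \{-1,1\}^n \;:\; \sign\bigl(\hat p(x) - 2^{-n}\bigr) \neq \sign\bigl(p_y(x) - 2^{-n}\bigr) \bigr\},
\]
and the probability in question is $|B|/2^n$. The strategy is to argue that on $B$ the pointwise gap between $\hat p$ and $p_y$ must be large, and then compare against the hypothesis $\dtv(p_y,\hat p)\le \eps$.

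The first key step is the pointwise observation that for every $x \in B$,
\[
|\hat p(x) - p_y(x)| \;\ge\; |p_y(x) - 2^{-n}|,
\]
since in order for the signs of $\hat p(x) - 2^{-n}$ and $p_y(x) - 2^{-n}$ to differ, $\hat p(x)$ must lie on the opposite side of $2^{-n}$ from $p_y(x)$ (or equal $2^{-n}$), so it has moved by at least $|p_y(x)-2^{-n}|$.

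The second step is to lower bound $|p_y(x)-2^{-n}|$ using that $y$ is good. By definition of goodness, $2^k/3 \le I(y) \le 2^{k+1}/3$, hence both $2^k/I(y)$ and $2^k/(2^k - I(y))$ lie in $[3/2, 3]$. Plugging this into the two cases of the definition of $p_y$, we obtain uniformly in $x$ that
\[
|p_y(x) - 2^{-n}| \;\ge\; 2^{-n} \cdot 40\eps \cdot \tfrac{3}{2} \;=\; 60\eps \cdot 2^{-n}.
\]

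Finally I combine these ingredients with the total variation assumption:
\[
\eps \;\ge\; \dtv(p_y,\hat p) \;=\; \tfrac{1}{2}\sum_{x} |p_y(x) - \hat p(x)| \;\ge\; \tfrac{1}{2}\sum_{x \in B} |p_y(x) - 2^{-n}| \;\ge\; \tfrac{1}{2} \cdot |B| \cdot 60\eps \cdot 2^{-n},
\]
which rearranges to $|B|/2^n \le 1/30 \le 1/20$, as desired. I do not anticipate any real obstacle here: the constant $40$ in the definition of $p_y$ and the factor $3/2$ from goodness were chosen precisely so that this direct comparison yields a bound below $1/20$ with room to spare.
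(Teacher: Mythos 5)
Your proposal is correct and follows essentially the same route as the paper: on the set where the signs of $\hat p(x)-2^{-n}$ and $p_y(x)-2^{-n}$ disagree, the pointwise gap $|\hat p(x)-p_y(x)|$ is at least $\Omega(\eps)\cdot 2^{-n}$, and summing this against $\dtv(p_y,\hat p)\le\eps$ bounds the measure of that set. The only cosmetic difference is that you use goodness of $y$ to get the sharper constant $60\eps\cdot 2^{-n}$ (yielding $1/30$), while the paper settles for $40\eps\cdot 2^{-n}$ (yielding exactly $1/20$); both are valid.
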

\begin{proof}
	Notice that for every $x \in \{-1,1\}^n$ where $\sign\left(\hat{p}(x) - 2^{-n} \right) \neq \sign\left( p_{y}(x) - 2^{-n}\right)$, we have $|\hat{p}(x) - p_{y}(x)| \geq 40\eps \cdot 2^{-n}$. Hence, 
	\begin{align*}
	\hspace{-0.2cm}\eps &\geq \dtv(p_y, \hat{p}) = \frac{1}{2} \sum_{x \in\{-1,1\}^n} |p_y(x) - \hat{p}(x)| \geq 20\eps \cdot \Prx_{\bx \sim \{-1,1\}^n}\Big[ \sign\left(\hat{p}(\bx) - 2^{-n} \right) \neq \sign\left( p_{y}(\bx) - 2^{-n}\right)\Big].
	\end{align*}
	This finishes the proof of the claim.
\end{proof}

\def\by{\mathbf{y}}

\begin{proofof}{Theorem~\ref{thm:learning-lb-2}}
	Again, the proof proceeds via a reduction from the two-party one-way communication problem of indexing 
	over $\{-1,1\}^m$ where $m=2^k$. 
	Let $y\in \{-1,1\}^m$ be the input string of Alice.
	As alluded to earlier, in the case that $y$ is not good, Alice just aborts the protocol
	and they fail the task with probability $o_k(1)$ because $y$ is drawn uniformly at random from $\{-1,1\}^m$.
	In the case that $y$ is good, Alice uses it to construct $p_y$, 
	a $k$-junta distribution over $\{-1,1\}^n$ that is $O(\eps)$-almost uniform.
	Given that $\calA^*$ is a randomized algorithm for learning $k$-junta distributions over $\{-1,1\}^n$,
	there exists a deterministic algorithm with $q$ general conditioning queries such that
	$$
	\Pr_{\by}\big[\calA(p_\by)\ \text{returns a hypothesis that is $\eps$-close to $p_{\by}$}\big]\ge 4/5,
	$$
	where $\by$ is uniform over good strings. Alice and Bob agree on such an $\calA$.

	The protocol goes as before.
	When $y$ is good,
	Alice uses Lemma \ref{lem:compression} to send a message $\bM$ of length $O(q\eps^2)+O(1)$ to Bob 
	so that Bob can use $\bM$ to obtain a sequence of $q$ strings $\bx_1,\ldots,\bx_q\in \{-1,1\}^n$
	such that their distribution is $(1/20)$-close to the distribution of
	the sequence of $q$ samples $\calA$
	receives when running on $p_S$.
	It follows from~(\ref{fgggg}) that when $\by\sim \{-1,1\}^m$,
	Bob successfully learns a hypothesis distribution $\hat{\textbf{{p}}}$ that 
	is $\eps$-close to $p_{\by}$, 
	by simulating $\calA$ on $\bx_1,\ldots,\bx_q$, 
	with probability at least $4/5-1/20-o_k(1)$.
	We now apply Claim~\ref{cl:right-val} to conclude that if this occurs, Bob can output
	the correct $\bi$-th bit of $\by$
	with  probability at least $9/10$ given that $\bi$ is independent and uniform..
	As a result, over the randomness of $\by$ and $\bi$,
	Bob outputs the correct $\by_\bi$ with probability at least $4/5-1/20-o_k(1)-1/20\ge 2/3$.
	By the $\Omega(m)=\Omega(2^k)$ lower bound on the indexing problem, we obtain the desired claim.
\end{proofof}

\subsection{Compressing batches of conditional samples}\label{sec:compression}

We prove Lemma \ref{lem:compression} in the rest of the section.
%
Recall that $\calA$ is a deterministic (adaptive) algorithm, where each query (a subset of $\{-1,1\}^n$) 
depends on all samples received from previous queries.

We use the following definition to capture such a $q$-query deterministic algorithm:

\begin{definition}\label{def:trees}
	For $n,q \in \N$, we say a $q$-\emph{query tree} $\calT$ is a rooted depth-$q$ tree. Every non-leaf node $v \in \calT$ contains a subset $A_v \subseteq \{-1,1\}^n$, as well as a child node $v_x$ for every $x \in A_v$. Given a distribution $p$ over $\{-1,1\}^n$, an \emph{execution} of $\calT$ on $p$ is a random walk $(v_1,\dots, v_q)$ down the tree, specifying a sequence of $q$ samples $(\bx_1,\dots, \bx_q)$: starting at the root node and proceeding down the tree, for the current node $v_i$, sample $\bx_i \sim p$ conditioned on $\bx_i \in A_{v_i}$, and let $v_{i+1} = (v_i)_{\bx_i}$. Let $\calE_{p, \calT}$ be the distribution supported on $(\{-1,1\}^n)^q$ which outputs the samples $(\bx_1, \dots, \bx_q)$ of an execution of $\calT$ on $p$.
\end{definition}

We consider a protocol, $\SampleWalk$ which, without communication, generates an execution of a
given $q$-query tree $\calT$, and Alice decides whether or not to ``accept'' the samples
at the end. In more detail, $\SampleWalk$ takes as input a distribution $p$ over $\{-1,1\}^n$, a $q$-query tree $\calT$, and an error tolerance $\delta \in (0,1)$, and using public randomness, will output a root-to-leaf walk of $\calT$ specified by nodes $(v_1, \dots, v_q)$ and $(\bx_1, \dots, \bx_q)$, or ``reject''. The protocol, $\SampleWalk$ follows the ``rejection sampling'' paradigm. (See Figure~\ref{fig:protocol} for a precise description of the protocol.)

\begin{figure}[H]
	\begin{framed}
		\noindent Protocol $\SampleWalk \hspace{0.04cm} (p, \calT, \delta)$
		
		\begin{flushleft}
			\noindent {\bf Input:} A distribution $p$ supported on $\{-1,1\}^n$, a $q$-query tree $\calT$, and a parameter $\delta \in (0,1)$. Furthermore, we assume access to a public string of infinite uniformly random bits.
			
			\noindent {\bf Output:} A root-to-leaf walk down the decision tree $\calT$ specified by nodes $(v_1, \dots, v_q)$ and samples $(\bx_1, \dots,  \bx_q)$, or ``reject''.\protect\footnotemark 
			
			\begin{enumerate}
				\item\label{en:prot-1} Starting at the root of $\calT$ and walking down the tree, Alice considers the current node in $v \in \calT$, and the query $A_v \subset \{-1,1\}^n$. She uses public randomness to generate a sample $\bx_v \sim A_v$ drawn \emph{uniformly} from $A_v$, and considers the child node of $\calT$ specified by $\bx_v$. Notice that this builds a walk $(v_1, \dots, v_q)$ and $(\bx_1,\dots, \bx_q)$, and in particular, this step is completely independent from $p$, and draws a sample from $\calE_{\calU, \calT}$.
				\item\label{en:prot-3} Alice samples a private bit which is $1$ with probability 
				\[ \min\left(1, \delta \cdot \frac{\calE_{p, \calT}(\bx_1,\dots,\bx_q)}{\calE_{\calU, \calT}(\bx_1,\dots, \bx_q)}  \right)  \] 
				and $-1$ otherwise. If Alice's sampled bit is $1$, Alice ``accepts'' the sample $(\bx_1, \dots, \bx_q)$ and the nodes $(v_1, \dots, v_q)$, if it is $-1$, Alice ``rejects''. 
			\end{enumerate}
		\end{flushleft}\vskip -0.14in
	\end{framed}\vspace{-0.2cm}
	\caption{The $\SampleWalk$ Protocol.}\label{fig:protocol}
\end{figure}
\footnotetext{We note that outputting $(v_1, \dots, v_q)$ is unnecessary, as the samples $(\bx_1, \dots, \bx_q)$ uniquely determine a root-to-leaf walk down the tree $\calT$. We maintain the notation just for notational simplicity.}

\begin{definition}
	For a $q$-query tree $\calT$, we let $\calD_{p, \calT,\delta}^{\circ}$ be a distribution supported on $(\{-1,1\}^n)^q \cup \{ \bot \}$ given by the samples $(\bx_1,\dots,\bx_q)$ forming the output of one execution of $\emph{\SampleWalk}(p, \calT, \delta)$, or $\bot$ if it outputs ``reject''. We let $\calD_{p, \calT, \delta}$ be the distribution $\calD_{p, \calT, \delta}^{\circ}$ conditioned on it not outputting $\bot$.
\end{definition}


\begin{lemma}\label{lem:communication-compression}
	There exists a sufficiently small constant $\zeta \in (0,1)$ such that for any $\eps,\delta \in (0,1/2)$ and
	\[ q \leq \left\lfloor \frac{\zeta \log(1/\delta)}{\eps^2} \right\rfloor, \]
	the following holds. Let $\calT$ be a $q$-query tree and $p$ be $\eps$-almost uniform. Then,
	\begin{align*}
	\dtv(\calD_{p, \calT, \delta}, \calE_{p, \calT}) \leq \delta \qquad\text{and}\qquad \Prx\left[ \calD_{p,\calT,\delta}^{\circ} \text{ outputs $\bot$}\right] \leq 1 - \delta / 2.
	\end{align*}
\end{lemma}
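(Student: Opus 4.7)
The protocol is rejection sampling: the uniform walk in step~\ref{en:prot-1} produces a sample from $\calE_{\calU,\calT}$, and step~\ref{en:prot-3} accepts with probability $\min(1,\, \delta \cdot R(\bx))$, where
\[
R(x_1,\ldots,x_q) \eqdef \frac{\calE_{p,\calT}(x_1,\ldots,x_q)}{\calE_{\calU,\calT}(x_1,\ldots,x_q)} = \prod_{i=1}^q \frac{p(x_i)\,|A_{v_i}|}{p(A_{v_i})}.
\]
If $\delta R(\bx) \le 1$ on all walks $\bx$, then accepted samples are distributed exactly as $\calE_{p,\calT}$ and $\Pr[\text{accept}] = \delta$. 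The entire analysis reduces to showing that the ``bad set'' $B = \{x : \delta R(x) > 1\}$, where the min clips and biases the distribution, is extremely unlikely under both walks.

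\textbf{Concentration of the log-ratio.} Write $\log R = \sum_{i=1}^q Z_i$ with $Z_i = \log(p(\bx_i)|A_{v_i}|/p(A_{v_i}))$. Because $p$ is $\eps$-almost uniform, both $2^n p(\bx_i)$ and $2^n p(A_{v_i})/|A_{v_i}|$ lie in $[1-\eps,1+\eps]$, so $|Z_i| \le \log\!\big((1+\eps)/(1-\eps)\big) \le 3\eps$. Conditioned on the walk up to step $i$ (equivalently, on $v_i$), the distribution of $\bx_i$ is uniform on $A_{v_i}$ under $\calE_{\calU,\calT}$ and $p|_{A_{v_i}}$ under $\calE_{p,\calT}$; by Jensen's inequality the former gives $\Ex[Z_i \mid v_i] \le 0$, while the latter equals $D_{KL}(p|_{A_{v_i}} \| \mathrm{Uniform}(A_{v_i})) = O(\eps^2)$ (standard bound for $(1\pm\eps)$-close densities). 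Setting $M_i = Z_i - \Ex[Z_i \mid \bx_1,\ldots,\bx_{i-1}]$, we get a martingale difference sequence bounded by $6\eps$, and Azuma--Hoeffding yields
\[
\Pr\!\left[\log R \ge \Ex[\log R] + t\right] \le \exp\!\big(-t^2/(c q \eps^2)\big)
\]
under either walk. Taking $t = \tfrac{1}{2}\log(1/\delta)$ and using $q \le \zeta \log(1/\delta)/\eps^2$ with $\zeta$ a sufficiently small absolute constant, the exponent becomes $\Omega(\log(1/\delta)/\zeta)$, so both $\calE_{\calU,\calT}(B)$ and $\calE_{p,\calT}(B)$ are at most $\delta^2$ (in fact $\le \delta^{1/(c'\zeta)}$, which we make arbitrarily small by shrinking $\zeta$). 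Under $\calE_{p,\calT}$ we additionally need $\Ex[\log R] \le O(q\eps^2) \le O(\zeta \log(1/\delta))$ to not eat up the slack, which again holds for $\zeta$ small.

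\textbf{Acceptance probability and TV bound.} The acceptance probability decomposes as
\[
\Pr[\text{accept}] = \sum_{x \notin B} \delta\, \calE_{p,\calT}(x) + \sum_{x \in B} \calE_{\calU,\calT}(x) = \delta\big(1 - \calE_{p,\calT}(B)\big) + \calE_{\calU,\calT}(B),
\]
which is sandwiched between $\delta - O(\delta^3)$ and $\delta + O(\delta^2)$ by the previous step, giving $\Pr[\text{accept}] \ge \delta/2$. For the TV bound, on the good set $x \notin B$ we have $\calD_{p,\calT,\delta}(x) = \delta \calE_{p,\calT}(x)/\Pr[\text{accept}] = (1 + O(\delta)) \calE_{p,\calT}(x)$, contributing $O(\delta)$ to the total variation; on $B$, both $\calD_{p,\calT,\delta}(B) = \calE_{\calU,\calT}(B)/\Pr[\text{accept}] = O(\delta)$ and $\calE_{p,\calT}(B) = O(\delta^2)$, so the bad set also contributes $O(\delta)$. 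Summing and tightening constants (absorbed by choosing $\zeta$ a bit smaller so the tail probabilities are $O(\delta^3)$) gives $\dtv(\calD_{p,\calT,\delta}, \calE_{p,\calT}) \le \delta$.

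\textbf{Main obstacle.} The only real work is the Azuma argument, and the subtlety is that under $\calE_{p,\calT}$ the log-ratio has \emph{positive} mean of order $q\eps^2$, which can be a constant fraction of $\log(1/\delta)$ when $q \approx \log(1/\delta)/\eps^2$. So the constant $\zeta$ must be chosen small enough to beat simultaneously (i) this mean shift and (ii) the Azuma constant, and one must verify that this still leaves room for the desired $\delta$ tail. Everything else is elementary bookkeeping.
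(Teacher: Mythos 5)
Your proposal is correct and follows essentially the same route as the paper: analyze the rejection-sampling acceptance rule with threshold $1/\delta$ on the likelihood ratio, and show via Azuma that the clipped set has probability $\mathrm{poly}(\delta)$ under both the uniform walk and the $p$-walk, absorbing the $O(q\eps^2)$ drift under the $p$-walk by taking $\zeta$ small. The only cosmetic difference is that you concentrate $\log R$ and bound the per-step drift by a KL divergence, whereas the paper bounds the ratio by $\exp(\sum_i Y_i)$ with $Y_i$ the relative deviation and controls the drift by a variance (importance-sampling) computation; the bookkeeping for the acceptance probability and total variation is the same.
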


\begin{proof}
	In particular, notice that in order for an execution of $\SampleWalk(p, \calT,\delta)$ to output ``reject'', two events must occur:
	\begin{itemize}
		\item The first event is that the samples $(\bx_1, \dots, \bx_q)$ sampled in Step~\ref{en:prot-1} satisfy 
		\begin{align}
		\calE_{p, \calT}(\bx_1,\dots, \bx_q) < \calE_{\calU, \calT}(\bx_1,\dots,\bx_q) \cdot \frac{1}{\delta}. \label{eq:large-prob}
		\end{align}
		\item The second event is that a random bit sampled in Step~\ref{en:prot-3} is set to $-1$, and the probability that his occurs is
		\begin{align*} 
		1 - \delta \cdot \frac{\calE_{p, \calT}(\bx_1,\dots, \bx_q)}{\calE_{\calU,\calT}(\bx_1,\dots, \bx_q)}. 
		\end{align*}
	\end{itemize}
	We let $\calR \subset (\{-1,1\}^n)^q$ be the set of strings which satisfy (\ref{eq:large-prob}), i.e.,
	\[ \calR = \left\{ (x_1,\dots, x_q) \in (\{-1,1\}^n)^q : \calE_{p,\calT}(x_1,\dots, x_q) < \frac{1}{\delta} \cdot \calE_{\calU,\calT}(x_1,\dots,x_q) \right\}, \]
	and notice that 
	\begin{align}
	\Prx\left[\calD_{p,\calT,\delta}^{\circ} \text{ outputs $\bot$} \right] = \sum_{x \in \calR} \calE_{\calU, \calT}(x)\left( 1 - \delta \cdot \dfrac{\calE_{p,\calT}(x)}{\calE_{\calU, \calT}(x)}\right) = \Prx_{\bx \sim \calE_{\calU, \calT}}[\bx \in \calR] - \delta \cdot \Prx_{\bx \sim \calE_{p, \calT}}\left[ \bx \in \calR \right], \label{eq:rejection-prob}
	\end{align}
	so for simplicity in the notation, let 
	\[ \alpha \eqdef \Prx_{\bx \sim \calE_{\calU, \calT}}\left[ \bx \in \calR\right] \qquad\text{and}\qquad \beta \eqdef\Prx_{\bx \sim \calE_{p, \calT}}\left[\bx \in \calR\right].\]
	Furthermore, whenever $x \in \calR$, 
	\begin{align*}
	\calD_{p,\calT,\delta}(x) &= \sum_{k=1}^{\infty} \calE_{\calU, \calT}(x) \cdot \left( \delta \cdot \frac{\calE_{p, \calT}(x)}{\calE_{\calU, \calT}(x)} \right) \cdot \calD_{p, \calT,\delta}^{\circ}(\bot)^{k-1} = \delta \cdot \calE_{p, \calT}(x) \left( \frac{1}{1-\calD_{p,\calT,\delta}^{\circ}(\bot)}\right) \\
	&= \left( \frac{\delta}{1 - \alpha + \delta \beta}\right) \calE_{p,\calT}(x),
	\end{align*}
	and whenever $x \notin \calR$, Step~\ref{en:prot-3} always accepts the sample, so 
	\begin{align*}
	\calD_{p, \calT,\delta}(x) &= \left( \frac{1}{1 - \alpha + \delta \beta} \right) \cdot \calE_{\calU,\calT}(x).
	\end{align*}
	Thus, we may write
	\begin{align}
	\dtv\left(\calD_{p,\calT,\delta}, \calE_{p,\calT} \right) &= \frac{1}{2} \sum_{x \in (\{-1,1\}^n)^q}\left| \calD_{p,\calT,\delta}(x) - \calE_{p,\calT}(x)\right| \nonumber \\
	&\leq \frac{1}{2} \sum_{x \notin \calR} \left( \calD_{p,\calT,\delta}(x) + \calE_{p,\calT}(x)\right) + \frac{1}{2} \sum_{x \in \calR} \calE_{p, \calT}(x) \left| \frac{\delta}{1 - \alpha + \delta \beta} - 1\right|\nonumber  \\
	&= \frac{1}{2} \left(\dfrac{1 - \alpha}{1 - \alpha + \delta \beta} + (1 - \beta)\right) + \frac{1}{2} \beta \left| \frac{\delta (1-\beta) - (1 - \alpha)}{1 - \alpha + \delta \beta}\right|,\label{eq:dtv-compute}
	\end{align}
	so it suffices to show
	\begin{align*}
	1 - \delta^2/2 \leq \alpha , \beta\leq 1
	\end{align*}
	in order to conclude that (\ref{eq:dtv-compute}) is at most $\delta$, and that (\ref{eq:rejection-prob}) is at most $1 - \delta/2$. In order to do so, we use the fact that $p$ is $\eps$-almost uniform to upper bound $1 - \alpha$ and $1 - \beta$. Notice that if $x \notin \calR$, then, considering the unique path $(v_1, \dots, v_q)$ in $\calT$ specified by $x$, we have
	\begin{align}
	\frac{1}{\delta} \leq \dfrac{\calE_{p, \calT}(x)}{\calE_{\calU, \calT}(x) } &= \prod_{i=1}^q \left( \dfrac{p(x_i)}{\frac{1}{|A_{v_i}|} \sum_{y \in A_{v_i}} p(y)}\right) = \prod_{i=1}^q \left(1 + \dfrac{p(x_i) - \Ex_{\bz \sim A_{v_i}}[p(\bz)]}{\Ex_{\bz \sim A_{v_i}}[p(\bz)]} \right) \nonumber \\
	&\leq \exp\left( \sum_{i=1}^q \dfrac{p(x_i) - \Ex_{\bz \sim A_{v_i}}[p(\bz)]}{\Ex_{\bz \sim A_{v_i}}[p(\bz)]} \right). \label{eq:quantity}
	\end{align}
	We first upper bound $1-\alpha$ by considering the random sequence $\bY_1, \dots, \bY_q$ generated by starting at the root $v_1$ and walking down the tree $\calT$, while sampling $\bx_i \sim A_{v_i}$, setting $\bY_i = (p(\bx_i) - \Ex_{\bz\sim A_{v_i}}[p(\bz)]) / \Ex_{\bz\sim A_{v_i}}[p(\bz)]$, and letting $v_{i+1} = (v_{i})_{\bx_i}$. We upper-bound $1-\alpha$ by giving an upper bound for the probability that $\sum_{i=1}^q  \bY_i \geq \ln(1/\delta)$, which in turn upper bounds $1-\alpha$ by (\ref{eq:quantity}). Notice that partial sums $\{ \sum_{i=1}^t \bY_i \}_{t \in [q]}$ form a 0-centered martingale, and since $p$ is $\eps$-almost uniform, 
	\begin{align*}
	|\bY_i| \leq \max_{\substack{v \in \calT \\ x \in A_{v}}} \left| \dfrac{p(x) - \Ex_{\bz \sim A_v}\left[ p(\bz)\right]}{\Ex_{\bz \sim A_v}[p(\bz)]} \right| \leq \max_{\substack{v \in \calT \\ x \in A_v}} \left| \dfrac{ \Ex_{\bz \sim A_v}[p(x) - p(\bz)]}{\Ex_{\bz \sim A_{v}}[p(\bz)]} \right| \leq \frac{2\eps}{1 -\eps} \leq 4\eps.
	\end{align*}
	We may apply Azuma's inequality to conclude
	\begin{align*}
	\Prx_{\bx \sim \calE_{\calU, \calT}}\left[ \sum_{i=1}^q \bY_i \geq \ln(1/\delta) \right] &\leq \exp\left( -\frac{\ln^2(1/\delta)}{2 \cdot 16 \eps^2 \cdot q} \right) \leq \delta^2 / 2
	\end{align*}
	by setting of $q$ with $\zeta$ being a sufficiently small constant, and hence lower bounds $\alpha$ by $1 - \delta^2 / 2$. In order to upper bound $1- \beta$, we consider the sequence of random variables $\bY_1', \dots, \bY_q'$ generated by starting at the root $v_1$ and walking down the tree $\calT$, but now we sample $\bx_i \sim p$ conditioned on $\bx_i \in A_{v_i}$, setting $\bY_i = (p(\bx_i) - \Ex_{\bz \sim A_{v_i}}[p(\bz)]) / \Ex_{\bz \sim A_{v_i}}[p(\bz)]$, and writing
	\[ \bY_i' = \bY_i - \dfrac{\Ex_{\bz' \sim p}[p(\bz') \mid \bz' \in A_{v_i}] - \Ex_{\bz \sim A_{v_i}}[p(\bz)]}{\Ex_{\bz \sim A_{v_i}}[p(\bz)]}, \]
	where the subsequent node $v_{i+1} = (v_i)_{\bx_i}$. Notice that now the partial sums $\{ \sum_{i=1}^t \bY_i' \}_{t \in [q]}$ have expectation $0$, form a martingale, where $\bY_i'$ are obtained by shifting $\bY_i$ by its expectation, $\Ex_{\bx \sim \calE_{p, \calT}}[\bY]$. Furthermore, we may upper bound this shift by importance sampling,
	\begin{align}
	\dfrac{\Ex_{\bz' \sim p}\left[ p(\bz') \mid \bz' \in A_{v_i}\right] - \Ex_{\bz \sim A_{v_i}}\left[ p(\bz)\right]}{\Ex_{\bz \sim A_{v_i}}[p(\bz)]} &= \dfrac{\Ex_{\bz \sim A_{v_i}}[p(\bz)^2] - \Ex_{\bz \sim A_{v_i}}[p(\bz)]^2}{\Ex_{\bz \sim A_{v_i}}[p(\bz)]^2 } \nonumber \\
	&= \dfrac{\Ex_{\bz \sim A_{v_i}}\left[ \left( p(\bz) - \Ex_{\bz' \sim A_{v_i}}[p(\bz')]\right)^2\right]}{\Ex_{\bz \sim A_{v_i}}[p(\bz)]^2} \leq \frac{4\eps^2}{1-\eps} \leq 8\eps^2. \label{eq:bias}
	\end{align}
	so that similarly to the computation above, $|\bY_i'| \leq 4 \eps + 8\eps^2 \leq 12 \eps$. We may again, apply Azuma's inequality, where we notice that the expectation of 
	\begin{align*}
	\Prx_{\bx \sim \calE_{p,\calT}}\left[ \sum_{i=1}^q \bY_i \geq \ln(1/\delta) \right] &\leq \Prx_{\bx \sim \calE_{p,\calT}}\left[\sum_{i=1}^q \bY_i' \geq \ln(1/\delta) - 8q\eps^2 \right] \\
	&\leq \Prx_{\bx \sim \calE_{p,\calT}}\left[ \sum_{i=1}^q \bY'_i \geq \ln(1/\delta) / 2\right] \leq \exp\left(- \frac{\ln^2(1/\delta)}{2 \cdot 4 \cdot 144 \eps^2 q} \right) \leq  \delta^2 / 2,
	\end{align*}
	where we used a small enough constant $\zeta > 0$ so that $8 q \eps^2 \leq \ln(1/\delta) / 2$, as well as for the final inequality to hold.
\end{proof}

We now use Lemma \ref{lem:communication-compression} to prove Lemma \ref{lem:compression}:

\begin{proofof}{Lemma \ref{lem:compression}}
	We start with the easy case when $q<1/\eps^2$. In this case, we apply Lemma~\ref{lem:communication-compression} with $\delta = 1 / (40)^{1/\zeta}$. Notice that $q\leq \lfloor \zeta \log(1/\delta) / \eps^2 \rfloor$, so we let $\calT$ be $\calA$, and Lemma~\ref{lem:communication-compression} implies a single call to $\SampleWalk(p, \calA, \delta)$ succeeds in outputting a sample $(\bx_1, \dots, \bx_q)$ from $\calD_{p, \calA, \delta}$ with probability at least $\delta / 2$, and if it does succeed, the output distribution is at most $\delta$-far from the distribution producing a sequence of $q$ samples an execution of $\calA$ on $p$. Alice and Bob use public randomness to execute $\SampleWalk(p, \calA, \delta)$ for $t = O(1/\delta)$ iterations, and Alice communicates the index of the first execution where $\SampleWalk(p,\calA,\delta)$ did not output ``reject'', or the final index if all executions outputted ``reject''. Notice that the distribution of the first time $\SampleWalk(p_{S}, \calT, \delta)$ accepts is exactly $\calD_{p_{S},\calT, \delta}$. Furthermore, this uses $O(\log(1/\delta)) = O(1)$ bits of communication, and that the total variation distance between the samples $(\bx_1,\dots, \bx_q)$ from this protocol and an execution of $\calA$ on $p$ is at most $\delta + (1-\delta / 2)^{t} \leq 1/20$, where the first $\delta$ captures the case when some $\SampleWalk(p,\calA,\delta)$ does not reject, and $(1-\delta/2)^t$ is the probability that all $\SampleWalk(p, \calA, \delta)$ output ``reject''.
	
	When $q \geq 1/\eps^2$, we apply Lemma~\ref{lem:communication-compression} with 
	\[ \delta = \dfrac{1}{\eps^2 q \cdot 100^{1/\zeta}}. \]
	As per setting of (what we refer to as $q'$) from Lemma~\ref{lem:communication-compression}, where $q' = \lfloor \zeta \log(1/\delta)/\eps^2 \rfloor \geq 2$ and hence $q' \geq \zeta \log(1/\delta) / (2\eps^2)$.  Alice and Bob break up the $q$-query algorithm $\calA$ into $\lceil q / q' \rceil$ many $q'$-query trees. The trees are adaptively chosen so as to simulate an execution of $\calA$. For each $q'$-query tree $\calT$, Alice and Bob use public randomness to execute $\SampleWalk(p_S, \calT, \delta)$ for $O(1/\delta)$ iterations such that with probability at least $1/2$, at least one accepts. Alice then communicates $O(\log(1/\delta))$ bits to Bob, indicating the first index where $\SampleWalk(p_{S},\calT, \delta)$ accepts, or a special message indicating none accepted. If some execution accepts, then Bob re-constructs the samples $\bx_1,\dots,\bx_{q'}$ utilizes those samples to simulate the walk down $\calT$. If $\SampleWalk(p_{S},\calT,\delta)$ never accepts, Alice and Bob try again on the same tree. 
	
	Notice that by Lemma~\ref{lem:communication-compression}, since the distribution over the leaves of $\calT$ is $\delta$-close in total variation distance from that of a true execution of $\calT$ on $p$, after $\lceil q / q' \rceil$ successive executions of Lemma~\ref{lem:communication-compression}, the distribution over the leaves of $\calA$ is at most $\delta \lceil q / q' \rceil$-close to that of a true execution of $\calA$ on $p$, where we have
	\begin{align*}
	\delta \left\lceil \frac{q}{q'} \right\rceil \leq \frac{1}{\eps^2 q \cdot 100^{1/\zeta}} \left( \frac{q \cdot 2\eps^2}{\zeta \log(1/\delta)} + 1 \right) \leq \frac{3}{100}
	\end{align*}
	
	In order to upper bound the communication complexity, notice that each round of $\lceil q / q' \rceil$ sends $O(\log(1/\delta))$ bits and succeeds with probability at least $1/2$; which means that the expected communication complexity of a round is $O(\log(1/\delta))$. Hence, the expected communication complexity of the whole protocol is therefore 
	\[ O\left(\left\lceil \frac{q}{q'} \right\rceil \log(1/\delta)\right) \leq O\left( \frac{q\log(1/\delta)}{q'}  + \log(1/\delta) \right) = O\left(q\eps^2 + \log(q\eps^2)\right) \leq O(q\eps^2).\] 
	In order to bound the worst-case communication complexity, we use Markov's inequality. Specifically, by losing another constant factor, we may assume the protocol sends $O(q\eps^2)$ bits except with probability at most $1/100$; in this case, Alice sends an arbitrary bits. Then, the distribution over the samples that Bob may reconstruct is $(3/100 + 1/100)$-close to that of a true execution of $\calA$ on $p$.
\end{proofof}

\ignore{\subsection{Proof of Theorem~\ref{thm:learning-lb}}\label{sec:learning-lb-1}
	
	We relate the complexity of learning the relevant variables of a $k$-junta distributions to the problem of communicating a set $S \subset [n]$ of size $k$ in the two-party, one way communication model. In this model, there are two parties, Alice and Bob. Alice is given as input a subset $S \subset [n]$ of size $k$, and must send a single message $M$ to Bob. From the message $M$, Bob must correctly output $S$, with probability at least $2/3$.
	
	For any $S \subset [n]$, let $p_S$ be the distribution over $\{-1,1\}^n$ given by the following probability mass function:
	\begin{align*}
	p_S(x) &= \left\{ \begin{array}{cc} (1+4\eps) 2^{-n} & \prod_{i \in S} x_i = 1 \\
	(1-4\eps) 2^{-n} & \text{o.w.} \end{array} \right. .
	\end{align*}
	
	\begin{claim}\label{cl:far-family}
		Let $n \in \N$ and $1 \leq k < n$. Suppose that $S \subset [n]$ is a set of size $k$ and $J \subset [n]$ is a set of size at most $k$, which is not equal to $S$. Suppose, furthermore, that $g$ is any junta distribution over variables in $J$. Then, $\dtv(p_{S}, g) \geq 2\eps$.
	\end{claim}
	
	\begin{proof}
		Notice that since $S$ is of size $k$ and $|J| \leq k$ of size at most $k$, there exists an index $i \in S$ such that $i \notin J$. Consider this fixed $i \in S \setminus J$. We will write the probability mass functions $p_{S}$ and $g$ as functions $\{-1,1\}^J \times \{-1,1\}^{[n] \setminus (J \cup \{i\})} \times \{-1,1\} \to \R_{\geq 0}$, where the first $|J|$ indices correspond to settings of bits in $J$, the second $n - |J| - 1$ coordinates correspond to settings of bits in $[n] \setminus (J \cup \{i\})$, and the last bit determines $i$. We notice that since $g$ is a junta over variables in $J$, for any $y \in \{-1,1\}^{J}$ and any two $u_1,u_2 \in \{-1,1\}^{[n] \setminus (J\cup\{i\})}$ and $v_1,v_2 \in \{-1,1\}$, $g(y, u_1, v_1) = g(y, u_2, v_2)$. Furthermore, by definition of $p_S$, $|p_{S}(y, u_1, v_1) - p_S(y, u_1, v_2)| = 8\eps 2^{-n}$ whenever $v_1 \neq v_2$. Hence,
		\begin{align*}
		\dtv(p_{S}, g) &= \frac{1}{2} \sum_{x \in \{-1,1\}^n} \left| p_S(x) - g(x)\right| \\
		&= \frac{1}{2} \sum_{y \in \{-1,1\}^J} \sum_{u \in \{-1,1\}^{[n]\setminus (J \cup \{i\})}} \left( |p_S(y, u, 1) - g(y, u, 1)| + |p_S(y,u,-1) - g(y,u,-1)|\right) \\
		&\geq \frac{1}{2} \sum_{y \in \{-1,1\}^J} \sum_{u \in \{-1,1\}^{[n] \setminus (J \cup \{i\})}} |p_{S}(y,u,1) - p_{S}(y,u,-1)| = 2\eps.
		\end{align*}
	\end{proof}
	
	As a consequence of Claim~\ref{cl:far-family}, we obtain the following simple claim.
	\begin{claim}\label{cl:output-set}
		Let $S \subset [n]$ be any set of size $k$, and let $J$ be any set of size at most $k$ such that $p_{S}$ is $\eps$-close to a junta distribution over $J$. Then, $J = S$.
	\end{claim}
	
	\begin{proof}
		Let $g$ be the closest junta over $J$ to $p_S$, and suppose for the sake of contradiction, that $J \neq S$. Then, we apply Claim~\ref{cl:far-family} which says that $\dtv(p_S, g) \geq 2\eps$, giving the desired contradiction.
	\end{proof}
	
	Hence, it suffices to consider communication protocols which simulate an execution of a conditional sampling algorithm for identifying the relevant coordinates of a $k$-junta distribution. 
	\ignore{
		\begin{definition}
			For any $\eps \in (0,1)$ and $n,k \in \N$, an algorithm $\calA$ which learns the $k$ relevant variables of a $k$-junta distribution making $q$ conditional sampling queries is specified by a rooted depth-$q$ tree, where every non-leaf node $v$ has an associated set $A_v \subset \{-1,1\}^n$, as well as a child for each $x \in A_v$. Every leaf node contains a set $J \subset [n]$ of size at most $k$. An execution of the algorithm on a distribution $p$ corresponds to a walk down the tree, where at each node, we receive a sample $\bx \sim p$ conditioned on $\bx$ lying in $A_v$, and we proceed to the child of $v$ specified by $\bx$. Once we reach a leaf, we output the set $J$.
		\end{definition}
		
		Notice that the assumption that there exists an algorithm $\calA$ for learning the relevant variables of a $k$-junta distribution making $q$ queries, corresponds to the existence of such a depth-$q$ which outputs a set $J \subset [n]$ of size at most $k$ such that $p$ is $\eps$}

	\begin{proofof}{Theorem~\ref{thm:learning-lb}}
		The proof proceeds via a reduction from the two party one-way communication problem of indexing. Alice is given a string $x \in \{0,1\}^{m}$, and Bob is given an index $i^* \in [m]$. Alice must send a single message $M$ to Bob, from which Bob must output the value of $x_{i^*} \in \{0,1\}$ correctly with probability at least $2/3$. This problem has a well known $\Omega(m)$ lower bound on the one-way communication of any correct protocol \cite{MNSW95}, even when the parties are given shared access to an infinitely long string $R$ of random bits.\footnote{Notice that this immediately implies a lower bound of $\Omega(m)$ for the stronger problem where Bob is required to output the entire string $x$ correctly with probability at least $2/3$.} To prove the theorem, we set $m = \lfloor \log_2 \binom{n}{k} \rfloor$, and Alice and Bob jointly agree on the fixed mapping between $\{-1,1\}^m$ and subsets of $[n]$ of size $k$. Alice will interpret her input string $x \in \{0,1\}^m$ as indexing into a unique set $S \subset [n]$ of size $k$. For $q \in \N$, suppose that $\calA$ is an algorithm which is given conditional query access to an unknown $k$-junta distribution $p$ supported on $\{-1,1\}^n$, makes $q$ queries, and outputs with probability at least $4/5$ a subset $J \subset [n]$ of at most $k$ variables such that $p$ is $\eps$-close to a junta distribution over $J$. We may assume $q \geq \alpha / \eps^2$ for a fixed constant $\alpha \in (0,1)$, since this is a lower bound for distinguishing whether a random bit is uniform or biased by $\eps$.\footnote{More formally, $\calA$ is a assumed to be able to distinguish between the case the distribution is uniform, and the distribution having the first bit set to $1$ with probability $1/2 + \eps$.}
		
		Alice considers the distribution $p_S$ which is $4\eps$-almost uniform (since $\eps < 1/8$, we have $4\eps < 1/2$). We apply Lemma~\ref{lem:communication-compression} with 
		\[ \delta = \dfrac{\alpha}{\eps^2 q \cdot 20^{1/\zeta}} \leq \frac{1}{20}. \]
		As per setting of (what we refer to as $q'$) from Lemma~\ref{lem:communication-compression}, where $q' = \lfloor \zeta \log(1/\delta)/\eps^2 \rfloor \geq 2$ and hence $q' \geq \zeta \log(1/\delta) / (2\eps^2)$.  Alice and Bob break up the $q$-query algorithm $\calA$ into $\lceil q / q' \rceil$ many $q'$-query trees. The trees are adaptively chosen so as to simulate an execution of $\calA$. For each $q$-query tree $\calT$, Alice and Bob use public randomness to execute $\SampleWalk(p_S, \calT, \delta)$ for $O(1/\delta)$ iterations such that with probability at least $1/2$, at least one accepts. Alice then communicates $O(\log(1/\delta))$ bits to Bob, indicating the first index where $\SampleWalk(p_{S},\calT, \delta)$ accepts, or a special message indicating none accepted. Notice that the distribution of the first time $\SampleWalk(p_{S}, \calT, \delta)$ accepts is exactly $\calD_{p_{S},\calT, \delta}$. If some execution accepts, then Bob re-constructs the samples $\bx_1,\dots,\bx_{q'}$ utilizes those samples to simulate the walk down $\calT$. If $\SampleWalk(p_{S},\calT,\delta)$ never accepts, Alice and Bob try again on the same tree. 
		
		Notice that by Lemma~\ref{lem:communication-compression}, since the distribution over the leaves of $\calT$ is $\delta$-close in total variation distance from that of a true execution of $\calT$ on $p_S$, after $\lceil q / q' \rceil$ successive executions of Lemma~\ref{lem:communication-compression}, the distribution over the leaves of $\calA$ is $\delta \lceil q / q' \rceil$-close to that of a true execution of $\calA$ on $p_S$. Where
		\begin{align*}
		\delta \left\lceil \frac{q}{q'} \right\rceil \leq \frac{\alpha}{\eps^2 q \cdot 20^{1/\zeta}} \left( \frac{q \cdot 2\eps^2}{\zeta \log(1/\delta)} + 1 \right) \leq \frac{1}{10}
		\end{align*}
		
		We now utilize Claim~\ref{cl:output-set} and the assumption of $\calA$ being an algorithm to find the set of relevant coordinates $J$ to conclude that Bob reaches a leaf of $\calA$ labeled with the set $S$ with probability at least $4/5 - 1/10 \geq 2/3$. So that the protocol solves the indexing problem and must therefore have communication complexity $\Omega(\log\binom{n}{k})$. 
		
		In order to upper bound the communication complexity, notice that each round of $\lceil q / q' \rceil$ sends $O(\log(1/\delta))$ bits and succeeds with probability at least $1/2$; which means that the expected communication complexity of a round is $O(\log(1/\delta))$. The expected communication complexity of the whole protocol is therefore 
		\[ O\left(\left\lceil \frac{q}{q'} \right\rceil \log(1/\delta)\right) \leq O\left( \frac{q\log(1/\delta)}{q'}  + \log(1/\delta) \right) = O\left(q\eps^2 + \log(q\eps^2)\right) \leq O(q\eps^2).\] 
		By the lower bound on the indexing problem, we obtain the desired claim.
	\end{proofof}
	
	\subsection{Proof of Theorem~\ref{thm:learning-lb-2}}\label{sec:learning-lb-2}
	
	First, notice that it suffices to prove the lower bound for $n$ which is larger than a fixed universal constant. The reason is that for a constant $n$, the complexity of distinguishing whether a single bit is uniform or biased with probability at least $2/3$ is at least $\Omega(1/\eps^2)$, and a learning algorithm for distributions over $\{-1,1\}^n$ up to distance in total variation $\eps$ should be able to distinguish between the case the distribution is uniform, and the case the first bit is biased by $\eps$, and the remaining coordinates are independent and uniform.
	
	Hence, we may assume that $n$ is a large enough constant. We proceed similarly to Subsection~\ref{sec:learning-lb-1}, and relate the complexity of learning a distribution over $\{-1,1\}^n$ to that of Alice communicating a random string to Bob in order for Bob to decode a random index with probability significantly higher than $1/2$. In particular, we show that a conditional sampling algorithm for learning distributions may be leveraged to solve the following problem:
	\begin{itemize}
		\item Alice receives a uniformly random string $\bx \sim \{-1,1\}^{2^n}$. 
		\item Bob receives a uniformly random index $\bi \sim [2^n]$.
		\item Utilizing public randomness, Alice must communicate to Bob in a one-way fashion so that he may output $\bx_{\bi}$ with probability at least $2/3$.
	\end{itemize}
	We note that the above communication problem is known to exhibit a lower bound of $\Omega(2^n)$ on the expected number of bits sent.
	
	Assume there exists an algorithm $\calA$ which makes conditional sampling queries to a distribution $p$ supported on $\{-1,1\}^n$, and can output with probability $4/5$, a distribution $\hat{p}$ supported on $\{-1,1\}^n$ such that $\dtv(p, \hat{p}) \leq \eps$. Alice, upon receiving her string $\bx$, checks whether the string is \emph{good}, which means that the number of indices set to $1$, is between $2^n/2 - n^2 2^{n/2}$ and $2^n / 2 + n^2 2^{n/2}$, i.e., the quantity
	\[ I(\bx) \eqdef\left| \left\{ i \in [2^n] : \bx_{i} = 1 \right\} \right| \qquad\text{satisfies}\qquad \left| I(\bx) - 2^{n}/2 \right| \leq n^2 2^{n/2}. \]
	By a Chernoff bound and the fact that $\bx$ is a string drawn uniformly, $\bx$ is \emph{good} with probability at least $1 - o(1/n)$. The first step in the protocol is for Alice to look at her string, and if it is not good, send the entire string.\footnote{Notice that since this occurs with probability at most $o(1/n)$, the expected number of bits is at most $o(2^n/n)$. which will be inconsequential for the lower bound.} In the case that the string $\bx$ is good, Alice considers the distribution $p_{\bx}$ supported on $\{-1,1\}^n$ where
	\begin{align*}
	p_{\bx}(z) &= \left\{ \begin{array}{cc} 2^{-n} + \frac{20\eps}{I(\bx)} & \bx_{z} = 1 \\
	2^{-n} - \frac{20\eps}{2^n - I(\bx)} & \bx_{z} = -1 \end{array} \right. ,
	\end{align*}
	and since $I(\bx)$ is good, we have that once $n$ is a large enough constant, $p_{\bx}$ is $60\eps$-almost uniform, and $p_{\bx}(z) > (1 + 20\eps) 2^{-n}$ when $\bx_{z} = 1$, and $p_{\bx}(z) < (1-20\eps) 2^{-n}$ when $\bx_z = -1$.
	
	\begin{claim}\label{cl:right-val}
		Let $x \in \{-1,1\}^{2^n}$ be any good string, and $\hat{p}$ be any distribution supported on $\{-1,1\}^n$ which has $\dtv(p_{x}, \hat{p}) \leq \eps$. Then,
		\begin{align*}
		\Prx_{\bi \sim \{-1,1\}^n}\left[ \sign\left(\hat{p}(\bi) - 2^{-n} \right) \neq \sign\left( p_{x}(\bi) - 2^{-n}\right)\right] \leq \frac{1}{10}.
		\end{align*}
	\end{claim}
	
	\begin{proof}
		Notice that for every $i \in \{-1,1\}^n$ where $\sign\left(\hat{p}(i) - 2^{-n} \right) \neq \sign\left( p_{x}(i) - 2^{-n}\right)$, we have $|\hat{p}(i) - p_{x}(i)| \geq 20\eps \cdot 2^{-n}$. Hence, 
		\begin{align*}
		\eps &\geq \dtv(p_x, \hat{p}) = \frac{1}{2} \sum_{i \in\{-1,1\}^n} |p_x(i) - \hat{p}(i)| \geq 10\eps \cdot \Prx_{\bi \sim \{-1,1\}^n}\left[ \sign\left(\hat{p}(\bi) - 2^{-n} \right) \neq \sign\left( p_{x}(\bi) - 2^{-n}\right)\right].
		\end{align*}
	\end{proof}
	
	\begin{proofof}{Theorem~\ref{thm:learning-lb-2}}
		Again, the proof proceeds via a reduction from the one-way communication game described above, which we know requires protocols whose expected communication complexity is at least $\Omega(2^n)$. For $q \in \N$, suppose that $\calA$ is an algorithm which given conditional query access to an unknown distribution $p$, outputs a hypothesis $\hat{p}$ which is $\eps$-close to $p$ with probability at least $9/10$, and notice, similarly to the proof of Theorem~\ref{thm:learning-lb}, that we may assume $q \geq \alpha / \eps^2$ for a constant $\alpha \in (0,1)$, since this is the necessary complexity of distinguishing whether a single bit is uniform or biased by $\eps$. 
		
		As alluded to earlier, in the case that $\bx$ is not good, Alice sends the entire string and Bob outputs $\bx_{\bi}$, which is always correct. In the case that $\bx$ is good, Alice and Bob will apply Lemma~\ref{lem:communication-compression}. Specifically, Alice considers the distribution $p_{\bx}$ which is $60\eps$-almost uniform (and hence we can apply Lemma~\ref{lem:communication-compression} when $\eps < 1/120$). We, again, apply Lemma~\ref{lem:communication-compression} with
		\[ \delta = \frac{\alpha}{\eps^2 q 20^{1/\zeta}} \leq \frac{1}{20}, \]
		and almost in exact analogy to the proof of Theorem~\ref{thm:learning-lb} (for this reason, the remainder of the proof is substantially compressed). Alice and Bob break up the execution of $\calA$ into $\lceil q / q'\rceil$ executions of adaptively chosen $q'$-query trees, with $q' = \lfloor \zeta \log(1/\delta) / \eps^2 \rfloor$, and for each, use $O(\log(1/\delta))$ bits of communication in order for Bob to obtain samples from a distribution which is $\delta$-close to that of using the distribution $p_{\bx}$. Hence, after $\lceil q / q'\rceil$, the distribution over the leaves of $\calA$ from root-to-leaf walk using the communication protocol is $1/10$-close from a faithful simulation of $\calA$ on $p_{\bx}$, and as a result with probability $9/10 - 1/10$, Bob outputs a distribution $\hat{p}$ which is $\eps$-close to $p_{\bx}$. We now apply Claim~\ref{cl:right-val} to conclude that if this occurs, Bob can output the correct probability at least $(9/10 - 1/10) * (9/10) \geq 2/3$. The expected communication complexity is $O(q \eps^2) + o(2^n / n)$, which should be at least $\Omega(2^n)$. 
\end{proofof}}

\newcommand{\TestingJuntas}{\texttt{TestingJuntas}}
\newcommand{\MeanTester}{\texttt{MeanTester}}

\section{Testing Algorithm}

\begin{figure}[t!]
	\begin{framed}
		\noindent Subroutine $\TestingJuntas \hspace{0.04cm} (p,k,\eps)$
		
		\begin{flushleft}
			\noindent {\bf Input:} Subcube conditioning access to a distribution $p$ supported on $\{-1,1\}^n$, an integer $k\in \N$ and a proximity parameter $\eps \in (0, 1/4]$.
			
			\noindent {\bf Output:} Either \texttt{accept} or \texttt{reject}.
			
			\begin{enumerate}
				\item Let $c$ be the universal constant in the main structural lemma. We let
				\begin{equation}\label{eq:defeqs}
				\eps'=\frac{\eps}{\lceil \log_2 2n\rceil \cdot \log^c(n/\eps)},\quad 
				r=\big\lceil \log ( 2\sqrt{n}/\eps')\big\rceil \quad \text{and}\quad \eps^*=\frac{\eps'}{1600 r}.
				\end{equation}
				\item Execute $\FindRelevantVariables\hspace{0.04cm}(p,k,\eps^*)$ 
				and let $J$ be the set it returns.
				\item If $|J|>k$, \texttt{reject}. 
				\item  
				For each $j \in [\lceil \log_2 2n \rceil]$ and $\ell\in [
				r]$ with $r=\lceil \log ( 2\sqrt{n}/\eps')\rceil$:
				\begin{enumerate}
					\item[] \hspace{-0.4cm}Repeat the following $L\cdot R$ times, where 
					$$L=\frac{4r\sqrt{n}}{2^\ell \eps'} \quad\text{and}\quad R= O\left(\log \left(\frac{n}{\eps'}\right)\right)$$ 
					\begin{enumerate}
						\item[(A)] Sample  $\brho \sim \calD_{\ol{J}}(p)$ and $\bnu \sim \calD_{\sigma^j}(p_{|\brho})$, execute $\MeanTester\hspace{0.04cm}((p_{|\brho})_{|\bnu},k,2^{-\ell})$
						for\\ $R$ times and take the majority of answers.\vspace{0.1cm}

					\end{enumerate}
					
					\item[] \hspace{-0.4cm}\texttt{Reject} if for at least $R/2$ rounds of (A), the majority
					of answers is ``\texttt{Not a Junta}''.
				\end{enumerate}
				\item \texttt{Accept} if this line is reached.
			\end{enumerate}
			
		\end{flushleft}\vskip -0.14in
	\end{framed}\vspace{-0.2cm}
	\caption{The $\TestingJuntas$ algorithm for testing junta distributions.}\label{fig:testing}
\end{figure}

We use $\FindRelevantVariables$ and $\MeanTester$ to give an algorithm for 
testing $k$-junta distributions.
The algorithm, $\TestingJuntas$, is described in Figure \ref{fig:testing}; we prove the following theorem:

\thmtestingalg*

\ignore{ For some technical reason, we actually need a slightly different version of
	$\FindRelevantVariables$ and prove the following lemma that is very similar to Lemma \ref{thm:kjunta-smallnorm}:
	
	\begin{lemma} \label{thm:kjunta-smallnorm} 
		There exists a randomized algorithm, $\emph{\FindRelevantVariables}^*$, which takes subcube conditional query access to an unknown distribution $p$ supported on $\{-1,1\}^n$, an integer $k\in \N$,~and
		a para\-meter $\eps \in (0, 1/4]$. The algorithm makes $\color{red}\tilde{O}(k/\eps^2)\cdot \polylog(n)$
		subcube conditional queries and outputs a set $\bJ \subset [n]$ that satisfies the following guarantees:
		\begin{flushleft}\begin{enumerate}
				\item\label{en:first-cond}
				With probability at least $8/9$, for 
				every $i \in \bJ$, there is a restriction $\rho \in \{-1,1,*\}^n$ with $i \in \stars(\rho)$ such that $\mu(p_{|\rho})_i \neq 0$ \emph{(}and thus, $i$ is a relevant variable of $p$\emph{)}; 
				\item\label{en:second-cond} 
				Suppose $p$ is a $k$-junta distribution and let $\sigma=1/2$.
				With probability at least $8/9$, $\bJ$ satisfies that for 
				for every $j\in [\lceil \log_2 2n\rceil]$ and every $\gamma>0$, we have
				\begin{align}\label{eqtttt}
				\Prx_{\substack{\brho \sim \calD_{\ol{J}}(p) \\ \bnu \sim \calD_{\sigma^j}(p_{|\brho})}}
				\Big[\big\|\mu\big((p_{|\brho})_{|\bnu}\big)\big\|_2 \geq \gamma\Big]\le \frac{\eps}{\gamma}\cdot 
				\polylog\left(\frac{n}{\eps}\right).
				\end{align} 
		\end{enumerate}\end{flushleft}
	\end{lemma}
	\begin{proof}
		The algorithm $\FindRelevantVariables^*$ is exactly the same as $\FindRelevantVariables$
		except that each call $\VarBudget\hspace{0.04cm}(p,k,\eps_0,b,J)$ is replaced by 
		$\VarBudget\hspace{0.04cm}(p,n,\eps_0,b,J)$ where we replace $k$ by $n$.
		The analysis of its query complexity is similar, except that the $\polylog(k)$ factor
		needs to be replaced by $\polylog(n)$, resulting the same $\polylog(n)$ in the query complexity of
		$\FindRelevantVariables^*$.
		The first guarantee of Lemma \ref{thm:kjunta-smallnorm} is the same as that of Lemma \ref{thm:kjunta-smallnorm}
		and the proof is also the same.
		
		The proof of the second guarantee also goes similarly. 
		We start with a union bound to show that with probability at least $8/9$, all executions of
		$\VarBudget$ satisfy both conditions in Lemma \ref{lem:VarBudget} 
		Let $J$ be the output of $\FindRelevantVariables^*$. Then $J$ contains relevant variables of $p$ only.
		Because $p$ is a $k$-junta distribution, we have $|J|\le k$ and we are done trivially if $|J|=k$
		since the left hand side of (\ref{eqtttt}) is always $0$.
		
		Suppose that $|J|<k$. Then before the algorithm terminates, 
		step (b) executed $\VarBudget$ $(p,n,\eps_0,b,J)$ for all $b<2k$ that is a power of $2$.
		It then follows from the second guarantee of Lemma \ref{lem:VarBudget}  that, for every $j\in [\lceil \log_2 2n\rceil]$,
		$b=2^\beta$ with $\beta=0,\ldots,\lfloor \log 2k\rfloor$ and every $\alpha>0$, 
		\begin{align} 
		\Prx_{\substack{\brho \sim \calD_{\ol{J}}(p) \\ \bnu \sim \calD_{\sigma^j}(p_{|\brho})}}\left[\hspace{0.05cm} \Big|\mu\big((p_{|\brho})_{|\bnu}\big)_i\Big| \geq \frac{\eps_0}{\alpha \sqrt{2^\beta}} \text{ for at least $2^\beta$ coordinates}\hspace{0.05cm}\right] \leq \alpha. \label{eq:prob-ub}
		\end{align} 
		By Claim \ref{simpleclaim} (and that since $p$ is a $k$-junta, the mean vector has 
		no more than $k$ nonzero entries), we have  for every $j\in [\lceil \log_2 2n\rceil]$ and $\gamma>0$ (with $t=\lfloor \log 2k\rfloor$), the left hand side of (\ref{eqtttt}) is at most
		\begin{align*}
		\sum_{\beta=0}^{t}\hspace{0.1cm}
		\Prx_{\brho,\bnu}\left[\hspace{0.05cm} \Big|\mu\big((p_{|\brho})_{|\bnu}\big)_i\Big| \geq \frac{\gamma}{2\sqrt{2^{\beta} t}} \text{ for at least $2^{\beta}$ coordinates}\hspace{0.05cm} \right] \le 
		\frac{\eps_0}{\gamma}\cdot 2\sqrt{t}(t+1)\le \frac{\eps}{\gamma}\cdot \polylog\left(\frac{n}{\eps}\right).
		\end{align*}
		The lemma then follows. 
\end{proof} }


\begin{proofof}{Theorem \ref{thm:testing}}
	We start with the soundness case to show that $\TestingJuntas$ rejects with probability at least $2/3$
	when $p$ is far from $k$-juntas. 
	Assume without loss of generality that the set $J$ returned by $\FindRelevantVariables$
	has size at most $k$; otherwise $\TestingJuntas$ rejects.
	
	Given $|J|\le k$ and $p$ is $\eps$-far from $k$-junta distributions, the main structural lemma implies that
	\begin{equation*} \sum_{j=1}^{ \lceil \log_2 2n \rceil} \Ex_{\brho \sim \calD_{\ol J}(p)} \left[ \Ex_{\bnu \sim \calD_{\sigma^j}(p_{|\brho})}\Big[\big\|\mu((p_{|\brho})_{|\bnu})\big\|_2 \Big] \right] \geq \dfrac{\eps}
	{ \log^{c} (n/\eps) }.
	\end{equation*}
	As a result, there exists a $j\in \lceil \log_2 2n\rceil$ (using the choice of $\eps'$ in (\ref{eq:defeqs})) such that  
	$$
	\Ex_{\brho \sim \calD_{\ol J}(p)} \left[ \Ex_{\bnu \sim \calD_{\sigma^j}(p_{|\brho})}\Big[\big\|\mu((p_{|\brho})_{|\bnu})\big\|_2 \Big] \right] \geq \eps'.$$
	Fix such a $j$ and we apply the following 
	claim (which is elementry and we delay its proof):
	
	\begin{claim}\label{simplesimple2}
		Let $\bX$ be a random variable that takes values between $0$ and $1$.
		If $\E[\bX]\ge \delta$ for some $\delta\in (0,1)$, then there exists an $\ell\in [\lceil \log (2/\delta)\rceil]$ such that
		$$
		\Pr\big[\bX\ge 2^{-\ell}\big]\ge \frac{2^{\ell}\delta}{4\lceil \log (2/\delta)\rceil}
		$$
	\end{claim}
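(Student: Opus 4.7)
The plan is to prove Claim~\ref{simplesimple2} by a standard dyadic-decomposition / pigeonhole argument, carried out by contradiction. Set $L = \lceil \log(2/\delta) \rceil$ for brevity, so the goal is to exhibit some $\ell \in [L]$ with $\Pr[\bX \ge 2^{-\ell}] \ge 2^{\ell} \delta/(4L)$. Suppose toward contradiction that the opposite strict inequality holds for \emph{every} $\ell \in [L]$.

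The key step is to partition the range $[0,1]$ into the dyadic ``shells'' $[2^{-\ell}, 2^{-(\ell-1)})$ for $\ell = 1, \ldots, L$, together with the leftover set $[0, 2^{-L})$. On each shell, $\bX$ is at most $2 \cdot 2^{-\ell}$, and on the leftover set $\bX < 2^{-L}$. Bounding $\bX$ above by the maximum of its shell and using that $\Pr[\bX \in [2^{-\ell}, 2^{-(\ell-1)})] \le \Pr[\bX \ge 2^{-\ell}]$, this yields
\[
\E[\bX] \;\le\; 2^{-L} \;+\; 2 \sum_{\ell=1}^{L} 2^{-\ell} \cdot \Pr\big[\bX \ge 2^{-\ell}\big].
\]

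Plugging the contradiction hypothesis $\Pr[\bX \ge 2^{-\ell}] < 2^{\ell}\delta/(4L)$ into this bound makes the $\ell$-th term contribute at most $\delta/(2L)$, so the sum contributes at most $\delta/2$. Since $L = \lceil \log(2/\delta) \rceil$ gives $2^{-L} \le \delta/2$, we would get $\E[\bX] < \delta/2 + \delta/2 = \delta$, contradicting the hypothesis $\E[\bX] \ge \delta$. This completes the proof.

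There is no real obstacle here: the only mildly delicate point is choosing the exact constants (the factor of $2$ in the upper-bound on each shell and the $\delta/2 + \delta/2$ budget split) so that $L = \lceil \log(2/\delta)\rceil$ levels suffice, and verifying that $2^{-L} \le \delta/2$ for this precise choice of $L$. Everything else is bookkeeping.
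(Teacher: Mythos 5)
Your proof is correct and follows essentially the same route as the paper's: a proof by contradiction via the dyadic-shell (layer-cake) decomposition, bounding the shells' total contribution by $\delta/2$ and the leftover tail $2^{-L}$ by $\delta/2$. No issues.
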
  
	
	Scaling down by $\sqrt{n}$ and applying Claim \ref{simplesimple2}, 
	there is an $\ell\in [r]$ with $r=\lceil \log (2\sqrt{n}/\eps')\rceil$ such that
	\begin{equation}\label{hehe100}
	\Pr_{\brho,\bnu}\left[\big\|\mu((p_{|\brho})_{|\bnu})\big\|_2\ge \sqrt{n}\big/2^\ell
	\right]\ge \frac{2^{\ell}\eps'}{4r\sqrt{n}}.
	\end{equation}
	It follows from a Chernoff bound that, with probability at least $1-o_n(1)$,
	the number of rounds of (A) in which
	$\brho,\bnu$ satisfy (\ref{hehe100}) is at least $2R/3$ (since the expectation is at least $R$). 
	It follows from the promise we get from $\MeanTester$ (i.e., each run returns ``\texttt{Not a Junta}''
	with probability at least $2/3$ when the event in (\ref{hehe100}) holds) that 
	with probability at least $1-o_n(1)$, 
	the majority of answers returned by $\MeanTester$ is ``\texttt{Not a Junta}''
	in each of these $2R/3$ rounds of (A).
	So overall the algorithm rejects with probability at least $1-o_n(1)$.
	This finishes  the soundness case.
	
	Next we work on the completeness case to show that $\TestingJuntas$ accepts with probability at least $2/3$
	when $p$ is a $k$-junta distribution. Suppose $p$ is a $k$-junta distribution, and let $K \subset [n]$ be the set of at most $k$ relevant variables (which is unknown to the algorithm).
	First it follows from Lemma~\ref{thm:kjunta-smallnorm1} that with probability at least $7/9$, the output $J$ of $\FindRelevantVariables$ satisfies both conditions of Lemma~\ref{thm:kjunta-smallnorm1}. So let $|J| \leq k$, and for every $j \in [\lceil \log_2(2k)\rceil ]$, 
	\begin{align}
	\Ex_{\brho \sim \calD_{\ol{J}}(p)}\left[ \Ex_{\bnu \sim \calD_{\sigma^j}(p_{|\brho})}\left[ \|\mu((p_{|\brho})_{|\bnu})\|_2 \right]\right] \leq \eps^*. \label{eq:up-to-k}
	\end{align}
	We will now use this fact, as well as the following simple claim (whose proof we defer), to derive the bound (\ref{eq:up-to-k}) for all $j \in [\lceil \log_2(2n)\rceil ]$, and not just up to $\lceil \log_2(2k)\rceil$.
	
	\begin{claim}\label{thm:kjunta-smallnorm}
		Fix $m \in \N$ and let $h$ be any distribution over $\{-1,1\}^m$. For any $0 \leq \sigma_2 \leq \sigma_1 \leq 1/m$, we have
		\begin{align*}
		\Ex_{\bnu \sim \calD_{\sigma_2}(h)}\left[ \|\mu(h_{|\bnu})\|_2 \right] \leq \Ex_{\bnu \sim \calD_{\sigma_1}(h)}\left[ \|\mu(h_{|\bnu})\|_2 \right]
		\end{align*}
	\end{claim}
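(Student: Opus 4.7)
The plan is to prove the stronger statement that $\phi(\sigma) \eqdef \Ex_{\bnu \sim \calD_\sigma(h)}[\|\mu(h_{|\bnu})\|_2]$ is monotone non-decreasing in $\sigma$ on the interval $[0, 1/m]$; the claim then follows immediately by applying this to $\sigma_2 \leq \sigma_1 \leq 1/m$.

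First, I would expand $\phi$ by conditioning on the realized star set $\bS \sim \calS_\sigma([m])$. Each $S \subseteq [m]$ is realized with probability $\sigma^{|S|}(1-\sigma)^{m-|S|}$, so setting
\[ M_k \eqdef \sum_{S \subseteq [m]:\,|S|=k}\, \Ex_{\bx \sim h}\Big[\big\|\mu(h_{|\brho})\big\|_2\Big] \geq 0 \]
(where $\brho$ is the restriction that stars on $S$ and agrees with $\bx$ off $S$), one obtains the polynomial identity $\phi(\sigma) = \sum_{k=0}^m \sigma^k (1-\sigma)^{m-k} M_k$. The key observation here is that $M_0 = 0$: for $S = \emptyset$ the distribution $h_{|\brho}$ lives on $\{-1,1\}^\emptyset$ and its mean vector is the empty vector, whose $\ell_2$-norm is zero.

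Next I would differentiate $\phi$ term by term, using $\frac{d}{d\sigma}[\sigma^k(1-\sigma)^{m-k}] = \sigma^{k-1}(1-\sigma)^{m-k-1}(k - m\sigma)$ for $1 \leq k \leq m-1$ and handling the $k=0, m$ boundary terms separately (the $k=0$ contribution drops because $M_0 = 0$; the $k = m$ contribution equals $m \sigma^{m-1} M_m$). The result is
\[ \phi'(\sigma) = \sum_{k=1}^{m-1} \sigma^{k-1}(1-\sigma)^{m-k-1}\,(k - m\sigma)\, M_k \;+\; m\sigma^{m-1} M_m. \]
For $\sigma \in (0, 1/m]$ and $k \geq 1$, the factor $(k - m\sigma) \geq k - 1 \geq 0$, and all other factors are non-negative; hence every summand is non-negative and $\phi'(\sigma) \geq 0$ throughout $(0, 1/m]$. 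Thus $\phi$ is non-decreasing on $[0, 1/m]$, proving the claim.

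This argument is essentially a short calculus computation and I do not expect a real obstacle. The only subtlety worth flagging is that the hypothesis $\sigma_1 \leq 1/m$ is used precisely to force $(k - m\sigma) \geq 0$ for every $k \geq 1$, and the identity $M_0 = 0$ is exactly what lets me discard the $k = 0$ contribution, which would otherwise have the wrong sign.
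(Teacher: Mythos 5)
Your proof is correct and is essentially the paper's argument in a different packaging: both boil down to the derivative identity $\frac{d}{d\sigma}\bigl[\sigma^{k}(1-\sigma)^{m-k}\bigr]=\sigma^{k-1}(1-\sigma)^{m-k-1}(k-m\sigma)\ge 0$ for $k\ge 1$ and $\sigma\le 1/m$, which the paper applies pointwise (showing each fixed restriction with nonempty star set becomes more likely as $\sigma$ grows) and you apply after aggregating into the polynomial $\phi(\sigma)$. Your explicit use of $M_0=0$ to discard the $k=0$ term is a nice touch, since that is the one place where monotonicity genuinely fails (the paper's pointwise inequality, read literally for the empty star set, is false but harmless for exactly the reason you flag).
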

	
	For every $\rho \in \supp(\calD_{\ol{J}}(p))$, let $h^{(\rho)}$ be the distribution over $\{-1,1\}^{K\setminus J}$ given by $(p_{|\rho})_{K \setminus J}$. Since $p$ is a junta over variables in $K$, for every $\rho \in \supp(\calD_{\ol{J}}(p))$, the distribution of $p_{|\rho}$ over variables outside of $K$ is always uniform, irrespective of the restriction $\rho$. Hence, for any $\sigma' \in (0,1)$, the non-zero coordinates of the mean vector $\mu((p_{|\rho})_{\bnu})$ for $\bnu \sim \calD_{\sigma'}(p_{|\rho})$ are always supported on those coordinates in $K$. Hence, for every $\sigma' \in (0, 1)$, 
	\[ \Ex_{\bnu \sim \calD_{\sigma'}(p_{|\rho})}\left[ \|\mu((p_{|\rho})_{\bnu})\|_2 \right] = \Ex_{\bnu \sim \calD_{\sigma'}(h^{(\rho)})}\left[ \|\mu(h^{(\rho)}_{|\bnu})\|_2\right]. \]
	We let $j^* = \lceil \log_2(2k)\rceil$ and note that $\sigma^{j^*} \leq 1/k$. By Claim~\ref{thm:kjunta-smallnorm}, we have that for $j' \in [\lceil \log_2(2n)\rceil]$ with $j' \geq j^*$,
	\begin{align*}
	\Ex_{\bnu \sim \calD_{\sigma^{j'}}(p_{|\rho})}\left[\|\mu((p_{|\rho})_{|\bnu})\|_2 \right] = \Ex_{\bnu \sim \calD_{\sigma^{j'}}(h^{(\rho)})}\left[ \|\mu(h_{|\bnu}^{(\rho)})\|_2\right] &\leq \Ex_{\bnu \sim \calD_{\sigma^{j^*}}}\left[\|\mu(h_{|\bnu}^{(\rho)})\|_2 \right] = \Ex_{\bnu \sim \calD_{\sigma^{j^*}}}\left[ \|\mu((p_{|\rho})_{\bnu})\|_2\right].
	\end{align*}
	Averaging over $\brho \sim \calD_{\ol{J}}(p)$ implies that for all $j \in [\lceil \log_2(2n)\rceil ]$, 
	\begin{align*}
	\Ex_{\brho \sim \calD_{\ol{J}}(p)}\left[ \Ex_{\bnu \sim \calD_{\sigma^j}(p_{|\brho})}\left[\|\mu((p_{|\brho})_{|\bnu})\|_2 \right]\right] \leq \eps^*,
	\end{align*}
	which in turn, implies that for all $j \in [\lceil \log_2(2n)\rceil]$, and all $\ell \in [r]$,
	\begin{align}
	\Prx_{\brho, \bnu}\left[\|\mu((p_{|\brho})_{|\bnu})\|_2 \geq \sqrt{n} / (100 \cdot 2^{\ell}) \right] \leq \frac{2^{\ell} \eps^* \cdot 100}{\sqrt{n}} \leq \frac{2^{\ell} \eps'}{16 r \sqrt{n}} \label{hehe101}
	\end{align}

	using our choice of $\eps^*$ in (\ref{eq:defeqs}). 
	Fix $j$ and $\ell$.  
	It follows from a Chernoff bound that
	with probability at least
	$1-e^{-\Omega(R)}$,
	the number of rounds of (A) that satisfy the event in (\ref{hehe101}) is at most $R/2$ (because the expectation is 
	at most $R/4$).
	The latter implies that the number of rounds of (A) that violate the event in (\ref{hehe101})
	is at least $LR-R/2$.
	For each of these $LR-R/2$ rounds of (A), the majority of runs of $\MeanTester$ in (A) 
	returns ``\texttt{Is a Junta}'' with probability at least
	$ 
	1-e^{-\Omega(R)}
	$ by a Chernoff bound. 
	By a union bound we have that all these $LR-R/2$ rounds have majority being ``\texttt{Is a Junta}'' 
	with probability at least
	$ 
	1-(LR-R/2)\cdot  e^{-\Omega(R)} .
	$ 
	It follows that the main loop with $j$ and $\ell$ rejects with probability at most
	$$
	1-e^{-\Omega(R)}-(LR-R/2)\cdot e^{-\Omega(R)}\le 1-LR\cdot e^{-\Omega(R)}.
	$$
	Using a union bound over all main loops, 
	the algorithm rejects with probability at most
	$$
	\frac{2}{9}+\lceil \log 2n\rceil \cdot r\cdot LR\cdot e^{-\Omega(R)}<\frac{1}{3}.
	$$
	
	Finally we bound the number of queries.
	Notice that both $\eps'$ and $\eps^*$ are $\eps/\polylog(n/\eps)$.
	Hence the number of queries made by the call to $\FindRelevantVariables^*$ is 
	$\tilde{O}(k/\eps^2)\cdot \polylog(n)$.
	On the other hand, the number of queries made by calls to $\MeanTester$ is (using $r=\lceil \log_2 (2\sqrt{n}/\eps')\rceil$)
	\begin{align*}
	&\lceil \log_2 2n\rceil \cdot 
	\sum_{\ell=1}^r \frac{4r\sqrt{n}}{2^\ell\eps'}\cdot O\left(\log^2\left(\frac{n}{\eps'}\right)\right)
	\cdot (k+\sqrt{n})\cdot \max\left\{\frac{2^{2\ell}}{n},\frac{2^\ell}{\sqrt{n}}\right\}
	\\&\hspace{0.8cm}=(k+\sqrt{n})\cdot \polylog\left(\frac{n}{\eps}\right)\cdot \sum_{\ell=1}^r
	\frac{\sqrt{n}}{2^\ell \eps}  \cdot \max\left\{\frac{2^{2\ell}}{n},\frac{2^\ell}{\sqrt{n}}\right\}
	=\tilde{O}\left(\frac{k+\sqrt{n}}{\eps^2}\right).
	\end{align*}
	The upper bound on the running time can simply be verified from Figure~\ref{fig:testing} and Theorem~\ref{thm:MeanTesting++}. This finishes the proof of the theorem.
\end{proofof}

\begin{proofof}{Claim \ref{simplesimple2}}
	Let $r=\lceil \log (2/\delta)\rceil$, and assume for contradiction that
	the claim is not true for any $\ell\in [r]$.
	Then we have
	$$
	\delta\le \E[\bX]< \sum_{\ell=1}^r \frac{2^{\ell}\delta}{4r}\cdot \frac{2}{2^{\ell}}+1\cdot \frac{1}{2^r} =\delta,
	$$
	a contradiction.
\end{proofof}

\begin{proofof}{Claim~\ref{thm:kjunta-smallnorm}}
	We simply note that for any restriction $\nu \in \{-1,1,*\}^m$ with $\stars(\nu) = S$, 
	\begin{align*}
	\Prx_{\bnu \sim \calD_{\sigma_1}(h)}\left[ \bnu = \nu \right] &= \Prx_{\bS \sim \calS_{\sigma_1}}\left[ \bS = S \right] \cdot \Prx_{\bx \sim h_{\ol{S}}}\left[ \bx = \nu_{\ol{S}}\right] \geq \Prx_{\bS \sim \calS_{\sigma_2}}\left[ \bS = S \right] \cdot \Prx_{\bx \sim h_{\ol{S}}}\left[ \bx = \nu_{\ol{S}}\right] = \Prx_{\bnu \sim \calD_{\sigma_2}(h)}\left[ \bnu = \nu\right],
	\end{align*}
	where we used the fact that 
	\begin{align*}
	\frac{d}{d\sigma}\left[ \Prx_{\bS \sim \calS_{\sigma}}\left[ \bS = S \right]\right] &= \sigma^{|S|-1}\left(1 - \sigma \right)^{m-|S|-1} \left( |S| - \sigma m\right) > 0
	\end{align*}
	whenever $0 \leq \sigma \leq 1/m$.
\end{proofof}

\section{Lower Bound for Testing}

In this section, we prove the following theorem showing a lower bound for testing whether a product distribution is an $k$-junta distribution with $k=n/2$. We first state the theorem and proceed to show it implies Theorem~\ref{thm:lb}.

\begin{theorem}\label{thm:n/2-junta}
	There exist two absolute constants $\eps_1>0$ and $C_1\in \N$ such that for all $0<\eps\le \eps_1$ and $n\ge C_1^2$, any algorithm which receives samples from an unknown product distribution $p$ supported on $\{-1,1\}^n$ and distinguishes with probability at least $2/3$ between the case $p$ is an $(n/2)$-junta distribution and the case $p$ is $\eps $-far from being an $(n/2)$-junta distribution must observe at least $\tilde{\Omega}(n)/\eps^2$ many samples from $p$.
\end{theorem}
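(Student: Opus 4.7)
The plan is to prove Theorem~\ref{thm:n/2-junta} via the two-distribution (Le~Cam) method for sample complexity lower bounds, instantiated with a moment-matching construction as sketched in Section~\ref{sec:overview}. I will design two distributions $\Dyes$ and $\Dno$ over product distributions on $\{-1,1\}^n$ (equivalently, over mean vectors $\mu \in [-1,1]^n$) satisfying: (i) a draw $\bp \sim \Dyes$ is an $(n/2)$-junta product distribution with probability $1 - o_n(1)$; (ii) a draw $\bp \sim \Dno$ is $\eps$-far from every $(n/2)$-junta product distribution with probability $1 - o_n(1)$; and (iii) for any sample budget $q$ below the intended bound $\tilde\Omega(n/\eps^2)$, the joint distribution of $q$ independent samples from $\bp \sim \Dyes$ is $o(1)$-close in total variation to that from $\bp \sim \Dno$. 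Together these defeat any purported tester by Le~Cam's inequality.

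For the construction, I will fix two probability distributions $\calY$ and $\calN$ supported in $[-1,1]$ and sample each $\mu_i$ independently from $\calY$ under $\Dyes$ and from $\calN$ under $\Dno$. The distribution $\calY$ will place probability at least $\tfrac12 + \Omega(1/\sqrt n)$ at $0$, so that a Chernoff bound guarantees that the number of non-zero $\mu_i$ is at most $n/2$ with probability $1 - o_n(1)$, certifying the $(n/2)$-junta property; in contrast $\calN$ will be supported entirely on non-zero values of magnitude $\Theta(\eps/\sqrt n)$, so that after zeroing the top $n/2$ entries by magnitude the remaining coordinates still contribute $\|\mu\|_2 \geq \eps$ with high probability, certifying $\eps$-farness from every $(n/2)$-junta product distribution. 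Crucially, $\calY$ and $\calN$ will be designed so that their first $d = \Theta(\log n/\log\log n)$ moments coincide, i.e.\ $\Ex_{\mu \sim \calY}[\mu^j] = \Ex_{\mu \sim \calN}[\mu^j]$ for $j = 1, \ldots, d$. Existence of such $\calY,\calN$ with the required support/zero-mass structure reduces to a Vandermonde/Chebyshev moment-matching problem, in the spirit of \cite{V11,RRSS09}.

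For (iii) I will exploit the product structure of both the instances and the sample. Conditioned on $\mu$, the $q$ samples split into $n$ independent coordinate streams, and the $q$ bits at coordinate $i$ are iid $\mathrm{Bernoulli}((1+\mu_i)/2)$ whose sufficient statistic is the integer count $\bk_i \in \{0,1,\dots,q\}$. Since the $\mu_i$ are iid across $i$, the joint distribution of $(\bk_1,\ldots,\bk_n)$ under $\Dyes$ (resp.\ $\Dno$) is a product of $n$ identical binomial mixtures $T_q^{\calY}$ (resp.\ $T_q^{\calN}$). By tensorization of $\chi^2$-divergence it suffices to bound $\chi^2(T_q^{\calY} \,\|\, T_q^{\calN}) = o(1/n)$. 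The per-count probability $\mu \mapsto \Pr[\mathrm{Binomial}(q,(1+\mu)/2)=k]$ is a degree-$q$ polynomial in $\mu$, so matching the first $d$ moments of $\calY$ and $\calN$ annihilates every contribution of degree $\leq d$ in the $\chi^2$ expansion; the residual is of order $(q\eps/\sqrt n)^{d+1}$, and choosing $d \asymp \log n/\log\log n$ together with $q$ slightly below $\tilde\Theta(n/\eps^2)$ makes this $o(1/n)$, so the total $\chi^2$ over all $n$ coordinates is $o(1)$ and TV is $o(1)$ by Cauchy--Schwarz.

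The main obstacle is the explicit construction of $\calY$ and $\calN$: they must simultaneously be valid probability measures on $[-1,1]$, match $d$ moments with $d \asymp \log n/\log\log n$, place mass at least $\tfrac12 + \Omega(1/\sqrt n)$ at $0$ in $\calY$ but no mass at $0$ in $\calN$, and concentrate on values of magnitude $\Theta(\eps/\sqrt n)$ to calibrate the farness on the $\Dno$ side. As the paper's overview warns, the associated Vandermonde system is delicate, and seemingly innocuous changes to the formulation destroy the construction. Once such $\calY,\calN$ are in hand, the soundness and completeness of the construction follow from routine Chernoff and union-bound arguments, and the final $\tilde\Omega(n/\eps^2)$ lower bound is extracted by optimizing $d$ against the sample budget $q$ in the $\chi^2$ estimate above.
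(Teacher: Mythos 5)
Your high-level plan is the same as the paper's: reduce to the per-coordinate count statistics (valid here because the biases are drawn independently across coordinates, so the count vector is a product of $n$ iid binomial mixtures), match the first $d=\Theta(\log n/\log\log n)$ moments of the two bias distributions, and conclude indistinguishability at $q = n/(\eps^2\,\polylog n)$ samples; whether one finishes by $\chi^2$-tensorization or, as the paper does in Section~\ref{sec:BoundDistance}, by a pointwise likelihood-ratio bound outside a bad event is immaterial. The genuine gap is that the piece you explicitly defer is the heart of the proof. The paper's Lemma~\ref{lem:moment-matching} obtains $\calY$ and $\calN$ by solving the Vandermonde system $A(\alpha)z=e_1$ with $\alpha_j=j^3$, and everything hinges on the nontrivial bound $\|z\|_1\le C^*$ (Claim~\ref{cl:z-bound}): that constant bound is what makes the zero-atom masses of the two bias distributions straddle $1/2$ by a \emph{constant} (yielding the $(n/2)$-junta property on one side and $\Omega(n)$ surviving biased coordinates on the other) and keeps the support magnitudes at $O(\eps\log^3 n/\sqrt n)$, so that the unmatched-moment residual is $(1/\polylog n)^{\Omega(d)}=o_n(1/n)$. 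Saying this ``reduces to a Vandermonde/Chebyshev problem in the spirit of \cite{RRSS09,V11}'' leaves precisely this $\ell_1$ control unproven; moreover, your additional constraints (no atom at $0$ in $\calN$, atom of mass $1/2+\Omega(1/\sqrt n)$ at $0$ in $\calY$, all support values of magnitude $\Theta(\eps/\sqrt n)$) are extra requirements whose simultaneous feasibility you have not established and which do not follow from the generic moment-matching constructions you cite.

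There are also quantitative steps that fail as stated. First, a zero-mass of $1/2+\Omega(1/\sqrt n)$ gives only a \emph{constant} probability, not $1-o_n(1)$, that at most $n/2$ coordinates are biased; one needs a constant margin (which is exactly what $\|z\|_1\le C^*$ buys) or at least $\omega(\sqrt{\log n}/\sqrt n)$. Second, ``$\|\mu'\|_2\ge\eps$ after zeroing half the coordinates'' does not by itself certify $\dtv\ge\eps$: converting an $\ell_2$ bound on the mean vector into a total-variation lower bound requires an anti-concentration/Berry--Ess\'een argument (the paper's Lemma~\ref{dtvlowerbound}), which loses constant factors and forces the recalibration $\tau=36\sqrt{C^*}\eps$; and the farness must hold against \emph{all} $(n/2)$-junta distributions, not only product ones --- the paper handles this in Claim~\ref{lem:no-dist} by projecting onto coordinates outside the candidate junta's relevant set, an easy but necessary step your statement of (ii) omits. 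Third, the claimed residual $(q\eps/\sqrt n)^{d+1}$ is not small for $q\approx n/\eps^2$; the correct per-coordinate quantity is of order $(q\eps^2\,\polylog(n)/n)^{\Omega(d)}$, which is why the paper takes $s=\lceil n/(\eps^2\log^{12}n)\rceil$. With these repairs your outline coincides with the paper's proof, but as written the central construction and the $\ell_2$-to-TV conversion are missing.
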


\begin{proofof}{Theorem~\ref{thm:lb} assuming Theorem~\ref{thm:n/2-junta}}
	We first inspect the proof of Theorem~4.8 from \cite{CDKS17}, which presents a lower bound on the sample complexity of testing whether an unknown product distribution is uniform or far from uniform. Specifically, they show that there are two constants 
	$\eps_2>0$ and $C_2\in \N$ such that for any $\eps\in (0,\eps_2]$ and $n\ge C_2$,
	there are two distributions $\calY$ and $\calN$, supported on product distributions over $\{-1,1\}^n$ such that no algorithm can determine whether a draw $\bp$ belongs to $\calY$ or $\calN$ with probability greater than $2/3$ without observing $\Omega(\sqrt{n} / \eps^2)$ samples from $\bp$.
	Moreover, the distribution $\calY$ always outputs $\calU_n$ and 
	the distribution $\calN$ always outputs a distribution $\bp$ that is $\eps$-far from 
	being a $(n/2)$-junta distribution.
	We are done if $k\le \sqrt{n}$ so we are left with the case when $k\ge \sqrt{n}$.
	In the rest of the proof we prove a lower bound of $\tilde{\Omega}(k)/\eps^2$ 
	with a reduction to Theorem \ref{thm:n/2-junta}.
	
	We now prove Theorem \ref{thm:lb}  
	by setting the two constants $\eps_0=\min(\eps_1,\eps_2)$ and $C_0=\max(C_1^2,C_2)$.
	Let $\eps\in (0,\eps_0]$, $n\ge C_0$ and $0\le k\le n/2$.
	Since $\calU_n$ is trivially a $k$-junta distribution and $k\le n/2$,
	the properties of $\calY$ and $\calN$ from \cite{CDKS17} imply a lower bound
	of $\Omega(\sqrt{n}/\eps^2)$ for distinguishing between the case $p$ is a $k$-junta distribution
	and the case $p$ is $\eps$-far from a $k$-junta distribution. 
	
	Note that $k\ge \sqrt{n}\ge C_1$.
	Consider an unknown product distribution $g$ over $\{-1,1\}^{2k}$ and the task
	of distinguishing the case $g$ is a $k$-junta distribution and the case $g$ is $\eps$-far
	from a $k$-junta distribution.
	By Theorem \ref{thm:n/2-junta}, any algorithm for this task must observe $\tilde{\Omega}(k)/\eps^2$ samples from $g$.
	On the other hand, let $g'$ be the distribution supported on $\{-1,1\}^n$ 
	defined using $g$ as follows: To draw $\bx\sim g'$ we first draw a sample $\by\sim g$ and 
	set $\by$ to be the first $2k$ bits of $\bx$; the last $n-2k$ bits of $\bx$ are drawn
	independently and uniformly at random.
	Notice that if $g$ is a $k$-junta, then $g'$ is a $k$-junta, and if $g$ is $\eps$-far from a $k$-junta
	, then $g'$ is $\eps$-far from a $k$-junta. 
	Given that sample access to $g'$ can be simulated using sample access to $g$,
	the task of distinguishing between the case $g'$ is a $k$-junta and 
	the case $g'$ is $\eps$-far from $k$-junta is at least as hard as the task for $g$.
	From this reduction we get a sample complexity lower bound of $\tilde{\Omega}(k)/\eps^2$.
\end{proofof}

The proof of Theorem~\ref{thm:n/2-junta} follows from the following lemma by simply noticing that any algorithm which receives $s$ independent samples from an unknown product distribution $p$ over $\{-1,1\}^n$ can be simulated by an algorithm which receives a sample from the product distribution $\Bin(s, p_1) \times \dots \times \Bin(s,p_n)$. 

\newcommand{\Ryes}{\calR_{\text{yes}}}
\newcommand{\Rno}{\calR_{\text{no}}}

\begin{lemma}\label{lem:moment-matching}
	There 
	exists an absolute constant $\eps_0>0$ 
	such that for all
	$\eps\in (0,\eps_0]$ and $n \in \N$, 
	there exist two distribution $\Dyes$ and $\Dno$ supported on product distributions over $\{-1,1\}^n$ satisfying
	\begin{equation}\label{eq:in-k-junta-and-far}
	\Prx_{\bp \sim \Dyes}\big[ \bp \in \Junta{n/2} \big] \geq 1 - o_n(1) \quad\text{and}\quad \Prx_{\bp \sim \Dno}\big[ \dtv(\bp, \Junta{n/2} ) \geq \eps \big] \geq 1 - o_n(1). 
	\end{equation}
	Moreover, letting $s =\lceil n / (\eps^2 \log^{12} n)\rceil$, the two distributions $\Ryes = \calR(s, \Dyes)$ and $\Rno = \calR(s,\Dno)$ supported on $\N^{n}$
	satisfy $\dtv\left( \Ryes, \Rno \right) = o_n(1)$, where $\calR(s,\calD)$ is specified by letting
	\begin{equation}\label{defeq}
	\Prx_{\boldr \sim \calR(s,\calD)}\left[ \boldr = r \right] = \Ex_{\bp \sim \calD}\left[ \prod_{i =1}^n \Prx_{\bell \sim \Bin(s, \bp_i)}\left[ \bell = r_i \right]\right],\quad\text{for every $r \in \N^n$.} 
	\end{equation}
\end{lemma}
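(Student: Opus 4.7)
The plan is to construct $\Dyes$ and $\Dno$ as product priors: a draw $\bp$ from either corresponds to drawing the coordinates of the mean vector $\mu(\bp) \in [-1,1]^n$ independently from one of two symmetric distributions $M_{\text{yes}}$ or $M_{\text{no}}$ on $[-1,1]$. I will design $M_{\text{yes}}$ to place mass strictly greater than $1/2$ on $\{0\}$, so that by a Chernoff bound at least $n/2$ of the $\bmu_i$'s vanish with probability $1 - o_n(1)$, giving $\bp \in \Junta{n/2}$. I will design $M_{\text{no}}$ to place vanishing mass on $0$ and be supported on magnitudes of order $\eps$, so that with probability $1 - o_n(1)$ even the $n/2$ smallest $|\bmu_i|$'s contribute $\ell_2$ norm at least $2\eps$. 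For a product distribution, being $\eps$-far from $\Junta{n/2}$ is equivalent (up to absolute constants) to the $\ell_2$ norm of the bottom $n/2$ means being $\Omega(\eps)$, since the closest $(n/2)$-junta approximation to a product distribution is again a product obtained by zeroing out the coordinates with smallest $|\mu_i|$.

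The heart of the argument is choosing $M_{\text{yes}}, M_{\text{no}}$ so that their first $d := \lfloor c \log n / \log\log n\rfloor$ moments agree, for a small absolute constant $c > 0$. Symmetry kills all odd moments, so I will place both measures on a common point set $\{0\} \cup \{\pm \eta_1, \dots, \pm \eta_t\}$ (with the atom at $0$ appearing only in $M_{\text{yes}}$) and solve a Vandermonde-type linear system in the probability weights to match the $d/2$ nontrivial even moments. Once such $M_{\text{yes}}, M_{\text{no}}$ are in hand, for the indistinguishability bound I use $\chi^2$-tensorization: because $\bp$ has independent coordinates under both priors, so does $\boldr$ under $\Ryes$ and $\Rno$, with identical marginals, giving
\[ 1 + \chi^2(\Rno, \Ryes) = \bigl(1 + \chi^2(\Rno_1, \Ryes_1)\bigr)^n \le \exp\bigl(n \cdot \chi^2(\Rno_1, \Ryes_1)\bigr). \]
Combined with the standard bound $4\,\dtv(\Rno, \Ryes)^2 \le \chi^2(\Rno, \Ryes)$, it suffices to establish $\chi^2(\Rno_1, \Ryes_1) = o(1/n)$.

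To control the single-coordinate $\chi^2$, each marginal is a Binomial mixture $\mathrm{Bin}(s, (1+\bmu)/2)$ with $\bmu$ drawn from $M_{\text{yes}}$ or $M_{\text{no}}$. Expanding the probability mass function in the Krawtchouk basis (equivalently, as a polynomial in the moments of $\bmu$), the difference $\Ryes_1(r) - \Rno_1(r)$ picks up contributions only from moments of order strictly greater than $d$. The residual can be controlled by tail sums of the form $\sum_{k>d}(s\eta^2)^k / k!$; with $\eta = O(\eps)$, $s \le n/(\eps^2 \log^{12} n)$, and $d = \Theta(\log n / \log\log n)$, this is $o(1/n)$, and the factor $\log^{12} n$ of slack in $s$ is precisely what absorbs the inevitable condition-number blowup of the Vandermonde system.

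The main obstacle is the Vandermonde construction itself: natural-looking choices of support points $\eta_j$ lead to systems whose solutions have negative or out-of-range probability weights, or violate the structural constraints (atom mass strictly exceeding $1/2$ at $0$ for $M_{\text{yes}}$, or the bottom-half $\ell_2$ lower bound for $M_{\text{no}}$). Selecting the $\eta_j$ on an appropriate scale (e.g.\ Chebyshev-like or logarithmically spaced nodes inside $[-\eps,\eps]$), balancing a well-conditioned linear system against the structural lower bounds, and verifying that all three required properties hold with strict inequality so that the high-probability guarantees pass through the independent product construction, is the technically intricate step flagged in the overview.
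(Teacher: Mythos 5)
Your high-level plan is the same as the paper's: i.i.d.\ per-coordinate priors on the biases, moments matched up to degree $\Theta(\log n/\log\log n)$ by solving a Vandermonde system, a yes-prior putting mass strictly more than $1/2$ at zero bias, and indistinguishability reduced to a single-coordinate moment-matching computation (the paper bounds the likelihood ratio pointwise outside a bad set of counts rather than tensorizing $\chi^2$; that difference is cosmetic). However, your parameters break the indistinguishability step as written. You place the support of $M_{\text{no}}$ at magnitudes of order $\eps$ and then claim the residual $\sum_{k>d}(s\eta^2)^k/k!$ is $o(1/n)$; but with $\eta=\Theta(\eps)$ and $s=\lceil n/(\eps^2\log^{12}n)\rceil$ one has $s\eta^2=\Theta(n/\log^{12}n)\gg d$, so this sum is astronomically large --- indeed $s$ samples trivially detect a per-coordinate bias of order $\eps$. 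The biases must sit at scale $\tilde{\Theta}(\eps/\sqrt{n})$ (the paper uses $\tau\bgamma_i/\sqrt{n}$ with atoms $\bgamma_i\le \log^3 n$ and $\tau=\Theta(\eps)$), which makes $s\eta^2\le 1/\log^{6}n$ and $(s\eta^2)^{\ell/4}=o(1/n)$. At that scale the bottom-half $\ell_2$ norm is only $\Theta(\eps)$, not $\Theta(\eps\sqrt{n})$, and your asserted ``equivalence up to constants'' between farness from $\Junta{n/2}$ and this $\ell_2$ norm is no longer a throwaway remark: one needs a lower bound of the shape $\dtv(p,q)\gtrsim\min(c,\|\mu\|_2)$ against \emph{arbitrary} (not necessarily product) $(n/2)$-junta distributions $q$, which the paper obtains by projecting onto the biased non-junta coordinates and invoking a Berry--Ess\'een anti-concentration argument (Lemma~\ref{dtvlowerbound}).

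Separately, the step you explicitly defer --- choosing nodes so that the Vandermonde solution yields valid probability weights with the required structure --- is where the real work lies, and without it the construction is not established. Concretely, one must bound the $\ell_1$ norm of the solution $z$ of $A(\alpha)z=e_1$ by an absolute constant (the paper's Claim~\ref{cl:z-bound}, proved via Cramer's rule and a delicate ratio analysis for the nodes $\alpha_j=j^3$), because it is exactly $1/\|z\|_1$ that quantifies how much extra mass at zero the yes-prior has over the no-prior; this is what makes the yes-draw an $(n/2)$-junta w.h.p.\ and simultaneously leaves $n/2+\Omega(n)$ biased coordinates in the no-draw. Generic node choices (Chebyshev or logarithmically spaced, as you suggest) do not obviously give a constant $\ell_1$ bound together with atoms of only polylogarithmic magnitude --- both are needed, the latter so that the truncated binomial expansions above converge --- and the paper itself warns that small changes to the system destroy the construction. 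So the proposal is the right roadmap, but the two quantitatively critical ingredients (the $\eps/\sqrt{n}$ bias scale with the accompanying $\dtv$ lower bound, and the bounded-$\ell_1$ Vandermonde solution) are missing or incorrect as stated.
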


The proof of Lemma \ref{lem:moment-matching} constitutes the next two subsections.
We give the construction of $\Dyes$ and $\Dno$ and prove (\ref{eq:in-k-junta-and-far}) in Section \ref{sec:construction},
and bound the distance between $\Ryes$ and $\Rno$ in Section \ref{sec:BoundDistance}.

\subsection{Construction of $\Dyes$ and $\Dno$}\label{sec:construction}

Let $p$ be a product distribution over $\{-1,1\}^n$. 
We prove the following lemma that lowerbounds $\dtv(p,\calU_n)$ using $\|\mu(p)\|_2$:

\begin{lemma}\label{dtvlowerbound}
	There is two constants $c_1^*, c_2^*>0$ such that any product distribution $p$ over $\{-1,1\}^n$ satisfies
	$$
	\dtv(p,\calU_n)\ge \left(\frac{1}{8}- \frac{c_1^* \|\mu(p)\|_\infty}{\|\mu(p)\|_2}\right)
	\cdot \min\left(c_2^*,\frac{\|\mu(p)\|_2}{4}\right).
	$$
\end{lemma}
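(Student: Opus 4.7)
The plan is to use the linear statistic $T(\bx) \eqdef \sum_{i=1}^n \mu_i \bx_i$, where $\mu = \mu(p)$, as a test distinguishing $p$ from $\calU_n$. A direct moment calculation gives: under $\calU_n$, $T$ has mean $0$ and variance $\|\mu\|_2^2$; under $p$, $T$ has mean $\|\mu\|_2^2$ and variance $V_p \eqdef \sum_i \mu_i^2(1-\mu_i^2) \leq \|\mu\|_2^2$. Taking the threshold $\tau \eqdef \|\mu\|_2^2/2$ halfway between the two means,
$$\dtv(p, \calU_n) \;\geq\; \Pr\nolimits_p\!\big[T \geq \tau\big] \;-\; \Pr\nolimits_{\calU_n}\!\big[T \geq \tau\big],$$
so it suffices to lower-bound this gap.

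First I would dispose of the degenerate regime $\|\mu\|_\infty \geq 1/2$ by projecting to a single coordinate: picking $i^*$ with $|\mu_{i^*}| \geq 1/2$ gives $\dtv(p, \calU_n) \geq \dtv(p_{i^*}, \calU_1) = |\mu_{i^*}|/2 \geq 1/4$, which already dominates the right-hand side of the claim provided $c_2^* \leq 2$. In the complementary regime $\|\mu\|_\infty < 1/2$, each factor $1-\mu_i^2 \geq 3/4$ yields $V_p \geq (3/4)\|\mu\|_2^2$, which is what makes Berry--Esseen applicable with a controlled error.

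Next I would apply the Berry--Esseen theorem to both probabilities. The third absolute moment of $\mu_i \bx_i$ under $\calU_n$ is $|\mu_i|^3$, and for the centered summand $Y_i = \mu_i(\bx_i - \mu_i)$ under $p$ a short calculation gives $\Ex[|Y_i|^3] = |\mu_i|^3(1-\mu_i^4) \leq |\mu_i|^3$; combining $\sum |\mu_i|^3 \leq \|\mu\|_\infty \cdot \|\mu\|_2^2$ with the variance bounds, both Berry--Esseen errors are $O(\|\mu\|_\infty / \|\mu\|_2)$. Since $\|\mu\|_2^2/(2\sqrt{V_p}) \geq \|\mu\|_2/2$, pushing the one-sided Berry--Esseen estimates through yields, for some absolute constant $C > 0$,
$$\dtv(p, \calU_n) \;\geq\; 2\Phi\big(\|\mu\|_2/2\big) - 1 \;-\; C \cdot \frac{\|\mu\|_\infty}{\|\mu\|_2},$$
where $\Phi$ is the standard normal CDF.

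To close, I would bound $2\Phi(a) - 1$ piecewise: since $\phi$ is decreasing on $[0,\infty)$, $2\Phi(a) - 1 \geq 2\phi(a_0)\,a$ for $a \in [0, a_0]$, while $2\Phi(a) - 1 \geq 2\Phi(a_0) - 1$ for $a \geq a_0$. Fixing $a_0$ as a small absolute constant gives a main-term lower bound of the form $\min(\mathrm{const}, c \|\mu\|_2)$, matching the $\min(c_2^*, \|\mu\|_2/4)$ structure in the claim. The constants $c_1^*$ and $c_2^*$ are chosen so that the linear piece comfortably dominates $\|\mu\|_2/32$ on the relevant interval and so that the additive Berry--Esseen correction $C \|\mu\|_\infty/\|\mu\|_2$ is absorbed into the multiplicative correction $c_1^*(\|\mu\|_\infty/\|\mu\|_2)\cdot\min(c_2^*, \|\mu\|_2/4)$. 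The main obstacle will be this last reconciliation of an additive correction with a multiplicative one; it closes thanks to the observation that the claim becomes vacuous precisely when $c_1^* \|\mu\|_\infty/\|\mu\|_2 \geq 1/8$, so one may assume $\|\mu\|_\infty/\|\mu\|_2 \leq 1/(8c_1^*)$ is sufficiently small that the available slack from $2\Phi(\|\mu\|_2/2) - 1 - \tfrac{1}{8}\min(c_2^*,\|\mu\|_2/4)$ dominates $C \|\mu\|_\infty/\|\mu\|_2$.
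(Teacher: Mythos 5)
Your setup and intermediate computations are fine (the moment calculations, the two Berry--Ess\'een applications, and the reduction to the case $\|\mu\|_\infty < 1/2$ all check out), but the final reconciliation step has a genuine gap, and it is exactly the step you flagged. Your method produces a bound of the form $\dtv(p,\calU_n) \geq 2\Phi(\|\mu\|_2/2) - 1 - C\,\|\mu\|_\infty/\|\mu\|_2$, in which the Berry--Ess\'een error is \emph{additive} and dimensionless: it does not shrink as $\|\mu\|_2 \to 0$. The lemma's right-hand side, by contrast, has the error appearing \emph{multiplicatively} against $\min(c_2^*, \|\mu\|_2/4)$, which does shrink with $\|\mu\|_2$. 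Your vacuity argument only lets you assume $\|\mu\|_\infty/\|\mu\|_2 \leq 1/(8c_1^*)$, i.e.\ bounded by a constant; it does not make $C\,\|\mu\|_\infty/\|\mu\|_2$ small compared to the slack $2\Phi(\|\mu\|_2/2)-1-\tfrac18\min(c_2^*,\|\mu\|_2/4)$, because that slack is $\Theta(\|\mu\|_2)$ in the small-norm regime. Concretely, take $n$ coordinates each with mean $\delta$ and let $\delta \to 0$ with $n$ fixed: then $\|\mu\|_\infty/\|\mu\|_2 = 1/\sqrt{n}$ is a fixed constant (which can be made smaller than $1/(8c_1^*)$ by taking $n \gtrsim (c_1^*)^2$), the lemma's right-hand side is about $\delta\sqrt{n}/32 > 0$, but your lower bound is roughly $0.4\,\delta\sqrt{n} - C/\sqrt{n}$, which is negative for small $\delta$. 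So no choice of the universal constants $c_1^*, c_2^*$ closes the argument as written.

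The fix is to keep Berry--Ess\'een only on the uniform side and replace the Berry--Ess\'een estimate on the $p$ side with a \emph{pointwise} density bound, which is what the paper does: call $x$ good if $\sum_i \mu_i x_i \leq -\|\mu\|_2/2$ (after taking all $\mu_i \geq 0$); Berry--Ess\'een plus Gaussian anti-concentration gives $\Prx_{\bx\sim\calU_n}[\bx \text{ good}] \geq \tfrac14 - c_1^*\|\mu\|_\infty/\|\mu\|_2$, and for every good $x$ one has $2^n p(x) \leq e^{\sum_i \mu_i x_i} \leq e^{-\|\mu\|_2/2} \leq \max\big(e^{-1}, 1-\|\mu\|_2/4\big)$, so each good point contributes at least $2^{-n}\min\big(1-e^{-1}, \|\mu\|_2/4\big)$ to the $\ell_1$ distance. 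Summing over good points multiplies the (constant-minus-error) probability by the (possibly tiny) per-point gap, which is precisely the multiplicative structure the statement demands; your tail-versus-tail comparison discards this structure and is strictly weaker.
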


We delay the proof of Lemma \ref{dtvlowerbound} to Section \ref{sec:dtvlowerbound}.
We fix the constant $\eps_0 \in \R_{\geq 0}$ in Lemma \ref{lem:moment-matching} to be 
\begin{align}
\eps_0 = \frac{c^*_2}{9}. \label{eq:setting-eps-0}
\end{align}

For $n \in \N$, let $\ell = \lceil \log n / \log \log n\rceil$. Given any vector $\alpha \in \R^{\ell}$ 
we let $A(\alpha)$ be the Vandermonde matrix defined with respect to $\alpha$, and $e_1 \in \R^{\ell}$ be the first basis vector:
\[ A(\alpha) = \left[ \begin{array}{ccccc} \alpha_1^0 & \alpha_2^0 & \alpha_3^0 & \dots & \alpha_{\ell}^0 \\[0.8ex]
\alpha_1^1 & \alpha_2^1 & \alpha_3^1 & \dots & \alpha_{\ell}^1 \\[0.8ex]
\alpha_1^2 & \alpha_2^2 & \alpha_3^2 & \dots & \alpha_{\ell}^2 \\[0.2ex]
\vdots & \vdots & \vdots & \ddots & \vdots  \\[0.8ex]
\alpha_1^{\ell-1} & \alpha_2^{\ell-1} & \alpha_3^{\ell-1} & \dots & \alpha_{\ell}^{\ell-1} \end{array} \right] \qquad \text{and}\qquad e_1 = \left[ \begin{array}{c} 1 \\[0.4ex] 0 \\[0.4ex] 0 \\[-0.3ex] \vdots \\[0.4ex] 0 \end{array}\right]. \]
Recall the following closed form for the determinant of a Vandermonde matrix $A(\alpha)$:
\[ \det\big(A(\alpha)\big) = \prod_{\substack{i,j \in [\ell] \\ i < j}} (\alpha_j - \alpha_i),\]
so that $\det(A(\alpha)) \neq 0$ whenever coordinates of $\alpha$ are distinct. For the rest of the section, consider the vector $\alpha \in \R^{\ell}$ given by letting
\begin{align}
\alpha_j &= j^3 \qquad\forall j\in[\ell], \label{eq:alpha-def}
\end{align}
and let $z \in \R^{\ell}$ be the unique solution to the system of $\ell$ linear equations where $A(\alpha) z = e_1$. Let 
\[ \calW = \left\{ j \in [\ell] :  z_j \geq 0 \right\} \qquad\text{and}\qquad \calV = [\ell] \setminus \calW. \]
We will need the following technical claim about $z$; we delay its proof to Subsection~\ref{sec:ell-1-z}. 
\begin{claim}\label{cl:z-bound}
	There is an absolute constant $C^* > 0$ such that for any $\ell \in \N$, the solution $z \in \R^{\ell}$ to the Vandermonde system $A(\alpha) z = e_1$ with $\alpha$ as in (\ref{eq:alpha-def}) satisfies $\|z\|_1 \leq C^*$.
\end{claim}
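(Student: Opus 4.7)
The plan is to give an explicit closed form for $z_j$ via Lagrange interpolation, then bound each $|z_j|$ by a quantity that decays exponentially in $j$ so that $\|z\|_1$ is dominated by a convergent series independent of $\ell$. Since $A(\alpha) z = e_1$ is equivalent to requiring $\sum_{j=1}^\ell z_j P(\alpha_j) = P(0)$ for every polynomial $P$ of degree at most $\ell-1$, the entries of $z$ must equal the values at $0$ of the Lagrange basis polynomials at the nodes $\alpha_1,\ldots,\alpha_\ell$. Using the factorization $i^3 - j^3 = (i-j)(i^2 + ij + j^2)$, this gives
$$|z_j| \;=\; \prod_{i \in [\ell] \setminus \{j\}} \frac{\alpha_i}{|\alpha_i - \alpha_j|} \;=\; \prod_{i \neq j} \frac{i^3}{|i-j|\,(i^2 + ij + j^2)}.$$

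Next, I will bound the three groups of factors separately. Collecting $i^3$ yields $(\ell!/j)^3$, and $\prod_{i\neq j}|i-j|$ collects to $(j-1)!\,(\ell-j)!$. The delicate piece is $\prod_{i\neq j}(i^2+ij+j^2)$: the naive bounds $i^2+ij+j^2 \ge j^2$ for $i<j$ and $\ge i^2$ for $i>j$ produce estimates on $|z_j|$ that grow with $\ell$. The right bounds are $i^2+ij+j^2 \ge j(i+j)$ for $i<j$ and $i^2+ij+j^2 \ge i(i+j)$ for $i>j$, both obtained by dropping a nonnegative square. Applying these termwise, the extra $(i+j)$ factors assemble into $\prod_{i=1}^{j-1}(i+j)\cdot\prod_{i=j+1}^{\ell}(i+j) = \frac{(2j-1)!}{j!}\cdot\frac{(\ell+j)!}{(2j)!}$, which combines cleanly with the factorials from the other groups.

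After routine simplification using $(j!)^2/(j-1)! = j\cdot j!$, the expression collapses to
$$|z_j| \;\le\; \frac{2\, j!}{j^j}\cdot\frac{(\ell!)^2}{(\ell-j)!\,(\ell+j)!} \;=\; \frac{2\, j!}{j^j}\prod_{k=1}^{j}\frac{\ell-k+1}{\ell+k}.$$
The second factor is at most $1$, and Stirling's approximation gives $j!/j^j = O(\sqrt{j}\,e^{-j})$. Hence $|z_j| = O(\sqrt{j}\,e^{-j})$, so $\|z\|_1 \le O\bigl(\sum_{j\ge 1}\sqrt{j}\,e^{-j}\bigr) = O(1)$, which supplies the absolute constant $C^*$.

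The main obstacle is identifying a sharp enough lower bound on $\prod_{i\neq j}(i^2+ij+j^2)$; weaker estimates fail to produce uniform control in $\ell$. Related to this, the cubic scaling $\alpha_j = j^3$ is essential: running the same argument with $\alpha_j = j^2$ gives $|z_j|$ tending to a nonzero constant as $\ell \to \infty$, so that $\|z\|_1$ would grow linearly in $\ell$. The larger gap between the $j^3$ nodes is precisely what drives the exponential decay of the interpolation weights.
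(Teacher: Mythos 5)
Your proof is correct, and after the shared starting point it follows a genuinely different route from the paper's. Both arguments begin with the same formula $|z_j| = \prod_{i \neq j} \alpha_i/|\alpha_j - \alpha_i|$ (you derive it via Lagrange interpolation at $0$, the paper via Cramer's rule and the Vandermonde determinant), but the paper then runs a consecutive-ratio argument: it shows that $|z_{i+1}|/|z_i| \leq e^{-1/20}$ for all $i$ beyond some unspecified constant threshold $i_0$, handling the cases $\ell \geq 2i+1$ and $\ell < 2i+1$ separately through a fairly delicate telescoping estimate, and bounds the finitely many initial terms by $\log_2 |z_i| \lesssim i^3$, so its constant is of the form $2^{O(i_0^3)}$ and not explicit. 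You instead bound each $|z_j|$ directly, using $i^2+ij+j^2 \geq j(i+j)$ for $i<j$ and $\geq i(i+j)$ for $i>j$, which makes all the factorial pieces collapse into the explicit estimate $|z_j| \leq \frac{2\,j!}{j^j}\prod_{k=1}^{j}\frac{\ell-k+1}{\ell+k} \leq \frac{2\,j!}{j^j} = O(\sqrt{j}\,e^{-j})$, uniformly in $\ell$; I checked the factorial bookkeeping (numerator $(\ell!/j)^3$, $\prod_{i\neq j}|i-j| = (j-1)!(\ell-j)!$, and the $(i+j)$ factors giving $\frac{(2j-1)!}{j!}\cdot\frac{(\ell+j)!}{(2j)!}$) and the simplification is right, including the edge cases $j=1$ and $j=\ell$. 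What your approach buys is a shorter argument with an explicit, small absolute constant and a per-coordinate exponential decay bound; the paper's approach establishes only eventual geometric decay of the ratios with a non-effective threshold. Your closing remark about $\alpha_j = j^2$ is also accurate (there the same computation gives $|z_j| \to 2$ for fixed $j$, so $\|z\|_1 = \Theta(\ell)$), and it echoes the paper's footnoted warning that seemingly innocuous changes to the Vandermonde system break the construction.
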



We now  describe $\Dno$ and $\Dyes$ using $\alpha$, $\calW$ and $\calV$ given above.
Let $\tau \in \R_{\geq 0}$ be set as 
\begin{align}
\tau &= \min\left\{ 36\sqrt{C^*}\cdot \eps, \frac{\sqrt{n}}{2\ell^3} \right\}, \label{eq:tau-setting}
\end{align}
and notice that for large $n$, $\tau = 36\sqrt{C^*} \eps = \Theta(\eps)$.
First we let $\bp \sim \Dno$ be the product distribution supported on $\{-1,1\}^n$ given by letting for each $i \in [n]$, be independently set to 
\begin{align} 
\Prx_{\bx \sim \bp}\left[ \bx_i = 1 \right] = \frac{1}{2} + \frac{\bgamma_i \cdot \tau}{\sqrt{n}} \qquad\text{such that } \bgamma_i = \left\{ \begin{array}{ll} 0 & \text{w.p.\  } 1 - \dfrac{\sum_{j \in \calW} z_j}{\|z\|_1} \\[2ex]
j^3 & \text{w.p.\  } \dfrac{z_j}{\|z\|_1} \text{ for $j \in \calW$.} \end{array} \right. . \label{eq:sample-dno}
\end{align}
Notice that probabilities above are smaller than $1$ since $\bgamma_i \leq \ell^3$, for $\ell = \lceil \log n / \log \log n\rceil$ and the setting of $\tau$. 
On the other hand, we let $\bq \sim \Dyes$ be the product distribution supported on $\{-1,1\}^n$ given by letting for each $i \in [n]$, be independently set to
\begin{align}
\Prx_{\bx \sim \bq}\left[ \bx_i = 1\right] = \frac{1}{2} + \frac{\bdelta_i \cdot \tau}{\sqrt{n}} \qquad\text{such that }\bdelta_i = \left\{ \begin{array}{ll} 0 & \text{w.p.\ } 1 - \dfrac{\sum_{j \in \calV} -(z_j)}{\|z\|_1} \\[2ex]
j^3 & \text{w.p.\ } \dfrac{-z_j}{\|z\|_1}  \text{ for $j \in \calV$.}\end{array} \right. . \label{eq:sample-dyes}
\end{align}

Again, we note that the probabilities are at most $1$ since $\bdelta_i \leq \ell^3$ as well. We record a claim that follows directly from the definition of $z$, $\calW$ and $\calV$:
\begin{claim}\label{cl:matching-moments}
	For all $k = 1, \dots, \ell-1$, we have
	\begin{align} 
	\Ex_{\bdelta_i}\big[\bdelta_i^k\big] &= \Ex_{\bgamma_i}\big[\bgamma_i^k\big]. \label{eq:expression}
	\end{align}
\end{claim}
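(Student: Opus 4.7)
The plan is to prove Claim~\ref{cl:matching-moments} by directly unpacking the definitions of $\bgamma_i$ and $\bdelta_i$ and matching them against the $k$-th row of the Vandermonde system $A(\alpha) z = e_1$. The key observation is that for each $k \in \{1, \ldots, \ell-1\}$, the equation $A(\alpha)z = e_1$ reads in its $(k{+}1)$-st row (indexing rows from $0$) as
\[
\sum_{j=1}^{\ell} \alpha_j^{k} z_j \;=\; \sum_{j=1}^{\ell} j^{3k}\, z_j \;=\; 0,
\]
since only the $0$-th row of the Vandermonde system evaluates to $1$.

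Next, I would compute each moment explicitly using the distributions defined in (\ref{eq:sample-dno}) and (\ref{eq:sample-dyes}). Since $\bgamma_i$ takes value $0$ or $j^3$ for $j \in \calW$, the zero atom contributes nothing to the $k$-th moment for $k \geq 1$, giving
\[
\Ex_{\bgamma_i}\big[\bgamma_i^k\big] \;=\; \sum_{j \in \calW} (j^3)^k \cdot \frac{z_j}{\|z\|_1} \;=\; \frac{1}{\|z\|_1}\sum_{j \in \calW} j^{3k}\, z_j,
\]
and analogously
\[
\Ex_{\bdelta_i}\big[\bdelta_i^k\big] \;=\; \sum_{j \in \calV} (j^3)^k \cdot \frac{-z_j}{\|z\|_1} \;=\; -\frac{1}{\|z\|_1}\sum_{j \in \calV} j^{3k}\, z_j.
\]

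Finally, I would combine the two displays with the Vandermonde identity above. Splitting $[\ell] = \calW \sqcup \calV$ in the identity $\sum_{j=1}^\ell j^{3k} z_j = 0$ yields
\[
\sum_{j \in \calW} j^{3k}\, z_j \;=\; -\sum_{j \in \calV} j^{3k}\, z_j,
\]
which after dividing by $\|z\|_1$ gives exactly the desired equality $\Ex[\bgamma_i^k] = \Ex[\bdelta_i^k]$ for each $k=1,\ldots,\ell-1$. There is no real obstacle here; the claim is essentially a restatement of the fact that $z$ was constructed precisely to make the first $\ell-1$ non-constant moments vanish when paired against the test functions $j \mapsto j^{3k}$, with $\calW$ and $\calV$ isolating the positive and negative parts of $z$ into two honest probability distributions after normalization by $\|z\|_1$.
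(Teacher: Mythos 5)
Your proof is correct and follows essentially the same route as the paper: both compute the moments from the definitions of $\bgamma_i$ and $\bdelta_i$ and observe that their difference equals $\frac{1}{\|z\|_1}\sum_{j=1}^{\ell}\alpha_j^k z_j = \frac{1}{\|z\|_1}(A(\alpha)z)_{k+1} = 0$ for $k \geq 1$. The only cosmetic difference is that you split the sum over $\calW$ and $\calV$ explicitly, while the paper writes the difference as a single sum over $[\ell]$.
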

\begin{proof}
	The proof follows from the fact that 
	\begin{align*}
	\Ex_{\bgamma}\big[ \bgamma_i^k \big] -\Ex_{\bdelta_i}\big[ \bdelta_i^k \big] = \frac{1}{\|z\|_1} \sum_{j=1}^{\ell} \alpha_j^k z_j = \frac{1}{\|z\|_1} (A(\alpha) z)_{k+1} = 0,
	\end{align*}
	since $A(\alpha)z = e_1$.
\end{proof}

We show in the next two claims that (\ref{eq:in-k-junta-and-far}) holds when $n$ is sufficiently large.

\begin{claim}\label{lem:yes-dist}
	We have 
	$\bp \in \Junta{n/2}$ with probability at least $1 - o_n(1)$ over the draw of $\bp \sim \Dyes$.
\end{claim}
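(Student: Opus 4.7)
The plan is to show that each coordinate of $\bp \sim \Dyes$ is independently ``biased'' (non-uniform) with probability strictly below $1/2$, and then apply a Chernoff bound to conclude that at most $n/2$ coordinates are biased (and hence relevant) with probability $1 - o_n(1)$.

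First, I would observe from (\ref{eq:sample-dyes}) that the marginal of $\bp$ on coordinate $i$ is the uniform distribution on $\{-1,1\}$ precisely when $\bdelta_i = 0$, and so it suffices to bound the number of indices $i$ with $\bdelta_i \neq 0$. By construction,
\[ \Prx[\bdelta_i \neq 0] = \frac{\sum_{j \in \calV}(-z_j)}{\|z\|_1}. \]
The key identity comes from the first row of the Vandermonde system $A(\alpha)z = e_1$: since $\alpha_j^0 = 1$ for all $j$, this row gives $\sum_{j=1}^{\ell} z_j = 1$. Writing $A = \sum_{j \in \calW} z_j \geq 0$ and $B = \sum_{j \in \calV}(-z_j) \geq 0$, we have $A + B = \|z\|_1$ and $A - B = 1$, so $B = (\|z\|_1 - 1)/2$ and
\[ \Prx[\bdelta_i \neq 0] = \frac{B}{A+B} = \frac{1}{2} - \frac{1}{2\|z\|_1}. \]
By Claim~\ref{cl:z-bound}, $\|z\|_1 \leq C^*$ for an absolute constant $C^*$, so this probability is bounded above by $1/2 - 1/(2C^*)$, i.e., strictly below $1/2$ by an absolute constant.

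Next, let $\bN = |\{i \in [n] : \bdelta_i \neq 0\}|$, a sum of $n$ i.i.d.\ Bernoulli variables with mean $\mu := n(1/2 - 1/(2\|z\|_1)) \leq n/2 - n/(2C^*)$. A standard Hoeffding/Chernoff bound yields
\[ \Prx[\bN > n/2] \leq \exp\!\left(-\Omega(n/\|z\|_1^2)\right) = \exp(-\Omega(n)) = o_n(1). \]
Finally, whenever $\bN \leq n/2$, the p.m.f.\ of the product distribution $\bp$ depends only on the (at most $n/2$) coordinates with non-zero bias, so $\bp \in \Junta{n/2}$. Combining the two, $\bp \in \Junta{n/2}$ with probability $1 - o_n(1)$ as required.

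The only nontrivial ingredient is establishing that the per-coordinate ``bias probability'' is bounded away from $1/2$; this is the content of the normalization $\sum_j z_j = 1$ coming from the top row of the Vandermonde system, combined with the uniform bound $\|z\|_1 \leq C^*$ from Claim~\ref{cl:z-bound}. Everything else is a routine concentration argument on independent Bernoullis.
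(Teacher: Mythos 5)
Your proof is correct and follows essentially the same route as the paper: both derive $\Prx[\bdelta_i = 0] = \tfrac{1}{2} + \tfrac{1}{2\|z\|_1} \geq \tfrac{1}{2} + \tfrac{1}{2C^*}$ from the normalization $\sum_j z_j = 1$ (the first row of $A(\alpha)z = e_1$) together with Claim~\ref{cl:z-bound}, and then conclude via a Chernoff bound that at most $n/2$ coordinates are biased with probability $1 - o_n(1)$. No gaps.
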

\begin{proof}
	Let $\bp \sim \Dyes$, and let $\bA \subseteq [n]$ be the set of coordinates $i \in [n]$ with $\bdelta_i \neq 0$. We will show that, when $n$ is sufficiently large, $|\bA| \leq n/2$ with probability $1- o_n(1)$, which implies that $\bp\sim \Dyes$ is an $(n/2)$-junta for $\calU_n$ with probability at least $1-o_n(1)$. 
	
	To see this is the case, we notice that each $\bdelta_i$ is 0 with probability 
	\begin{align*}
	1 - \frac{\sum_{j \in \calV} -z_j}{\|z\|_1} &= \frac{1}{2}\left(1 + \dfrac{\sum_{j \in \calW} z_j + \sum_{j \in \calV} z_j}{\|z\|_1}\right) = \frac{1}{2} + \frac{1}{2\|z\|_1} \geq \frac{1}{2} + \frac{1}{2C^*},
	\end{align*}
	where we used the fact that $z$ was the solution to $(A(\alpha)z)_1 = 1$ to deduce that $\sum_{j } z_j = 1$. Hence,~for large $n$, we apply a Chernoff bound to deduce that $|\bA| \leq n/2$ except with probability $o_n(1)$.
\end{proof}

\begin{claim}\label{lem:no-dist}
	We have $\bp$ is $\eps$-far from $\Junta{n/2}$ with probability at least $1-o_n(1)$ over the draw of $\bp \sim \Dno$.
\end{claim}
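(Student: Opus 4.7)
The plan is to exploit the product structure of $\bp$ to reduce $\dtv(\bp, \Junta{n/2})$ to a uniform lower bound on $\dtv(\bp_{\ol{S}}, \calU_{\ol{S}})$ over all $S \subseteq [n]$ with $|S| \leq n/2$, and then invoke Lemma~\ref{dtvlowerbound} together with a single global Chernoff estimate plus pigeonhole. First I would observe that any $(n/2)$-junta $q$ whose relevant coordinates lie in $S$ has the form $q = q_S \times \calU_{\ol{S}}$ for some distribution $q_S$ on $\{-1,1\}^S$, since conditioning on $x_S$ must leave $x_{\ol{S}}$ uniform. The data-processing inequality applied to the projection onto $\ol{S}$ gives $\dtv(\bp, q) \geq \dtv(\bp_{\ol{S}}, \calU_{\ol{S}})$, and this bound is attained at $q_S = \bp_S$ because $\bp$ is a product distribution. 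Hence $\dtv(\bp, \Junta{n/2}) = \min_{|S| \leq n/2} \dtv(\bp_{\ol{S}}, \calU_{\ol{S}})$, and it suffices to lower bound the right-hand side by $\eps$ with probability $1 - o_n(1)$.

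Since $\bp_{\ol{S}}$ is itself a product distribution, I would apply Lemma~\ref{dtvlowerbound} to it. From (\ref{eq:sample-dno}) we have $\mu(\bp)_i = 2\bgamma_i \tau/\sqrt{n}$ with $\bgamma_i \in \{0\}\cup\{j^3 : j \in \calW\}$, so the deterministic bound $\|\mu(\bp_{\ol{S}})\|_\infty \leq 2\tau \ell^3/\sqrt{n}$ holds. Combined with the $\ell_2$ lower bound developed next, the ratio $\|\mu(\bp_{\ol{S}})\|_\infty/\|\mu(\bp_{\ol{S}})\|_2$ will be $O(\ell^3/\sqrt{n}) = o_n(1)$, and the first factor in Lemma~\ref{dtvlowerbound} will exceed $1/9$ for large $n$.

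The heart of the argument is proving $\|\mu(\bp_{\ol{S}})\|_2 \geq 36\eps$ for every $|S| \leq n/2$ simultaneously. Since $\|\mu(\bp_{\ol{S}})\|_2^2 = (4\tau^2/n)\sum_{i \in \ol{S}} \bgamma_i^2$ and every nonzero $\bgamma_i$ satisfies $\bgamma_i^2 = j^6 \geq 1$, it suffices to show that at least $n/(4C^*)$ coordinates of $\ol{S}$ have $\bgamma_i \neq 0$. Reading off the first row of $A(\alpha) z = e_1$ gives $\sum_j z_j = 1$, whence $\sum_{j \in \calW} z_j = (\|z\|_1 + 1)/2$ and
\[\Prx[\bgamma_i \neq 0] = \frac{\sum_{j \in \calW} z_j}{\|z\|_1} = \frac{1}{2} + \frac{1}{2\|z\|_1} \geq \frac{1}{2} + \frac{1}{2C^*}\]
by Claim~\ref{cl:z-bound}. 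A Chernoff bound on the i.i.d.\ indicators $\mathbf{1}[\bgamma_i \neq 0]$ shows that with probability $1 - o_n(1)$, at least $n/2 + n/(4C^*)$ indices have $\bgamma_i \neq 0$. The key point is that this is a single global event over $\bp$: once it occurs, pigeonhole forces at least $n/(4C^*)$ nonzero $\bgamma_i$'s to remain in $\ol{S}$ for every $S$ with $|S| \leq n/2$, circumventing any union bound over exponentially many $S$. Consequently $\|\mu(\bp_{\ol{S}})\|_2 \geq \tau/\sqrt{C^*} = 36\eps$ uniformly (for $n$ large enough that $\tau = 36\sqrt{C^*}\eps$).

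To conclude, I plug into Lemma~\ref{dtvlowerbound}: the first factor is at least $1/9$ by the ratio bound, and $\eps \leq \eps_0 = c_2^*/9$ forces $\min(c_2^*, \|\mu(\bp_{\ol{S}})\|_2/4) = 9\eps$. Therefore $\dtv(\bp_{\ol{S}}, \calU_{\ol{S}}) \geq (1/9)(9\eps) = \eps$ uniformly in $S$, and hence $\dtv(\bp, \Junta{n/2}) \geq \eps$ with probability $1 - o_n(1)$. The main obstacle is handling the uniformity over $S$; the Chernoff-plus-pigeonhole device dissolves this by collapsing the exponential family of events into one scalar count.
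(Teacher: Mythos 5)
Your proposal is correct and follows essentially the same route as the paper: a Chernoff bound on the number of nonzero $\bgamma_i$'s, the pigeonhole count of nonzero coordinates surviving in the complement of any candidate junta set $S$, the projection/data-processing step to reduce to a product-versus-uniform distance, and then Lemma~\ref{dtvlowerbound} with the same $\ell_\infty$ and $\ell_2$ estimates. The only cosmetic differences are that you project onto all of $\ol{S}$ rather than onto $\ol{A}\cap\ol{S}$ (which works equally well since zero-mean coordinates do not hurt the $\ell_2$ bound), and your stated equality $\dtv(\bp,\Junta{n/2})=\min_S \dtv(\bp_{\ol S},\calU)$ and ``$\min(c_2^*,\|\mu\|_2/4)=9\eps$'' should really be the inequalities $\geq$, which is all that is needed.
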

\begin{proof}
	By a similar computation, as the proof of Claim~\ref{lem:yes-dist}, if we let $\bA$ be the subset of coordinates $i \in [n]$ with $\bgamma_i = 0$ in $\bp\sim\Dno$, we have $$|\bA| \leq n \left(\frac{1}{2} - \frac{1}{4C^*}\right)$$ except with probability $o_n(1)$. Consider a fixed distribution $p$ in the support of $\Dno$ where the above event occurs, i.e., the set $A \subset [n]$ of coordinates with zero $\gamma_i$ (specifying the marginal distributions of $p$ as in (\ref{eq:sample-dno})) is smaller than $n/2 - n/(4C^*)$. Let $q$ be any
	$(n/2)$-junta distribution and let $S$ be the influential variables of $q$'s p.d.f with $|S| \leq n/2$.
	We have that, for each $i\in \ol{A}\cap \overline{S}$, 
	\[ |\mu(p)_i| \geq 2\tau \gamma_i /\sqrt{n} \geq 2\tau / \sqrt{n}. \] 
	Let $T$ be $\ol{A}\cap \overline{S}$ with
	$$t \eqdef |T|=|\ol{A}\cap \overline{S}|\ge n\left(\frac{1}{2}+\frac{1}{4C^*}\right) -
	\frac{n}{2} \ge \frac{n}{4C^*}.
	$$
	
	Consider the distributions $p_{T}$ and $q_{T}$ given by taking a sample and projecting onto the coordinates in $T$. Since $T \subset \ol{S}$, and the p.d.f of $q$ is constant for any setting of variables in $S$, the distribution $q_T$ is the uniform distribution over $t$ bits. 
	We note
	\begin{align}
	\dtv(p_{T}, \calU_t) &= \frac{1}{2} \sum_{x \in \{-1,1\}^T} |p_T(x) - q_T(x)| = \frac{1}{2} \sum_{x \in \{-1,1\}^T}\left|\sum_{y \in \{-1,1\}^{\ol{T}}} p(x,y) - q(x,y) \right| \nonumber \\
	&\leq \frac{1}{2} \sum_{z \in \{-1,1\}^n} |p(z) - q(z)| = \dtv(p, q), \label{eq:projection-to-actual}
	\end{align}
	where $p(x,y) = p(z)$ with $z_i = x_i$ for $i \in T$ and $z_i = y_i$ for $i \notin T$, and $q(x,y)$ is defined analogously.
	We now apply 
	Lemma~\ref{dtvlowerbound} to deduce a lower bound on $\dtv(p_T, \calU_t)$, and by (\ref{eq:projection-to-actual}) lower bound $\dtv(p,q)$. Since $p$ is a product distribution, $\mu(p)_i = \mu(p_T)_i$ for all $i \in T$, and we have
	\begin{align}
	\|\mu(p_{T})\|_{\infty} &\leq \frac{2\tau \ell^3}{\sqrt{n}} \qquad\text{and}\qquad \|\mu(p_T)\|_2 \geq \sqrt{t} \cdot \frac{2\tau}{\sqrt{n}} = \frac{\tau}{\sqrt{C^*}}. \label{eq:ell-infty-ell-2}
	\end{align}
	Applying Lemma~\ref{dtvlowerbound}, we have 
	\begin{align*}
	\dtv(p_{T}, \calU_t) &\geq \left(\frac{1}{8} - o_n(1) \right) \cdot \min\left(c_2^*, \frac{\tau}{ 4\sqrt{C^*}} \right) \geq \min\left(\frac{c_2^*}{9}, \frac{\tau}{36 \sqrt{C^*}} \right),
	\end{align*}
	once $n$ is a large enough constant.
	Finally, by the setting of $\eps_0$ in (\ref{eq:setting-eps-0}), and $\tau$ in (\ref{eq:tau-setting}), $\dtv(p_{T}, \calU_t) \geq \min(\eps_0, \eps) = \eps$ for large enough $n$. Since the distribution $q$ was an arbitrary $(n/2)$-junta distribution, this concludes the proof.
\end{proof}

\subsection{Statistical Distance Between $\Ryes$ and $\Rno$}\label{sec:BoundDistance}

Let $s=\lceil  n / (\eps^2  \log^{12} n)\rceil$.
We show that distributions $\Ryes = \calR(s,\Dyes)$ and $\Rno = \calR(s, \Dno)$ as defined 
in (\ref{defeq}) using $\Dyes$ and $\Dno$ satisfy 
\begin{equation}\label{finaleq}
\dtv(\Ryes, \Rno) \leq o_n(1).
\end{equation}

Recall that $\Ryes$ is the distribution supported on $\{ 0, \dots, s\}^n$ given by first sampling 
$\bdelta_1,\ldots,\bdelta_n$ independently according to (\ref{eq:sample-dyes})
and then sampling from the product distribution
\begin{align}
\boldr &\sim \prod_{i=1}^n \Bin\left(s, \bq_i\right),\qquad\text{where}\quad
\text{$\bq_i \eqdef \Prx_{\bx \sim \bq}\left[ \bx_i = 1\right]=\frac{1}{2} + \frac{\bdelta_i \cdot \tau}{\sqrt{n}}$}. \label{eq:sample-r}
\end{align}
Notice that we always have 
\[ \frac{1}{2} \leq \bq_i \leq \frac{1}{2} + \frac{\tau \ell^3}{\sqrt{n}} \leq \frac{1}{2}+O\left(\frac{\eps \log^3 n}{\sqrt{n}} \right) \]
once $n$ is a large enough constant. 

Similarly, $\Rno$ is the distribution supported on $\{0, \dots, s\}^n$ given by first sampling $\bgamma_1,\ldots,\bgamma_n$ according to (\ref{eq:sample-dno}), and then sampling from the product distribution
\begin{align*}
\boldr \sim \prod_{i=1}^n \Bin\left(s, \bp_i\right),\qquad\text{where}\qquad \bp_i \eqdef\Prx_{\bx \sim \bp}\left[ \bx_i = 1\right] = \frac{1}{2} + \frac{\bgamma_i \cdot \tau}{\sqrt{n}},
\end{align*} 
and similarly, we have $1/2 \leq \bp_i \leq 1/2 + O(\eps \log^3 n / \sqrt{n})$. 
In particular, if we denote the set $B \subset \{0, \dots, s\}^n$ given by
\[ B = \left\{ r = (r_1, \dots, r_n) \in \{0,\dots, s\}^n : \exists\hspace{0.03cm} j \in [n], \left|r_j - \frac{s}{2}\right| \geq \sqrt{s} \log^2 n \right\}. \] 
It follows from our choice of $s$, that for every $i \in [n]$ and any fixed setting of $\bp_1, \dots, \bp_n$ and $\bq_1, \dots, \bq_n$,
\begin{align*}
\frac{s}{2} \leq \Ex_{\boldr_i \sim \Bin(s, \bp_i)}\left[ \boldr_i \right] , \Ex_{\boldr_i \sim \Bin(s, \bq_i)}\left[ \boldr_i \right] \leq \frac{s}{2} + O\left(\frac{s \eps \log^3 n}{\sqrt{n}}\right) = \frac{s}{2} + O\left( \sqrt{s}\right),
\end{align*}
so that via a Chernoff bound and a union bound,
\[ \Prx_{\boldr \sim \Ryes}[\boldr \in B],\ \Prx_{\boldr \sim \Rno}[\boldr \in B] = o_n(1).  \]
Therefore, in order to show $\dtv(\Ryes, \Rno) = o_n(1)$, it suffices to show that for every $r \notin B$,\vspace{0.15cm}
\begin{align}
\dfrac{\Prx_{\boldr \sim \Ryes}\left[ \boldr = r\right]}{\Prx_{\boldr \sim \Rno}\left[ \boldr = r\right]} &= \dfrac{\Ex_{\bdelta_1,\dots, \bdelta_n}\left[ \prod_{i=1}^n \left( \binom{s}{r_i} \left( \frac{1}{2} + \frac{\bdelta_i \tau}{\sqrt{n}}\right)^{r_i} \left( \frac{1}{2} - \frac{\bdelta_i \tau}{\sqrt{n}}\right)^{s-r_i}\right) \right]}{\Ex_{\bgamma_1,\dots, \bgamma_n}\left[ \prod_{i=1}^n \left( \binom{s}{r_i} \left( \frac{1}{2} + \frac{\bgamma_i \tau}{\sqrt{n}}\right)^{r_i} \left( \frac{1}{2} - \frac{\bgamma_i \tau}{\sqrt{n}}\right)^{s-r_i}\right) \right]} \leq 1 + o_n(1). \label{eq:ratio}\\[-3ex] \nonumber
\end{align}
Toward this goal, consider a fixed $r \notin B$, and notice that since $\bdelta_1, \dots, \bdelta_n$ are drawn independently, the numerator in (\ref{eq:ratio}) is 
\begin{align}
&\prod_{i=1}^n\hspace{0.06cm} \Ex_{\bdelta_i}\left[\binom{s}{r_i} \left(\frac{1}{2} + \frac{\bdelta_i \tau}{\sqrt{n}} \right)^{r_i} \left( \frac{1}{2} - \frac{\bdelta_i \tau}{\sqrt{n}}\right)^{s - r_i} \right] \nonumber \\
&\qquad\qquad\qquad= \prod_{i=1}^n \binom{s}{r_i} \cdot \frac{1}{2^s}\cdot\Ex_{\bdelta_i}\left[ \left(1 - \left( \frac{2\bdelta_i \tau}{\sqrt{n}}\right)^2 \right)^{m_i} \left(1 - \sgn(t_i) \cdot \frac{2 \bdelta_i \tau}{\sqrt{n}} \right)^{|t_i|}\right], \label{eq:simplify}
\end{align}
where $t_i = s - 2r_i$ and $m_i = \min\left\{ r_i, s - r_i \right\}$; notice that $|t_i| \leq 2 \sqrt{s}\log^2 n$ since $r \notin B$. Similarly, the denominator in (\ref{eq:ratio}) may be expressed as (\ref{eq:simplify}) by replacing $\bdelta_i$ with $\bgamma_i$. We analyze (\ref{eq:ratio}) by considering each term in the product; in particular, it suffices to show that for every $i \in [n]$,
\begin{align}
\dfrac{\Ex_{\bdelta_i}\hspace{-0.04cm}\left[\left( 1 - 4 \bdelta_i^2 \tau^2/n\right)^{m_i} \left( 1 - \sgn(t_i) \cdot 2\bdelta_i \tau / \sqrt{n}\right)^{|t_i|}\right]}{\Ex_{\bgamma_i}\hspace{-0.04cm}\left[\left( 1 - 4 \bgamma_i^2 \tau^2/n\right)^{m_i} \left( 1 - \sgn(t_i) \cdot 2\bgamma_i \tau / \sqrt{n}\right)^{|t_i|}\right]} \leq 1 + o_n(1/n). \label{eq:each-prod}
\end{align}
Using the choice of $s$ and the fact that both $\bdelta_i$ and $\bgamma_i$ are 
no larger than $\log^3 n$, we always have
\begin{align}
\left( 1 - \frac{4 \bdelta_i^2 \tau^2}{n}\right)^{m_i} , \left(1 - \sgn(t_i)\cdot \frac{ 2\bdelta_i \tau}{\sqrt{n}} \right)^{|t_i|}, \left(1 - \frac{4\bgamma_i^2 \tau^2}{n} \right)^{m_i} , \left( 1 - \sgn(t_i)\cdot \frac{2\bgamma_i \tau}{\sqrt{n}} \right)^{|t_i|} =1\pm o_n(1).\label{eq:bounds}
\end{align}
In addition, we have,
\begin{align}
\left( 1 - \frac{4 \bdelta_i^2 \tau^2}{n}\right)^{m_i} &= \sum_{k=0}^{m_i} \binom{m_i}{k} \left( \frac{-4\bdelta_i^2 \tau^2}{n}\right)^k\nonumber\\[0.5ex] 
&= \sum_{k=0}^{\ell/4-1} \binom{m_i}{k} \left( \frac{-4\bdelta_i^2 \tau^2}{n}\right)^k + \sum_{k=\ell/4}^{m_i} \binom{m_i}{k}\left( \frac{-4\bdelta_i^2 \tau^2}{n}\right)^k. \label{eq:haha} 
\end{align}
For each term in the second sum, 
we upperbound $\bdelta_i \leq \ell^3$ and use the approximation of $\binom{m_i}{k} \leq (em_i/k)^k$. 
We also use $k\ge \ell/4$, $m_i\ge s/3$ and the choice of $\ell=\lceil \log n/\log \log n \rceil$.
As a result, the absolute value of the $k$th term in the second sum is at most
\begin{equation}\label{hehe4}
\left(\frac{em_i\cdot 4\ell^6 \cdot O(\eps^2)}{kn}\right)^k
\le \left(O\left(\frac{s\ell^5 \eps^2}{n}\right)\right)^k
\le \left(\frac{1}{\log^6 n}\right)^k. 
\end{equation}
As a result, the absolute value of the second sum is at most
$$
\sum_{k=\ell/4}^{m_i} \left(\frac{1}{\log^6 n}\right)^k\le 2\cdot \left(\frac{1}{\log^6 n}\right)^{\frac{\log n}{4\log \log n}}=o_n(1/n).
$$ 
In fact, we have shown, by negating all terms in (\ref{eq:haha}) of degree (in $\bdelta_i$) at least $\ell/4$,
\begin{align} 
\left( 1 - \frac{4 \bdelta_i^2 \tau^2}{n}\right)^{m_i} = \sum_{k=0}^{\ell/4 - 1} \binom{m_i}{k} 
\left(\frac{-4\tau^2}{n}\right)^k  \cdot \bdelta_i^{2k} \pm o_n(1/n). \label{eq:gamma-bound-1}
\end{align}
Similarly, 
\begin{align*}
\left(1 - \frac{2\sgn(t_i) \bdelta_i \tau}{\sqrt{n}} \right)^{|t_i|} &= \sum_{k=0}^{|t_i|} \binom{|t_i|}{k} \left( \frac{-2 \sgn(t_i) \bdelta_i \tau}{\sqrt{n}}\right)^k \\[0.5ex]
&=\sum_{k=0}^{\ell/2-1}\binom{|t_i|}{k} \left( \frac{-2 \sgn(t_i) \bdelta_i \tau}{\sqrt{n}}\right)^k
+\sum_{k=\ell/2}^{|t_i|} \binom{|t_i|}{k} \left( \frac{-2 \sgn(t_i) \bdelta_i \tau}{\sqrt{n}}\right)^k.
\end{align*}
Analogously to (\ref{hehe4}), the absolute value of the second sum can be bounded from above by
\begin{align*}
\sum_{k=\ell/2}^{|t_i|} \left( O\left( \frac{|t_i|}{k}\cdot \frac{ \eps \ell^3}{ \sqrt{n}}\right)\right)^k 
\leq 2\left(O\left(\dfrac{1}{\log^2 n \log^2 (\log n)}\right) \right)^{\frac{\log n}{2\log\log n}} = o_n(1/n) 
\end{align*}
and we have
\begin{align} 
\left(1 - \frac{2\sgn(t_i) \bdelta_i \tau}{\sqrt{n}} \right)^{|t_i|} = \sum_{k=0}^{\ell/2-1} \binom{|t_i|}{k} \left(\frac{-2\sgn(t_i) \tau}{\sqrt{n}} \right)^k\cdot  \bdelta_i^k \pm o_n(1/n). \label{eq:gamma-bound-2}
\end{align}
Analogously, we may conclude that 
\begin{align}
\left(1 - \frac{4\bgamma_i^2 \tau^2}{n} \right)^{m_i} &= \sum_{k=0}^{\ell/4-1} \binom{m_i}{k} 
\left(\frac{-4\tau^2}{n}\right)^k \cdot \bgamma_i^{2k} \pm o_n(1/n)\qquad \text{and} \nonumber \\[0.6ex]
\left( 1 - \frac{2\sgn(t_i)\bgamma_i \tau}{\sqrt{n}} \right)^{|t_i|} &= \sum_{k=0}^{\ell/2-1} \binom{|t_i|}{k} \left(\frac{-2\sgn(t_i) \tau}{\sqrt{n}}\right)^{k}\cdot  \bgamma_i^k \pm o_n(1/n). \label{eq:gamma-bound-3}
\end{align}
It follows from (\ref{eq:bounds}) and all four approximations in 
(\ref{eq:gamma-bound-1}), (\ref{eq:gamma-bound-2}) and (\ref{eq:gamma-bound-3})
that all four sums on the right hand side are $1\pm o_n(1)$,
and note that all these inequalities hold with probability $1$ (over the draw of $\bdelta_i$ and $\bgamma_1$).
Putting (\ref{eq:gamma-bound-1}), (\ref{eq:gamma-bound-2}), (\ref{eq:gamma-bound-3}) 
and (\ref{eq:bounds}) together, we have
\begin{align}
&\Ex_{\bdelta_i}\left[ \left( 1 - 4 \bdelta_i^2 \tau^2/n\right)^{m_i} \left( 1 - \sgn(t_i) \cdot 2\bdelta_i \tau / \sqrt{n}\right)^{|t_i|} \right] \label{eq:delta-part}\\[0.5ex]
&\qquad\leq \Ex_{\bdelta_i}\left[\left( \sum_{k=0}^{\ell/4 - 1} \binom{m_i}{k} 
\left(\frac{-4\tau^2}{n}\right)^k\cdot 
\bdelta_i^{2k}+ o_n(1/n)\right) \left( \sum_{k=0}^{\ell/2-1} \binom{|t_i|}{k} \left(\frac{-2\sgn(t_i) \tau}{\sqrt{n}} \right)^k \cdot  \bdelta_i^k + o_n(1/n) \right)\right]  \nonumber \\[1ex]
&\qquad\leq \Ex_{\bdelta_i}\left[\left( \sum_{k=0}^{\ell/4 - 1} \binom{m_i}{k} 
\left(\frac{-4\tau^2}{n}\right)^k\cdot \bdelta_i^{2k}\right) \left( \sum_{k=0}^{\ell/2-1} \binom{|t_i|}{k} \left(\frac{-2\sgn(t_i) \tau}{\sqrt{n}} \right)^k\cdot  \bdelta_i^k \right) \right] + o_n(1/n), \nonumber \\[1.5ex]
&\Ex_{\bgamma_i}\left[ \left( 1 - 4 \bgamma_i^2 \tau^2/n\right)^{m_i} \left( 1 - \sgn(t_i) \cdot 2\bgamma_i \tau / \sqrt{n}\right)^{|t_i|} \right] \label{eq:gamma-part} \\[0.5ex]
&\qquad\geq \Ex_{\bgamma_i}\left[\left( \sum_{k=0}^{\ell/4 - 1} \binom{m_i}{k} 
\left(\frac{-4\tau^2}{n}\right)^k \cdot \bgamma_i^{2k}\right) \left( \sum_{k=0}^{\ell/2-1} \binom{|t_i|}{k} \left(\frac{-2\sgn(t_i) \tau}{\sqrt{n}} \right)^k\cdot  \bgamma_i^k \right) \right] - o_n(1/n). \nonumber
\end{align}
Hence, notice that (\ref{eq:delta-part}) and (\ref{eq:gamma-part}) are both $1\pm o_n(1)$, and can each be expressed as the same linear function of the first $\ell-1$ moments of $\bdelta_i$ and $\bgamma_i$ up to additive errors $\pm o_n(1/n)$. Since the first $\ell-1$ moments of $\bdelta_i$ and $\bgamma_i$ are equal by Claim~\ref{cl:matching-moments}, we have shown (\ref{eq:ratio}), which completes the proof of (\ref{finaleq}).

\subsection{Proof of Lemma \ref{dtvlowerbound}}\label{sec:dtvlowerbound}

We will use the fact that $e^{-x}\le 1-x/2$ for all $x\in [0,1]$,
which implies that
\begin{equation}\label{hehe3}
e^{-x}\le \max\big(e^{-1},1-x/2\big)
\end{equation} 
for all $x\ge 0$.
We set the constant $c^*$ in Lemma \ref{dtvlowerbound} to be $1-e^{-1}$.

Let $\mu=\mu(p)$ for convenience and
we assume without loss of generality that $\mu_i \ge 0$ for all $i\in [n]$.
A sample $\bx \sim p$ has all coordinates set independently, where the $i$th coordinate of $\bx_i$ is $1$ with probability $(1+\mu_i)/2$ and $-1$ with probability $(1-\mu_i)/2$.
Given any $x\in \{-1,1\}^n$, we have 
$$
p(x)=\prod_{\substack{i \in [n] \\x_i=1}} \left(\frac{1+\mu_i}{2}\right)\cdot 
\prod_{\substack{i \in [n] \\ x_i=-1}}\left(\frac{1-\mu_i}{2} \right)
=\frac{1}{2^n}\cdot \prod_{\substack{i \in [n] \\ x_i=1}} (1+\mu_i)\cdot \prod_{\substack{i \in [n] \\ x_i=-1}} (1-\mu_i).
$$
We say a string $x\in \{-1,1\}^n$ is \emph{good} if 
$$
\sum_{i\in [n]}  \mu_i x_i\le -\frac{\|\mu\|_2}{ 2}.
$$
The proof proceeds in two steps.
First we show that there exists a constant $c_1^* > 0$ such that when $\bx$ is drawn uniformly at random from $\{-1,1\}^n$, 
$\bx$ is good with probability at least 
$$
\frac{1}{4}-\frac{c_1^*\|\mu\|_\infty}{\|\mu\|_2}.
$$
Next we show there exists a constant $c_2^* > 0$ that every good string $x\in \{-1,1\}^n$ satisfies
$$
\left|p(x)-\frac{1}{2^n}\right|\ge \frac{1}{2^n}\cdot \min\left(c_2^*,\frac{\|\mu\|_2}{4}\right).
$$
The lemma follows since
\begin{align*}
\dtv(p,\calU_n)
&=\frac{1}{2}  \sum_{x \in \{-1,1\}^n} \left|p(x)-\frac{1}{2^n}\right|
\ge \frac{1}{2} \sum_{\substack{x \in \{-1,1\}^n \\ \text{good $x$}}} 
\left|p(x)-\frac{1}{2^n}\right| \\
&\ge \frac{1}{2}\cdot \Prx_{\bx \sim \{-1,1\}^n}\big[\text{$\bx$ is good}\big]
\cdot \min\left(c^*,\frac{\|\mu\|_2}{4}\right).
\end{align*}

For the first step,
we let $\bx\sim \{-1,1\}^n$ be drawn uniformly at random  
and write $\by_i=\mu_i\bx_i$. We recall the Berry--Ess\'een theorem:

\begin{theorem}[Berry--Ess\'een]
	There exists a universal constant $c_1^* > 0$ such that letting $\bs=\by_1+\cdots+\by_n$, where $\by_1,\ldots,\by_n$ be independent real-valued
	random variables with $\E[\by_i]=0$ and $\Var[\by_i]=\sigma_i^2$,
	and suppose that $|\by_i|\le \tau$ with probability $1$ for all $i\in [n]$.
	Let $\bg$ be a Gaussian random variable with mean $0$ and 
	variance $\sum_{i\in [n]} \sigma_j^2$, matching those of $\bs$.
	Then for all $\theta\in \mathbb{R}$ we have 
	$$
	\Big|\Pr[\bs\le \theta]-\Pr[\bg\le \theta]\Big|\le \frac{c_1^* \tau}{\sqrt{\sum_{i\in [n]}\sigma_i^2}}.
	$$
\end{theorem}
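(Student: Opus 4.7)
The plan is to follow the classical characteristic function approach to Berry--Ess\'een, specializing it at the end to extract the stated form in which the third absolute moment is replaced by $\tau\cdot\sum_i\sigma_i^2$ via the a.s.\ bound $|\by_i|\le\tau$. Let $\phi_i(t)=\E[e^{it\by_i}]$, and set $\phi(t)=\prod_i\phi_i(t)$ and $\psi(t)=e^{-\sigma^2 t^2/2}$, where $\sigma^2=\sum_i\sigma_i^2$. These are the characteristic functions of $\bs$ and $\bg$, respectively. The plan is then to control $\sup_\theta|\Pr[\bs\le\theta]-\Pr[\bg\le\theta]|$ by the integral of $|\phi(t)-\psi(t)|/|t|$ in a neighborhood of the origin, and to bound the integrand using the a.s.\ moment bound $\E[|\by_i|^3]\le\tau\cdot\sigma_i^2$.

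First, I would invoke Esseen's smoothing inequality, which yields, for any $T>0$,
\begin{equation*}
\sup_\theta\Big|\Pr[\bs\le\theta]-\Pr[\bg\le\theta]\Big|\ \lesssim\ \int_{-T}^{T}\frac{|\phi(t)-\psi(t)|}{|t|}\,dt\ +\ \frac{1}{T\sigma},
\end{equation*}
using the fact that $\bg$ has bounded density proportional to $1/\sigma$. The truncation parameter $T$ will be optimized at the end in terms of $\tau$ and $\sigma$.

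Next, I would bound the integrand $|\phi(t)-\psi(t)|$ on the range $|t|\le T$ with $T\asymp \sigma/\tau$. For each $i$, Taylor expand $\phi_i(t)=1-\tfrac{1}{2}\sigma_i^2 t^2+r_i(t)$ where $|r_i(t)|\lesssim |t|^3\,\E[|\by_i|^3]\le |t|^3\tau\sigma_i^2$ since $|\by_i|\le\tau$ a.s. Using $|\phi_i(t)|\le 1$ and $|e^{-\sigma_i^2 t^2/2}|\le 1$, the standard telescoping identity $\phi(t)-\psi(t)=\sum_i\big(\prod_{j<i}\phi_j\big)\big(\phi_i-e^{-\sigma_i^2t^2/2}\big)\big(\prod_{j>i}e^{-\sigma_j^2t^2/2}\big)$ combined with $|\phi_i(t)-e^{-\sigma_i^2 t^2/2}|\lesssim |t|^3\tau\sigma_i^2+\sigma_i^4 t^4$ yields, for $|t|\le c\sigma/\tau$,
\begin{equation*}
|\phi(t)-\psi(t)|\ \lesssim\ e^{-\sigma^2 t^2/4}\cdot\Big(\tau|t|^3\sigma^2+\tau^2\sigma^2 t^4\Big),
\end{equation*}
where the factor $e^{-\sigma^2 t^2/4}$ comes from $|\phi_i(t)|\le e^{-\sigma_i^2 t^2/4}$ for small $|t|$ (a consequence of the Taylor estimate combined with $1-x\le e^{-x}$). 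Dividing by $|t|$ and integrating on $[-T,T]$ with $T=c\sigma/\tau$ produces an $O(\tau/\sigma)=O(\tau/\sqrt{\sum_i\sigma_i^2})$ bound, and the tail term $1/(T\sigma)=O(\tau/\sigma^2)$ is of smaller order when $\sigma\ge \tau$ (the only regime where the claim is nontrivial, since otherwise the right-hand side exceeds $1$).

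The main technical obstacle is handling the factorization $\phi=\prod\phi_i$ while keeping uniform control on $|\phi_i(t)|$ across the integration window. Standard proofs address this by using the elementary inequality $|1+z|\le e^{|z|}$ and the quadratic Taylor estimate to show that $\log\phi(t)=-\tfrac12\sigma^2 t^2+O(\tau|t|^3\sigma^2)$ for $|t|\le c\sigma/\tau$, after which $|\phi(t)-\psi(t)|\le|\psi(t)|\cdot|e^{\log\phi-\log\psi}-1|$ can be bounded cleanly. A conceptually cleaner alternative I might pursue if the characteristic-function bookkeeping becomes cumbersome is Lindeberg's swapping argument, replacing the $\by_i$ one at a time with independent Gaussians of matching variance and estimating the effect on smooth test functions via third-order Taylor expansions, then passing from smooth to indicator test functions through a standard mollification — this route avoids Esseen's smoothing at the cost of a slightly messier mollification step.
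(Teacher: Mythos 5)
The paper itself contains no proof of this statement: it is the classical Berry--Ess\'een theorem for sums of independent, non-identically distributed random variables, invoked as a black box, and the stated form follows in one line from the standard third-moment version (error at most $C\sum_i \E[|\by_i|^3]/(\sum_i\sigma_i^2)^{3/2}$) together with the observation that $|\by_i|\le \tau$ a.s.\ gives $\E[|\by_i|^3]\le \tau\sigma_i^2$. Your Esseen-smoothing/characteristic-function route is the standard textbook proof of that classical theorem, so the overall plan is sound.

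However, your truncation parameter is off, and one displayed estimate is false as stated. The bound $|\phi_i(t)|\le e^{-\sigma_i^2t^2/4}$, and hence your claim $|\phi(t)-\psi(t)|\le C\, e^{-\sigma^2t^2/4}\big(\tau|t|^3\sigma^2+\tau^2\sigma^2t^4\big)$, is only valid for $|t|$ up to order $1/\tau$, not $\sigma/\tau$: taking $\by_i$ i.i.d.\ uniform on $\{-\tau,\tau\}$ gives $\phi(t)=\cos(\tau t)^n$, which has modulus $1$ at $t=2\pi/\tau$; this point lies inside your window $|t|\le c\sigma/\tau$ as soon as $\sigma=\tau\sqrt{n}$ is large, while $e^{-\sigma^2t^2/4}$ is essentially $0$ there. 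Moreover, with $T\asymp \sigma/\tau$ the smoothing tail is $1/(T\sigma)=\Theta(\tau/\sigma^2)$, which is \emph{not} bounded by $\tau/\sigma$ when $\sigma<1$; the condition $\sigma\ge\tau$ you invoke is not the relevant one. Both issues vanish if you instead take $T=c/\tau$ (equivalently, normalize to $\bs/\sigma$ first, in which case the window really is $c\sigma/\tau$ and the smoothing tail is $O(1/T)$): then the window coincides with the region where the characteristic-function estimates hold, the integral is $O(\tau/\sigma)$ after substituting $u=\sigma t$, and the tail term is $1/(T\sigma)=O(\tau/\sigma)$, completing the proof. One further caveat: the Lindeberg-swapping fallback you mention, with a straightforward mollification of the indicator, only delivers a rate of order $(\tau/\sigma)^{3/4}$, so it would not establish the stated $\tau/\sigma$ bound without additional ideas (e.g., Bolthausen-style induction).
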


Note that in our case, $\tau=\|\mu\|_\infty$ and $\sigma_i^2=\mu_i^2$ and thus, the variance of $\bg$ is $\|\mu\|_2^2$.

Recall the following fact about Gaussian anti-concentration:

\begin{fact}[Gaussian anti-concentration]
	Let $\bg$ be a Gaussian random variable with variance $\sigma^2$. Then for all $\kappa>0$
	it holds that $$\sup_{\theta\in \mathbb{R}} \Big\{\Prx\big[|\bg-\theta|\le \kappa\sigma\big]\Big\}
	\le \kappa.$$
\end{fact}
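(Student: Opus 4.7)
The plan is to reduce the claim to standard Gaussian anti-concentration by scaling and then bound the probability by the length of the interval times the maximum density. First I would normalize: since $\bg$ has variance $\sigma^2$ (and, implicitly, any fixed mean), the random variable $\bg' = (\bg - \E[\bg])/\sigma$ has the standard Gaussian distribution, and translating $\theta$ to $\theta' = (\theta - \E[\bg])/\sigma$ shows
\[
\Pr\big[|\bg - \theta| \le \kappa \sigma\big] = \Pr\big[|\bg' - \theta'| \le \kappa\big].
\]
Taking the supremum over $\theta \in \mathbb{R}$ reduces the statement to the standard-Gaussian case, so it suffices to prove $\sup_{\theta'\in \mathbb{R}} \Pr[|\bg' - \theta'| \le \kappa] \le \kappa$.

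Next I would express the remaining probability as a density integral and bound it using the uniform bound on the Gaussian density. Writing $\phi(x) = \frac{1}{\sqrt{2\pi}} e^{-x^2/2}$ for the standard Gaussian density, I have
\[
\Pr\big[|\bg' - \theta'| \le \kappa\big] = \int_{\theta' - \kappa}^{\theta' + \kappa} \phi(x)\, dx \le 2\kappa \cdot \sup_{x \in \mathbb{R}} \phi(x) = 2\kappa \cdot \phi(0) = \frac{2\kappa}{\sqrt{2\pi}}.
\]
Since $\sqrt{2/\pi} < 1$, this yields $\Pr[|\bg' - \theta'| \le \kappa] \le \kappa$, which is the desired bound (in fact with room to spare).

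There is no real obstacle here: the only subtlety is that the statement makes no assumption on the mean of $\bg$, so the scaling step must absorb both the mean and variance, but once that is done the density-times-length bound is immediate. I would remark that the proof in fact gives the stronger constant $\sqrt{2/\pi}$ in place of $1$, and that the supremum over $\theta$ is attained (in the limit) when the interval is centered at the mean of $\bg$.
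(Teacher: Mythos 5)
Your proof is correct. The paper states this as a standard fact without proof, and your argument — reduce to the standard Gaussian by centering and scaling, then bound the probability of an interval of length $2\kappa$ by the maximum density $\phi(0)=1/\sqrt{2\pi}$ — is exactly the standard justification, and it even gives the sharper constant $\sqrt{2/\pi}<1$, which is more than what the paper needs.
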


Setting $\kappa=1/2$ and $\theta=0$ (and using the symmetry of $\bg$), we have that
$$
\Prx_{\bg \sim\calN(0, \|\mu\|_2^2)}\Big[ \bg \le - \|\mu\|_2/2\Big]\ge 1/4.
$$
It follows from Berry-Ess\'een that
$$
\Prx_{\bx \sim \{-1,1\}^n}\left[\sum_{i\in [n]} \mu_i\bx_i\le -\frac{\|\mu\|_2}{2}\right]\ge \frac{1}{4}-\frac{ c_1^* \|\mu\|_\infty}{\|\mu\|_2}.
$$
This finishes the proof of the first step.
For the second we use the fact that $e^x\ge 1+x$ for all $x\in \mathbb{R}$ and thus,
for each good $x\in \{-1,1\}^n$ we have
$$
2^n\cdot p(x)\le \prod_{\substack{i \in [n] \\ x_i=1}} e^{\mu_i}\cdot \prod_{\substack{i \in [n] \\ x_i=1}} e^{-\mu_i}
=e^{\sum_{i\in [n]}  \mu_i x_i }\le e^{-\|\mu\|_2/2}\le \max\left(e^{-1},1-\|\mu\|_2/4\right),
$$
where we used (\ref{hehe3}) in the last inequality.
As a result, we have 
$$
\big|1-2^n\cdot p(x)\big|= 1-2^n\cdot p(x) 
\ge  \min\left(c_2^*,\|\mu\|_2/4\right) 
$$
since we set $c_2^*=1-e^{-1}$. This finishes the proof of the lemma.

\subsection{Proof of Claim \ref{cl:z-bound}}\label{sec:ell-1-z}

Applying Cramer's rule, we have
\begin{align} 
|z_i| &= \left|\dfrac{\det(A_i)}{\det(A)}\right| = \prod_{j \in [\ell] \setminus \{i\}} \left|\dfrac{\alpha_j}{\alpha_i - \alpha_j} \right|,  \label{eq:z-val}
\end{align}
where $A_i$ is the $\ell \times \ell$ matrix given by replacing the $i$-th column with $e_1$, and notice that $A_i$ is the Vandermonde matrix $A(\alpha^{(i)})$, with $\alpha^{(i)} \in \R^{\ell}$ being the vector which is exactly $\alpha_j$ on all $j \neq i$ and $0$ when $j = i$. We now show that there exists a constant $i_0 \in \N$ (which does not depend on $\ell$) such that for all $\ell \in \N$, the sequence $\{ |z_i| \}_{i \geq i_0}$ is geometrically decreasing with constant bounded away from $1$. This suffices to bound $\|z\|_1$, since 
\begin{align*}
\|z\|_1 = \sum_{i=1}^{\ell} |z_i| \leq \sum_{i=1}^{i_0 - 1} |z_i| + \sum_{i=i_0}^{\ell} |z_i| \leq (i_0 - 1) \max_{i <i_0} |z_i| + O(|z_{i_0}|),
\end{align*}
and for every $i \in [\ell]$, we can upperbound the logarithm of (\ref{eq:z-val}) by
\begin{align*}
\log_2\big(|z_i|\big) &\leq (i-1) \log_2 i + \sum_{j > i} \log_2\left( 1 + \frac{i^3}{j^3 - i^3}\right) \leq i^3\left(1 + \sum_{j > i} \frac{1}{j^3 - i^3} \right) \lsim i^3.
\end{align*}
The first inequality follows from the fact that $| j^3 / (i^3 - j^3)| \leq i$ for $j <i$; the second inequality follows from upperbounding $\log(1+x) \leq x$ for $x \geq 0$; the last inequality follows from the fact 
$j^3-i^3>(j-i)^3$ for all $j>i$, and the sums to a constant.
From the upper bound on $\log_{2}|z_i|$, we may conclude $\|z\|_{1} \leq 2^{O(i_0^3)}$. 

In order to pick $i_0 \in \N$, notice that for all $i \in \N$, we use (\ref{eq:z-val}) on $z_{i+1}$ and $z_i$ to obtain
\begin{align*}
\frac{|z_{i+1}|}{|z_i|}
= \prod_{j=0}^{i-1}\frac{i^3-j^3}{(i+1)^3-j^3} \cdot \prod_{j=i+2}^{\ell} \frac{j^3-i^3}{j^3-(i+1)^3}.
\end{align*}
We first handle the case when $\ell\ge 2i+1$. In this case we break the product into
\begin{equation}\label{hehe5}
\frac{|z_{i+1}|}{|z_i|}
=\prod_{k=1}^i \left(\frac{i^3-(i-k)^3}{(i+1)^3-(i-k)^3}\cdot \frac{(i+1+k)^3-i^3}{(i+1+k)^3-(i+1)^3}\right)
\prod_{j=2i+2}^\ell \frac{j^3-i^3}{j^3-(i+1)^3}.
\end{equation}
Using $a^3-b^3=(a-b)(a^2+ab+b^2)$, the factor for each $k\in [i]$ in the first product becomes
\begin{align}
\frac{3i^2-3ki+k^2}{3i^2-3(k-1)i+k^2-k+1}\cdot
\frac{3i^2+3(k+1)i+(k+1)^2}{3i^2+3(k+2)i+k^2+3k+3}.\label{hehe5}
\end{align}
Noting that  
the denominator of the first factor is
$$
(i+1)^2+(i+1)(i-k)+(i-k)^2\le (2i+1-k)^2
$$
we can bound the first factor of (\ref{hehe5}) by 
$$
1-\frac{3i-k+1}{(i+1)^2+(i+1)(i-k)+(i-k)^2}\le 
1-\frac{3i-k+1}{(2i+1-k)^2}\le 1-\frac{1}{2i+1-k}.
$$
Similarly we have that the second factor of (\ref{hehe5}) is 
$$
1-\frac{3i+k+2}{(i+1+k)^2+(i+1+k)(i+1)+(i+1)^2}
\le 1-\frac{3i+k+2}{(2i+2+k)^2}\le 1-\frac{1}{2i+k+2}.
$$
As a result, the first product in (\ref{hehe5}) is at most (using $1+x\le e^x$)
$$
\exp\left(-\sum_{k=1}^i \left(\frac{1}{2i+1-k}+\frac{1}{2i+k+2}\right)\right).
$$
We note that by re-indexing terms,
\begin{align*}
\sum_{k=1}^{i} \left( \frac{1}{2i + 1 - k} + \frac{1}{2i + k + 2} \right) = \sum_{h=i+1}^{3i + 2} \frac{1}{h} - \left(\frac{1}{2i+1} + \frac{1}{2i+2} \right) \geq \int_{i+1}^{3i+2} \frac{1}{x} \cdot dx - \frac{1}{i} \mathop{\longrightarrow}^{i\to \infty} \ln(3)
\end{align*}
where the sum approaches $\ln 3$ as $i$ grows so we fix our $i_0$ to be sufficiently large
so that when $i\ge i_0$ the above sum is at least $1 + \frac{1}{20}$.
For the second product of (\ref{hehe5}) we rewrite it as
\begin{align*}
\prod_{j=2i+2}^{\ell} \frac{j^3-i^3}{j^3-(i+1)^3}&=
\prod_{k=i+1}^{\ell-i-1} \left(1+\frac{3i^2+3i+1}{(i+1+k)^3-(i+1)^3}\right)\\
&\le \prod_{k=i+1}^{\ell-i-1} \left(1+\frac{3i^2+3i+1}{3(i+1)k^2+k^3}\right)\\
&\le \prod_{k=i+1}^{\ell-i-1} \left(1+\frac{i}{k^2}\right)\le \exp\left(\sum_{k\ge i+1} \frac{i}{k^2}\right)
\le e,
\end{align*}
where the third inequality used $3i^2+3i+1\le i(3i+3+k)$ and the last inequality
used the fact that $\sum_{k\ge i+1} 1/k^2\le 1/i$.
As a result, in this case ($i\ge i_0$ and $\ell\ge 2i+1$) we have that
$|z_{i+1}|/|z_i|\le e^{-1/20}$.
We are almost done. For the case when $\ell<2i+1$, we simply note that
$$
\frac{|z_{i+1}|}{|z_i|}
\le \prod_{k=1}^i \frac{i^3-(i-k)^3}{(i+1)^3-(i-k)^3}\cdot \frac{(i+1+k)^3-i^3}{(i+1+k)^3-(i+1)^3}
$$
since we added more factors that are at least $1$.
Since $i\ge i_0$, the same argument used earlier implies that the ratio is at most $e^{-1-1/20}$.

\section{Robust Mean Testing for $k$-Juntas}

In this section, we consider a robust distribution testing algorithm which distinguishes between a given distribution $p$ having a mean vector $\mu(p)$ with large $\ell_2$ norm, and $p$ being a $k$-junta distribution and having a mean vector with small $\ell_2$ norm. Our tester is similar to the mean testing algorithm of \cite{CCKLW20}, however we will require a tighter analysis of the completeness case, which in our setting is more general. The goal of this section is to demonstrate an algorithm
that draws a small number of samples from $p$ to distinguish these two cases with
probability at least $2/3$. We restate the main theorem of this section:

\MeanTestingRestatable*

\ignore{
	Formally, the testing problem is as follows. Fix any $\eps \in (0,1)$, and $n \in \N$. Then suppose we are given a distribution $p$ over $\{-1,1\}^n$, along with the promise that $p$ satisfies one of the following
	two conditions:
	\begin{enumerate}
		\item \textbf{Completeness}: $p$ is a $k$-junta distribution and satisfies $\|\mu(p)\|_2 \leq {\eps}  \sqrt{n}/100;$ or
		\item \textbf{Soundness}: $p$ satisfies $\|\mu(p)\|_2 \geq  \eps \sqrt{n}$.
	\end{enumerate}
	
}

To describe the testing algorithm we start with some notation.
\begin{definition}
	Given $x\in \{-1,1\}^n$, we write $x\otimes y$ to denote the tensor product of $x$ and $y$:
	$$
	x \otimes x = (x_1 x_1, x_1 x_2 , \dots x_1 x_n , x_2 x_1, \dots x_n x_n)  \in \{-1,1\}^{n^2}.
	$$
	We also write $x^{\otimes {r}}$ to denote the tensor product of $r$ copies of $x$:
	$$
	x^{\otimes r}=\underbrace{x\otimes x\otimes \cdots \otimes x}_{r}.
	$$
	
	Given a distribution $p$ over $\{-1,1\}^n$, we define the \emph{tensor-distribution} $\odot(p)$ of $p$,
	a distribution over $\{-1,1\}^{n^2}$, as 
	the distribution of $\bx\otimes \bx$ with $\bx\sim p$.
	We define $\odot^r(p)$ recursively as $\odot^r (p) = \odot(\odot^{r-1}(p))$
	with $\odot^0(p)=p$, which is a 
	distribution of dimension ${n^{2^r}}$. 
	We call  $\odot^r(p)$ the \emph{$r$-th order tensor distribution} of $p$ and note that, 
	equivalently, $\odot^r(p)$ is the distribution of $\bx^{\otimes 2^r}$ with $\bx\sim p$.
\end{definition}

The following claim follows from the definition
of tensor-distributions since
$\mu(\odot^{r+1}(p))$ is the vectorization of the covariance matrix $\Sigma(\odot^r(p))$.

\begin{claim}\label{clm:meanVar}
	Given $p$ over $\{-1,1\}^n$ and $r\ge 0$, 
	we have $\|\mu(\odot^{r+1}(p))\|_2=\|\Sigma(\odot^r(p))\|_F$.
\end{claim}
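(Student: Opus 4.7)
The plan is to unwind the definition of the tensor-distribution operator $\odot$ and observe that the $\ell_2$-norm of a vectorized matrix equals the Frobenius norm of the matrix itself. This reduces the claim to a direct identity about second moments.

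Concretely, I would let $q = \odot^r(p)$, which is a distribution supported on $\{-1,1\}^m$ for $m = n^{2^r}$, so that by the recursive definition $\odot^{r+1}(p) = \odot(q)$ is the distribution of $\by \otimes \by$ for $\by \sim q$. The coordinates of $\by \otimes \by$ are indexed by pairs $(i,j) \in [m]\times[m]$, with $(\by \otimes \by)_{(i,j)} = \by_i \by_j$. Taking expectations coordinate-wise,
\[
\mu(\odot(q))_{(i,j)} \;=\; \Ex_{\by \sim q}[\by_i \by_j] \;=\; \Sigma(q)_{ij},
\]
using the convention (as in the preceding discussion) that $\Sigma(q)$ denotes the second moment matrix $\Ex_{\by\sim q}[\by\by^T]$. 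Thus $\mu(\odot^{r+1}(p))$ is literally the vectorization of the matrix $\Sigma(\odot^r(p))$.

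Squaring and summing then gives
\[
\big\|\mu(\odot^{r+1}(p))\big\|_2^2 \;=\; \sum_{i,j \in [m]} \Sigma(q)_{ij}^{\,2} \;=\; \big\|\Sigma(\odot^r(p))\big\|_F^{\,2},
\]
and taking square roots yields the claim. There is no real obstacle here: the whole argument is a one-line identity, and the only subtlety is to flag the notational convention that $\Sigma(\cdot)$ denotes the uncentered second moment matrix, which is exactly what makes vectorization identify $\ell_2$ with Frobenius.
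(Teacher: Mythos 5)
Your proof is correct and is essentially the paper's argument: the paper simply asserts that $\mu(\odot^{r+1}(p))$ is the vectorization of $\Sigma(\odot^r(p))$, which is exactly the coordinate-wise identity $\mu(\odot(q))_{(i,j)} = \Ex_{\by\sim q}[\by_i\by_j]$ you spell out, after which $\ell_2$ equals Frobenius. Your flag that $\Sigma(\cdot)$ must be read as the uncentered second-moment matrix is consistent with how the paper uses this notation, so there is nothing further to add.
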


Let $p$ be a distribution over $\{-1,1\}^n$.
The main test statistic used by our algorithm first draws $2q$ samples $\bX_1,\dots, \bX_q$ and $\bY_1,\dots,\bY_q$ independently from $p$, for some $q$ to be specified, and construct
\[	\overline{\bX} = \frac{1}{q}\sum_{i=1}^{q} \bX_i\qquad
\text{and}\qquad\overline{\bY} = \frac{1}{q}\sum_{i=1}^{q}\bY_i.	\]
\noindent
We then set 
$\bZ = \langle \overline{\bX}, \overline{\bY} \rangle$.
We use the following lemma (Lemma 4.1) from \cite{CCKLW20}:

\begin{proposition}\label{prop:expvar}
	Let $p$ be a distribution over $\{-1,1\}^n$. Then we have
	\begin{align*}
	\E\big[\bZ\big] &= 
	\big\|\mu( p)\big\|_2^2 \\[0.5ex]
	\mathbf{Var}\big[\bZ\big] &\leq \frac{1}{q^2}\cdot  \big\|\Sigma( p)\big\|_F^2 + \frac{4}{q} \cdot
	\big\|\mu( p)\big\|_2^2\cdot  \big\|\Sigma( p)\big\|_F.	
	\end{align*}
\end{proposition}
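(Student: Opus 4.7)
The plan is a direct second-moment computation, leveraging the independence of the two sample batches. For the expectation, I would write $\bZ = \sum_{i=1}^n \overline{\bX}_i \overline{\bY}_i$ and use linearity together with the independence between $\overline{\bX}$ and $\overline{\bY}$ to conclude $\E[\bZ] = \sum_i \E[\overline{\bX}_i] \E[\overline{\bY}_i] = \langle \mu(p), \mu(p)\rangle = \|\mu(p)\|_2^2$. This part is immediate.

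For the variance, I would expand
\[
\E[\bZ^2] = \sum_{i,j} \E[\overline{\bX}_i \overline{\bX}_j]\,\E[\overline{\bY}_i \overline{\bY}_j],
\]
again using independence of the $\bX$ and $\bY$ samples. The key identity to apply next is $\E[\overline{\bX}_i \overline{\bX}_j] = \mathrm{Cov}(\overline{\bX}_i,\overline{\bX}_j) + \mu(p)_i \mu(p)_j$, and since $\bX_1,\dots,\bX_q$ are i.i.d., only the diagonal terms survive the double sum defining the covariance of the averages, yielding $\mathrm{Cov}(\overline{\bX}_i,\overline{\bX}_j) = \tfrac{1}{q}\Sigma(p)_{ij}$. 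Substituting and expanding the product gives three types of terms: a pure $\Sigma$–$\Sigma$ term summing to $\tfrac{1}{q^2}\|\Sigma(p)\|_F^2$, two cross terms of the form $\tfrac{1}{q}\sum_{i,j} \Sigma(p)_{ij}\,\mu(p)_i \mu(p)_j = \tfrac{1}{q}\mu(p)^\top \Sigma(p)\mu(p)$, and a pure mean term $\|\mu(p)\|_2^4$, which cancels against $\E[\bZ]^2$.

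The only remaining step is to bound the cross term. I would apply Cauchy–Schwarz twice, i.e.\ $\mu(p)^\top \Sigma(p)\mu(p) \le \|\mu(p)\|_2 \cdot \|\Sigma(p)\mu(p)\|_2 \le \|\mu(p)\|_2^2 \cdot \|\Sigma(p)\|_{\mathrm{op}}$, and then use the standard inequality $\|\Sigma(p)\|_{\mathrm{op}} \le \|\Sigma(p)\|_F$ to arrive at the claimed bound (in fact with a factor of $2$ rather than $4$, which is more than enough). There is no real obstacle here beyond bookkeeping the tensor indices carefully; the most error-prone step is tracking which pairs $(k,\ell)$ in the definition of $\overline{\bX}_i \overline{\bX}_j$ give nonzero covariance contributions, but the i.i.d.\ structure makes this routine.
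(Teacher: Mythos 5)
Your computation is correct and is essentially the same second-moment argument as the proof of this statement (it is imported from \cite{CCKLW20}, Lemma~4.1): expand $\E[\bZ^2]=\sum_{i,j}\E[\overline{\bX}_i\overline{\bX}_j]\,\E[\overline{\bY}_i\overline{\bY}_j]$ using independence of the two batches, separate the $\Sigma$-part from the mean part, and control the cross term by Cauchy--Schwarz via $\mu(p)^{\top}\Sigma(p)\mu(p)\le \|\mu(p)\|_2^2\,\|\Sigma(p)\|_F$; your version is in fact a bit cleaner, giving the constant $2$ instead of $4$ and never needing $\|\Sigma(p)\|_F\ge 1$.

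One small caveat on notation: in this paper $\Sigma(p)$ is the \emph{uncentered} second-moment matrix $\E_{\bx\sim p}[\bx\bx^{\top}]$ (this is what makes Claim~\ref{clm:meanVar}, i.e.\ $\|\Sigma(\odot^r(p))\|_F=\|\mu(\odot^{r+1}(p))\|_2$, true), so your identity $\mathrm{Cov}(\overline{\bX}_i,\overline{\bX}_j)=\tfrac1q\Sigma(p)_{ij}$ should read $\tfrac1q\big(\Sigma(p)_{ij}-\mu(p)_i\mu(p)_j\big)$, equivalently $\E[\overline{\bX}_i\overline{\bX}_j]=\tfrac1q\Sigma(p)_{ij}+\tfrac{q-1}{q}\mu(p)_i\mu(p)_j$. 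Propagating this through your expansion only shrinks the pure-mean and cross terms (the quartic term becomes $\tfrac{(q-1)^2}{q^2}\|\mu(p)\|_2^4\le \|\mu(p)\|_2^4=\E[\bZ]^2$), so the stated bound still follows verbatim; this is a bookkeeping fix, not a gap.
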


We will use the above test statistic for higher order tensor distributions of $p$.
For $r \geq 0$, given $2q$ samples $\bX_1,\dots, \bX_q$ and $\bY_1,\dots,\bY_q$ from $p$, we use them
to obtain $2q$ samples $\bX_1^{(r)},\dots, \bX_q^{(r)}$ and $\bY_1^{(r)},\dots,\bY_q^{(r)}$ from $\odot^r(p)$, by setting $$\bX_i^{(r)} = \bX_i^{\otimes {2^r}}\in \{-1,1\}^{n^{2^r}}.$$ We can then similarly form their averages  $\overline{\bX}^{(r)}, \overline{\bY}^{(r)}$ and set $\bZ^{(r)} = \langle \overline{\bX}^{(r)}, \overline{\bY}^{(r)} \rangle$.

We record the following corollary from the above proposition:

\begin{corollary}\label{coro111}
	Let $p$ be a distribution over $\{-1,1\}^n$ and $r\ge 0$. Then we have
	\begin{align*}
	\E\left[\bZ^{(r)}\right] &= 
	\big\|\mu(\odot^{r}(p))\big\|_2^2\\[0.5ex]
	\mathbf{Var}\left[\bZ^{(r)}\right] &\leq \frac{1}{q^2}\cdot  \big\|\Sigma(\odot^{r}(p))\big\|_F^2 + \frac{4}{q}\cdot \big\|\mu(\odot^{r}(p))\big\|_2^2\cdot  \big\|\Sigma(\odot^{r}(p))\big\|_F .
	\end{align*}
\end{corollary}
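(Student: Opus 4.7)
The plan is to obtain Corollary \ref{coro111} as an immediate application of Proposition \ref{prop:expvar} to the tensor distribution $\odot^r(p)$, which lives on $\{-1,1\}^{n^{2^r}}$. The only thing to check is that the setup of Proposition \ref{prop:expvar} is really met when we define our samples by $\bX_i^{(r)}=\bX_i^{\otimes 2^r}$ rather than by drawing $2q$ fresh samples directly from $\odot^r(p)$.

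First I would observe that by the very definition of $\odot^r(p)$, a draw $\bx^{\otimes 2^r}$ with $\bx\sim p$ has distribution $\odot^r(p)$. Hence each $\bX_i^{(r)}$ is marginally distributed as $\odot^r(p)$, and since $\bX_1,\dots,\bX_q,\bY_1,\dots,\bY_q$ are mutually independent, the derived vectors $\bX_1^{(r)},\dots,\bX_q^{(r)},\bY_1^{(r)},\dots,\bY_q^{(r)}$ are also mutually independent (being deterministic functions of independent random variables). Therefore $\bZ^{(r)}=\langle \overline{\bX}^{(r)},\overline{\bY}^{(r)}\rangle$ is exactly the statistic $\bZ$ of Proposition \ref{prop:expvar} constructed from $2q$ independent samples of the distribution $\odot^r(p)$.

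I would then simply substitute $\odot^r(p)$ for $p$ in the two conclusions of Proposition \ref{prop:expvar} to obtain
$$\E[\bZ^{(r)}] = \|\mu(\odot^r(p))\|_2^2 \quad\text{and}\quad \mathbf{Var}[\bZ^{(r)}]\le \tfrac{1}{q^2}\|\Sigma(\odot^r(p))\|_F^2+\tfrac{4}{q}\|\mu(\odot^r(p))\|_2^2\|\Sigma(\odot^r(p))\|_F,$$
which is the claimed bound. There is no real obstacle here since the statement is essentially just ``Proposition \ref{prop:expvar} applied at level $r$''; the only conceptual point worth a sentence in the write-up is the independence of the $\bX_i^{(r)}$'s, so that we are genuinely in the i.i.d.\ regime required by Proposition \ref{prop:expvar}.
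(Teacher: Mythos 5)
Your proposal is correct and is exactly how the paper obtains the corollary: it is stated as an immediate application of Proposition~\ref{prop:expvar} to the distribution $\odot^r(p)$, with the (implicit) observation that the vectors $\bX_i^{\otimes 2^r}$, $\bY_i^{\otimes 2^r}$ are i.i.d.\ samples from $\odot^r(p)$. Your extra sentence checking independence of the derived samples is the right point to make explicit.
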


Next, we set 
$$
q=C\cdot \max\left\{\frac{k + \sqrt{n}}{\eps^2 n } , \frac{1+ k/\sqrt{n} }{\eps} \right\}
$$
for some sufficiently large constant $C>0$, and
define a sequence $(\tau_r)_{r\ge 0}$ with 
$\tau_0={\eps^2n}/{2}$ and 
\begin{align}\label{eq:tauRecursive}
\tau_r=\frac{1}{5000} \cdot q^2 \tau_{r-1}^2
\end{align}
for each $r\ge 1$.
Setting $a = 1/5000$, we have the following closed form for $\tau_r$:
\begin{equation}\label{eqn:tau}
\tau_r = \frac{1}{aq^2} \left( \frac{aq^2 \eps^2 n}{2}\right)^{2^r}.
\end{equation}




\begin{figure}[t!]	
	\begin{algorithm}[H]
		\caption{Robust Junta Mean Tester}\label{alg:MeanTester}
		\SetKwInOut{Input}{input}
		\SetKwInOut{Output}{output}
		\Input{Sample access to distribution $p$ over $\{-1,1\}^n$ and a distance
			parameter $\eps\in (0,1)$}
		Set $r_0 = \lceil \log \log n\rceil $. \\
		Draw a sequence of $2q$ samples $\bS = (\bX_{1},\dots,\bX_{q}, \bY_{1},\dots,\bY_{q})$  
		from $p$ independently\\
		\For{$r=0,1,2,\dots r_0$}{
			
			Using samples from $\bS$ to compute	
			$\overline{\bX}^{(r)},\overline{\bY}^{(r)}$ and $\bZ^{(r)}$  \\
			\If{$\bZ^{(r)} > \tau_r$ }{
				\Output{\texttt{Not a $k$-Junta}}
			}
			
		}
		\If{All $r_0$ tests pass}{
			\Output{\texttt{Is a $k$-Junta}}	
		}
		
	\end{algorithm}
	\caption{Robust Junta Mean Tester}\label{fig:mean}
\end{figure}

Our main algorithm is presented in Figure \ref{fig:mean} and we prove Theorem~\ref{thm:MeanTesting++} in the rest of the section. We divide the proof of correctness into a soundness and completeness case. The two cases are addressed in Sections \ref{sec:completenes} and \ref{sec:soundness} respectively,
where we prove the following two lemmas:

\begin{lemma}[Soundness]\label{lem:soundness}
	Suppose $p$ is a distribution over $\{-1,1\}^n$ satisfying $\|\mu(p)\|_2 \geq  \eps \sqrt{n}$.
	Then there exists an $r \in \{0,1,\dots,r_0\}$ such that
	\[	\bpr{ \bZ^{(r)} > \tau_r } \geq \frac{2}{3}.	\]
\end{lemma}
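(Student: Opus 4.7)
\medskip

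The plan is to proceed by contradiction: assume that for every $r \in \{0, 1, \dots, r_0\}$, the test fails, meaning $\Pr[\bZ^{(r)} > \tau_r] < 2/3$, and derive a contradiction. The key quantities to track across iterations are $M_r \eqdef \|\mu(\odot^r(p))\|_2^2$ and $S_r \eqdef \|\Sigma(\odot^r(p))\|_F$, which are linked by Claim~\ref{clm:meanVar} through the crucial identity $M_{r+1} = S_r^2$. I will prove by induction on $r$ that under the contradiction assumption, $M_r \geq 2\tau_r$ for all $r \in \{0, 1, \dots, r_0 + 1\}$.

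For the base case, the hypothesis of the lemma gives $M_0 = \|\mu(p)\|_2^2 \geq \eps^2 n = 2\tau_0$. For the inductive step, suppose $M_r \geq 2\tau_r$ and the round-$r$ test fails. Since $\E[\bZ^{(r)}] = M_r \geq 2\tau_r$, Chebyshev's inequality gives
\[
\frac{1}{3} < \Pr\!\big[\bZ^{(r)} \leq \tau_r\big] \leq \Pr\!\big[|\bZ^{(r)} - M_r| \geq M_r/2\big] \leq \frac{4\, \mathbf{Var}(\bZ^{(r)})}{M_r^2},
\]
so $\mathbf{Var}(\bZ^{(r)}) > M_r^2/12$. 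Plugging in the variance bound from Corollary~\ref{coro111}, at least one of the two summands $S_r^2/q^2$ and $4M_rS_r/q$ must exceed $M_r^2/24$, and in either case one obtains $S_r \geq qM_r/96$. Using $M_{r+1} = S_r^2$ and $M_r \geq 2\tau_r$, this yields
\[
M_{r+1} \geq \frac{q^2 M_r^2}{96^2} \geq \frac{4q^2 \tau_r^2}{9216} > \frac{q^2\tau_r^2}{2500} = 2\tau_{r+1},
\]
where the final equality uses the recursion (\ref{eq:tauRecursive}) with $a = 1/5000$. This closes the induction.

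Finally, the contradiction comes from the trivial bound $M_{r_0+1} \leq n^{2^{r_0+1}}$, since $\odot^{r_0+1}(p)$ is supported on a Boolean cube of dimension $n^{2^{r_0+1}}$ and its mean vector has entries in $[-1,1]$. Using the closed form (\ref{eqn:tau}), the inequality $2\tau_{r_0+1} > n^{2^{r_0+1}}$ reduces to
\[
\left(\tfrac{aq^2\eps^2}{2}\right)^{2^{r_0+1} - 1} > \frac{1}{\eps^2}.
\]
Since the setting of $q$ in (\ref{eq:q-setting}) gives $q\eps \geq C(1 + k/\sqrt{n}) \geq C$, the base $b \eqdef aq^2\eps^2/2$ is bounded below by a constant that we can make as large as desired (say $b \geq 4$) by taking the universal constant $C$ sufficiently large. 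With $r_0 = \lceil \log\log n\rceil$, the exponent satisfies $2^{r_0+1} - 1 \geq 2\log n - 1$, and so $b^{2^{r_0+1}-1} \geq n^{\Omega(1)}$ grows polynomially in $n$, which exceeds $1/\eps^2$ in the parameter regime where the test is meaningful (a mild condition like $\eps \gtrsim 1/\poly(n)$, automatically satisfied when the sample complexity $q$ is nontrivial). This contradicts $M_{r_0+1} \geq 2\tau_{r_0+1}$, completing the proof.

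The main technical burden is verifying that the constants line up throughout: the factor $96^2$ coming from the Chebyshev/variance manipulation must be compatible with the $1/5000$ in the recursion for $\tau_r$ (which it is, since $4/96^2 > 1/5000$), and the choice of $r_0 = \lceil \log\log n\rceil$ together with a sufficiently large $C$ in $q$ must force the doubly-exponentially growing $\tau_{r_0+1}$ to dominate the dimension upper bound $n^{2^{r_0+1}}$.
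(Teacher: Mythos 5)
Your overall strategy is the same as the paper's: proceed by contradiction, show by induction (via Chebyshev and Corollary~\ref{coro111}, together with $\|\mu(\odot^{r+1}(p))\|_2=\|\Sigma(\odot^r(p))\|_F$ from Claim~\ref{clm:meanVar}) that $\|\mu(\odot^r(p))\|_2^2\ge 2\tau_r$ for all $r\le r_0+1$, and then contradict the trivial bound $\|\mu(\odot^{r_0+1}(p))\|_2^2\le n^{2^{r_0+1}}$. Your inductive step is correct and your constant bookkeeping ($S_r\ge qM_r/96$, $4/96^2>1/5000$) is fine, essentially reproving the paper's Lemma~\ref{lem:soundinduction}.

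The gap is in the final contradiction. You reduce the goal to $(aq^2\eps^2/2)^{2^{r_0+1}-1}>1/\eps^2$ (correct), but then only use $q\eps\ge C$, which leaves the residual factor $1/\eps^2$ unaccounted for; you dispose of it by assuming $\eps\gtrsim 1/\poly(n)$ and asserting this is ``automatically satisfied when the sample complexity $q$ is nontrivial.'' That justification is not valid: the lemma (and Theorem~\ref{thm:MeanTesting++}) is stated for every $\eps\in(0,1)$ with no lower bound in terms of $n$ --- for very small $\eps$ the algorithm simply uses a larger $q$, and the statement must still hold. The fix is to use the \emph{other} branch of the maximum in (\ref{eq:q-setting}): since $q\ge C(k+\sqrt n)/(\eps^2 n)\ge C/(\eps^2\sqrt n)$, you can bound one factor of the product by $aq^2\eps^2/2\ge aC^2/(2\eps^2 n)$ and the remaining $2^{r_0+1}-2$ factors by $aC^2/2$ via $q\eps\ge C$; the single factor cancels the $1/\eps^2$ on the right, and the requirement becomes $(aC^2/2)^{2^{r_0+1}-1}>n$, which holds for $C$ large since $2^{r_0+1}-1\ge 2\log_2 n-1$. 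This is exactly how the paper's proof splits $q^{2^{r_0+2}-2}$ using both lower bounds on $q$, and it removes any hidden assumption on $\eps$. With that one-line repair your argument is complete and matches the paper's.
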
  
\begin{lemma}[Completeness]\label{lem:smallk}
	Suppose $p$ is a $k$-junta distribution over $\{-1,1\}^n$
	with $\|\mu(p)\|_2 \leq  {\eps} \sqrt{n}/100$. 
	Then for every $r \in \{0,1,\dots,r_0\}$, we have 	
	\[	\bpr{ \bZ^{(r)} > \tau_r } \leq \frac{1}{25}  \cdot \left(\frac{1}{2}\right)^{2^r-1}	.\]
\end{lemma}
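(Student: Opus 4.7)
The plan is to fix an arbitrary $r \in \{0, 1, \ldots, r_0\}$ and apply Chebyshev's inequality to $\bZ^{(r)}$:
\[ \Pr\bigl[\bZ^{(r)} > \tau_r\bigr] \leq \frac{\mathbf{Var}[\bZ^{(r)}]}{(\tau_r - M_r)^2}, \qquad M_r := \E\bigl[\bZ^{(r)}\bigr] = \bigl\|\mu(\odot^r(p))\bigr\|_2^2. \]
It suffices to show (a) $M_r \leq \tau_r/2$, and (b) $\mathbf{Var}[\bZ^{(r)}] \leq \tfrac{1}{100}(1/2)^{2^r-1}\tau_r^2$. By Corollary~\ref{coro111} together with Claim~\ref{clm:meanVar}, both of these reduce to quantitative upper bounds on the two consecutive moments $M_r$ and $M_{r+1}$, so the entire argument reduces to controlling the sequence $(M_r)_r$.

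The key moment estimate exploits the junta structure of $p$. Letting $S \subseteq [n]$ with $|S| \leq k$ be the set of relevant coordinates, decompose $\langle \bx, \bx'\rangle = \bT_1 + \bT_2$ with
\[ \bT_1 = \sum_{i \in S} \bx_i \bx'_i \qquad \text{and} \qquad \bT_2 = \sum_{i \notin S} \bx_i \bx'_i. \]
Because each irrelevant coordinate of a junta distribution is independent and uniform in $\{-1,+1\}$, $\bT_2$ is a sum of $n - |S|$ i.i.d.\ Rademacher random variables, independent of $\bT_1$, while $|\bT_1| \leq k$ pointwise. Claim~\ref{clm:meanVar} expresses $M_r = \E[(\bT_1 + \bT_2)^{2^r}]$, so Minkowski's inequality combined with Khintchine's inequality yields
\[ M_r^{1/2^r} \leq \bigl\|\bT_1\bigr\|_{L^{2^r}} + \bigl\|\bT_2\bigr\|_{L^{2^r}} \leq k + O\bigl(\sqrt{2^r \cdot n}\bigr), \quad r \geq 1. \]
For $r = 0$, the hypothesis gives the stronger bound $M_0 = \|\mu(p)\|_2^2 \leq \eps^2 n/10^4 = \tau_0/5000$ directly.

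Next I match these estimates against the explicit form $\tau_r = u^{2^r}/(aq^2)$ with $u = aq^2 \eps^2 n/2$ and $a = 1/5000$. The second lower bound in the definition of $q$ gives $q\eps\sqrt{n} \geq C(k+\sqrt{n})$, hence $u \geq aC^2(k+\sqrt{n})^2/2$ (the first lower bound handles the regime $\eps\sqrt{n} < 1$ analogously). Since the range is cut off at $r_0 = \lceil \log\log n\rceil$ so $\sqrt{2^r n} \leq \sqrt{n \log n}$, for $n$ larger than a fixed absolute constant the moment bound gives
\[ \frac{M_r^{1/2^r}}{u} \;\lesssim\; \frac{k + \sqrt{2^r n}}{aC^2(k+\sqrt{n})^2} \;\leq\; \frac{1}{4}, \]
so that $M_r/\tau_r \leq aq^2 \cdot (1/4)^{2^r}$, which together with the $r=0$ case establishes (a). Substituting $M_r$ and $M_{r+1}$ into $\mathbf{Var}[\bZ^{(r)}] \leq M_{r+1}/q^2 + 4 M_r\sqrt{M_{r+1}}/q$ from Corollary~\ref{coro111} and using the recursion $\tau_{r+1} = q^2 \tau_r^2/5000$, each term in the variance collapses to $\tau_r^2$ times a factor shrinking geometrically in $2^r$, yielding (b), after which Chebyshev delivers the claim.

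The main obstacle is the doubly-exponentially small failure probability $(1/2)^{2^r-1}/25$, which leaves no multiplicative slack of order $2^{\Theta(2^r)}$ anywhere in the variance estimate and so demands that every moment estimate be tight up to a universally bounded factor per ``layer'' of the recursion. Two features make this possible: (i) the junta structure is exactly what forces $\bT_2$ to be an independent Rademacher sum and hence lets Khintchine's $\sqrt{p}$ dependence enter (without it one could only bound moments through the much weaker pointwise estimate $\|\bT_2\|_\infty \leq n$); and (ii) the cutoff $r \leq \lceil \log\log n\rceil$, without which the factor $\sqrt{2^r n}$ would no longer be dominated by $\sqrt{n}$ up to polylogarithmic factors and the geometric decay in $M_r/\tau_r$ would break down.
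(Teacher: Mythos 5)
Your high-level framework (Chebyshev applied to $\bZ^{(r)}$, with mean and variance controlled via Corollary~\ref{coro111}) is the paper's, but your moment estimate is a genuinely different and correct route. The identity $\|\mu(\odot^r(p))\|_2^2=\Ex_{\bx,\bx'\sim p}[\langle\bx,\bx'\rangle^{2^r}]$ holds (it follows from the mixed-product property, as in the paper's running-time computation, not from Claim~\ref{clm:meanVar}, which relates $\mu(\odot^{r+1}(p))$ to $\Sigma(\odot^r(p))$), and splitting $\langle\bx,\bx'\rangle=\mathbf{T}_1+\mathbf{T}_2$ into the junta part ($|\mathbf{T}_1|\le k$) and an independent Rademacher sum $\mathbf{T}_2$, then applying Minkowski and Khintchine, does give $\|\mu(\odot^r(p))\|_2^{2/2^r}\le k+O(\sqrt{2^r n})$. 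This is of the same strength as Proposition~\ref{prop:mean}, which the paper proves instead by counting the nonzero entries of $\mu(\odot^r(p))$ combinatorially; your argument is arguably cleaner and is a fine substitute for that proposition.

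The genuine gap is in the calibration against $\tau_r$. You lower bound $u=aq^2\eps^2 n/2$ using only the second branch of the definition of $q$ (i.e.\ $q\eps\sqrt n\ge C(k+\sqrt n)$), and you end with $M_r/\tau_r\le aq^2(1/4)^{2^r}$, claiming this "establishes (a)". It does not: the prefactor $aq^2$ is unbounded --- $q$ scales like $(k+\sqrt n)/(\eps^2 n)$ or $(k+\sqrt n)/(\eps\sqrt n)$ and blows up as $\eps\to 0$ --- while $(1/4)^{2^r}$ is only $1/16$ at $r=1$ and never smaller than roughly $n^{-4}$ since $2^{r_0}\le 2\log n$. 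So for small $\eps$ the bound is vacuous and does not even give $M_r\le\tau_r/2$; worse, your own step (b) needs the stronger, $q$-free bounds $M_r\le\frac{1}{100}2^{-(2^r-1)}\tau_r$ and the analogue at level $r+1$ (merely $M_r\le\tau_r/2$ would only yield a constant-probability bound, not $\frac{1}{25}2^{-(2^r-1)}$), and the same stray powers of $q$ reappear in both terms of $\mathbf{Var}[\bZ^{(r)}]\le M_{r+1}/q^2+4M_r\sqrt{M_{r+1}}/q$. The repair is exactly the paper's balancing of the two lower bounds on $q$ with the right multiplicities: with $\beta=\max(n,k^2)$ and $R=2^r\ge 2$, one uses $q^{2R-2}\ge\big(C\sqrt\beta/(\eps\sqrt n)\big)^{2R-4}\big(C\sqrt\beta/(\eps^2 n)\big)^2$, which makes $(\eps^2 n)^R$ and the residual $q^{-2}$ in $\tau_r=\frac{1}{aq^2}(aq^2\eps^2 n/2)^R$ cancel exactly and yields $\tau_r\ge\frac12(aC^2\beta/2)^{R-1}$, free of $q$ and $\eps$; your Khintchine bound then closes the argument. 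Your parenthetical that the first lower bound "handles the regime $\eps\sqrt n<1$ analogously" does not fix this: it is not a case split between regimes but a simultaneous use of both inequalities with $r$-dependent exponents (at $r=1$ one needs the $1/(\eps^2 n)$ bound for both factors of $q$, at large $r$ the $1/(\eps\sqrt n)$ bound for almost all of them), and using only one of them in either regime leaves an uncancelled power of $q$.
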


\begin{proofof}{Theorem~\ref{thm:MeanTesting++}}
	The soundness case follows directly from Lemma \ref{lem:soundness}.
	For completeness,  
	we can apply a union bound over all $r \in \{0,1,\dots,r_0\}$, giving 
	\begin{equation}
	\begin{split}
	\bpr{\bZ^{(r)} > \tau_r\ \text{for some $r\in \{0,1,\ldots,r_0\}$}}  \leq \sum_{r \geq 0} \frac{1}{25}  \cdot \left(\frac{1}{2}\right)^{2^r-1}  
	<1/3. 
	\end{split}
	\end{equation}
	
	The sample complexity of the algorithm follows directly from our choice of $q$ in (\ref{eq:q-setting}).	
	Finally, we demonstrate that $\bZ^{(r)}$ from the $r$-th order tensor distribution can be 
	computed in polynomial time in $n$ and $q$ --- much faster than the naive $O(n^{2^r})$ time 
	required to compute samples $\bX^{(r)}_i$ from $\odot^r(p)$ using samples $\bX_i$ from $p$. 
	To do this, we will use the following \textit{mixed-product} property of tensor products.
	
	\begin{fact}[\cite{van2000ubiquitous}]
		If $A,B,C,D$ are matrices with such that the products $AC$ and $BD$ are well-defined, then we have $(A \otimes B)(C \otimes D) = (AC \otimes BD)$.
	\end{fact}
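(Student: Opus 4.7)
The plan is to verify the identity by a direct entrywise computation from the definition of the Kronecker product. First I would check that the dimensions on both sides are compatible: if $A \in \mathbb{R}^{m_1 \times n_1}$, $B \in \mathbb{R}^{m_2 \times n_2}$, $C \in \mathbb{R}^{n_1 \times p_1}$, and $D \in \mathbb{R}^{n_2 \times p_2}$, then both $(A \otimes B)(C \otimes D)$ and $(AC) \otimes (BD)$ have shape $m_1 m_2 \times p_1 p_2$, so the asserted equality at least makes sense.

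Next I would use the standard block indexing for Kronecker products, writing row indices as pairs $(i_1, i_2)$ with $i_1 \in [m_1]$, $i_2 \in [m_2]$, and column indices as pairs $(j_1, j_2)$ with $j_1 \in [p_1]$, $j_2 \in [p_2]$. On the right, the defining property of the Kronecker product gives that the $((i_1,i_2),(j_1,j_2))$-entry of $(AC) \otimes (BD)$ is the product $(AC)_{i_1,j_1} (BD)_{i_2,j_2}$; expanding each inner matrix product yields a double sum over $k_1 \in [n_1]$ and $k_2 \in [n_2]$ of $A_{i_1,k_1} C_{k_1,j_1} B_{i_2,k_2} D_{k_2,j_2}$.

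On the left, the corresponding entry of $(A \otimes B)(C \otimes D)$ is by definition a single sum over intermediate pairs $(k_1,k_2)$ of $(A \otimes B)_{(i_1,i_2),(k_1,k_2)} (C \otimes D)_{(k_1,k_2),(j_1,j_2)}$, and applying the Kronecker definition once more factors each summand as $A_{i_1,k_1} B_{i_2,k_2} C_{k_1,j_1} D_{k_2,j_2}$. The two expressions then agree term by term, completing the verification.

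There is no real obstacle here; the only subtlety is the bookkeeping of block indices, which is why this identity is typically invoked as folklore (and why the excerpt cites it rather than proving it). The deeper significance, which I would flag separately once the identity is established, is that iterated application of this mixed-product rule is exactly what permits efficient computation of $\bZ^{(r)} = \langle \overline{\bX}^{(r)}, \overline{\bY}^{(r)} \rangle$: because $\bX_i^{(r)} = \bX_i^{\otimes 2^r}$, an inner product of $n^{2^r}$-dimensional vectors can be reduced to arithmetic on the original $n$-dimensional samples, bypassing the prohibitive $O(n^{2^r})$ cost and delivering the $O(q^2 n)$ runtime claimed in Theorem~\ref{thm:MeanTesting++}.
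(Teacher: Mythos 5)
Your verification is correct: the entrywise computation with block indices $(i_1,i_2)$, $(k_1,k_2)$, $(j_1,j_2)$ is the standard proof of the mixed-product property, and the dimension check and term-by-term matching are exactly right. The paper itself offers no proof of this statement -- it is invoked as a cited fact from the linear-algebra literature -- so there is nothing to compare against; your argument fills in the routine details correctly, and your closing remark about how iterated application of the identity reduces $\langle \bX_i^{\otimes 2^r}, \bY_j^{\otimes 2^r}\rangle$ to $\langle \bX_i, \bY_j\rangle^{2^r}$ matches precisely how the paper uses the fact in the running-time analysis of the robust mean tester.
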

	
	Let $X_1,\ldots,X_q,Y_1,\ldots,Y_q$ be strings in $\{-1,1\}^n$.
	Then our target $Z^{(r)}$ can be written as
	\begin{equation}
	\begin{split}
	Z^{(r)} &= \frac{1}{q^2} \left\langle \sum_{i=1}^q X_i^{\otimes {2^r}} , \sum_{i=1}^q Y_i^{\otimes {2^r}}\right\rangle \\ 
	& = \frac{1}{q^2} \sum_{1 \leq i , j \leq q} \left(X_{i}^{\otimes {2^r}}\right)^T Y_{j}^{\otimes {2^r}} \\
	& = \frac{1}{q^2} \sum_{1 \leq i , j \leq q} \left(X_{i }^T \otimes X_{i }^T \otimes \cdots \otimes X_{i}^T\right) \left(Y_{j} \otimes Y_{j} \otimes \cdots \otimes Y_{j}\right) \\
	& = \frac{1}{q^2} \sum_{1 \leq i,j \leq q} \left(X_{i }^TY_{j} \otimes X_{i }^T Y_{j}\otimes \cdots \otimes X_{i }^T Y_{j}\right)\\ 
	& = \frac{1}{q^2} \sum_{1 \leq i,j \leq q} \langle X_{i }, Y_{j} \rangle^{2^r}. 
	\end{split}
	\end{equation}
	To compute $Z^{(r)}$ for each $r=0,1,\dots,r_0$, we can first construct the $q\times q$ matrix  $M$
	with $M_{i,j} =  \langle X_{i}, Y_{j} \rangle$ in time $O(q^2 n)$.  Then each $Z^{(r)}$ is just the average of $2^r$-th power of entries of $M$, namely $Z^{(r)} = (1/q^2)\cdot \sum_{i,j} M_{i,j}^{2^r}$.
	The time needed to compute $Z^{(r)}$ from $M$ for $r=0,1,\dots,r_0$ is $o(q^2 n)$ (recall that 
	$r=\lceil \log \log n\rceil$).
	This completes the analysis of running time of our algorithm.  \end{proofof}

\subsection{Soundness: Proof of Lemma \ref{lem:soundness}}\label{sec:soundness}
We first prove the following lemma, which we will iteratively apply in the soundness case.

\begin{lemma}\label{lem:soundinduction}
	Let $p$ be a distribution supported on $\{-1,1\}^n$ and $r\ge 0$. 
	Suppose that
	$$\|\mu(\odot^r (p))\|_2^2 \geq 2 \tau$$ for some $\tau>0$ and
	$\pr{\bZ^{(r)}  \leq \tau  } \geq 1/3$.
	Then we have $\|\mu(\odot^{r+1} (p))\|_2^2 \geq  (\tau q/24)^2 .$
\end{lemma}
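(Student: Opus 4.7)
The plan is to translate the hypothesis ``$\bZ^{(r)}$ fails to exceed $\tau$ with non-trivial probability'' into a variance lower bound, and then play this against the variance upper bound from Corollary~\ref{coro111} to force $\|\Sigma(\odot^r(p))\|_F$ to be large. Let $M_r = \|\mu(\odot^r(p))\|_2$ and $F_r = \|\Sigma(\odot^r(p))\|_F$. By Claim~\ref{clm:meanVar}, $F_r = \|\mu(\odot^{r+1}(p))\|_2$, so it suffices to show $F_r \geq q\tau/24$.

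First I would lower-bound $\mathbf{Var}(\bZ^{(r)})$. By Corollary~\ref{coro111}, $\Ex[\bZ^{(r)}] = M_r^2 \geq 2\tau$, so $\{\bZ^{(r)} \leq \tau\}$ is a downward deviation of at least $M_r^2 - \tau \geq M_r^2/2$ from the mean. Applying Cantelli's one-sided inequality to the hypothesis $\Prx[\bZ^{(r)} \leq \tau] \geq 1/3$ gives
\[
\frac{1}{3} \;\leq\; \frac{\mathbf{Var}(\bZ^{(r)})}{\mathbf{Var}(\bZ^{(r)}) + (M_r^2 - \tau)^2},
\]
which rearranges to $\mathbf{Var}(\bZ^{(r)}) \geq (M_r^2 - \tau)^2/2 \geq M_r^4/8$.

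Next I would combine this with the variance upper bound $\mathbf{Var}(\bZ^{(r)}) \leq F_r^2/q^2 + 4M_r^2 F_r/q$ from Corollary~\ref{coro111} and argue by contradiction. If $F_r < q\tau/24$, then $F_r^2/q^2 < \tau^2/576$ and $4M_r^2 F_r/q < M_r^2 \tau/6$, so
\[
\frac{M_r^4}{8} \;\leq\; \mathbf{Var}(\bZ^{(r)}) \;<\; \frac{\tau^2}{576} + \frac{M_r^2 \tau}{6},
\]
which rearranges to $M_r^2\bigl(M_r^2/8 - \tau/6\bigr) < \tau^2/576$. But $M_r^2 \geq 2\tau$ gives $M_r^2/8 - \tau/6 \geq \tau/4 - \tau/6 = \tau/12$, so the left side is at least $2\tau \cdot \tau/12 = \tau^2/6$, contradicting $\tau^2/6 < \tau^2/576$. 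Hence $F_r \geq q\tau/24$, as desired.

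The main obstacle is keeping the constants tight. Using standard (two-sided) Chebyshev in place of Cantelli would yield only $\mathbf{Var}(\bZ^{(r)}) \geq (M_r^2-\tau)^2/3$, and the same rearrangement would collapse to $M_r^2(M_r^2 - 2\tau)/12 < \tau^2/576$, which does not contradict anything when $M_r^2$ is close to $2\tau$. Cantelli's one-sided bound gives exactly the extra factor needed to make the $q\tau/24$ constant work, which in turn is tight enough for the recursion $\tau_r = q^2\tau_{r-1}^2/5000$ to ensure $(q\tau_r/24)^2 \geq 2\tau_{r+1}$ at each step of the induction used in Lemma~\ref{lem:soundness}.
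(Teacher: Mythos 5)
Your proof is correct and follows essentially the same route as the paper: both play the deviation event $\{\bZ^{(r)} \le \tau\}$ against the mean and variance bounds of Corollary~\ref{coro111} (together with Claim~\ref{clm:meanVar}), the paper via plain Chebyshev plus a two-case split, you via Cantelli plus a direct contradiction. Your Cantelli refinement in fact cleanly yields the stated constant $q\tau/24$, whereas the paper's case analysis as written only gives $F_r \ge q M_r^2/96 \ge q\tau/48$ in one case --- a harmless slip, since the recursion in Lemma~\ref{lem:soundness} only needs $(q\tau_r/48)^2 \ge q^2\tau_r^2/2500 = 2\tau_{r+1}$.
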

\begin{proof}
	By Proposition \ref{prop:expvar}, we have $\E\big[{\bZ^{(r)}}\big] = \|\mu(\odot^r (p))\|_2^2 \ge 2 \tau$. Thus 
	\begin{equation}\label{eqn:soundnesstau}
	\begin{split}
	\frac{1}{3} &\leq \Pr\big[{\bZ^{(r)} \leq \tau}\big] \leq \Pr\Bigg[{ \left|\bZ^{(r)} - \E\big[{\bZ^{(r)}}\big] \right| \geq \frac{\E\big[{Z^{(r)} }\big]}{2}  }\Bigg]\\&\leq \frac{4}{\|\mu(\odot^r(p)) \|_2^4}\left(\frac{1}{q^2}\cdot	\|\mu(\odot^{r+1} (p) ) \|_2^2 + \frac{4}{q}\cdot \|\mu(\odot^{r} (p) ) \|_2^2  \cdot  \|\mu(\odot^{r+1} (p) ) \|_2 \right) ,
	\end{split}
	\end{equation}
	where in the last inequality we applied Chebyshev's inequality. It follows that at least one of the two terms on the last line of equation (\ref{eqn:soundnesstau}) must be greater than $1/6$. Thus $	\|\mu(\odot^{r+1} (p) ) \|_2^2  \geq \tau^2 q^2 / 3$ or $	\|\mu(\odot^{r+1} (p) ) \|_2 \geq \tau q/24$, from which the lemmas follows.
\end{proof}

\begin{proofof}{Lemma~\ref{lem:soundness}}
	Assume for the sake of contradiction that  
	$\pr{\bZ^{(r)}  \leq \tau_r  } \geq 1/{3}$ for every $r =0,1, \dots,r_0$. 
	We apply Lemma \ref{lem:soundinduction} to prove by induction on $r$ that 	
	$	\|\mu(\odot^{r} (p) ) \|_2^2 \geq 2 \tau_r$ for every $r=0,1,2,\dots,r_0+1$.
	The base case of $r=0$ follows from the choice of $\tau_0=\eps^2 n/2$ and the assumption
	that $\|\mu(\odot^{0}(p))\|_2=\|\mu(p)\|_2\ge \eps\sqrt{n}$.
	For the induction step, we have by the inductive hypothesis that 
	$	\|\mu(\odot^{r} (p) ) \|_2^2 \geq 2 \tau_r$ for some $r\le r_0$.
	It follows from Lemma \ref{lem:soundinduction} and  $\pr{\bZ^{(r)}  \leq \tau_r  } \geq 1/{3}$
	that 
	$$	\big\|\mu(\odot^{r+1} (p) ) \big\|_2^2 \geq \left(\frac{\tau_r q}{24}\right)^2
	\ge \frac{1}{2500}\cdot q^2\tau_r^2=2\tau_{r+1}.$$
	
	Now to get a contradiction, we note that	
	\[		\big\|\mu(\odot^{r_0+1} (p) ) \big\|_2^2 \geq  \frac{2}{aq^2} \left(\frac{a q^2 \eps^2 n}{2}\right)^{2^{r_0+1}}=   q^{2^{r_0 +2} - 2} \cdot \left(\eps \sqrt{n} \right)^{2^{r_0+2}} \cdot \left(\frac{a}{2}\right)^{2^{r_0+1} - 1}.	\]
	Given that $q\ge C/\eps$ and $q\ge C/(\eps^2 \sqrt{n})$ in (\ref{eq:q-setting}),
	we have 
	$$q^{2^{r_0 +2}-2}
	\ge \left(\frac{C}{\eps}\right)^{2^{r_0+2}-4}\cdot \left(\frac{C}{\eps^2\sqrt{n}}\right)^2
	=\left(\frac{ 1}{\eps}\right)^{2^{r_0+2}}\cdot \frac{1}{n}\cdot C^{2^{r_0+2}-2}
	$$
	and thus, 
	$$
	\|\mu(\odot^{r_0+1} (p) ) \|_2^2 \geq n^{2^{r_0+1}}\cdot \frac{1}{n}\cdot  C^{2^{r_0+2}-2}\cdot 
	\left(\frac{a}{2}\right)^{2^{r_0+1}-1},
	$$
	which, after setting $C$ to be a large enough constant and recalling that $r_0=\lceil \log \log n\rceil$, contradicts the fact that we always have $\|\mu(\odot^{r_0+1} (p) ) \|_2^2 \allowbreak \leq n^{2^{r_0+1}}$. This completes the proof of the lemma.
\end{proofof}
\noindent

\subsection{Completeness: Proof of Lemma \ref{lem:smallk}}
\label{sec:completenes}
We will now need the following bound on the mean vector in the completeness case.

\begin{proposition}\label{prop:mean}
	Suppose $p$ is a $k$-junta distribution over $\{-1,1\}^n$.
	Then for each $r\ge 1$ we have 
	\[ 
	\big\|\mu(\odot^r (p))\big\|_2^2 \leq  \left( 2 \cdot \max \{n, k^2\}\cdot  2^r \right)^{2^{r-1} }. \]
\end{proposition}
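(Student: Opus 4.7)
The plan is to recast $\|\mu(\odot^r(p))\|_2^2$ as a single moment of an inner product of two iid samples, and then exploit the junta structure to split this inner product into a bounded piece and an independent Rademacher sum. Letting $m = 2^r$ and $\bx, \by \sim p$ independent, a direct calculation gives
\begin{align*}
\|\mu(\odot^r(p))\|_2^2 \;=\; \sum_{I \in [n]^m} \E_{\bx}[\bx^I]\,\E_{\by}[\by^I] \;=\; \E_{\bx,\by}\!\left[\,\sum_{I \in [n]^m}\prod_{\ell=1}^m \bx_{i_\ell}\by_{i_\ell}\right] \;=\; \E_{\bx,\by}\!\left[\langle \bx, \by\rangle^{m}\right],
\end{align*}
so the task reduces to bounding the $m$-th moment of $\langle \bx, \by\rangle$.

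Next, I would use the structural fact that since $p$ is a $k$-junta distribution on $\{-1,1\}^n$ with (w.r.t. the uniform distribution), its relevant set $K$ has $|K| \le k$, and coordinates outside $K$ are uniform on $\{-1,1\}$ and independent of $\bx_K$. This lets me write $\langle \bx, \by\rangle = A + B$, where $A = \langle \bx_K, \by_K\rangle$ satisfies $|A| \le k$ and $B = \sum_{i \notin K} \bx_i \by_i$ is a sum of at most $n$ iid Rademacher variables, independent of $A$. Expanding $(A+B)^m$ using independence and the vanishing of odd Rademacher moments yields
\begin{align*}
\|\mu(\odot^r(p))\|_2^2 \;=\; \E[(A+B)^m] \;\le\; \sum_{t=0}^{m/2} \binom{m}{2t}\, k^{m-2t}\, \E[B^{2t}].
\end{align*}

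Now I would apply a sub-Gaussian moment bound: $\E[e^{\lambda B}] = \cosh(\lambda)^{n'} \le e^{\lambda^2 n'/2}$ for $n' = |\overline{K}| \le n$, which gives $\E[B^{2t}] \le (2t-1)!!\, n^{t} \le t^t n^t$ (the second inequality is a routine check, with the convention $0^0 = 1$). Using $t \le m/2$ to replace $t^t$ by $(m/2)^t$, the sum folds into a binomial expansion:
\begin{align*}
\|\mu(\odot^r(p))\|_2^2 \;\le\; \sum_{t=0}^{m/2}\binom{m}{2t} k^{m-2t}(mn/2)^{t} \;=\; \tfrac{1}{2}\!\left((k+\sqrt{mn/2})^m + (k-\sqrt{mn/2})^m\right) \;\le\; (k+\sqrt{mn/2})^m.
\end{align*}

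To conclude, write $M = \max(n, k^2)$. Since $r \ge 1$ gives $m \ge 2$, both $k \le \sqrt{M} \le \sqrt{Mm/2}$ and $\sqrt{mn/2} \le \sqrt{Mm/2}$, so $k + \sqrt{mn/2} \le 2\sqrt{Mm/2} = \sqrt{2Mm}$, which yields $\|\mu(\odot^r(p))\|_2^2 \le (2Mm)^{m/2} = (2\max(n,k^2)\cdot 2^r)^{2^{r-1}}$ as required. The only step that requires care is matching the constant in the final inequality: if one uses the looser bound $(2t-1)!! \le (2t)^t$ instead of $t^t$, the resulting $(k+\sqrt{2mn})^m$ is off by a $2^{m/2}$ factor from the target, so the tighter sub-Gaussian constant is essential. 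Beyond that, the argument is a straightforward binomial expansion.
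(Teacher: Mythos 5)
Your argument is correct, and it reaches the stated bound by a genuinely different route than the paper. You compute $\|\mu(\odot^r(p))\|_2^2 = \Ex_{\bx,\by\sim p}[\langle \bx,\by\rangle^{2^r}]$ exactly, split $\langle\bx,\by\rangle = A + B$ using the junta structure ($|A|\le k$, $B$ an independent Rademacher sum of length at most $n$), and finish with even-moment bounds plus a binomial identity. The paper instead works entry-by-entry with the tensor $\mu(\odot^r(p))$: it observes that an entry indexed by a tuple $I\in[n]^{2^r}$ vanishes unless every irrelevant coordinate appears an even number of times in $I$, bounds each surviving entry by $1$ in magnitude, and then counts such tuples combinatorially (choosing which slots hold irrelevant coordinates and pairing them up, which is where the $(n/2)^{i/2} i^{i/2}$-type factors arise). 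The two arguments are morally the pairing count seen analytically versus combinatorially; yours is arguably cleaner and avoids bounding each entry crudely by $1$, while the paper's is entirely elementary counting. One justification in your write-up deserves tightening: the inequality $\Ex[B^{2t}]\le (2t-1)!!\,n^t$ does not follow from the mgf bound $\Ex[e^{\lambda B}]\le e^{\lambda^2 n/2}$ alone with that exact constant (optimizing the mgf loses a factor growing like $\sqrt{t}$, which your own accounting shows you cannot afford at the level of $(2t)^t$ versus $t^t$). The bound itself is true and classical for Rademacher sums --- compare term-by-term in the multinomial expansion with an independent Gaussian sum of the same variance, since $\Ex[\epsilon^{2j}] = 1 \le (2j-1)!! = \Ex[g^{2j}]$, giving $\Ex[B^{2t}]\le \Ex[G^{2t}] = (2t-1)!!\,n^t$ (sharp Khintchine for even integer moments) --- so citing that comparison rather than the mgf closes the only loose step; everything else, including $(2t-1)!!\le t^t$ by AM--GM and the final $k+\sqrt{mn/2}\le\sqrt{2m\max\{n,k^2\}}$ for $m\ge2$, checks out.
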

\begin{proof}
	For $r=0$, the result holds because $ \mu(p) $ is $k$-sparse when $p$ is a $k$-junta distribution. 
	
	Next consider the case when $r>0$. Let $R = 2^{r}$ and let $S\subseteq [n]$ be the set of 
	influential variables with $|S|=k$. (Note that if the number of influential variables is 
	smaller than $k$ we can always add more variables to $S$ to make it size $k$.) 
	Without loss of generality we assume $S=[k]$ and 
	by the definition of $k$-junta distributions, there is 
	a distribution $p'$ over $\{-1,1\}^k$ such that $\bx=(\bx_1,\ldots,\bx_n)\sim p$ can be drawn
	by first drawing $(\bx_1,\ldots,\bx_k)\sim p'$ and then drawing each $\bx_i$, $i>k$,
	independently and uniformly at random from $\{-1,1\}$.
	
	Now we consider the mean vector $\mu(\odot^r(p))$. Note that it has $n^R$ entries and each entry is indexed
	by an $R$-tuple $I=(i_1,\dots,i_R) \in [n]^R$: the entry indexed by $I$ is given by
	$$
	\E_{\bx\sim p} \big[\bx_{i_1}\cdots \bx_{i_R}\big].
	$$
	We define $Q \subseteq [n]^R$ as the set of all $R$-tuples $I=(i_1,\dots,i_R) \in [n]^R$ such that
	every $j \notin S$ appears an even number of times in $I$. 
	Given that every $\bx_j$, $j\notin S$, is drawn independently from other variables and is uniform over
	$\{-1,1\}$, we have that 
	entries of $\mu(\odot^r (p))$ are zero outside of those indexed by tuples in $Q$.
	On the other hand, every nonzero entry of $\mu(\odot^r (p))$ trivially 
	has magnitude no larger than $1$.
	As a result, $\|\mu(\odot^r (p)) \|_2^2 \le |Q|$ and we bound $|Q|$ in the rest of the proof.
	
	To this end, let $Q_i \subseteq Q$ be the set of $I=(i_1,\dots,i_R) \in Q$ such that $\{  \ell \in [R] \; | \; i_\ell \notin S \}| = i$. Then $$|Q_i|\le \binom{R}{i}\cdot  k^{R-i}\cdot  L_i,$$ where $L_i$ is the number of ordered $i$-tuples, each entry selected from $[n]$ (note that we relaxed it from $[n]\setminus S$ to $[n]$ to
	simplify the presentation since this can only make $L_i$ bigger), in which every $j\in [n]$ appears an even number of times. Note that $L_i$ is trivially $0$ when $i$ is odd. We can bound $L_i$ by noting that to pick   a tuple $(i_1,\dots,i_R) \in Q_j$, we can first pick $i_1 \in [n]$, and then pick an index $i_j$ for some $j > 1$ and set $i_j = i_1$. Next, we pick $i_{2}\in [n]$ (or $i_3$ if $i_2$ was chosen to be $i_j$ in the first round) and then pick an unused index $i_{j'}$ for some $j' > 2$ and set $i_{j'}=i_2$, and so on. Thus,
	\[	L_i \leq n(i-1) \cdot n(i-3) \cdots n 
	=\left(\frac{n}{2}\right)^{i/2}\cdot \frac{i!}{(i/2)!}
	\leq \left(\frac{n}{2}\right)^{i/2} \cdot i^{i/2}\]
	when $i$ is even. 
	Using that $|Q_{0}| = k^R$, we have
	\[|Q| \leq\sum_{\ell=0}^{R/2}| Q_{2 \ell} | \leq k^R + \sum_{\ell=1}^{R/2} \binom{R}{2\ell} \cdot\left( n \ell\right)^{\ell} 
	\cdot k^{R-2\ell}.	\]
	Letting $\alpha = \max\{k^2, n \}$ so that $k\le \sqrt{\alpha}$ and $n\le \alpha$, we have
	\begin{align*}
	|Q|&\le \alpha^{R/2}+\sum_{\ell=1}^{R/2} \binom{R}{2\ell}\cdot \ell^\ell\cdot \alpha^{R/2} \le \alpha^{R/2}\left(1+(R/2)^{R/2}\cdot \sum_{\ell=1}^{R/2} \binom{R}{2\ell}\right) 
	\le \alpha^{R/2}\cdot (R/2)^{R/2}\cdot 2^R,
	\end{align*}
	which completes the proof.\end{proof}

We now start the proof of Lemma \ref{lem:smallk}.

\begin{proofof}{Lemma~\ref{lem:smallk}}
	Again, we set $R = 2^r$. We show for each $r\in \{0,1,\ldots,r_0\}$ that
	\begin{equation}\label{eq:mainmain}\E\left[{ \bZ^{(r)}} \right]= \big\|\mu(\odot^r (p))\big\|_2^2 \leq  \frac{1}{100}\left(\frac{1}{2}\right)^{R - 1}\cdot  \tau_r 
	\qquad\text{and}\qquad
	\textbf{Var}\left[ \bZ^{(r)} \right] \leq \frac{1}{100}\left(\frac{1}{2}\right)^{R - 1} \tau_r^2.
	\end{equation}
	Assuming this, by Chebyshev's inequality we have
	\begin{equation}
	\begin{split}
	\bpr{  \bZ^{(r)}  > \tau_r } \leq 	\bpr{ \left|\bZ^{(r)} -\E\left[{ \bZ^{(r)}}\right]\right|  > \tau_r/2 }   \leq 4\cdot\frac{ \textbf{Var}[\bZ^{(r)}]}{\tau_r^2} 
	\leq \frac{1}{25}  \cdot \left(\frac{1}{2}\right)^{R-1} 
	\end{split}
	\end{equation}
	and this finishes the proof of the lemma.

	We start with the case when $r=0$.
	The first part of (\ref{eq:mainmain}) follows trivially from the assumption that
	$\|\mu(p)\|_2 \leq  {\eps} \sqrt{n}/100$,
	and the second part follows from Lemma \ref{prop:mean}.
	To see the latter, we have from Claim \ref{clm:meanVar} and Lemma \ref{prop:mean} that
	$$
	\textbf{Var}\left[ \bZ^{(0)} \right]\le \frac{1}{q^2}\cdot \left(4\cdot \max(n,k^2)\right)
	+\frac{4}{q}\cdot \frac{\eps^2n}{10000}\cdot \sqrt{\left(4\cdot \max(n,k^2)\right)}
	\le \frac{1}{100}\left(\frac{1}{2}\right)^{R-1}\cdot \tau_1^2 ,
	$$
	where the last inequality used the choice of $\tau_1$, $\eps\le 1$, and 
	$q\ge C (k+\sqrt{n})/(\eps^2 n)$ for some sufficiently large constant $C$.
	
	Moving to the general case when $r\ge 1$, we have $R=2^r\ge 2$.
	Letting $\beta=\max(n,k^2)$ and using $q\ge C\sqrt{\beta}/(\eps^2n)$ and $q\ge C\sqrt{\beta}/(\eps \sqrt{n})$,
	we have
	$$
	q^{2R-2}=q^{2R-4}\cdot q^2
	\ge \left(\frac{C\sqrt{\beta}}{\eps \sqrt{n}}\right)^{2R-4}\cdot \left(\frac{C\sqrt{\beta}}{\eps^2n}\right)^2
	=\left( {C^2\beta} \right)^{ R-1}\cdot \left(\frac{1}{\eps^2n}\right)^{ R } .
	$$ 
	Plugging this in the closed form (\ref{eqn:tau}) of $\tau_r$, we have
	$$
	\tau_r=\frac{1}{aq^2}\left(\frac{aq^2\eps^2n}{2}\right)^R
	\ge \frac{1}{2}\cdot \left(\frac{aC^2\beta}{2}\right)^{ R-1}.
	$$
	
	Using Proposition \ref{prop:mean}, we have 
	$
	\E\big[ \bZ^{(r)}\big] \leq\left( 2R \beta \right)^{R/2 }
	$ and thus,
	$$
	\frac{\E [ \bZ^{(r)} ]}{\tau_r}\le \left(2R\cdot \left(\frac{2}{aC^2}\right)^{R-1}\cdot 2^{R/2}\right)
	\cdot \left(\frac{R}{\beta}\right)^{R/2-1}.
	$$
	Note that $r\le r_0=\lceil \log \log n\rceil$ and thus $R/\beta<1$ when $n$ is sufficiently large.
	As a result we have 
	$$
	\frac{\E [ \bZ^{(r)} ]}{\tau_r}\le 2R\cdot \left(\frac{2}{aC^2}\right)^{R-1}\cdot 2^{R/2}
	\le 2R\cdot \left(\frac{4}{aC^2}\right)^{R-1} 
	\le \frac{1}{100}\left(\frac{1}{2}\right)^{R-1},
	$$
	when $C$ is sufficiently large.
	This completes the proof of the first part of (\ref{eq:mainmain}). 
	For the second part, by Corollary \ref{prop:expvar}
	and using the first part of (\ref{eq:mainmain})  and
	the recursive definition of $\tau_r$ in (\ref{eq:tauRecursive}), we have  
	\begin{align*}
	\textbf{Var}\left[\bZ^{(r)}\right]& \leq \frac{1}{q^2}\cdot \big\|\mu(\odot^{r+1} (p))\big\|_2^2 + \frac{4}{q}\cdot\big\| \mu(\odot^{r} (p))\big\|_2^2\cdot \big\|\mu(\odot^{r+1} (p))\big\|_2\\
	&	\leq \frac{1}{100\cdot q^2 \cdot 2^{2R - 1} }\cdot \tau_{r+1} + \frac{1}{ 250\cdot q \cdot 2^{R - 1}}\cdot \tau_r \cdot\sqrt{\tau_{r+1}}\\
	&	= \frac{1}{100\cdot q^2\cdot 2^{2R - 1} }\cdot  \left(\frac{q^2 \tau_r^2}{5000 }\right) + \frac{1}{ 250\cdot q \cdot 2^{R - 1}}\cdot \tau_r \cdot \sqrt{\frac{q^2 \tau_r^2}{5000 }}   
	< \frac{1}{100  }  \left(\frac{1}{2}\right)^{R-1} \cdot\tau_r^2.
	\end{align*}
	This finishes the proof of the lemma.
\end{proofof}

\section{Proof of the Main Structural Lemma: Lemma~\ref{lem:main-structural}}\label{sec:structural}

In this section, we prove the main structural lemma. The goal is to relate the distance in total variation from a distribution which is far from being a $k$-junta to the expected Euclidean distance of its mean vector  after applying random restrictions.

The proof of Lemma \ref{lem:main-structural} uses the following results from \cite{CCKLW20}, which we reproduce below.
\begin{lemma}[Lemma~1.4 in \cite{CCKLW20}]\label{lem:base-case}
	Let $p$ be a distribution over $\{-1,1\}^n$. For any $\sigma \in (0,1)$, 
	\begin{align*}
	\dtv(p, \calU) \leq \Ex_{\bS \sim \calS_{\sigma}}\left[ \dtv(p_{\ol{\bS}}, \calU)\right] + \Ex_{\brho \sim \calD_{\sigma}(p)}\left[ \dtv(p_{|\brho}, \calU) \right].
	\end{align*}
\end{lemma}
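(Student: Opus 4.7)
The plan is to reduce the randomized statement to a deterministic one by conditioning on $\bS$ and then averaging. So first I would fix an arbitrary $S \subseteq [n]$ and aim to prove the pointwise inequality
$$ \dtv(p, \calU) \;\leq\; \dtv(p_{\ol S}, \calU) \;+\; \Ex_{\brho \sim \calD_S(p)}\bigl[\dtv(p_{|\brho}, \calU)\bigr]. $$
Taking expectation over $\bS \sim \calS_\sigma$ then yields the lemma immediately, since by definition $\calD_\sigma(p)$ samples $\bS \sim \calS_\sigma$ followed by $\brho \sim \calD_{\bS}(p)$, so iterated expectation collapses the nested expectations on the right-hand side into $\Ex_{\brho \sim \calD_\sigma(p)}[\dtv(p_{|\brho},\calU)]$.

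To prove the fixed-$S$ statement, the key idea is to interpolate $p$ and $\calU$ through a carefully chosen intermediate distribution. Define the product-like distribution $q$ on $\{-1,1\}^n$ by
$$ q(x) \;=\; p_{\ol S}(x_{\ol S}) \cdot 2^{-|S|}, $$
i.e., the distribution whose $\ol S$-marginal agrees with $p$ but whose $S$-coordinates are uniform and independent of $\ol S$. By the triangle inequality, $\dtv(p,\calU) \leq \dtv(p,q) + \dtv(q,\calU)$, and I would then identify each of the two terms. For $\dtv(q,\calU)$, since $q$ and $\calU$ agree on the conditional over $S$ given $\ol S$, a one-line computation gives $\dtv(q,\calU) = \dtv(p_{\ol S}, \calU)$. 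For $\dtv(p,q)$, I factor $p(x) = p_{\ol S}(x_{\ol S}) \cdot p(x_S \mid x_{\ol S})$ on the support of $p_{\ol S}$ and compute
$$ \dtv(p,q) \;=\; \tfrac12 \sum_{x_{\ol S}} p_{\ol S}(x_{\ol S}) \sum_{x_S} \bigl| p(x_S \mid x_{\ol S}) - 2^{-|S|}\bigr| \;=\; \Ex_{\bx \sim p}\bigl[\dtv(p(\cdot \mid \bx_{\ol S}), \calU_S)\bigr]. $$

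The final step is simply to recognize that $\brho \sim \calD_S(p)$ is exactly defined by drawing $\bx \sim p$ and writing $\bx_{\ol S}$ into the non-star coordinates of $\brho$, so the restricted distribution $p_{|\brho}$ is precisely the conditional $p(\cdot \mid \bx_{\ol S})$. This converts the expectation above into $\Ex_{\brho \sim \calD_S(p)}[\dtv(p_{|\brho}, \calU)]$ and completes the fixed-$S$ inequality. There is no real obstacle in this argument beyond bookkeeping with marginals and conditionals; the only thing requiring a moment's care is the edge case where $p_{\ol S}(x_{\ol S}) = 0$, for which both $p(x)$ and $q(x)$ vanish so the corresponding terms contribute nothing.
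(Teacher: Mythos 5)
Your proof is correct: the pointwise inequality for each fixed $S$ via the hybrid distribution $q(x)=p_{\ol S}(x_{\ol S})\cdot 2^{-|S|}$, the identifications $\dtv(q,\calU)=\dtv(p_{\ol S},\calU)$ and $\dtv(p,q)=\Ex_{\brho\sim\calD_S(p)}[\dtv(p_{|\brho},\calU)]$, and the final averaging over $\bS\sim\calS_\sigma$ all check out. The paper imports this lemma from \cite{CCKLW20} without reproving it, and your argument is essentially the same triangle-inequality/hybrid decomposition used there.
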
 

\begin{lemma}[Implicit in \cite{CCKLW20}]\label{lem:one-star}
	Let $p$ be a distribution over $\{-1,1\}^n$. Then we have
	\begin{align*}
	\frac{\dtv(p, \calU)}{n \log n} &\lsim \Ex_{\substack{\bi \sim [n] \\ \brho \sim \calD_{\{ \bi\}}(p)}}\Big[ \big\|\mu(p_{|\brho})\big\|_2 \Big].
	\end{align*}
\end{lemma}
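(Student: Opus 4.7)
The plan is to recast both sides of the stated inequality in terms of the density $F(x) := 2^n p(x) - 1$, which satisfies $F \ge -1$ and $\Ex_{\bx \sim \calU}[F(\bx)] = 0$, and then reduce the claim to an $L^1$ Poincar\'e-style inequality on the Boolean hypercube. Note that $\dtv(p,\calU) = \tfrac{1}{2}\,\Ex_{\bx \sim \calU}[|F(\bx)|]$, so the bound sought is equivalent to an upper bound on $\Ex_\calU[|F|]$.

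First I would compute the right-hand side explicitly. Fix $i \in [n]$ and let $\brho \sim \calD_{\{i\}}(p)$, so that $y := \brho_{-i}$ is drawn from the marginal $p_{-i}(y) = p(1,y) + p(-1,y)$. The distribution $p_{|\brho}$ is supported on $\{-1,1\}^{\{i\}}$ with
\[
\mu(p_{|\brho})_i \;=\; \frac{p(1,y) - p(-1,y)}{p(1,y) + p(-1,y)}.
\]
Taking the expectation over $\brho$ cancels the denominator with $p_{-i}(y)$, so that
\[
\Ex_{\brho \sim \calD_{\{i\}}(p)}\big[\|\mu(p_{|\brho})\|_2\big] \;=\; \sum_y |p(1,y) - p(-1,y)| \;=\; \Ex_{\bx \sim \calU}\big[|D_i F(\bx)|\big],
\]
where I am writing $D_i F(x) := \tfrac{1}{2}\bigl(F(x_i{=}1,x_{-i}) - F(x_i{=}{-}1,x_{-i})\bigr)$ for the discrete derivative in direction $i$; the last equality uses $p(x) = 2^{-n}(1+F(x))$, so $|p(1,y) - p(-1,y)| = 2^{-(n-1)}|D_i F(y)|$, and then sums over the $2^{n-1}$ values of $y$. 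Averaging over $\bi \sim [n]$ gives
\[
\Ex_{\bi,\,\brho}\big[\|\mu(p_{|\brho})\|_2\big] \;=\; \frac{1}{n}\sum_{i=1}^n \Ex_\calU\big[|D_i F|\big].
\]

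It therefore suffices to establish the $L^1$ Poincar\'e inequality
\[
\Ex_\calU[|F|] \;\lsim\; \log n \cdot \sum_{i=1}^n \Ex_\calU[|D_i F|]
\]
for every mean-zero $F : \{-1,1\}^n \to \R$. I would prove this by induction on $n$. For the inductive step, decompose $F(x) = G(x_{-n}) + x_n H(x_{-n})$ with $G(x_{-n}) := \Ex_{\bx_n}[F(\bx_n, x_{-n})]$ and $H := D_n F$; the identity $\Ex_{\bx_n}\!\left[|F(\bx_n, x_{-n})|\right] = \tfrac{1}{2}(|G+H|+|G-H|) = \max(|G|,|H|) \le |G| + |H|$ yields $\Ex[|F|] \le \Ex[|G|] + \Ex[|D_n F|]$. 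Since $\Ex[G] = \Ex[F] = 0$, and since $D_i G(x_{-n}) = \Ex_{\bx_n}[D_i F(\bx_n, x_{-n})]$ for $i<n$ implies $\Ex[|D_i G|] \le \Ex[|D_i F|]$ by Jensen, the inductive hypothesis applied to $G$ on $\{-1,1\}^{n-1}$ produces $\Ex[|G|] \lsim \log(n-1) \cdot \sum_{i<n} \Ex[|D_i F|]$, which combined with the $\Ex[|D_n F|]$ term closes the induction with the desired $\log n$ factor.

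The main obstacle I anticipate is keeping the hidden constant uniform when $n$ is small, since $\log(n-1)$ degenerates as $n \to 2$. I would handle this by taking the base case at a fixed $n_0$ (for example $n_0 = 2$, where $F$ is a mean-zero multilinear polynomial in two variables and a direct calculation of the four values gives $\Ex[|F|] \le 2\sum_i \Ex[|D_i F|]$), and absorbing the resulting slack into the universal constant implicit in $\lsim$. A second, minor point to verify is that the assumption $\eps \in (0, 1/4]$ of Lemma~\ref{lem:main-structural} (where the one-star lemma is ultimately applied) ensures $\log(n/\eps) \ge 2$ and hence $\log n$ can be replaced by $\log(n/\eps)$ without harm.
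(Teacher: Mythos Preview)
Your argument is correct and takes a genuinely different, more elementary route than the paper. The paper invokes a robust Pisier-type inequality from \cite{CCKLW20} to bound $\dtv(p,\calU)$ by $\log n \cdot \Ex_{\bx\sim\calU}\bigl[\sqrt{\sum_i ((f(\bx)-f(\bx^{(i)}))^+)^2}\bigr]$, then passes to $\bx\sim p$ by importance sampling, and finally replaces the $\ell_2$-norm of the gradient by its $\ell_1$-norm via Jensen. You instead reduce directly to the $L^1$ Poincar\'e inequality $\Ex_\calU[|F|] \lsim \log n\cdot\sum_i \Ex_\calU[|D_iF|]$ and prove that by induction on $n$. Your identification of the right-hand side with $\tfrac{1}{n}\sum_i\Ex_\calU[|D_iF|]$ is clean and correct.

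One remark: your induction is actually stronger than you realize, and the ``obstacle'' you flag is not there. From $\Ex[|F|]\le \Ex[|G|]+\Ex[|D_nF|]$ together with $\Ex[|G|]\le c_{n-1}\sum_{i<n}\Ex[|D_iF|]$, you get the recursion $c_n\le\max(c_{n-1},1)$, and since $c_1=1$ this gives $c_n\le 1$ for all $n$ with no $\log n$ factor at all. (Equivalently, this is the Doob-martingale bound $|F|\le\sum_k|\Ex[F\mid x_1,\dots,x_k]-\Ex[F\mid x_1,\dots,x_{k-1}]|$.) So you have in fact shown $\dtv(p,\calU)\lsim n\cdot\Ex_{\bi,\brho}[\|\mu(p_{|\brho})\|_2]$, which is strictly stronger than the lemma. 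The paper's approach, by contrast, retains the $\log n$ because Pisier's inequality genuinely needs it for the $\ell_2$-gradient form; once the paper applies Jensen to pass to the $\ell_1$-gradient that extra factor is no longer necessary, but their chain of inequalities does not recover it.
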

\begin{proof}
	We follow Subsection~1.1.2 in \cite{CCKLW20}.
	Let $f \colon \{-1,1\}^n \to [-1, \infty)$ be $$f(x) = 2^n \cdot p(x) - 1.$$ Then by the first part of (4) in \cite{CCKLW20} (scaled by $1/n$), we have
	\begin{align*}
	\frac{\dtv(p, \calU)}{n\log n} &\lsim \frac{1}{n} \cdot \Ex_{\bx \sim \{-1,1\}^n}\left[ \sqrt{\sum_{i=1}^n \left(\big(f(\bx) - f(\bx^{(i)})\big)^+\right)^2}\hspace{0.1cm} \right]  \\[0.5ex] &= \frac{1}{n} \cdot \Ex_{\bx \sim p}\left[ \sqrt{\sum_{i=1}^n \left(\dfrac{\big(f(\bx) - f(\bx^{(i)})\big)^+}{f(\bx) + 1}\right)^2} \hspace{0.1cm}\right]  \\[0.8ex]
	&\leq \frac{1}{n} \cdot \Ex_{\bx \sim p}\left[\hspace{0.06cm} \sum_{i=1}^n \left|\dfrac{\big(f(\bx) - f(\bx^{(i)})\big)^+}{f(\bx) + 1}\right| \hspace{0.06cm}\right] \\[0.8ex] &\leq \frac{2}{n}\cdot  \sum_{i=1}^n \Ex_{\bx \sim p}\left[ \hspace{0.06cm}\left|\frac{p(\bx) - p(\bx^{(i)})}{p(\bx) + p(\bx^{(i)})} \right|\hspace{0.06cm}\right] = 2\Ex_{\substack{\bi \sim [n] \\
			\brho \sim \calD_{\{\bi\}}(p)}}\Big[ \big|\mu(p_{|\brho})_{\bi}\big|\Big],
	\end{align*}
	where the first inequality uses a robust version of Pisier's inequality on $f$ (see Theorem 1.7 and (3)
	in \cite{CCKLW20});
	the next equation follows from importance sampling; the third inequality uses  Jensen's inequality. 
	Finally we note that since $p_{|\brho}$ is supported on a single bit, the absolute value is the same as the Euclidean norm.
\end{proof}

We point out that the two lemmas above hold even when $n$ is a small constant.
The next theorem from \cite{CCKLW20} holds only when $n$ is sufficiently large.

\begin{theorem}[Theorem~1.5 in \cite{CCKLW20}]\label{thm:restriction-thm}
	Let $p$ be a distribution over $\{-1,1\}^n$. For any $\sigma \in (0,1)$,
	\begin{align}
	\Ex_{\brho \sim \calD_{\sigma}(p)}\Big[ \big\|\mu(p_{|\brho})\big\|_2 \Big] \geq \frac{\sigma}{\poly(\log n)} \cdot \tilde{\Omega}\left( \Ex_{\bS \sim \calS_{\sigma}}\left[ \dtv(p_{\ol{\bS}}, \calU)\right] - 2e^{-\min(\sigma, 1-\sigma) n/10}\right). \label{eq:restriction-thm}
	\end{align}
\end{theorem}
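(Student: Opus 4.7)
The plan is to lower-bound $\Ex_{\brho\sim\calD_\sigma(p)}[\|\mu(p_{|\brho})\|_2]$ in terms of $\Ex_{\bS\sim\calS_\sigma}[\dtv(p_{\overline{\bS}},\calU)]$ by combining Lemma~\ref{lem:base-case} (the projection+restriction decomposition of TV) with Lemma~\ref{lem:one-star} (the single-coordinate-to-mean-vector inequality), losing only $\sigma/\poly(\log n)$ factors plus a Chernoff-sized additive term.

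The first step is to apply Lemma~\ref{lem:one-star} to the projected distribution $p_{\overline{\bS}}$, which lives on roughly $(1-\sigma)n$ coordinates:
\[
\dtv(p_{\overline{\bS}},\calU)\;\lesssim\; |\overline{\bS}|\log|\overline{\bS}|\cdot \Ex_{\bi\in\overline{\bS},\,\brho''\sim\calD_{\{\bi\}}(p_{\overline{\bS}})}\big[\|\mu((p_{\overline{\bS}})_{|\brho''})\|_2\big].
\]
Unfolding the definition, the inner expected mean-norm can be re-interpreted as an expected magnitude of a single conditional mean in the \emph{original} distribution $p$, after fixing all of $\overline{\bS}\setminus\{\bi\}$ to a random sample from $p$. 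In particular, letting $\brho^*$ denote the restriction whose free set is $\bS\cup\{\bi\}$ and whose values on $\overline{\bS}\setminus\{\bi\}$ are sampled from $p$, we have $|\mu(p_{|\brho^*})_{\bi}| = |\mu((p_{\overline{\bS}})_{|\brho''})|$, so $\|\mu(p_{|\brho^*})\|_2$ upper-bounds the single-coordinate quantity above.

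The second step is a distributional comparison: since $\bS\cup\{\bi\}$ with $\bS\sim\calS_\sigma$ and $\bi$ uniform in $\overline{\bS}$ has essentially the same size-distribution as a single draw from $\calS_\sigma$ itself (up to an $O(1)$ shift and a negligible event by Chernoff), the expected mean norm under $\brho^*$ is equal, up to constants, to $\Ex_{\brho\sim\calD_\sigma(p)}[\|\mu(p_{|\brho})\|_2]$. Iterating Lemma~\ref{lem:base-case} as needed to handle the lower-order projection terms produced by the decomposition, and using Lemma~\ref{lem:one-star} to bound each residual TV-to-uniform term, yields the desired inequality.

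The main obstacle is squeezing out the $\sigma$ factor on the right-hand side. This comes from the fact that each of the $|\overline{\bS}|\approx (1-\sigma)n$ coordinates contributing to Lemma~\ref{lem:one-star} is amortized against the $\approx \sigma n$ free coordinates participating in the mean-vector norm on the LHS: a given coordinate lies in $\bS$ with probability $\sigma$, which is exactly the ratio that survives when one compares the scaling of the two sides. The Chernoff concentration $\Pr[|\,|\stars(\brho)|-\sigma n\,|>\sigma n/2]\le 2e^{-\min(\sigma,1-\sigma)n/10}$ is what licenses this accounting to go through --- the exceptional event contributes the additive correction appearing in~(\ref{eq:restriction-thm}). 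The $\poly(\log n)$ denominator originates from the $\log|\overline{\bS}|$ factor in Lemma~\ref{lem:one-star}, further compounded by the scale-translation between $\sigma$-restrictions of $p$ and single-coordinate restrictions of $p_{\overline{\bS}}$, and careful bookkeeping is required to ensure no additional polynomial loss is incurred.
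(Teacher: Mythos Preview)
First, note that the paper does not actually prove this theorem; it is quoted from \cite{CCKLW20} as a black-box ingredient in the proof of Lemma~\ref{lem:main-structural}, so there is no in-paper argument to compare against.

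On its own merits, your proposal has a quantitative gap that leaves it short of the claimed bound. When you invoke Lemma~\ref{lem:one-star} on $p_{\overline{\bS}}$ you incur a loss of $|\overline{\bS}|\log|\overline{\bS}|\approx n\log n$, not merely $\polylog(n)$. Following your chain --- bounding the single-coordinate mean $|\mu(p_{|\brho^*})_{\bi}|$ by the full norm $\|\mu(p_{|\brho^*})\|_2$ and then coupling $\bS\cup\{\bi\}$ to a fresh $\calS_\sigma$-draw --- one obtains at best
\[
\Ex_{\bS}\big[\dtv(p_{\overline{\bS}},\calU)\big]\;\lesssim\; n\log n\cdot \Ex_{\brho\sim\calD_\sigma(p)}\big[\|\mu(p_{|\brho})\|_2\big],
\]
which is weaker than the theorem by a factor of roughly $\sigma n/\polylog(n)$; in the regime $\sigma=1/2$ that this paper actually uses, that is a factor of order $n$. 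The step $|\mu(p_{|\brho^*})_{\bi}|\le\|\mu(p_{|\brho^*})\|_2$ is the culprit: it discards the contribution of the $\approx\sigma n$ other free coordinates, and nothing in your distributional coupling recovers them. Your ``amortization'' paragraph asserts that the $|\overline{\bS}|$ from Lemma~\ref{lem:one-star} is balanced against the $\sigma n$ free coordinates on the left, but no concrete inequality realizes this --- the one place those $\sigma n$ coordinates could help (inside the $\ell_2$ norm) is exactly where you have already collapsed to a single coordinate. The argument in \cite{CCKLW20} avoids this loss by keeping the full $\ell_2$ gradient structure of the robust Pisier inequality (their Theorem~1.7, the same tool behind the proof of Lemma~\ref{lem:one-star} in this paper) rather than passing through a single-coordinate bound, which is how the factor $\sigma$ rather than $1/n$ emerges.
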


We are now ready to prove Lemma \ref{lem:main-structural}.

\begin{proofof}{Lemma~\ref{lem:main-structural}}
	Let $q$ be the junta distribution on $J$ such that 
	its projection $q_J$ is the same as $p_J$ (equivalently, one can draw $\bx\sim q$ by
	first drawing a string from $\{0,1\}^J$ from $p_J$ and then drawing every other bit
	independently and uniformly at random).
	Given our assumption that $p$ is $\eps$-far from every junta distribution over $J$, we have
	\begin{equation}\label{hehe12}
	\eps\le \dtv(p,q)=\Ex_{\brho \sim \calD_{\ol J}(p)}\Big[\dtv\big(p_{|\brho},q_{|\brho}\big)\Big]
	=\Ex_{\brho \sim \calD_{\ol J}(p)}\Big[\dtv\big(p_{|\brho},\calU \big)\Big].
	\end{equation}
	In the rest of the proof we consider 
	a restriction $\rho \in \{-1,1,*\}^n$ with $\stars(\rho) =  {\overline{J}}$
	and lowerbound $\dtv(p_{|\rho},\calU)$.
	For simplicity of notation, we let $g = p_{|\rho}$ be the distribution supported over $\smash{\{-1,1\}^{\ol{J}}}$. The goal is to obtain a lower bound for $\dtv(g, \calU)$ in terms of mean vectors of random restrictions of $g$, which is then plugged into (\ref{hehe12}) to finish the proof of Lemma \ref{lem:main-structural}.
	
	Let $m=|\overline{J}|$. We start with the case when $m$ satisfies 
	$
	m\le C\cdot \log (m/\eps)
	$
	for some constant $C>0$.
	We apply Lemma \ref{lem:one-star} on $g$ (with the parameter $n$ set to $m$). 
	There is a constant $\hat{c}$ such that 
	$$
	\dtv(g,\calU) 
	\le \hat{c}\hspace{0.04cm} \log^2(m/\eps)\cdot \Ex_{\substack{\bi \sim [n] \\ \bnu \sim \calD_{\{ \bi\}}(p)}}\Big[ \big\|\mu(g_{|\bnu})\big\|_2 \Big].
	$$
	Letting $j=\lceil \log_2 2m\rceil$,
	the probability of $\brho\sim \calD_{\sigma^j}(g)$ having exactly one $*$ is at least
	$$
	m\cdot \sigma^j\cdot (1-\sigma^j)^{m-1}\ge m\cdot \frac{1}{4m}\cdot 
	\left(1-\frac{1}{2m}\right)^{m-1}
	\ge \frac{1}{8},
	$$  
	and when this happens, the $*$ is distributed uniformly at random.
	As a result, we have\begin{equation}\label{hehehe100}
	\dtv(g,\calU)\le \hat{c}\hspace{0.04cm}\log^2(m/\eps) \cdot \Ex_{\substack{\bi \sim [n] \\ \bnu \sim \calD_{\{ \bi\}}(p)}}\Big[ \big\|\mu(g_{|\bnu})\big\|_2 \Big] \le 8\hspace{0.03cm}\hat{c}\hspace{0.04cm}\log^2(m/\eps)\cdot \Ex_{\bnu \sim \calD_{\sigma }(g)}\Big[ \big\|\mu(g_{|\bnu})\big\|_2 \Big] 
	\end{equation}
	The lemma then follows by combining (\ref{hehe12}) and (\ref{hehehe100}).
	We now turn to the case when 
	\begin{equation}\label{choiceofm}
	|\overline{J}|=m\ge C\cdot \log(m/\eps)
	\end{equation} 
	for some sufficiently large constant $C>0$. 
	We first prove by induction that for any $t \in \N$, 
	\begin{align}
	\dtv(g, \calU) &\leq \Ex_{\bnu \sim \calD_{\sigma^t}(g)}\Big[ \dtv\big(g_{|\bnu}, \calU\big) \Big] + \sum_{j=1}^t \Ex_{\bnu \sim \calD_{\sigma^{j-1}(g)}}\Bigg[ \Ex_{\bS \sim \calS_{\sigma} (\stars(\bnu))}\Big[\dtv\big((g_{|\bnu})_{\ol{\bS}}, \calU\big) \Big] \Bigg]. \label{eq:induction}
	\end{align}
	Lemma~\ref{lem:base-case} provides the base case when $t = 1$, as a draw from the
	distribution $\calD_{1}(g)$ always outputs the all-$*$ restriction $(*, *, \dots, *)$. 
	For the induction step with $t > 1$, notice that
	\begin{align}
	\dtv(g, \calU) &\leq \Ex_{\bnu \sim \calD_{\sigma^{t-1}}(g)}\Big[ \dtv\big(g_{|\bnu}, \calU\big)\Big] + \sum_{j=1}^{t-1}\hspace{0.05cm} \Ex_{\bnu \sim \calD_{\sigma^{j-1}}(g)}\left[ \Ex_{\bS \sim \calS_{\sigma}(\stars(\bnu))}\Big[ \dtv\big((g_{\bnu})_{\ol{\bS}}, \calU\big)\Big]\right] \label{eq:step-1}\\
	&\leq \Ex_{\bnu \sim \calD_{\sigma^{t-1}(g)}}\left[ \Ex_{\bS \sim \calS_{\sigma}(\stars(\bnu))}\Big[ \dtv\big( (g_{|\bnu})_{\ol{\bS}}, \calU\big) \Big] +\Ex_{\bnu' \sim \calD_{\sigma}(g_{|\bnu})}\Big[ \dtv\big((g_{|\bnu})_{|\bnu'}, \calU\big) \Big] \right] \label{eq:step-2}\\[0.5ex]
	&\qquad\qquad +  \sum_{j=1}^{t-1}\hspace{0.05cm} \Ex_{\bnu \sim \calD_{\sigma^{j-1}}(g)}\left[ \Ex_{\bS \sim \calS_{\sigma}(\stars(\bnu))}\Big[ \dtv\big((g_{\bnu})_{\ol{\bS}}, \calU\big)\Big]\right], \nonumber
	\end{align}
	where we first applied the inductive hypothesis in (\ref{eq:step-1}) and then Lemma~\ref{lem:base-case} to the distribution $g_{|\bnu}$ supported on $\smash{\{-1,1\}^{\stars(\bnu)}}$ in (\ref{eq:step-2}). We get (\ref{eq:induction}) by noticing that the distribution over distributions $(g_{|\bnu})_{|\bnu'}$~where $\bnu \sim \calD_{\sigma^{t-1}}(g)$ and $\bnu' \sim \calD_{\sigma}(g_{|\bnu})$ is equivalent to $g_{|\bnu}$ with $\bnu \sim \calD_{\sigma^t}(g)$.
	
	Next for each restriction $\nu \in \{-1,1, *\}^n$ we let 
	\[ \alpha(\nu) = \Ex_{\bS \sim \calS_{\sigma}(\stars(\nu))}\Big[ \dtv\big((g_{|\nu})_{\ol{\bS}}, \calU\big)\Big], \]
	and let $G_t \subset \{-1, 1,*\}^n$ for each $t\in \N$ be the set of restrictions $\nu \in \{-1,1,*\}^n$ that satisfy 
	\[ \alpha(\nu) \geq \max\left\{ \frac{\eps}{6t},\hspace{0.05cm} 4\hspace{0.02cm} e^{-|\stars(\nu)|/20} \right\}. \]
	For each restriction $\nu\notin G_t$ we trivially have
	$$
	\alpha(\nu)\le \frac{\eps}{6t}+4\hspace{0.02cm} e^{-|\stars(\nu)|/20}.
	$$ 
	For each $\nu\in G_t$ we have 
	$$
	\alpha(\nu)-2\hspace{0.02cm}e^{-|\stars(\nu)|/20}\ge \alpha(v)/2\ge \eps/(12t).
	$$
	We can then apply Theorem \ref{thm:restriction-thm} to get
	$$
	\alpha(\nu)\le 
	\left(c_0\cdot \big(\log n\cdot \log(12t/\eps)\big)^{c_1}\right)\cdot 
	\Ex_{\bnu' \sim \calD_{\sigma}(g_{|\nu})} \Big[ \big\| \mu\big((g_{|\nu})_{|\bnu'} \big)\big\|_2 \Big]  
	$$
	for some universal constants $c_0$ and $c_1$.
	Therefore, we have for every $\nu\in \{-1,1,*\}^n$ that
	$$
	\alpha(\nu)\le 
	\left(c_0\cdot \big(\log n\cdot \log(12t/\eps)\big)^{c_1}\right)\cdot 
	\Ex_{\bnu' \sim \calD_{\sigma}(g_{|\nu})} \Big[ \big\| \mu\big((g_{|\nu})_{|\bnu'} \big)\big\|_2 \Big]  
	+\frac{\eps}{6t}+4\hspace{0.02cm} e^{-|\stars(\nu)|/20}.
	$$
	Combining this bound with (\ref{eq:induction}), we get
	\begin{align}
	\dtv(g, \calU) &\leq \Ex_{\bnu \sim \calD_{\sigma^t}(g)}\Big[ \dtv\big(g_{|\bnu}, \calU\big)\Big] \label{eq:line-1}\\
	&\qquad + \left(c_0\cdot \big(\log n\cdot \log(12t/\eps)\big)^{c_1}\right)\cdot\sum_{j=1}^t   \Ex_{\bnu \sim \calD_{\sigma^{j-1}}(g)}\left[  \Ex_{\bnu' \sim \calD_{\sigma}(g_{|\bnu})} \Big[ \big\| \mu\big((g_{|\bnu})_{|\bnu'} \big)\big\|_2 \Big] \right]  \label{eq:line-2}\\
	&\qquad + \frac{\eps}{6} + 4\sum_{j=1}^t \Ex_{\bnu \sim \calD_{\sigma^{j-1}}(g)}\left[ e^{-|\stars(\bnu)| / 20}\right] .\label{eq:line-3}
	\end{align}
	
	Setting (where $C$ is the constant from (\ref{choiceofm}))
	\begin{equation}\label{settingt}
	t = \left\lfloor \log\left(\frac{m}{C\cdot\log (m/\eps)}\right)\right \rfloor + 1 
	\end{equation} in the rest of the proof.
	We upper bound the right-hand side of (\ref{eq:line-3}) by noting that $|\stars(\bnu)|$, when $\bnu \sim \calD_{\sigma^{j-1}}(g)$ is a sum of $n$ independent random variables, where each is set to $1$ with probability $\sigma^{j-1}$. Thus, we have
	\begin{align*}
	\sum_{j=1}^t \Ex_{\bnu \sim \calD_{\sigma^{j-1}}(g)}\left[e^{-|\stars(\bnu)|/20} \right] &= \sum_{j=1}^t \left( \Ex_{\bX \sim \Ber(\sigma^{j-1})}\left[e^{-\bX / 20}\right] \right)^{m}    
	{ =\sum_{j=1}^t \left(1-\sigma^{j-1}\left(1-e^{-1/20}\right)\right)^{m}}
	\\ & \leq\sum_{j=1}^t \left( 1 - \frac{\sigma^{j-1}}{100} \right)^{m} 
	\le t\cdot  \exp\left( -\frac{\sigma^{t-1} m}{100}\right) \leq \frac{\eps}{24},
	\end{align*}
	using our choice of $t$ with $\sigma^{t-1}m \geq C\cdot \log(m/\eps) $ and a sufficiently large constant $C$. Therefore, the right-hand side of (\ref{eq:line-3}) can be bounded from above by $\eps/3$. 
	
	Next we upperbound (\ref{eq:line-2}).
	Using again the fact that $(g_{|\bnu})_{|\bnu'}$ with $\bnu \sim \calD_{\sigma^{j-1}}(g)$ and $\bnu' \sim \calD_{\sigma}(g_{|\bnu})$ is distributed as $g_{|\bnu}$ with $\bnu \sim \calD_{\sigma^j}(g)$, the right-hand side of (\ref{eq:line-2}) may be upper bounded by
	\begin{align}
	\left(c_0\cdot \big(\log n\cdot \log(12t/\eps)\big)^{c_1}\right)\cdot\sum_{j=1}^t   \Ex_{\bnu \sim \calD_{\sigma^j}(g)}\Big[ \big\| \mu(g_{|\bnu})\big\|_2\Big].
	\end{align}
	Finally we bound the right-hand side of (\ref{eq:line-1}) by considering the set of restrictions $F \subset \{-1,1,*\}^n$ where $\nu \in \{-1,1,*\}^n$ is in $F$ iff $|\stars(\nu)| \leq 2C\cdot \log(m/\eps)$, and note that by the setting of $t$,  
	\[ \Prx_{\bnu \sim \calD_{\sigma^t}(g)}\big[ \bnu \notin F \big] \leq \frac{\eps}{6}. \]
	Using the trivial bound of $\dtv(g_{|\nu}, \calU) \leq 1$, we have
	\begin{align*}
	\Ex_{\bnu \sim \calD_{\sigma^t}(g)}\Big[ \dtv\big(g_{|\bnu}, \calU\big) \Big]
	\le \frac{\eps}{6}+\Ex_{\bnu \sim \calD_{\sigma^t}(g)}\Big[ \dtv\big(g_{|\bnu}, \calU\big) \cdot \ind\left\{ \bnu \in F\right\} \Big]
	\end{align*}
	We apply Lemma~\ref{lem:one-star} to every $g_{|\bnu}$ with $\bnu \in F$. So there exists a universal constant $c_2$ such that
	\begin{align*}
	\Ex_{\bnu \sim \calD_{\sigma^t}(g)}\Big[ \dtv\big(g_{|\bnu}, \calU\big) \cdot \ind\left\{ \bnu \in F\right\} \Big]
	\leq c_2\cdot   \log^2(m/\eps) \cdot \Ex_{\bnu \sim \calD_{\sigma^{t}}(g)}\left[\Ex_{\substack{\bi \sim \stars(\bnu) \\ \bnu'\sim \calD_{\{ \bi\}}(g_{|\bnu})}}\Big[ \big\|\mu\big((g_{|\bnu})_{|\bnu'}\big)\big\|_2\Big]  \right].
	\end{align*}
	Note that the distribution on $(g_{|\bnu})_{|\bnu'}$ is equivalent to the distribution $g_{|\bnu}$ which draws $\bi \sim [n]$ and then sets $\bnu \sim \calD_{\{\bi\}}(g)$. Hence, we can upperbound (\ref{eq:line-1}) by
	\begin{align*}
	\frac{\eps}{6} + c_2 \cdot \log^2(m/\eps)\cdot \Ex_{\substack{\bi \sim [n] \\ \bnu \sim\calD_{\{\bi\}}(g)}}\Big[ \big\|\mu\big(g_{|\bnu}\big)\big\|_2 \Big]  
	\leq \frac{\eps}{6} + 4 c_2 \cdot  \log^2(m/\eps) \cdot   
	\Ex_{\bnu \sim \calD_{\sigma^{r}}(g)}\Big[ \big\|\mu\big(g_{|\bnu}\big)\big\|_2\Big] \end{align*}
	where $r = \lceil \log_2 m \rceil$.
	The inequality used the fact that $\nu\sim \calD_{\sigma^r}(g)$ has
	$\stars(\nu)=1$ with probability at least $1/4$ and when this happens, the star is 
	distributed uniformly at random.

	Finally, noting that $t<r$, we combine the upper bounds for (\ref{eq:line-1}), (\ref{eq:line-2}), and (\ref{eq:line-3}) to get 
	\begin{align*}
	\dtv(g, \calU) \leq { \frac{\eps}{2}} + c_3\cdot  \log^{c_4}(n/\eps)\cdot \sum_{j=1}^{ \lceil \log_2 n \rceil} \Ex_{\bnu \sim \calD_{\sigma^j}(g)}\Big[ \big\|\mu(g_{|\bnu})\big\|_2 \Big] 
	\end{align*}
	for some universal constants $c_3$ and $c_4$.
	It follows from  (\ref{hehe12}) that
	\begin{align*}
	\eps \leq \frac{\eps}{2} + \polylog( n/\eps)\cdot  \sum_{j=1}^{ \lceil \log_2 n \rceil} \Ex_{\brho \sim \calD_{\ol J}(p)}\left[ \Ex_{\bnu \sim \calD_{\sigma^j}(p_{|\brho})} \Big[\big\| \mu\big((p_{|\brho} )_{|\bnu}\big)\big\|_2 \Big]\right],
	\end{align*} 
	which completes the proof.
\end{proofof}

\bibliographystyle{alpha}
\bibliography{waingarten}

\end{document}